\documentclass[11pt]{article}
\usepackage[square,authoryear]{natbib}
\usepackage{marsden_article}
\usepackage[all]{xy}
\usepackage{amscd}
\usepackage{amssymb}           
\usepackage{mathtools,mathrsfs}
\usepackage{framed}
\newtheorem{remark}[theorem]{Remark}
\usepackage{eucal}
\usepackage{url}
\usepackage{ulem}
\usepackage{framed}
\usepackage{graphicx}

\newcommand{\deltabar}{{\mkern0.75mu\mathchar '26\mkern -9.75mu\delta }}
\newcommand{\partialnew}{\textsf{\reflectbox{6}}}

\begin{document}

\title{General relativistic Lagrangian continuum theories\\
\medskip 
\Large Part II: electromagnetic fluids and solids with junction conditions}
\author{Fran\c{c}ois Gay-Balmaz\footnote{Division of Mathematical Sciences, Nanyang Technological University, 21 Nanyang Link, Singapore 637371; email: \url{francois.gb@ntu.edu.sg}.\;\;Data sharing not applicable to this article as no datasets were generated or analysed during the current study. The author is partially supported by a startup grant from Nanyang Technological University and by the Ministry of Education, Singapore, under Academic Research Fund (AcRF)Tier 1 Grant RG99/24.}}
\date{}
\maketitle

\begin{abstract}
We develop a covariant variational framework for relativistic electromagnetic continua (fluids and solid) based on Hamilton's principle formulated directly in the material description. The approach extends the geometric theory of relativistic continua introduced in Part I to include polarization, magnetization, and general elastic-electromagnetic coupling through a unified energy functional. By exploiting spacetime and material covariance, the framework yields the corresponding spacetime and convective variational principles, together with transparent expressions for the stress-energy-momentum tensor and the covariant Euler-type balance equations governing nonlinear electromagnetic continua.
Coupling to general relativity is naturally incorporated, and when the action is augmented with Gibbons-Hawking-York boundary terms, the gravitational and electromagnetic junction conditions follow directly from the variational formulation. 
The results provide a coherent foundation for modeling nonlinear electromagnetic continua in relativistic regimes, with relevance to astrophysical systems where relativistic continuum dynamics and electromagnetic fields are known to be strongly coupled, such as neutron-star crusts, magnetar flares, relativistic jets, and accretion disks. We also offer systematic connections with several formulations appearing in the existing literature.
\end{abstract}

\tableofcontents



\section{Introduction}

Several astrophysical systems of significant interest, such as neutron stars, accretion disks around black holes, and magnetized plasma involve a complex interplay between relativistic continuum mechanics and electromagnetic effects, see \cite{FrKiRa2002}, \cite{ShTe2008}, \cite{LeVP2012}, \cite{SB2020}, \cite{ShSu2022} for both introductory and advanced treatments on these topics. These systems frequently exhibit fluid-like behavior and, in some cases, elastic characteristics, particularly evident in the crustal dynamics of neutron stars, see \cite{HaPoYa2007}, \cite{ChHa2008} and references therein.

Motivated by the need to model such systems at the intersection of general relativity, continuum mechanics, and electromagnetism, this paper develops a Lagrangian and variational framework for relativistic electromagnetic continua in material, spacetime, and convective descriptions. Building on the formulation of relativistic continua in Part I (\cite{GB2024}), the framework is extended to include electromagnetic interactions, such as polarization and magnetization effects.

While several variational approaches for relativistic fluid and elasticity have been developed over the years from different perspectives and for diverse purposes, our approach stands out by its simplicity and strong physical justification, namely
\begin{itemize}
\item[\rm (i)] It originates from a continuum version of the Hamilton principle for the relativistic charged particle interacting with an electromagnetic field: see \eqref{material_L};
\item[\rm (ii)] It naturally incorporates the covariance properties of the theory, namely, spacetime covariance (a core axiom) and material covariance (related to system-specific properties like isotropy): see Figure \ref{figure_2}.
\end{itemize}
In particular, Hamilton’s principle is applied in a classical manner by varying the continuum’s spacetime configuration (world-tube) and the electromagnetic potential. This formulation prioritizes the material (trajectory) description of the continuum as the primary variational framework. 
From this, the Eulerian and convective variational descriptions of the continua are deduced by exploiting the material and spacetime covariance of the theory, as illustrated schematically in Figure \ref{figure_2}.
We believe this novel approach provides valuable guidance for modeling electromagnetic relativistic continua underlying the astrophysical phenomena discussed earlier. Key highlights of our approach are summarized below.

\begin{itemize}
\item[\rm (i)] \textbf{Stress-energy-momentum tensor:} It provides a direct and systematic derivation of the total stress-energy-momentum tensor for electromagnetic media. The literature often derives it through indirect or model-dependent arguments, leading to ambiguities in how the total tensor is partitioned between field and matter. Our variational approach constructs the full tensor cleanly and transparently, without relying on ad-hoc splittings.



\item[\rm (ii)] \textbf{Freedom from constraints:} It avoids a priori imposed Lagrange multiplier constraints and non-physical variables. This eliminates the often laborious task of determining multipliers, which lack direct physical relevance, reflecting constraints on chosen variables rather than the system itself.

\item[\rm (iii)] \textbf{Parallel to Newtonian mechanics:} It systematically parallels the Lagrangian variational framework of Newtonian continuum mechanics. Starting from Hamilton's principle applied to the fluid or solid configuration map, it deduces Eulerian or convective variational formulations by exploiting the symmetries of the continuum. In relativistic settings, these symmetries are either associated to spacetime covariance (a core axiom) or material covariance (related to system-specific properties like isotropy). In the electromagnetic case, another important symmetry emerges, the gauge invariance.

\item[\rm (iv)] \textbf{Geometric and electromagnetic junction conditions:} It facilitates the derivation of junction conditions between internal solutions for relativistic electromagnetic continua and external Einstein-Maxwell solutions, including Israel-Darmois and electromagnetic junction conditions. This is achieved by incorporating Gibbons-Hawking-York (GHY) boundary terms into the action functional and applying Hamilton’s principle to variations in the world-tube, metric, and electromagnetic potential.
\end{itemize}

Since their initial development for perfect fluids by \cite{Ta1954}, with subsequent development in special relativistic elasticity (\cite{Gr1971}, \cite{MaEr1972a,Ma1972a,MaEr1972b}), general relativistic elasticity (\cite{Ma1971}, \cite{Ca1973}), and relativistic fluids (\cite{CaKh1992}, \cite{CaLa1995,CaLa1998}), variational approaches have played a central role in theoretical modeling of relativistic continua. These variational formulations have since become indispensable tools in general relativistic continuum theories, as demonstrated in recent works by \cite{GaAnHa2020}, \cite{BaWa2020}, \cite{FeCa2020}, \cite{AdCo2021}, \cite{Ad2021}, \cite{CoAnCeHa2023} among others.

Regarding the inclusion of electromagnetic effects in relativistic continua, a foundational theory based on the principles of relativistic continuum mechanics, was established by \cite{GrEr1966b}. Subsequent developments and extensions were explored by \cite{Gr1971}, \cite{Ma1971,MaEr1972a,MaEr1972b,Ma1972a,Ma1978a,Ma1978b,Ma1978c,Ma1978d,Ca1980,ErMa1990}. We refer to \cite{CaChCh2006} and \cite{CaSa2006} for development in the context of neutron stars.

The equations we derive allow for a general covariant coupling of elastic deformations and electromagnetic fields through a unified energy functional depending on the Cauchy deformation tensor and the electromagnetic fields. Such a coupling between relativistic continuum mechanics and electromagnetic phenomena is relevant in various high-energy astrophysical contexts. For instance, magneto-elastic stresses have been proposed as a mechanism contributing to neutron-star crust failures and magnetar flares, as illustrated by \cite{LevinLyutikov2012}, \cite{PernaPons2011}, and numerical simulations by \cite{Gabler2012}.
Another important context is the launching of relativistic jets from compact objects, which is often modeled using force-free or general relativistic MHD frameworks (e.g.,  \cite{BlandfordZnajek1977}). More recent general relativistic MHD studies (e.g., \cite{Komissarov2004}, \cite{McKinney2006}, \cite{Tchekhovskoy_etal_2011}) highlight that jet formation relies on the coupled dynamics of matter and electromagnetic fields in curved spacetime. Our approach provides a complementary first-principles framework for examining how non-fluid matter, such as elastic or stratified media, responds to strong electromagnetic stresses in these extreme environments.
Relativistic accretion disks, where electromagnetic fields interact with fluid matter (e.g., \cite{AbFr2013}), offer another setting in which this approach could be applied.

Moreover, as we shall briefly comment, our framework naturally extends to the anisotropic case, allowing one to capture directional dependencies in the elastic response, which can be relevant for stratified neutron-star crusts, magnetically aligned matter, or other astrophysical media with directional microstructure.




\paragraph{Results of the paper.} Our contributions to the relativistic theories of electromagnetic fluids and solids can be summarized as follows:
\begin{itemize}
\item[(i)] \textsf{A natural first principle in the material (Lagrangian) description.} We develop a variational framework based directly on the most fundamental principle, Hamilton's principle, applied to the continuum analogue of the gauge invariant action functional of a charged particle interacting with an electromagnetic field:
\begin{equation}\label{material_L}
 \int_{ \lambda _0}^{ \lambda _1}    - mc \sqrt{- \mathsf{g}( \dot  x, \dot  x)} \,d \lambda - q A (x) \cdot  \dot  x\,  d \lambda - \frac{1}{2 c \mu _0} \int_ \mathcal{M} F \wedge \star F .
\end{equation}
This formulation is free of Lagrange multipliers or auxiliary variables. It involves only the primary fields of the world tube (the material configuration), the electromagnetic potential, and their unrestricted variations (and the metric when coupling to gravity).

\item[(ii)] \textsf{Covariance leads to spacetime and convective Lagrangians and principles.} We show that imposing material and spacetime covariance in the material picture naturally produces the corresponding spacetime-covariant and convective-covariant reduced Lagrangians and their associated variational principles, with constrained variations induced from the free variations of the primary fields.

\item[(iii)]  \textsf{Treatment of general covariant coupling of elastic deformations and electromagnetic fields.} We show that our continuum version of the action functional  \eqref{material_L} can accommodate arbitrary constitutive behavior for electromagnetic fluids and solids, as long as it is encoded in a covariant energy\footnote{Its exact relation with the physical energy density will be explained.} density $\epsilon(\rho,s, E, B, \mathsf{c}, \mathsf{g})$, where $\rho$ and $s$ are the proper mass and entropy densities, $E$ and $B$ the electric and magnetic fields, $\mathsf{c}$ the relativistic Cauchy deformation tensor, and $\mathsf{g}$ the spacetime metric. This shows that the framework naturally incorporates any covariant elastic-electromagnetic coupling without introducing additional variables or constraints.

\item[(iv)]  \textsf{General form of the stress-energy-momentum tensor.} We show that out variational setting efficiently yields the expression for the stress-energy-momentum tensor for general covariant coupling of elastic deformations and electromagnetic fields. The resulting tensor is
\begin{align*} 
\mathfrak{T}&= \Big[\Big(\epsilon   - \frac{\partial \epsilon   }{\partial E} \cdot E \Big) \frac{1}{c^2} u \otimes u ^\flat  + \frac{1}{c^2} \Big( u \otimes S_ \epsilon   + S^\sharp_ \epsilon    \otimes u ^\flat \Big)\\
& \qquad   +  \frac{\partial \epsilon    }{\partial E  } \otimes E - B ^\sharp  \stackrel{\rm tr}{ \otimes }  \frac{\partial \epsilon    }{\partial B  }^\flat    + \Big(   \frac{\partial \epsilon   }{\partial \rho  } \rho  + \frac{\partial \epsilon   }{\partial s} s + \frac{\partial \epsilon    }{\partial B  } : B- \epsilon \Big) \mathsf{P}+ 2  \frac{\partial \epsilon   }{\partial \mathsf{c}  } \cdot\mathsf{c}\Big]\mu ( \mathsf{g} ),
\end{align*} 
with $u$ the world-velocity, $S_ \epsilon = (-1)^n \mathbf{i} _{E^\sharp} \mathbf{i} _u \star\frac{\partial \epsilon }{\partial B}^\flat$ the Poynting one-form and $\mathsf{P}$ the projection tensor. 
We discuss how this general expression compares with the classical stress-energy-momentum tensors for electromagnetic media, which are most often derived from a fundamentally different line of reasoning, typically through microscopic or mesoscopic interaction models, such as those developed by \cite{dGSu1972}. 

\item[(v)] \textsf{Electromagnetic junction conditions via Gibbons-Hawking-York terms.} For electromagnetic fluids and solids coupled to general relativity, we show that the electromagnetic and gravitational junction conditions between interior and exterior fields emerge directly from the variational formulation when the action is augmented with the Gibbons-Hawking-York boundary term arising from the trace of the extrinsic curvature. Along the boundary of the spacetime region occupied by the continuum, we obtain the following boundary and jump conditions:
\begin{equation}\label{BC_EMF_2} 
\begin{aligned} 
&\mathsf{g} (u , n) =0, \qquad & &  [ \mathfrak{t}] ( \cdot , n ^\flat )+ [p] n ^\flat =0,& \\
&\mathbf{i} _n \mathbf{i} _u (\star [E])=0, \qquad & & \mathbf{i} _n [B]=0 ,&\\
&\mathbf{i} _n [D]=0, \qquad & & \mathbf{i} _n \mathbf{i} _u (\star [H])=0,&
\end{aligned}
\end{equation} 
with $[\cdot ]$ denoting the jump across the boundary, and where $[\mathfrak{t}]$ and $[p]$ are the jumps in the stress tensor and the isotropic pressure parts of the stress-energy-momentum tensor, respectively.
\end{itemize}


\paragraph{Comparison with the literature.} Comparison with the existing literature on relativistic electromagnetic continua can be challenging due to the variety of approaches used to derive the stress-energy-momentum tensor and the different notations employed. For this reason, in  \S\ref{EMFS} we devote considerable attention to presenting several equivalent forms of the stress-energy-momentum tensor for electromagnetic media, as well as several possible splittings of this tensor that have been considered in the literature (see especially \S\ref{SEM_fluid} and \S\ref{SEM_el}). We also provide multiple forms of the Euler-type balance equations for relativistic electromagnetic continua (see  \S\ref{subsubsec_equ} and \S\ref{equ_el}), which can be difficult to recognize as equivalent at first sight. This allows the reader to compare our framework directly with different approaches previously proposed in the field.

\paragraph{Plan of the paper.} In  \S\ref{sec_2}, we develop the geometric framework for formulating Hamilton’s principle in the material picture (\S\ref{Geom_setting}), discuss the two covariance properties, and introduce the associated reduced Lagrangian densities - both convective and spacetime (\S\ref{cov_prop}). We then present the corresponding variational principles and their reduced Euler-Lagrange equations (\S\ref{spacetime_reduction} and \S\ref{convective_reduction}). In \S\ref{material_spacetime_A}, we show that, for the class of Lagrangian densities relevant to electromagnetic continua, one may take either the spacetime or the material electromagnetic potential as the primary field.

In \S\ref{coupling} we show how the variational formulation couples to gravitation, in particular to the Einstein-Maxwell equations outside the continuum. We derive the field equations for the gravitational and electromagnetic fields inside and outside the continuum, the equations of motion of the continuum, and the junction conditions linking the interior solution to the exterior gravitational and electromagnetic fields.

In \S\ref{EMFS}, we apply the variational framework to electromagnetic fluids (\S\ref{GR_em_fluid}) and solids (\S\ref{sec_GREE}). We emphasize both the form of the stress-energy-momentum tensor (providing three equivalent descriptions) and the resulting relativistic Euler-type balance equations for general covariant coupling of the elastic deformation and electromagnetic fields, as well as for simpler illustrative examples. We also discuss relation to the existing literature and the explicit form of the junction conditions. The papers ends with a conclusion in \S\ref{conclusion}.

\section{Covariance properties and variational principles for electromagnetic continua}\label{sec_2}

The variational framework we develop makes explicit the dependence of the Lagrangian on the material and spacetime tensor fields, all transported along the world-tube of the continuum via push-forward and pull-back operations. This structure allows the covariance properties to be expressed transparently through the actions of spacetime and material diffeomorphisms on the material Lagrangian density. Our formulation extends to the electromagnetic setting the approach introduced in  \cite{GB2024}. In addition, the electromagnetic case naturally incorporates gauge invariance as an extra symmetry.

\subsection{Geometric setting}\label{Geom_setting}


The key objects required to describe the motion of a relativistic continuum are the world-tube, which characterizes the spacetime locations of the medium's particles, and a set of material and spacetime tensors that describe the various fields involved, see Figure \ref{fig_word_tube}.

\paragraph{World-tube and velocities.} The relativistic motion of a continuous media in a $(n+1)$-dimensional spacetime $\mathcal{M} $ with Lorentzian metric $\mathsf{g}$ is described by an embedding
\[
\Phi : \mathcal{D} = [ \lambda _0, \lambda _1] \times \mathcal{B} \rightarrow \mathcal{M},
\]
the \textit{world-tube}, where $ \mathcal{B} $ a $n$-dimensional compact orientable manifold with smooth boundary and where $\mathsf{g}( \partial _ \lambda \Phi , \partial _ \lambda \Phi )<0$, for all $ \lambda \in  [ \lambda _0, \lambda _1]$.
The \textit{generalized velocity} of the media is the vector field on $ \mathcal{N} = \Phi( \mathcal{D} )$ defined as
\begin{equation}\label{gen_vel_1}
w=  \partial_ \lambda  \Phi  \circ \Phi ^{-1} \in \mathfrak{X} (\mathcal{N} ).
\end{equation} 
Its normalized version is the \textit{world-velocity} defined by
\[
u= \frac{c}{\sqrt{-\mathsf{g}(w,w)}} w\in \mathfrak{X} ( \mathcal{N} ), \qquad \mathsf{g}(u,u)=- c ^2 ,
\]
with $c$ the speed of light. Note that we can write the generalized velocity as $w= \Phi _* \partial _ \lambda $, where $\Phi_*$ denotes the push-forward of the vector field $ \partial _ \lambda $  by $\Phi$.

\paragraph{Reference material and spacetime tensor fields.} The vector field $ \partial _ \lambda \in \mathfrak{X} ( \mathcal{D} )$ is the first example of a \textit{reference material tensor field}.
Other classical examples of reference material tensor fields are given by volume forms $R, S \in \Omega ^{n+1}( \mathcal{D} )$ satisfying
\begin{equation}\label{condition_R_S} 
\pounds _{ \partial _ \lambda } R = \pounds _{\textcolor{black}{\partial_ \lambda }} S =0.
\end{equation}
These two $(n+1)$-forms correspond to the reference mass density and entropy density and are chosen as
\begin{equation}\label{RR_0SS_0} 
R= d \lambda  \wedge \pi _ \mathcal{B} ^* R_0 \quad\text{and}\quad S= d \lambda  \wedge \pi _ \mathcal{B} ^* S_0,
\end{equation}
where $R_0, S_0 \in \Omega ^n( \mathcal{B} )$ are volume forms on $ \mathcal{B} $, called the mass form and entropy form, and $ \pi _ \mathcal{B} :[ \lambda _0, \lambda _1] \times \mathcal{B} \rightarrow \mathcal{B} $ is the projection. One can check that properties \eqref{condition_R_S} hold for $R,S$ given in \eqref{RR_0SS_0}. The corresponding Eulerian quantities are  the generalized mass density $ \varrho $, and generalized entropy density $ \varsigma $ obtained by applying the push-forward operation by the world-tube:
\[
\varrho = \Phi _* R, \qquad \varsigma= \Phi_*S.
\]
They should not be confused with the proper mass and entropy densities which will be defined later. Another reference material tensor field is given the $2$-covariant symmetric positive tensor $G= \pi _ \mathcal{B} ^* G_0$ with $G_0$ a Riemannian metric on $ \mathcal{B} $, the corresponding Eulerian quantity being the relativistic Cauchy deformation tensor
\[
\mathsf{c}= \Phi _*G.
\]

The list of \textit{spacetime tensor fields} necessarily includes the Lorentzian metric $\mathsf{g}$, which may either be considered as fixed, e.g., in the special relativistic case, or subject to its associated Euler-Lagrange equations in the self-gravitating case.

In this section, in order to stay as general as possible, we shall consider reference tensor fields given by a material $(r,s)$-tensor field $K \in \mathcal{T} ^ r_s( \mathcal{D} )$ and a spacetime $(p,q)$-tensor field by $ \gamma \in \mathcal{T} ^p_q( \mathcal{M} )$, where it is assumed that $ \pounds _{ \partial _ \lambda } K=0$. The extension to a collection of such tensor fields is straightforward. The corresponding push-forward and pull-back tensor fields are denoted
\[
\kappa = \Phi _* K \in \mathcal{T} ^r_s( \mathcal{N} ) , \qquad \Gamma = \Phi ^* \gamma \in \mathcal{T} ^p_q( \mathcal{D}).
\] 
In this general setting $ \gamma $ may correspond to $\mathsf{g}$ with $ \Gamma $ given by its pull-back on $ \mathcal{D} $, while $K$ may correspond to $R$, $S$, or $G$, with $ \kappa $ given by $ \varrho $, $ \varsigma $, or $\mathsf{c}$. Since the velocity plays a special role, we shall explicitly include the corresponding reference vector field, which we denote $W= \partial _ \lambda $. Hence we have $w= \Phi _*W$ and we have the assumption $\pounds _WK=0$.

\paragraph{Material and spatial electromagnetic potentials.} Besides these tensors, electromagnetic continua involve the electromagnetic potential given by a one-form $A \in \Omega ^1 ( \mathcal{N} )$ on spacetime, whose exterior derivative $F= {\rm d} A \in \Omega ^2( \mathcal{N} )$ is the Faraday $2$-form. The corresponding material tensors are obtained by taking the pull-back with respect to the world-tube, written as $ \mathcal{A} = \Phi ^* A\in \Omega ^1 ( \mathcal{D} )$ and $ \mathcal{F} = \Phi ^* F = {\rm d} \mathcal{A} \in \Omega ^2 ( \mathcal{D} )$. As opposed to the reference material tensors $W$, $R$, $S$, and $G$ discussed earlier, $ \mathcal{A} $ and $ \mathcal{F} = {\rm d} \mathcal{A} $ are not fixed since $\mathcal{A} $ is subject to its associated Euler-Lagrange equations. As we shall show later on, we can consider either $ \mathcal{A} $ or $A$ as the primitive object in the Hamilton principle in the material description, while obtaining equivalent critical point conditions, i.e., the same equations, although in a quite different form, see \S\ref{material_spacetime_A}. We shall put the emphasis on $ \mathcal{A} $, since with this choice the resulting variational setting does reduce quite naturally to the one corresponding to the magnetohydrodynamic approximation. When coupling the electromagnetic continua with gravity and electromagnetism at the outer spacetime via the junction conditions, it will be more practical to use $A$, see \S\ref{coupling}.

\begin{figure}[t]
\centering
\includegraphics[scale=0.4]{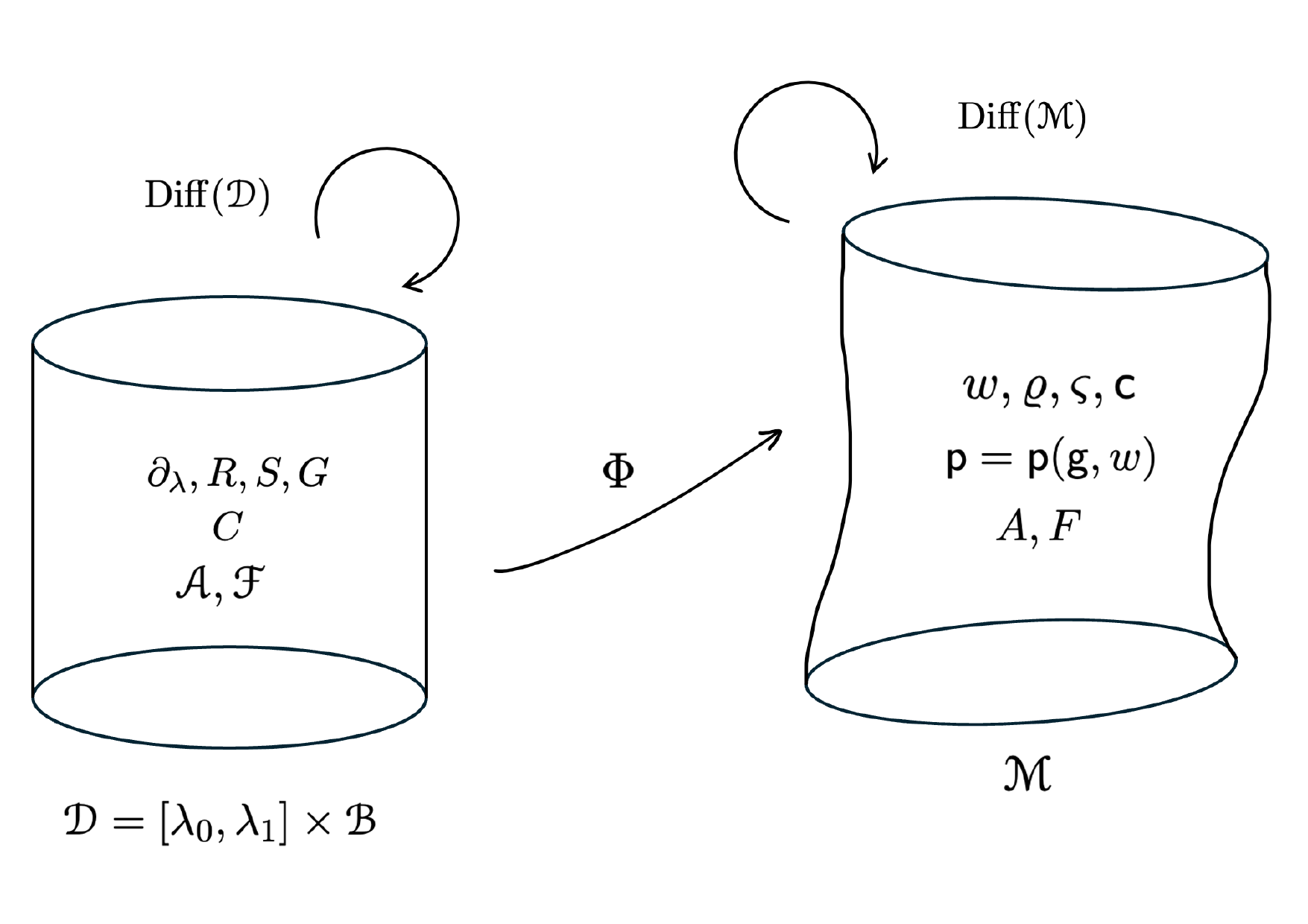}
\caption{Illustration of the word-tube $\Phi:[\lambda_0,\lambda_1]\times\mathcal{B}\rightarrow\mathcal{M}$ and the tensor fields involved in the description of electromagnetic continua. The situation is summarized as follows:\\
-- The fixed reference material tensor fields ($K$ in the general theory) include the vector field $\partial_\lambda$, the $(n+1)$-forms $R$ and $S$, and the $2$-covariant symmetric positive tensor $G$. Their push-forward by $\Phi$ are $w$, $\varrho$, $\varsigma$, and $\mathsf{c}$, respectively.\\
-- The spacetime tensor field ($\gamma$ in the general theory) is $\mathsf{g}$. This tensor becomes dynamic only when the theory is coupled with Einstein's equations (\S\ref{coupling}). From $\mathsf{g}$ and $w$, the $2$-covariant symmetric positive tensor $\mathsf{p}$ is constructed. Its pull-back by $\Phi$ yields the right Cauchy-Green tensor $C$.\\
-- Besides $\Phi$, the other dynamic field can be chosen as either the material electromagnetic potential $\mathcal{A}$ or its spacetime version $A$, obtained by pushing $\mathcal{A}$ forward with $\Phi$. The corresponding electromagnetic fields are $\mathcal{F}={\rm d}\mathcal{A}$ and $F={\rm d}A$.}
\label{fig_word_tube}
\end{figure}

\subsection{Covariance properties and reduced Lagrangians}\label{cov_prop}

We develop here the general setting for the variational principles and symmetries underlying electromagnetic continua. In particular, we present the class of Lagrangian densities in the \textit{material description} relevant to these models as well as the notions of material and spacetime covariance in this general setting. These symmetries allow the definition of the associated Lagrangian densities in the \textit{spacetime (Eulerian) descriptions} or in the \textit{convective description}. We regard the material description, i.e., the description of the continuum in terms of the world-tube, as the fundamental description, and the other ones as being deduced from it. This is consistent with the fact that in the material description the variational principle is the classical Hamilton principle, naturally arising as the continuum version of the Hamilton principle for relativistic charged particles.

\paragraph{Lagrangian density and Hamilton principle in the material description.} The Lagrangian densities describing electromagnetic continuous media in the material description are bundle maps of the form
\begin{equation}\label{material_Lagrangian} 
\mathscr{L} : J^1( \mathcal{D}  \times \mathcal{M} ) \times \wedge  ^1T^* \mathcal{D} \times \wedge ^2 T^* \mathcal{D}  \times T \mathcal{D} \times T^p_q\mathcal{D}  \times T^r_s \mathcal{M}  \rightarrow \wedge ^{n+1}T^* \mathcal{D} \footnote{Note the change of notation from $ \wedge ^k \mathcal{D} \rightarrow \mathcal{D} $ to $ \wedge ^k T^* \mathcal{D}\rightarrow \mathcal{D} $ for the bundle of $k$-forms, compared to \cite{GB2024}, to avoid confusion with the bundle $\wedge ^k T \mathcal{D}  \rightarrow \mathcal{D} $ of $k$-multivectors also occurring in the present paper.}
\end{equation}
covering the identity on $ \mathcal{D} $. Here $J^1( \mathcal{D}  \times \mathcal{M} ) \rightarrow \mathcal{D}  \times \mathcal{M} $ denotes the first jet bundle of the trivial fiber bundle $\mathcal{D}  \times \mathcal{M}  \rightarrow \mathcal{D} $. The vector fiber of $J^1( \mathcal{D}  \times \mathcal{M} )$ at $(X,x) \in \mathcal{D} \times M$ is $L(T_X \mathcal{D} , T_x \mathcal{M} )$, the space of linear maps $T_X \mathcal{D} \rightarrow T_x \mathcal{M} $.
Also, $T^p_q \mathcal{D}  \rightarrow \mathcal{D} $ denotes the vector bundle of $(p,q)$-tensors on $\mathcal{D} $, similarly for $T^r_s \mathcal{M}  \rightarrow \mathcal{M} $. 
In local coordinates $X^a$, $a=0,...,n$ and $x ^ \mu $, $ \mu =0,...,n$, we shall write the Lagrangian density $\mathscr{L}$ as
\[
\mathscr{L} =\bar{ \mathscr{L} }\big(  X^a, x^\mu, v^\mu_a, \mathcal{A} _a, \mathcal{F} _{ab},  W^a, K^{a_1...a_p}_{b_1...b_q}, \gamma^{\mu_1...\mu_r}_{\nu_1...\nu_s} \big) {\rm d} ^{n+1} X.
\]
When evaluating the Lagrangian density along the world tube $ \Phi  $ and the fields $ \mathcal{A}$, $K$, $ \gamma $, we shall use the notation 
\begin{equation}\label{Lagrangian_general} 
\mathscr{L}(j^1\Phi, \mathcal{A} , {\rm d}\mathcal{A} , W, K, \gamma \circ \Phi  ),
\end{equation}
where $j^1\Phi: \mathcal{D}  \rightarrow J^1( \mathcal{D}  \times \mathcal{M} )$ denotes the first jet extension of the world-tube $\Phi$, given locally as $X^a \mapsto (X ^a , \Phi ^\mu(X^a), \Phi ^\mu_{,b}(X^a))$, with $ \Phi ^\mu_{,b}:= \partial _{X^b}\Phi ^ \mu$.

For each $K$ and $ \gamma $ fixed, Hamilton's principle reads
\[
\left. \frac{d}{d\varepsilon}\right|_{\varepsilon=0} \int_ \mathcal{D} \mathscr{L}(j^1 \Phi _ \varepsilon ,  \mathcal{A}_ \varepsilon , {\rm d} \mathcal{A} _ \varepsilon , W, K , \gamma  \circ \Phi _ \varepsilon )=0,
\]
for arbitrary variations $ \Phi  _ \varepsilon $ and $ \mathcal{A} _ \varepsilon $ with fixed endpoints at $ \lambda =  \lambda _0, \lambda _1$.

This principle yields the Euler-Lagrange equations and boundary conditions written in local coordinates as
\begin{equation}\label{EL_material}
\begin{aligned} 
&\partial _a \frac{\partial \bar{\mathscr{L}}}{\partial \Phi^\mu_{,a}}  - \frac{\partial \bar{\mathscr{L}}}{\partial \Phi^\mu} = \frac{\partial \bar{\mathscr{L}}}{\partial \gamma } \partial_\mu \gamma \quad\text{on}\quad \mathcal{D} , \quad \frac{\partial \bar{ \mathscr{L} }}{\partial \Phi^\mu_{,a}}d^nX_a=0\quad \text{on} \quad  \textcolor{black}{[\lambda _0, \lambda _1] \times \partial \mathcal{B}}\\
& \partial _a\frac{\partial \bar{\mathscr{L}}}{\partial \mathcal{F} _{ab}} - \frac{\partial \bar{\mathscr{L}}}{\partial \mathcal{A}  _b}=0\quad\text{on}\quad \mathcal{D} , \qquad \quad  \quad  \frac{\partial  \bar{\mathscr{L}} }{\partial \mathcal{F} _{ab}}   d^nX_ a=0\quad \text{on} \quad  \textcolor{black}{[\lambda _0, \lambda _1] \times \partial \mathcal{B}},
\end{aligned}
\end{equation}
where $ {\rm d} ^nX_a= \mathbf{i} _{ \partial _a}{\rm d} ^{n+1}X$ and $\mathcal{F} = \frac{1}{2} \mathcal{F} _{ab} {\rm d} X^a \wedge {\rm d} X^b$ with $\mathcal{F} _{ab}= \partial _a \mathcal{A} _{b}- \partial _b \mathcal{A} _{a}$.

\paragraph{Partial derivatives.} For our next development, it is useful to give two interpretations of the partial derivatives of Lagrangian densities with respect to $ \mathcal{A} $ and $ \mathcal{F} $. We shall first regard them as $k$-vector field densities, $k=1,2$, denoted
\begin{equation}\label{first_incarnation} 
\frac{\partial  \mathscr{L}}{\partial \mathcal{A}} \in \mathfrak{X} ^1_d( \mathcal{D} ) \quad\text{and}\quad \frac{\partial \mathscr{L}}{\partial \mathcal{F}} \in  \mathfrak{X} ^2_d( \mathcal{D} )
\end{equation} 
with $ \mathfrak{X} ^k_d( \mathcal{D} ):= \Gamma ( \wedge ^kT  \mathcal{D} \otimes \wedge ^{n+1} T^* \mathcal{D}  )$, and defined as
\begin{equation}\label{partial2}
\begin{aligned} 
\left. \frac{d}{d\varepsilon}\right|_{\varepsilon=0} \mathscr{L}( ..., \mathcal{A}+ \varepsilon \delta \mathcal{A}, F) &= \delta \mathcal{A} \cdot  \frac{\partial \mathscr{L}}{\partial \mathcal{A}} \\
\left. \frac{d}{d\varepsilon}\right|_{\varepsilon=0} \mathscr{L}( ..., \mathcal{A}, \mathcal{F}+ \varepsilon \delta \mathcal{F})& = \delta \mathcal{F} : \frac{\partial \mathscr{L}}{\partial \mathcal{F}}.
\end{aligned}.   
\end{equation} 
In \eqref{partial2}, the symbols $`` \cdot "$ and $``:"$ denote the contraction, locally given by $ \mathcal{A} \cdot \mathcal{X} = \mathcal{A} _a \mathcal{X} ^a {\rm d} X^{n+1}$ and $\mathcal{F}: \mathcal{Y} = \frac{1}{2}  \mathcal{F}_{ab} \mathcal{Y} ^{ab} {\rm d} X^{n+1}$, for $\mathcal{X}= \mathcal{X}^a \partial _a \otimes {\rm d} ^{n+1}X$, $\mathcal{Y}= \frac{1}{2} \mathcal{Y}^{ab} \partial _a \wedge \partial _b \otimes {\rm d} ^{n+1}X$. The local expressions are found as
\[
\frac{\partial \mathscr{L}}{\partial  \mathcal{A}} = \frac{\partial \bar{\mathscr{L}}}{\partial \mathcal{A}_{a}} \partial _a \otimes  {\rm d} ^{n+1}X  \quad\text{and}\quad  \frac{\partial \mathscr{L}}{\partial \mathcal{F}} = \frac{1}{2} \frac{\partial \bar{\mathscr{L}}}{\partial \mathcal{F}_{ab}} \partial _a \wedge \partial _b \otimes {\rm d} ^{n+1}X.
\]
 
In order to write the Euler-Lagrange equations and boundary conditions for the electromagnetic potential $\mathcal{A}$ in intrinsic form, it is also useful to interpret the partial derivatives as $(n+1-k)$-forms, $k=1,2$, denoted
\begin{equation}\label{secondincarnation}
\frac{\partialnew \mathscr{L}}{\partialnew \mathcal{A}} \in \Omega ^n( \mathcal{D} ) \quad\text{and}\quad \frac{\partialnew \mathscr{L}}{\partialnew \mathcal{F}} \in \Omega  ^{n-1}( \mathcal{D} )\footnote{Note the different font used for the symbol $\partialnew$, compared to $ \partial $ in \eqref{first_incarnation} and \eqref{partial2}.}
\end{equation}
and defined as
\begin{equation}\label{partial1} 
\begin{aligned}
\left. \frac{d}{d\varepsilon}\right|_{\varepsilon=0} \mathscr{L}( ..., \mathcal{A}+ \varepsilon \delta \mathcal{A}, \mathcal{F} , ...) &= \delta \mathcal{A} \wedge \frac{\partialnew \mathscr{L}}{\partialnew \mathcal{A}} \\
\left. \frac{d}{d\varepsilon}\right|_{\varepsilon=0} \mathscr{L}( ..., \mathcal{A}, \mathcal{F}+ \varepsilon \delta \mathcal{F}, ...) &= \delta \mathcal{F} \wedge \frac{\partialnew \mathscr{L}}{\partialnew \mathcal{F}},
\end{aligned}
\end{equation} 
for all $ \delta \mathcal{A}$ and $ \delta \mathcal{F}$. The local expressions are now
\begin{equation}\label{local_partial1} 
\frac{\partialnew \mathscr{L}}{\partialnew \mathcal{A}} = \frac{\partial \bar{\mathscr{L}}}{\partial \mathcal{A}_{a}} {\rm d} ^nX_a  \quad\text{and}\quad  \frac{\partialnew \mathscr{L}}{\partialnew \mathcal{F}} = \frac{1}{2} \frac{\partial \bar{\mathscr{L}}}{\partial \mathcal{F}_{ab}}{\rm d} ^{n-1}X_{ab},
\end{equation} 
where $d^nX_a= \mathbf{i}  _{ \partial _a} d^{n+1}X$ and $d^{n-1}X_{ab}= \mathbf{i} _{ \partial _b} \mathbf{i} _{ \partial _a} d^{n+1}X$.
With these definitions, the Euler-Lagrange equations and boundary conditions for $\mathcal{A}$ are intrinsically written as
\[
\frac{\partialnew \mathscr{L}}{\partialnew \mathcal{A}} + {\rm d}  \frac{\partialnew \mathscr{L}}{\partialnew \mathcal{F}}=0, \qquad i^*_{ \partial  \mathcal{D} } \frac{\partialnew \mathscr{L}}{\partialnew \mathcal{F}}=0 \quad \text{on} \quad  \textcolor{black}{[a,b] \times   \partial \mathcal{B}}.
\]

One can switch from one to the other description by using the canonical vector bundle isomorphisms
\begin{equation}\label{isomorphism}
\mathcal{X} \in \wedge ^k T  \mathcal{D} \otimes \wedge ^{n+1} T^* \mathcal{D}  \rightarrow \omega  \in \wedge ^\ell T ^* \mathcal{D}, \quad \ell= n+1-k,
\end{equation}
given in coordinates by
\begin{align*} 
&\frac{1}{k!} \mathcal{X}  ^{a_1...a_k}  \partial _ {a_1} \wedge ... \wedge \partial _{a_k} \!\otimes {\rm d} ^{n+1} X \mapsto  \frac{1}{k!} \mathcal{X}  ^{a_1...a_k}  {\rm d} ^{\ell}X_{a_1...a_k}= \frac{1}{\ell!}\omega _{ b _1... b _\ell}{\rm d} X^{ b _1} \wedge ... \wedge {\rm d} X^{ b _\ell}
\end{align*} 
with $\omega _{ b _1... b _\ell}= \frac{1}{k!}\mathcal{X}  ^{a_1...a_k} \epsilon _{a_1...a_kb _1... b _\ell} $, $\mathcal{X}  ^{a_1...a_k} = \frac{1}{\ell!} \omega _{ b _1... b _\ell} \epsilon ^{a_1...a_kb _1... b _\ell}$, and we used the notation ${\rm d} ^{\ell}X_{a_1...a_k}= \mathbf{i} _{ \partial _{a_1} \wedge ... \wedge \partial _{a_k}} {\rm d} ^{n+1}X$.
The isomorphism \eqref{isomorphism} can be written in terms of a Lorentzian metric $\mathsf{g}$ as
\[
\mathcal{X}  = -(-1)^{k\ell} ( \star \omega  )^\sharp \otimes \mu (\mathsf{g}),
\]
see \cite{AbGBYo2024} for more details. Here $ \mu ( \mathsf{g})$ and $\star$ are the volume form and Hodge star operator associated to $\mathsf{g}$ and a given orientation of $ \mathcal{M} $, and $\sharp$ is the sharp operator associated to $\mathsf{g}$.

\paragraph{Spacetime covariance.} We say that the material Lagrangian density $\mathscr{L}$ in \eqref{material_Lagrangian}  is \textit{spacetime covariant} if it satisfies
\begin{equation}\label{spacetime_cov}
\begin{aligned} 
&\mathscr{L}(j^1( \psi \circ  \Phi ), \mathcal{A} , {\rm d} \mathcal{A} , W, K,\psi _* \gamma  \circ \psi \circ \Phi )\\
& \qquad    = \mathscr{L}(j^1 \Phi , \mathcal{A} , {\rm d} \mathcal{A} ,W, K, \gamma  \circ \Phi ), \quad \forall \psi \in \operatorname{Diff}( \mathcal{M} ).
\end{aligned}
\end{equation}  
In this case $\mathscr{L}$ induces a unique \textit{reduced convective Lagrangian density}
\[
\mathcal{L}: \wedge ^1 T^* \mathcal{D} \times \wedge ^2 T^* \mathcal{D}  \times T \mathcal{D} \times T^p_q\mathcal{D}  \times T^r_s \mathcal{D}  \rightarrow \wedge ^{n+1}T^* \mathcal{D} 
\]
defined by the condition
\begin{equation}\label{def_calL}
\mathscr{L}(j^1 \Phi , \mathcal{A} ,{\rm d}\mathcal{A} ,  W, K, \gamma  \circ \Phi )= \mathcal{L} \big(\mathcal{A} , {\rm d} \mathcal{A},  W, K, \Gamma )
\end{equation} 
for all $ \Phi $, $ \mathcal{A} $, $W$, $K$, $ \gamma $, where
\[
\Gamma := \Phi ^* \gamma \in \mathcal{T} ^r_s( \mathcal{D} ).
\]

\paragraph{Material covariance.} We say that the material Lagrangian density $\mathscr{L}$ in \eqref{material_Lagrangian}  is \textit{materially covariant} if it satisfies
\begin{equation}\label{material_cov}
\begin{aligned} 
&\mathscr{L}(j^1( \Phi   \circ  \varphi  ), \varphi ^* \mathcal{A} , \varphi ^* {\rm d} \mathcal{A} ,  \varphi ^*W, \varphi ^* K,\gamma   \circ \Phi \circ \varphi )\\
&\qquad   = \varphi ^* \big[\mathscr{L}(j^1 \Phi , \mathcal{A} , {\rm d} \mathcal{A},W, K, \gamma  \circ \Phi )\big], \quad \forall \varphi  \in \operatorname{Diff}( \mathcal{D} ).
\end{aligned}
\end{equation} 
In this case  $\mathscr{L}$ induces a unique \textit{reduced spacetime Lagrangian density}
\[
\ell: \wedge ^1 T^* \mathcal{M} \times \wedge ^2 T^* \mathcal{M} \times  T \mathcal{M} \times T ^p _q  \mathcal{M}  \times  T^r _s  \mathcal{M} \rightarrow \wedge  ^{n+1} T^*\mathcal{M}
\]
such that
\begin{equation}\label{def_ell} 
\mathscr{L}(j^1 \Phi ,\mathcal{A} ,{\rm d} \mathcal{A} ,  W, K, \gamma  \circ \Phi )= \Phi ^* \big[\ell( A, {\rm d} A, w,\kappa , \gamma )\big],
\end{equation} 
for all $ \Phi $, $ \mathcal{A} $, $W$, $K$, $ \gamma $, where
\[
A= \Phi _*\mathcal{A} , \quad F= \Phi _*\mathcal{F} , \quad w= \Phi _*W, \quad \kappa = \Phi _*K.
\]
The existence of the reduced Lagrangian densities $ \mathcal{L} $ and $\ell$ under the corresponding covariance can be shown as in \cite{GB2024}.

\paragraph{Gauge invariance.} Besides the spacetime and material covariance properties of the Lagrangian densities under the diffeomorphism groups $ \operatorname{Diff}( \mathcal{D} )$ and $ \operatorname{Diff}( \mathcal{M} )$, the theory (not necessarily the Lagrangian density itself) should also exhibit invariance under gauge transformations. In the material description, as defined by the Lagrangian density \eqref{Lagrangian_general}, these gauge transformations are given by the action of the Abelian group $C^\infty( \mathcal{D} )$ on the bundle $ \wedge ^1T^* \mathcal{D} \rightarrow \mathcal{D} $ of electromagnetic potentials, given by $ \mathcal{A}_X \mapsto \mathcal{A}_X + {\rm d} f(X)$, $ f \in C^\infty( \mathcal{D} )$. Let us consider the action functional
\begin{equation}\label{action_functional}
\mathscr{S}( \Phi ,\mathcal{A})= \int_{ \lambda _0}^ { \lambda _1}\int_ \mathcal{B} \mathscr{L}(j^1\Phi, \mathcal{A} , {\rm d} \mathcal{A} , W, K, \gamma \circ \Phi  ),
\end{equation}
for some fixed $W$, $K$, $\gamma$.
Gauge invariance is satisfied if
\begin{equation}\label{gauge_invariance} 
\mathscr{S}( \Phi ,\mathcal{A}+ {\rm d} f)= \mathscr{S}( \Phi ,\mathcal{A}) + C,
\end{equation}
for all $f \in C^\infty( \mathcal{D} )$, where $C$ may depend on quantities that are not varied when computing the critical conditions, such as $f$, $W$, $K$, or the endpoints values $ \Phi ( \lambda _0, X)$ and $ \Phi ( \lambda _1, X)$. Consequently, $\mathcal{A}$ is a critical field if and only if  $\mathcal{A}+{\rm d}f$ is also critical field. Recall that the critical field conditions include both the Euler-Lagrange equations and the boundary conditions. We shall see in \S\ref{gauge_continua} that for the Lagrangian densities of electromagnetic fluids and solids, \eqref{gauge_invariance} is satisfied. 

When combining these gauge transformations with the covariance properties under material and spacetime diffeomorphisms, the total invariance of the theory can be described by the group $\left( \operatorname{Diff}( \mathcal{D} ) \,\circledS\, C^\infty( \mathcal{D} ) \right)  \times \operatorname{Diff}( \mathcal{M} )\ni (\varphi, f, \psi)$, where  $\circledS$ denotes the semidirect product with group multiplication $( \varphi _1,f_1)(\varphi _2,f_2)= (\varphi _1 \circ \varphi _2, f_1 \circ \psi_2 + f_2)$, see also Remark \ref{remark_GI}. For instance, it acts on the electromagnetic potential as $\mathcal{A}\mapsto \varphi^*\mathcal{A}+ {\rm d} f$ and on the world-tube as $\Phi\mapsto \psi^{-1}\circ\Phi\circ\varphi$. While gauge invariance is most naturally expressed in the material description, it can also be formulated in the spacetime or convective descriptions. Details of these formulations are left to the interested reader.

\paragraph{Technical results on Lie derivatives and divergences.}
We recall below a technical result that we will crucially use for the derivation of the Euler-Lagrange equations, see  \cite[Lemma 4.1]{GB2024}. The local expression of the Lie derivative of a $(p,q)$-tensor field $ \kappa $ along a vector field $ \zeta $ can be written as
\begin{equation}\label{Lie_der}
\begin{aligned} 
&(\pounds _ \zeta \kappa  )^{\alpha_1 ... \alpha_p}_{\beta _1 ... \beta _q} 
= \zeta ^ \gamma\partial _ \gamma \kappa  ^{\alpha_1 ... \alpha_p}_{\beta _1 ... \beta _q}  + \widehat{ \kappa  }\;^{\alpha_1 ... \alpha_p\mu}_{\beta _1 ... \beta _q\nu} \partial _ \mu \zeta ^ \nu,
\end{aligned}
\end{equation}  
where $\widehat{ \kappa  }$ is the $(p+1, q+1)$ tensor field defined by
\begin{equation}\label{hat_kappa} 
\widehat{ \kappa  }\;^{ \alpha _1 ... \alpha _p\nu} _{ \beta _1... \beta _q\mu}= \sum_r \big( \kappa  ^{ \alpha _1 ... \alpha _p} _{ \beta _1... \beta _{r-1} \mu \beta _{r+1} ...\beta _q} \delta ^ \nu_{ \beta _r}- \kappa  ^{ \alpha _1 ... \alpha  _{r-1} \nu \alpha  _{r+1} ...\alpha _p} _{ \beta _1... \beta _q} \delta ^ {\alpha  _r} _\mu\big).
\end{equation} 
Formulas \eqref{Lie_der} can also be written in terms of a given torsion free covariant derivative $ \nabla $ on $ \mathcal{M} $ by replacing $ \partial _ \gamma $ with $ \nabla _ \gamma $. With such choice, we have the global formula
\begin{equation}\label{global_formula} 
\pounds _ \zeta  \kappa = \nabla _ \zeta \kappa  + \widehat{ \kappa  }: \nabla \zeta .
\end{equation}

\begin{lemma}\label{technical_lemma} Let $ \kappa  $ be a $(p,q)$-tensor field and $\pi$ a $(q,p)$-tensor field density.
Then, we have locally
\begin{equation}\label{formula_Lemma_local}
\begin{aligned} 
(\pounds _ \zeta  \kappa ) ^{ \alpha _1 ... \alpha _p} _{ \beta _1... \beta _q}  \pi  _{ \alpha _1 ... \alpha _p} ^{ \beta _1... \beta _q}&=  \zeta ^ \mu \left( \partial _ \mu \kappa  ^{ \alpha _1 ... \alpha _p} _{ \beta _1... \beta _q} \pi _{ \alpha _1 ... \alpha _p} ^{ \beta _1... \beta _q} - \partial _ \nu( \widehat{ \kappa  }\;^{ \alpha _1 ... \alpha _p\nu} _{ \beta _1... \beta _q\mu}\pi  _{ \alpha _1 ... \alpha _p} ^{ \beta _1... \beta _q})\right) \\
& \qquad + \partial _ \nu \left( \widehat{ \kappa  }\;^{ \alpha _1 ... \alpha _p\nu} _{ \beta _1... \beta _q\mu}\pi  _{ \alpha _1 ... \alpha _p} ^{ \beta _1... \beta _q} \zeta ^ \mu \right).
\end{aligned}
\end{equation}  
The same formula holds with $\partial _ \gamma $ replaced by $ \nabla _ \gamma $, with $ \nabla $ a torsion free covariant derivative. In this case, we have the global formula
\begin{equation}\label{formula_Lemma_global} 
\textcolor{black}{ \pounds _ \zeta  \kappa  : \pi = \nabla _ \zeta \kappa  : \pi - \operatorname{div}^ \nabla (  \pi \!\therefore\! \widehat{ \kappa  }) \cdot \zeta+ \operatorname{div}( ( \pi\!\therefore\! \widehat{ \kappa  } ) \cdot \zeta  ) },
\end{equation}
where the colon ``$\,:\,$" in $\pounds _ \zeta  \kappa  : \pi $ and $\nabla _ \zeta \kappa  : \pi$ denotes the full contraction and where $ \pi \!\therefore\! \widehat{\kappa }$ is the $(1,1)$ tensor field density obtaining by contracting all the respective indices of $ \widehat{ \kappa  } $ and $ \pi $ except the last covariant and contravariant indices of $\widehat{ \kappa  }$.
\end{lemma}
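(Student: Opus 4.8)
The plan is to establish the local identity \eqref{formula_Lemma_local} by a one-line coordinate computation based solely on the Leibniz rule, and then to read off the intrinsic formula \eqref{formula_Lemma_global} by recognising each term. First I would start from the local expression \eqref{Lie_der} of the Lie derivative, abbreviate the multi-indices as $A=(\alpha_1,\dots,\alpha_p)$, $B=(\beta_1,\dots,\beta_q)$, and relabel the two auxiliary dummy indices of $\widehat{\kappa}$ (swapping $\mu\leftrightarrow\nu$) to match the convention of \eqref{hat_kappa}, writing $(\pounds_\zeta\kappa)^{A}_{B}=\zeta^{\mu}\partial_{\mu}\kappa^{A}_{B}+\widehat{\kappa}^{A\nu}_{B\mu}\,\partial_{\nu}\zeta^{\mu}$, the auxiliary contravariant index $\nu$ being the differentiation slot and the auxiliary covariant index $\mu$ the slot contracted with $\zeta$. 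Contracting both sides with the $(q,p)$ tensor density $\pi^{B}_{A}$, the first term already produces the first summand of \eqref{formula_Lemma_local}; for the second term I would use the Leibniz rule in the form $\widehat{\kappa}^{A\nu}_{B\mu}\pi^{B}_{A}\,\partial_{\nu}\zeta^{\mu}=\partial_{\nu}\big(\widehat{\kappa}^{A\nu}_{B\mu}\pi^{B}_{A}\,\zeta^{\mu}\big)-\partial_{\nu}\big(\widehat{\kappa}^{A\nu}_{B\mu}\pi^{B}_{A}\big)\zeta^{\mu}$ and substitute, which reproduces \eqref{formula_Lemma_local} verbatim.

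For the covariant statement I would run the identical computation with $\partial$ replaced everywhere by $\nabla$, now taking as input the global formula \eqref{global_formula}, $\pounds_{\zeta}\kappa=\nabla_{\zeta}\kappa+\widehat{\kappa}:\nabla\zeta$, which is the precise $\nabla$-analogue of \eqref{Lie_der}; the Leibniz step is still legitimate because $\nabla$ is a derivation on the tensor density algebra and commutes with index contractions, and the torsion-free hypothesis is used only to justify \eqref{global_formula} (a torsion term would otherwise appear). To pass from the resulting $\nabla$-version of \eqref{formula_Lemma_local} to the intrinsic statement \eqref{formula_Lemma_global}, I would identify the three blocks: $\zeta^{\mu}(\nabla_{\mu}\kappa^{A}_{B})\pi^{B}_{A}=\nabla_{\zeta}\kappa:\pi$; the quantity $T^{\nu}{}_{\mu}:=\widehat{\kappa}^{A\nu}_{B\mu}\pi^{B}_{A}$ is exactly the $(1,1)$ tensor density $\pi\!\therefore\!\widehat{\kappa}$ (all slots contracted except the last contravariant and covariant slots of $\widehat{\kappa}$), so $\zeta^{\mu}\nabla_{\nu}T^{\nu}{}_{\mu}=\operatorname{div}^{\nabla}(\pi\!\therefore\!\widehat{\kappa})\cdot\zeta$; and $\nabla_{\nu}\big(T^{\nu}{}_{\mu}\zeta^{\mu}\big)=\operatorname{div}\big((\pi\!\therefore\!\widehat{\kappa})\cdot\zeta\big)$, where in the last equality one uses that the covariant divergence of a weight-one vector density coincides with its ordinary divergence $\partial_{\nu}(\,\cdot\,)^{\nu}$.

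I do not expect a real obstacle: the whole argument is the Leibniz rule applied once, and nothing beyond bundle bookkeeping is involved. The points that actually require care are keeping straight which auxiliary slot of $\widehat{\kappa}$ is differentiated and which is paired with $\zeta$, making sure the full contraction against $\pi$ pairs indices correctly, and the one non-formal input invoked at the end — the coincidence of $\operatorname{div}^{\nabla}$ and $\operatorname{div}$ on weight-one vector densities — which is exactly what guarantees that the last term of \eqref{formula_Lemma_global} is well defined independently of the choice of $\nabla$.
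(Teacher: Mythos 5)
Your proposal is correct and is essentially the canonical argument: the paper does not reprove the lemma here but defers to Lemma~4.1 of Part I (\cite{GB2024}), where the proof is exactly this contraction of \eqref{Lie_der} with $\pi$ followed by one application of the Leibniz rule, with the covariant version obtained from \eqref{global_formula} and the connection-independence of the divergence of a weight-one vector density. Your bookkeeping of the auxiliary slots of $\widehat{\kappa}$ and the identification of $\widehat{\kappa}^{A\nu}_{B\mu}\pi^B_A$ with $\pi\!\therefore\!\widehat{\kappa}$ are both accurate, so nothing is missing.
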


Note that in \eqref{formula_Lemma_global}, the first divergence operator is associated to the torsion free covariant derivative $ \nabla $, while the last one is canonically defined since it acts on a vector field density, see \cite[Remark 4.2]{GB2024}.

We now specify this result to the case where $ \kappa $ is the electromagnetic potential $A$ or the Faraday two-form $F$.
We note that 
\begin{align*}
(\pounds _\zeta A)_ \beta &=\zeta ^ \gamma \partial _ \gamma A_ \beta + \widehat{A}_ \beta {}^ \mu _ \nu \partial _ \mu \zeta  ^\nu , \qquad  \widehat{A}_ \beta {}^ \mu _ \nu = A_ \nu \delta ^ \mu _ \beta \\
(\pounds _ \zeta F)_{ \beta _1 \beta _2}&= \zeta ^ \gamma  \partial _ \gamma F_{ \beta _1 \beta _2} + \widehat{F}_{ \beta _1 \beta_2} {}^ \mu _ \nu \partial _ \mu \zeta  ^\nu, \qquad \widehat{F}_{ \beta _1 \beta _2}{} ^\mu  _ \nu = F_{ \nu  \beta _2} \delta  _{ \beta _1} ^ \mu  + F_{  \beta _1 \nu }  \delta  ^ \mu _{ \beta _2}
\end{align*} 
hence, we get $(\pi \therefore \widehat{A} ) ^ \mu _ \nu  = \pi^ \mu A_ \nu $ and $(\pi\therefore\widehat{F})^ \mu _ \nu = \frac{1}{2} \pi ^{ \alpha \beta }\widehat{F}_{ \alpha \beta }{}^ \mu _ \nu = \pi^ { \mu \gamma }F_{ \nu \gamma }$, which we write as
\begin{equation}\label{notation_therefore_AF} 
\pi\therefore \widehat{A} =\pi \otimes A \quad\text{and}\quad \pi\therefore\widehat{F} =\pi \stackrel{\rm tr}{ \otimes }F,
\end{equation}
where $ \pi $ is the $k$-multivector field density associated with $A$ ($k=1$) or $F$ ($k=2$).
The first operation is the tensor product $ \otimes $, while the second one involves taking a tensor product as well as a trace, hence the notation $\stackrel{\rm tr}{ \otimes }$.

\subsection{Reduced Euler-Lagrange equations on spacetime}\label{spacetime_reduction}

We shall now derive the variational principle and equations on spacetime, under the covariance assumption \eqref{material_cov}  of the material Lagrangian density $\mathscr{L}$. While the variational principle using $\mathscr{L}$ is the usual Hamilton principle, the variational principle induced on spacetime involves constrained Eulerian variations induced by the arbitrary Lagrangian variations.  This is stated in the following theorem in which we denote by
\[
\mathcal{N}= \Phi ( \mathcal{D}) \quad\text{and}\quad  \partial _{\rm cont} \mathcal{N} = \Phi ([\lambda _0, \lambda _1] \times \partial \mathcal{B} )
\]
the spacetime region occupied by the continuum and its boundary, respectively. We note that when $W=\partial_\lambda$, then the generalized velocity $w=\Phi_*\partial_\lambda$ is parallel to the boundary $\partial_{\rm cont}\mathcal{N}$, a boundary condition that we shall not always explicitly write down in the equations.

\begin{theorem}[Covariant Eulerian reduction for electromagnetic continua]\label{spacetime_reduced_EL} Let $ \mathscr{L} : J^1( \mathcal{D}  \times \mathcal{M} ) \times \wedge ^1 T ^*\mathcal{D} \times \wedge ^2T^* \mathcal{D} \times  T \mathcal{D} \times T^p_q\mathcal{D}  \times T^r_s \mathcal{M}  \rightarrow \wedge ^{n+1}T^* \mathcal{D} $ be a Lagrangian density with the material covariance property \eqref{material_cov} and consider the associated reduced spacetime Lagrangian density $\ell : T^*\mathcal{M} \times \wedge ^2 T^* \mathcal{M}  \times  T \mathcal{M} \times T ^p _q  \mathcal{M}  \times  T^r _s  \mathcal{M} \rightarrow \wedge  ^{n+1} T^*\mathcal{M} $, defined in \eqref{def_ell}.
Fix
\begin{itemize}
\item[-] the reference tensor fields $W \in \mathfrak{X} ( \mathcal{D} )$, $K \in \mathcal{T} _q^p( \mathcal{D} )$;
\item[-]  the spacetime tensor field $ \gamma \in \mathcal{T} ^r_s( \mathcal{M} )$.
\end{itemize}
For each world-tube $ \Phi : \mathcal{D} \rightarrow \mathcal{M} $ and electromagnetic potential $\mathcal{A}\in \Omega^1(\mathcal{D})$, define
\begin{itemize}
\item[-] $\mathcal{N}= \Phi ( \mathcal{D})$ and $ \partial _{\rm cont} \mathcal{N} = \Phi ([\lambda _0, \lambda _1] \times \partial \mathcal{B} )$;
\item[-] $A= \Phi _* \mathcal{A} $, $w=\Phi_* W$, and $ \kappa = \Phi _ * K$.
\end{itemize}
Then, the following statements are equivalent:
\begin{itemize}
\item[\bf (i)] $\Phi: \mathcal{D} \rightarrow \mathcal{M} $ and $\mathcal{A} \in \Omega ^1 ( \mathcal{D} ) $ are critical points of the {\bfi Hamilton principle}
\[
\left. \frac{d}{d\varepsilon}\right|_{\varepsilon=0} \int_\mathcal{D}  \mathscr{L} ( j^1\Phi_ \varepsilon , \mathcal{A} _ \varepsilon , {\rm d} \mathcal{A} _ \varepsilon , W, K , \gamma \circ \Phi _ \varepsilon  )=0
\]
for arbitrary variations $\Phi _ \varepsilon  $ and $\mathcal{A} _ \varepsilon $ \textcolor{black}{with fixed endpoints at $ \lambda = \lambda _0, \lambda _1$}.

\item[\bf (ii)] $\Phi: \mathcal{D} \rightarrow \mathcal{M} $ and $\mathcal{A} \in \Omega ^1 ( \mathcal{D} ) $ are solutions of the {\bfi Euler-Lagrange equations} \eqref{EL_material}.
\item[\bf (iii)] $A \in \Omega ^1( \mathcal{N} )$, $w \in \mathfrak{X} (\mathcal{N} )$, and $ \kappa \in \mathcal{T} ^p_q( \mathcal{N} )$ are critical points of the {\bfi Eulerian variational principle} \color{black} 
\begin{equation}\label{Eulerian_VP}
\begin{aligned} 
&\!\!\left. \frac{d}{d\varepsilon}\right|_{\varepsilon=0}\int_{ \mathcal{N}_ \varepsilon } \ell\big( A_ \varepsilon , {\rm d} A_ \varepsilon  , w_ \varepsilon , \kappa _ \varepsilon , \gamma \big)=0 \quad \text{for variations}\\
& \delta \mathcal{N} = \zeta |_{ \partial \mathcal{N} } \big/T \partial \mathcal{N} , \;\; \delta A = - \pounds _ \zeta A + 
\deltabar A,\;\; \delta w = - \pounds _ \zeta w ,\;\;   \delta \kappa = - \pounds _ \zeta \kappa,\phantom{\int_A^B}
\end{aligned} 
\end{equation}
where $ \zeta$ is an arbitrary vector field on $ \mathcal{N} $ such that $ \zeta |_{ \Phi ( \lambda _0, \mathcal{B} )}= \zeta |_{ \Phi ( \lambda _1, \mathcal{B} )}=0$ and $\deltabar A$ is an arbitrary one-form on $ \mathcal{N} $ such that $\deltabar A|_{ \Phi ( \lambda _0, \mathcal{B} )}= \deltabar A |_{ \Phi ( \lambda _1, \mathcal{B} )}=0$.
\item[\bf (iv)] $A \in \Omega ^1 ( \mathcal{N} )$, $w \in \mathfrak{X} (\mathcal{N} )$, and $ \kappa \in \mathcal{T} ^p_q( \mathcal{N} )$ are solution of the {\bfi reduced Euler-Lagrange equations on spacetime}
\begin{equation}\label{spacetime_EL} 
\hspace{-1cm}\left\{
\begin{array}{l}
\displaystyle\vspace{0.2cm}\operatorname{div}^ \nabla \!\Big(  \ell \delta  + w \otimes \frac{\partial \ell}{\partial w} -  \textcolor{black}{  \frac{\partial \ell}{\partial \kappa }\!\therefore\!  \widehat{ \kappa  }} -  \frac{\partial \ell}{\partial A } \otimes  A  -  \frac{\partial \ell}{\partial F }\stackrel{\rm tr}{ \otimes }F\Big)  =  \frac{\partial ^\nabla\!\ell}{\partial x}+\frac{\partial \ell}{\partial \gamma } : \nabla  \gamma\\
\displaystyle\vspace{0.2cm} \frac{\partialnew \ell}{\partialnew A} + {\rm d}  \frac{\partialnew \ell}{\partialnew F}=0, \qquad \pounds _w \kappa =0, \qquad i^*_{ \partial _{\rm cont}\mathcal{N} } \frac{\partialnew \ell}{\partialnew F}=0,\\
\displaystyle i_{ \partial _{\rm cont} \mathcal{N} }^*\Big(\operatorname{tr}\Big(  \ell \delta  + w \otimes \frac{\partial \ell}{\partial w} -   \textcolor{black}{ \frac{\partial \ell}{\partial \kappa }\!\therefore\!  \widehat{ \kappa  } }-  \frac{\partial \ell}{\partial A }\otimes A - \frac{\partial \ell}{\partial F }\stackrel{\rm tr}{ \otimes }F\Big) \cdot \zeta\Big)=0,\;\;\forall \zeta,
\end{array}
\right.
\end{equation} 
written with the help of a torsion free covariant derivative $ \nabla $ \textcolor{black}{and with $ \delta $ the Kronecker delta\footnote{\textcolor{black}{No confusion should arise with the same symbol $ \delta $ also used for variations}.}}. In local coordinates, writing $\ell=\bar\ell d^{n+1}x$, the boundary conditions read
\[
\Big( \bar\ell  \delta  + w \otimes \frac{\partial \bar\ell}{\partial w}-   \textcolor{black}{\frac{\partial \bar{\ell}}{\partial \kappa }\!\therefore\! \widehat{ \kappa  } } -  \frac{\partial \bar\ell}{\partial A } \otimes  A  -  \frac{\partial \bar\ell}{\partial F }\stackrel{\rm tr}{ \otimes }F\Big) ^\mu_\nu d^nx_\mu=0, \quad \text{on} \quad   \textcolor{black}{\partial_{\rm cont} \mathcal{N}}
\]
\[
\text{and} \quad \frac{\partial \bar \ell}{\partial F_{ \mu \nu  }}d^nx_ \mu =0 , \quad \text{on} \quad   \textcolor{black}{\partial_{\rm cont} \mathcal{N}}.
\]
\end{itemize}
\end{theorem}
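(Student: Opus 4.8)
The plan is to establish the chain of equivalences $(i)\Leftrightarrow(ii)$ and $(iii)\Leftrightarrow(iv)$ by the standard fundamental-lemma-of-the-calculus-of-variations argument, and then bridge the two pictures via the material covariance hypothesis \eqref{material_cov}. The equivalence $(i)\Leftrightarrow(ii)$ is essentially a restatement of the computation that produced \eqref{EL_material}: expand $\frac{d}{d\varepsilon}\big|_{0}\int_\mathcal{D}\mathscr{L}(j^1\Phi_\varepsilon,\mathcal{A}_\varepsilon,{\rm d}\mathcal{A}_\varepsilon,W,K,\gamma\circ\Phi_\varepsilon)$ using the chain rule, noting that the $\gamma\circ\Phi_\varepsilon$ dependence contributes the $\frac{\partial\bar{\mathscr{L}}}{\partial\gamma}\partial_\mu\gamma$ source term, integrate by parts in $X^a$, and invoke arbitrariness of the variations $\delta\Phi^\mu$, $\delta\mathcal{A}_b$ in the interior (giving the field equations) and on the lateral boundary $[\lambda_0,\lambda_1]\times\partial\mathcal{B}$ (giving the boundary conditions), the endpoint terms at $\lambda=\lambda_0,\lambda_1$ vanishing by the fixed-endpoint condition.

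For $(i)\Leftrightarrow(iii)$, I would push the action forward along $\Phi$: by \eqref{def_ell}, $\int_\mathcal{D}\mathscr{L}(j^1\Phi,\mathcal{A},{\rm d}\mathcal{A},W,K,\gamma\circ\Phi)=\int_\mathcal{D}\Phi^*[\ell(A,{\rm d}A,w,\kappa,\gamma)]=\int_{\mathcal{N}}\ell(A,{\rm d}A,w,\kappa,\gamma)$. An arbitrary variation $(\Phi_\varepsilon,\mathcal{A}_\varepsilon)$ of the material fields induces a variation of $(\mathcal{N},A,w,\kappa)$; writing $\zeta=\delta\Phi\circ\Phi^{-1}$ for the spacetime vector field generated by the world-tube variation, the naturality of push-forward gives $\delta A=-\pounds_\zeta A+\deltabar A$ (where $\deltabar A=\Phi_*\delta\mathcal{A}$ is the genuinely free one-form coming from varying $\mathcal{A}$ with $\Phi$ fixed), $\delta w=-\pounds_\zeta w$, $\delta\kappa=-\pounds_\zeta\kappa$, and $\delta\mathcal{N}$ is the boundary displacement $\zeta|_{\partial\mathcal{N}}/T\partial\mathcal{N}$; the fixed-endpoint condition on $(\Phi_\varepsilon,\mathcal{A}_\varepsilon)$ translates into $\zeta$ and $\deltabar A$ vanishing on $\Phi(\lambda_0,\mathcal{B})\cup\Phi(\lambda_1,\mathcal{B})$. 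Since $\Phi$ is an embedding, the correspondence between material variations and $(\zeta,\deltabar A)$ is a bijection, so criticality of the Hamilton principle is equivalent to criticality of \eqref{Eulerian_VP}. This is the step I expect to be the main obstacle: one must carefully justify that \emph{every} admissible $(\zeta,\deltabar A)$ arises from some material variation (surjectivity, handled by extending $\zeta$ off $\mathcal{N}$ and pulling back) and conversely, and track that the domain-variation term $\delta\mathcal{N}$ is correctly accounted for when differentiating $\int_{\mathcal{N}_\varepsilon}$ — this is a Reynolds-transport/Lie-derivative computation that also produces the boundary term in \eqref{spacetime_EL}.

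For $(iii)\Leftrightarrow(iv)$ I would compute $\frac{d}{d\varepsilon}\big|_0\int_{\mathcal{N}_\varepsilon}\ell$ directly. The $\deltabar A$-variation, being free and compactly supported away from the $\lambda$-endpoints, yields by \eqref{partial1} the Maxwell-type equation $\frac{\partialnew\ell}{\partialnew A}+{\rm d}\frac{\partialnew\ell}{\partialnew F}=0$ in the interior and $i^*_{\partial_{\rm cont}\mathcal{N}}\frac{\partialnew\ell}{\partialnew F}=0$ on the lateral boundary, after an integration by parts using ${\rm d}(\deltabar A)=\delta F$ and Stokes' theorem. The $\zeta$-variation is handled by Lemma \ref{technical_lemma}: each term $\frac{\partial\ell}{\partial w}\cdot\delta w$, $\frac{\partial\ell}{\partial\kappa}:\delta\kappa$, $\frac{\partial\ell}{\partial A}\cdot\delta A$ (the $-\pounds_\zeta A$ part), $\frac{\partial\ell}{\partial F}:\delta F$ (with $\delta F=-\pounds_\zeta F$) is of the form (partial derivative)$:\pounds_\zeta(\text{tensor})$, and \eqref{formula_Lemma_global} together with the notation \eqref{notation_therefore_AF} rewrites each as $\nabla_\zeta(\cdot):(\cdot)-\operatorname{div}^\nabla(\text{something}\cdot\zeta)+\operatorname{div}(\dots)$; meanwhile $\frac{d}{d\varepsilon}\int_{\mathcal{N}_\varepsilon}\ell$ contributes the $\operatorname{div}^\nabla(\ell\,\delta)$ term (from the domain variation, via $\int_{\mathcal{N}_\varepsilon}=\int_\mathcal{N}+\varepsilon\int_\mathcal{N}\pounds_\zeta\ell+\cdots$ and Cartan's formula) and the explicit-dependence term $\frac{\partial^\nabla\ell}{\partial x}+\frac{\partial\ell}{\partial\gamma}:\nabla\gamma$. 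Collecting the divergence terms under a single $\operatorname{div}^\nabla$ gives exactly the stress-energy-momentum combination in brackets in \eqref{spacetime_EL}; arbitrariness of $\zeta$ in the interior gives the balance law, and on $\partial_{\rm cont}\mathcal{N}$ the leftover boundary integral $\int_{\partial_{\rm cont}\mathcal{N}}i^*(\operatorname{tr}(\cdots)\cdot\zeta)$ must vanish for all $\zeta$, giving the stated boundary condition; the constraint $\pounds_w\kappa=0$ is inherited from $\pounds_W K=0$ by push-forward. Finally, $(ii)\Leftrightarrow(iv)$ follows formally by composing the three equivalences, or can be checked directly by pulling back \eqref{spacetime_EL} along $\Phi$ and using material covariance to identify the pulled-back reduced quantities with the material partial derivatives of $\mathscr{L}$.
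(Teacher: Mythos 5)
Your proposal is correct and follows essentially the same route as the paper: the equivalences $(i)\Leftrightarrow(ii)$ and the identification of the induced constrained variations $(\zeta,\deltabar A)$ via \eqref{def_ell} are treated exactly as in the paper's proof, and the computation for $(iii)\Leftrightarrow(iv)$ is the paper's own chain of equalities, using Lemma \ref{technical_lemma}, the identity \eqref{nabla_derivative_ell}, the notations \eqref{notation_therefore_AF}, and $\mathbf{i}_\zeta\ell=\operatorname{tr}(\ell\delta)\cdot\zeta$ to assemble the divergence of the stress-energy-momentum combination and the boundary terms. Your extra care about the bijectivity of the correspondence between material variations and $(\zeta,\deltabar A)$ is a point the paper leaves implicit, but it does not change the argument.
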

\begin{proof} \textcolor{black}{First, let us note that the constrained variations in {\bf (iii)} are induced by the variations $\Phi _ \varepsilon  $ and $\mathcal{A} _ \varepsilon $ of the world-tube and electromagnetic potential by using the relations $ \mathcal{N} _ \varepsilon = \Phi _ \varepsilon  ( \mathcal{D} )$, $A_ \varepsilon =(\Phi _ \varepsilon )_* \mathcal{A} _ \varepsilon  $, $w_ \varepsilon = (\Phi_ \varepsilon  )_*W$, $ \kappa _ \varepsilon = (\Phi_ \varepsilon  )_*K$ and defining $ \zeta= \delta \Phi \circ \Phi ^{-1}$}. For simplicity, we include only the material tensor $K$, since the treatment of the vector field $W$ is a particular case of it. Using Lemma \ref{technical_lemma} as well as the definition of the partial derivatives in \eqref{partial2} and \eqref{partial1}, we have\color{black} 
\begin{equation}\label{big_computation}
\begin{aligned} 
&\left. \frac{d}{d\varepsilon}\right|_{\varepsilon=0}\int_{ \Phi _ \varepsilon  ( \mathcal{D} )} \ell\big( (\Phi_ \varepsilon  )_* \mathcal{A} _ \varepsilon , (\Phi_ \varepsilon  )_* {\rm d} \mathcal{A} _ \varepsilon ,(\Phi_ \varepsilon  )_*K, \gamma \big)\\
&\quad = \int_\mathcal{N}  ( - \pounds _ \zeta A + \deltabar A) \wedge \frac{  \partialnew\ell}{\partialnew A}+  {\rm d} ( - \pounds _ \zeta A + \deltabar A) \wedge \frac{  \partialnew\ell}{\partialnew A}+ \frac{\partial \ell}{\partial \kappa  } : \delta \kappa+ \int_{ \partial \mathcal{N} } \mathbf{i} _{ \delta \Phi \circ \Phi ^{-1} }\ell \\
& \quad = - \int_\mathcal{N}  \frac{\partial \ell}{\partial A  } : \pounds _ \zeta A+  \frac{\partial \ell}{\partial F  } : \pounds _ \zeta F+  \frac{\partial \ell}{\partial \kappa  } : \pounds _ \zeta \kappa + \deltabar A \wedge \left(  \frac{  \partialnew\ell}{\partialnew A} + {\rm d}  \frac{  \partialnew\ell}{\partialnew F}\right) \\
& \qquad \qquad + \int_{ \partial \mathcal{N} } \mathbf{i} _{ \zeta }\ell + \deltabar A \wedge  \frac{  \partialnew\ell}{\partialnew A}\\
&\quad = - \int_\mathcal{N}  \Big(    \frac{\partial \ell}{\partial A } :  \nabla _ \zeta A +  \frac{\partial \ell}{\partial F } :  \nabla _ \zeta F +  \frac{\partial \ell}{\partial \kappa } :  \nabla _ \zeta \kappa - \operatorname{div}^ \nabla \!\Big( \frac{\partial \ell}{\partial \kappa } \!\therefore\!  \widehat{ \kappa  }+\frac{\partial \ell}{\partial A } \!\therefore\!  \widehat{ A  }+\frac{\partial \ell}{\partial F } \!\therefore\!  \widehat{ F  }\Big)  \cdot \zeta \Big) \\
& \qquad \qquad +  \int_\mathcal{N}  \deltabar A \wedge \left(  \frac{  \partialnew\ell}{\partialnew A} + {\rm d}  \frac{  \partialnew\ell}{\partialnew F}\right) \\
&\qquad \qquad + \int_{ \partial \mathcal{N} }\operatorname{tr}  \Big( \ell \delta -\frac{\partial \ell}{\partial A } \!\therefore\!  \widehat{ A  }-\frac{\partial \ell}{\partial F } \!\therefore\!  \widehat{ F  }-  \frac{\partial \ell}{\partial \kappa }\!\therefore\! \widehat{ \kappa  } \Big) \cdot \zeta  +\int_{ \partial \mathcal{N} }  \deltabar A \wedge  \frac{  \partialnew\ell}{\partialnew A} \\
&\quad = - \int_ \mathcal{N}  \Big(    \operatorname{div}^ \nabla \!\Big( \ell \delta   -  \frac{\partial \ell}{\partial A } \otimes  A  -  \frac{\partial \ell}{\partial F }\stackrel{\rm tr}{ \otimes }F -   \frac{\partial \ell}{\partial \kappa }\!\therefore\!  \widehat{ \kappa  }\Big) \cdot \zeta - \frac{\partial ^ \nabla \!\ell}{\partial x} \cdot \zeta - \frac{\partial \ell}{\partial \gamma } : \nabla _ \zeta \gamma  \Big)  \\
& \qquad \qquad +  \int_\mathcal{N}  \deltabar A \wedge \left(  \frac{  \partialnew\ell}{\partialnew A} + {\rm d}  \frac{  \partialnew\ell}{\partialnew F}\right) \\
& \qquad \qquad +\int_{ \partial \mathcal{N} }\operatorname{tr}  \Big( \ell \delta -\frac{\partial \ell}{\partial A } \!\therefore\!  \widehat{ A  }-\frac{\partial \ell}{\partial F } \!\therefore\!  \widehat{ F  }-  \frac{\partial \ell}{\partial \kappa }\!\therefore\! \widehat{ \kappa  } \Big) \cdot \zeta  +\int_{ \partial \mathcal{N} }  \deltabar A \wedge  \frac{  \partialnew\ell}{\partialnew A}.
\end{aligned}
\end{equation} 
In the third equality we used $\mathbf{i} _ \zeta \ell =\operatorname{tr}(\ell \delta ) \cdot  \zeta $. In the fourth equality we used the following formula
\begin{equation}\label{nabla_derivative_ell} \color{black} 
\nabla_ \zeta  [ \ell( A, F, \kappa , \gamma )]=  \frac{\partial ^\nabla\!\ell}{\partial x} \cdot \zeta +\frac{\partial \ell}{\partial A } : \nabla _ \zeta A+\frac{\partial \ell}{\partial F } : \nabla _ \zeta F+\frac{\partial \ell}{\partial \kappa } : \nabla _ \zeta \kappa + \frac{\partial \ell}{\partial \gamma } : \nabla _ \zeta \gamma 
\end{equation} 
for the derivative of the bundle map $\ell$ with respect to the given connection $ \nabla $, with $\frac{\partial ^ \nabla \ell}{\partial x}$ the derivative of $\ell$ with respect to the base point defined with the help of $ \nabla $. We also used the relations \eqref{notation_therefore_AF}.

To include the reference vector field $W$ we note that for $w$ a vector field, we have
\[
\textcolor{black}{ \frac{\partial \ell}{\partial w}\!\therefore\!  \widehat{w}}= - w \otimes \frac{\partial \ell}{\partial w} .
\]
Since the vector field $ \zeta $ and the one-form $ \deltabar A$ are arbitrary, while vanishing on $ \Phi ( \lambda _0, \mathcal{B} )$ and $ \Phi ( \lambda _1, \mathcal{B} )$, we get the equations \eqref{spacetime_EL}.
\end{proof}


\paragraph{Spacetime covariance.} We now examine the case in which the material Lagrangian density $\mathscr{L}$ in \eqref{Lagrangian_general}  is also spacetime covariant with respect to $ \operatorname{Diff}( \mathcal{M} )$, see \eqref{spacetime_cov}, in addition to the material covariance \eqref{material_cov}.

\begin{lemma}\label{spacetime_mat_ell} Assume that the material Lagrangian density $ \mathscr{L}$ in \eqref{material_Lagrangian} is materially covariant and consider the associated spacetime Lagrangian $\ell $, see \eqref{def_ell}. Then if $ \mathscr{L} $ is also spacetime covariant, we have
\begin{equation}\label{double_covariance}
\psi ^\ast\left[ \ell (  \psi _* A, \psi _* F, \psi_\ast w, \psi_\ast \kappa , \psi _\ast \gamma ) \right] = \ell (A, F,  w, \kappa , \gamma ), \quad \forall\psi \in \operatorname{Diff}(\mathcal{M} ).
\end{equation}
Therefore, $\ell$ satisfies the following equalities:
\[
\frac{\partial ^ \nabla \!\ell}{\partial x}=0 \quad\text{and}\quad \textcolor{black}{ \frac{\partial \ell}{\partial A} \otimes A +  \frac{\partial \ell}{\partial F} \stackrel{\rm tr}{\otimes} F - w \otimes \frac{\partial \ell}{\partial w} +   \frac{\partial \ell}{\partial \kappa }\!\therefore\!  \widehat{ \kappa  } +   \frac{\partial \ell}{\partial \gamma  }\!\therefore\!   \widehat{ \gamma   } =\ell \delta }.
\]
\end{lemma}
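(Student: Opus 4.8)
The plan is to first establish the \emph{double covariance} identity \eqref{double_covariance} from the definitions \eqref{def_ell} and \eqref{spacetime_cov}, and then to differentiate it in $\psi$ in order to read off the two pointwise relations. For the first step, I would evaluate the defining relation \eqref{def_ell} of $\ell$ along the composed world-tube $\psi\circ\Phi$ and along the spacetime tensor $\psi_*\gamma$. Functoriality of the push-forward gives $(\psi\circ\Phi)_*\mathcal{A}=\psi_*A$, $(\psi\circ\Phi)_*\mathcal{F}=\psi_*F$, $(\psi\circ\Phi)_*W=\psi_*w$, $(\psi\circ\Phi)_*K=\psi_*\kappa$, and $(\psi\circ\Phi)^*=\Phi^*\circ\psi^*$, so that \eqref{def_ell} becomes
\[
\mathscr{L}\big(j^1(\psi\circ\Phi),\mathcal{A},{\rm d}\mathcal{A},W,K,\psi_*\gamma\circ\psi\circ\Phi\big)=\Phi^*\psi^*\big[\ell(\psi_*A,\psi_*F,\psi_*w,\psi_*\kappa,\psi_*\gamma)\big].
\]
By spacetime covariance \eqref{spacetime_cov}, the left-hand side equals $\mathscr{L}(j^1\Phi,\mathcal{A},{\rm d}\mathcal{A},W,K,\gamma\circ\Phi)$, which by \eqref{def_ell} again equals $\Phi^*[\ell(A,F,w,\kappa,\gamma)]$. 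Since $\ell$ is a bundle map, both sides are determined by the pointwise values of the fields along $\mathcal{N}=\Phi(\mathcal{D})$; as $\Phi$ is an embedding and the world-tube together with its field values can be prescribed arbitrarily near any point of $\mathcal{M}$, one cancels $\Phi^*$ and obtains \eqref{double_covariance} on all of $\mathcal{M}$. This is the same reasoning that yields the existence of $\ell$ and $\mathcal{L}$ in \cite{GB2024}.

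For the second step, I would let $\psi=\psi_\varepsilon$ be the flow of an arbitrary vector field $\zeta\in\mathfrak{X}(\mathcal{M})$ with $\psi_0=\operatorname{id}$ and differentiate \eqref{double_covariance} at $\varepsilon=0$. Using $\frac{d}{d\varepsilon}\big|_{\varepsilon=0}\psi_\varepsilon^*=\pounds_\zeta$ and $\frac{d}{d\varepsilon}\big|_{\varepsilon=0}\psi_{\varepsilon*}=-\pounds_\zeta$ together with the chain rule for the bundle map $\ell$ (treating, as in the proof of Theorem \ref{spacetime_reduced_EL}, the velocity $w$ as one of the material tensors), this produces
\[
0=\pounds_\zeta\big[\ell(A,F,w,\kappa,\gamma)\big]-\Big(\frac{\partial\ell}{\partial A}:\pounds_\zeta A+\frac{\partial\ell}{\partial F}:\pounds_\zeta F+\frac{\partial\ell}{\partial w}:\pounds_\zeta w+\frac{\partial\ell}{\partial\kappa}:\pounds_\zeta\kappa+\frac{\partial\ell}{\partial\gamma}:\pounds_\zeta\gamma\Big).
\]

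I would then convert every Lie derivative into a covariant one through the global formula \eqref{global_formula}, $\pounds_\zeta(\cdot)=\nabla_\zeta(\cdot)+\widehat{(\cdot)}:\nabla\zeta$, applied to $A$, $F$, $w$, $\kappa$, $\gamma$, and to the $(n+1)$-form density $\ell$ itself, for which $\pounds_\zeta\ell=\nabla_\zeta\ell+(\operatorname{div}^\nabla\zeta)\,\ell$ with $(\operatorname{div}^\nabla\zeta)\,\ell=(\ell\delta):\nabla\zeta$; and I would use the relations \eqref{notation_therefore_AF}, namely $\frac{\partial\ell}{\partial A}\therefore\widehat{A}=\frac{\partial\ell}{\partial A}\otimes A$ and $\frac{\partial\ell}{\partial F}\therefore\widehat{F}=\frac{\partial\ell}{\partial F}\stackrel{\rm tr}{\otimes}F$, together with $\frac{\partial\ell}{\partial w}\therefore\widehat{w}=-w\otimes\frac{\partial\ell}{\partial w}$. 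By \eqref{nabla_derivative_ell} the $\nabla_\zeta$-terms recombine into $\nabla_\zeta\ell-\frac{\partial^\nabla\ell}{\partial x}\cdot\zeta$, which cancels the $\nabla_\zeta\ell$ coming from $\pounds_\zeta\ell$, leaving
\[
0=\frac{\partial^\nabla\ell}{\partial x}\cdot\zeta+\Big(\ell\delta-\frac{\partial\ell}{\partial A}\otimes A-\frac{\partial\ell}{\partial F}\stackrel{\rm tr}{\otimes}F+w\otimes\frac{\partial\ell}{\partial w}-\frac{\partial\ell}{\partial\kappa}\therefore\widehat{\kappa}-\frac{\partial\ell}{\partial\gamma}\therefore\widehat{\gamma}\Big):\nabla\zeta.
\]
Since at any point of $\mathcal{M}$ the value $\zeta$ and the covariant derivative $\nabla\zeta$ can be prescribed independently, the coefficient of $\zeta$ and the coefficient of $\nabla\zeta$ must vanish separately, which gives $\frac{\partial^\nabla\ell}{\partial x}=0$ and, after rearrangement, the asserted identity.

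The step I expect to require the most care is the bookkeeping in the third paragraph: keeping signs straight when trading $\pounds_\zeta$ for $\nabla_\zeta$, handling $\ell$ consistently as an $(n+1)$-form density so that $\pounds_\zeta\ell$ contributes precisely the $(\ell\delta):\nabla\zeta$ term, and matching each $\therefore$-contraction with the correct multivector-density incarnation \eqref{first_incarnation} of $\frac{\partial\ell}{\partial A}$ and $\frac{\partial\ell}{\partial F}$. The cancellation of $\Phi^*$ in the first paragraph is conceptually the subtle point, but it is already handled by the argument of \cite{GB2024}.
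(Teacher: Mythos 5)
Your proposal is correct and follows essentially the same route as the paper: derive \eqref{double_covariance} from \eqref{def_ell} together with \eqref{spacetime_cov}, differentiate along a flow, convert Lie derivatives to covariant ones via \eqref{global_formula} and \eqref{nabla_derivative_ell}, and conclude from the arbitrariness of $\zeta$ and $\nabla\zeta$. The only cosmetic difference is that the paper differentiates the pull-back form $\ell(\psi_\varepsilon^*A,\dots)=\psi_\varepsilon^*[\ell(\dots)]$ while you differentiate the push-forward form \eqref{double_covariance} directly; these agree up to the sign of $\zeta$ and yield the same identities.
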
 
\begin{proof} The first assertion follows from the relation \eqref{def_ell} combined with the spacetime covariance condition \eqref{spacetime_cov}. Taking a path of diffeomorphisms $ \psi _ \varepsilon  \in \operatorname{Diff}( \mathcal{M} )$ passing through the identity at $\varepsilon =0$ and taking the $ \varepsilon $-derivative of the condition $\ell (  \psi_ \varepsilon ^\ast A, \psi_ \varepsilon ^\ast F, \psi_ \varepsilon ^\ast w, \psi_ \varepsilon ^\ast \kappa , \psi _ \varepsilon ^\ast \gamma )= \psi _ \varepsilon ^*[\ell(A, F, w, \kappa , \gamma )]$, we get $ \frac{\partial \ell}{\partial A} \cdot \pounds _ \zeta A+  \frac{\partial \ell}{\partial F}:\pounds _ \zeta F+  \frac{\partial \ell}{\partial w} \cdot \pounds _ \zeta w + \textcolor{black}{ \frac{\partial \ell}{\partial  \kappa } : \pounds _ \zeta  \kappa + \frac{\partial \ell}{\partial  \gamma } : \pounds _ \zeta \gamma }= \operatorname{div}(\ell \zeta )$, for the vector field $ \zeta = \left. \frac{d}{d\varepsilon}\right|_{\varepsilon=0} \psi _ \varepsilon $, where $\operatorname{div}(\ell \zeta )$ is the divergence of the vector field density $\ell \zeta $. Choosing a torsion free covariant derivative $ \nabla $ and using the formulas \eqref{global_formula}, $ \operatorname{div}(\ell \zeta ) = \nabla _ \zeta [\ell( \cdot )] + \ell  \operatorname{div}^ \nabla\!\zeta=\nabla _ \zeta [\ell( \cdot )] + \ell \delta :  \nabla \zeta  $, and \eqref{nabla_derivative_ell},  the requested identity follows since the vector field $ \zeta $ is arbitrary.
\end{proof}

In the next corollary, we obtain that the reduced spacetime Euler-Lagrange equations can be written exclusively in terms of the partial derivative with respect to $ \gamma $, when $\mathscr{L}$ satisfies both covariance properties. We recall that $ \gamma $ is not a variable in the spacetime description, but a given fixed parameter, see Remark \ref{strangeness}.

\begin{corollary}\label{corol_spacetime} Assume that the material Lagrangian density $\mathscr{L}$ in \eqref{material_Lagrangian} is material and spacetime covariant. Then the reduced Euler-Lagrange equations \eqref{spacetime_EL} are equivalently written as
\begin{equation}\label{spacetime_EL_sc_gamma} \color{black} 
\left\{
\begin{array}{l}
\displaystyle\vspace{0.2cm}\operatorname{div}^ \nabla \!\Big(\frac{\partial \ell}{\partial \gamma  }\!\therefore\!  \widehat{ \gamma   }\Big) =\frac{\partial \ell}{\partial \gamma } : \nabla  \gamma\\
\displaystyle\vspace{0.2cm} \frac{\partialnew \ell}{\partialnew A} + {\rm d}  \frac{\partialnew \ell}{\partialnew F}=0, \qquad \pounds _w \kappa =0\\
\displaystyle\vspace{0.2cm}i^*_{ \partial _{\rm cont}\mathcal{N} } \frac{\partialnew \ell}{\partialnew \mathsf{F}}=0, \qquad  i_{ \partial _{\rm cont} \mathcal{N} }^*\Big(\operatorname{tr}\Big(\frac{\partial \ell}{\partial \gamma  }\!\therefore\!  \widehat{ \gamma   } \Big) \cdot \zeta\Big)=0, \;\;\forall \zeta.
\end{array}
\right.
\end{equation}
The local forms of the boundary conditions are
\[
\Big( \frac{\partial \bar{\ell}}{\partial \gamma  }\!\therefore\!  \widehat{ \gamma   }\Big)^\mu_\nu {\rm d} ^nx_\mu=0 \quad \text{and} \quad \frac{\partial \bar \ell}{\partial F_{ \mu \nu  }}{\rm d} ^nx_ \mu =0 , \quad \text{on} \quad  \partial_{\rm cont} \mathcal{N}.
\]
\end{corollary}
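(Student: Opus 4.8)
The plan is to combine Theorem~\ref{spacetime_reduced_EL} with Lemma~\ref{spacetime_mat_ell}, treating the latter as an algebraic identity that collapses most of the terms appearing in \eqref{spacetime_EL}. First I would start from the reduced Euler-Lagrange equations \eqref{spacetime_EL}, which hold under material covariance alone, and recall from Lemma~\ref{spacetime_mat_ell} that when $\mathscr{L}$ is additionally spacetime covariant one has both
\[
\frac{\partial ^\nabla\!\ell}{\partial x}=0
\quad\text{and}\quad
\ell\delta \;-\; w\otimes\frac{\partial\ell}{\partial w}
\;+\;\frac{\partial\ell}{\partial\kappa}\!\therefore\!\widehat{\kappa}
\;+\;\frac{\partial\ell}{\partial A}\otimes A
\;+\;\frac{\partial\ell}{\partial F}\stackrel{\rm tr}{\otimes}F
\;=\;-\,\frac{\partial\ell}{\partial\gamma}\!\therefore\!\widehat{\gamma}.
\]
Substituting the second identity directly into the divergence appearing on the left-hand side of the first line of \eqref{spacetime_EL} replaces the bracketed expression $\ell\delta + w\otimes\partial\ell/\partial w - (\partial\ell/\partial\kappa)\!\therefore\!\widehat{\kappa} - (\partial\ell/\partial A)\otimes A - (\partial\ell/\partial F)\stackrel{\rm tr}{\otimes}F$ by $-(\partial\ell/\partial\gamma)\!\therefore\!\widehat{\gamma}$; note the sign flip on the $w$-term relative to the identity, which is exactly the relation $(\partial\ell/\partial w)\!\therefore\!\widehat{w} = -\,w\otimes\partial\ell/\partial w$ already used in the proof of Theorem~\ref{spacetime_reduced_EL}, so the bracket in \eqref{spacetime_EL} equals $-(\partial\ell/\partial w)\!\therefore\!\widehat{w} + (\text{tensor and form terms})$, and Lemma~\ref{spacetime_mat_ell} gives that this equals $-(\partial\ell/\partial\gamma)\!\therefore\!\widehat{\gamma}$. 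Hence $\operatorname{div}^\nabla$ of the bracket equals $-\operatorname{div}^\nabla\!\big((\partial\ell/\partial\gamma)\!\therefore\!\widehat{\gamma}\big)$, while the right-hand side of that equation, $\partial^\nabla\!\ell/\partial x + (\partial\ell/\partial\gamma):\nabla\gamma$, reduces to $(\partial\ell/\partial\gamma):\nabla\gamma$ because $\partial^\nabla\!\ell/\partial x=0$. Rearranging signs yields the first line of \eqref{spacetime_EL_sc_gamma}.

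Next I would treat the remaining lines. The equations $\partialnew\ell/\partialnew A + {\rm d}\,\partialnew\ell/\partialnew F = 0$, $\pounds_w\kappa=0$, and $i^*_{\partial_{\rm cont}\mathcal{N}}\,\partialnew\ell/\partialnew F=0$ are carried over verbatim from \eqref{spacetime_EL}; there is nothing to do except note that $\mathsf{F}$ in \eqref{spacetime_EL_sc_gamma} is a typo for $F$. For the last boundary condition, the same substitution from Lemma~\ref{spacetime_mat_ell} applied inside $i^*_{\partial_{\rm cont}\mathcal{N}}\big(\operatorname{tr}(\,\cdot\,)\cdot\zeta\big)$ turns the trace of the long bracket into $\operatorname{tr}\big(-(\partial\ell/\partial\gamma)\!\therefore\!\widehat{\gamma}\big)\cdot\zeta$, and since the condition is required for all $\zeta$ the overall sign is immaterial, giving the stated form. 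Finally, the local expressions follow by writing out the intrinsic boundary conditions in coordinates exactly as in the local form already displayed in Theorem~\ref{spacetime_reduced_EL}, substituting the coordinate version of the Lemma~\ref{spacetime_mat_ell} identity; the $d^nx_\mu$ contraction is unchanged.

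The only genuine subtlety — and the step I would be most careful about — is bookkeeping the signs and the placement of the $w\otimes\partial\ell/\partial w$ term, since Lemma~\ref{spacetime_mat_ell} is stated with $\ell\delta$ isolated on the right and with $+\,\partial\ell/\partial\kappa\!\therefore\!\widehat{\kappa}$ and $+\,\partial\ell/\partial\gamma\!\therefore\!\widehat{\gamma}$, whereas \eqref{spacetime_EL} has $-\,\partial\ell/\partial\kappa\!\therefore\!\widehat{\kappa}$ and no $\gamma$-term inside the divergence. Writing out $\ell\delta = w\otimes\partial\ell/\partial w - \partial\ell/\partial\kappa\!\therefore\!\widehat{\kappa} - \partial\ell/\partial A\otimes A - \partial\ell/\partial F\stackrel{\rm tr}{\otimes}F - \partial\ell/\partial\gamma\!\therefore\!\widehat{\gamma}$ (which is the Lemma identity solved for $\ell\delta$ after moving the $\kappa$, $\gamma$, $A$, $F$ terms over and using the $w$-sign convention), one sees that the entire parenthesis in the first line of \eqref{spacetime_EL} collapses to $-\,\partial\ell/\partial\gamma\!\therefore\!\widehat{\gamma}$, and everything else is immediate. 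There is no analytic or geometric obstacle here; the corollary is purely a substitution-and-simplification consequence of the two preceding results, and I would present it as such.
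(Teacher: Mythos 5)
Your strategy --- substitute the identity of Lemma \ref{spacetime_mat_ell} into \eqref{spacetime_EL} --- is exactly the route the paper intends (the corollary carries no separate proof precisely because it is this substitution; Remark \ref{strangeness} supplies the complementary variational explanation). However, you have misquoted the Lemma, and the sign error propagates into your main claim. The Lemma states
\[
\frac{\partial \ell}{\partial A} \otimes A +  \frac{\partial \ell}{\partial F} \stackrel{\rm tr}{\otimes} F - w \otimes \frac{\partial \ell}{\partial w} +   \frac{\partial \ell}{\partial \kappa }\!\therefore\!  \widehat{ \kappa  } +   \frac{\partial \ell}{\partial \gamma  }\!\therefore\!   \widehat{ \gamma   } =\ell \delta,
\]
so that, solving for $\ell\delta$ and inserting into the bracket of \eqref{spacetime_EL}, everything except the $\gamma$-term cancels and one obtains
\[
\ell \delta  + w \otimes \frac{\partial \ell}{\partial w} -  \frac{\partial \ell}{\partial \kappa }\!\therefore\!  \widehat{ \kappa  } -  \frac{\partial \ell}{\partial A } \otimes  A  -  \frac{\partial \ell}{\partial F }\stackrel{\rm tr}{ \otimes }F
= +\,\frac{\partial \ell}{\partial \gamma  }\!\therefore\!   \widehat{ \gamma   },
\]
with a \emph{plus} sign. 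Both of your restatements --- ``$\ell\delta - w\otimes\frac{\partial\ell}{\partial w}+\cdots = -\frac{\partial\ell}{\partial\gamma}\!\therefore\!\widehat{\gamma}$'' and ``$\ell\delta = w\otimes\frac{\partial\ell}{\partial w}-\cdots-\frac{\partial\ell}{\partial\gamma}\!\therefore\!\widehat{\gamma}$'' --- are off (the first differs from the Lemma by $2\ell\delta$, the second by an overall sign of the right-hand side), and with them the bracket would collapse to $-\frac{\partial\ell}{\partial\gamma}\!\therefore\!\widehat{\gamma}$, giving the interior equation $\operatorname{div}^\nabla\big(\frac{\partial\ell}{\partial\gamma}\!\therefore\!\widehat{\gamma}\big)=-\frac{\partial\ell}{\partial\gamma}:\nabla\gamma$. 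That is not \eqref{spacetime_EL_sc_gamma}, and ``rearranging signs'' cannot repair it: the sign of the source term $\frac{\partial\ell}{\partial\gamma}:\nabla\gamma$ is fixed by \eqref{spacetime_EL}, so an overall sign discrepancy in the divergence term is a genuine inconsistency, not a convention. (For the boundary condition the sign is indeed immaterial, as you note.)

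Once the sign is corrected the argument closes exactly as you describe and coincides with the paper's: the bracket equals $\frac{\partial\ell}{\partial\gamma}\!\therefore\!\widehat{\gamma}$ on the nose, $\frac{\partial^\nabla\!\ell}{\partial x}=0$ removes the inhomogeneous term, the equations for $A$, $\kappa$ and the first boundary condition are carried over verbatim, and the local forms follow by writing the intrinsic boundary conditions in coordinates. No new idea is needed --- only careful bookkeeping of the very signs you flagged as the delicate point.
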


\begin{remark}[Isotropy subgroup]\label{Diff_KW}\rm
The results stated above still hold when material covariance is satisfied only with respect to the isotropy subgroup $ \operatorname{Diff}_{W,K}( \mathcal{D} )=\{\varphi\in \operatorname{Diff}(\mathcal{D})\mid \varphi^*W=W,\;\varphi^*K=K\}$ of the material tensor fields. Spacetime covariance with respect to the whole group $ \operatorname{Diff}( \mathcal{M} )$ is however needed.
\end{remark}

\begin{remark}[Variation conversion]\label{strangeness}\rm In equations \eqref{spacetime_EL_sc_gamma}, one observes that the partial derivatives with respect to the non-dynamic field $ \gamma $ appear, while the ones with respect to the actual variables $w$ and $\kappa$, which are to be varied in the action principle, do not. This apparent strangeness can be explained by the  property \eqref{double_covariance} of $\ell$, which implies that the variations of $w$ and $\kappa$ can be ``converted" into variations of the otherwise fixed $ \gamma $. Explicitly, this is seen from the following chain of equalities, in which we write $ \Phi _ \varepsilon = \psi _ \varepsilon \circ \Phi $ for some spacetime diffeomorphism $ \psi _ \varepsilon \in \operatorname{Diff}( \mathcal{M} )$:
\begin{equation}\label{strangeness}
\begin{aligned} 
&\left. \frac{d}{d\varepsilon}\right|_{\varepsilon=0}\int_{ \Phi _ \varepsilon  ( \mathcal{D} )} \ell\big( (\Phi_ \varepsilon  )_* \mathcal{A} _ \varepsilon , (\Phi_ \varepsilon  )_* {\rm d} \mathcal{A} _ \varepsilon ,(\Phi_ \varepsilon  )_*W,(\Phi_ \varepsilon  )_*K, \gamma \big)\\
&=\left. \frac{d}{d\varepsilon}\right|_{\varepsilon=0}\int_{ \psi  _ \varepsilon  ( \mathcal{N} )} \ell\big( (\psi _ \varepsilon  )_* A_ \varepsilon , (\psi_ \varepsilon  )_* {\rm d} A_ \varepsilon ,(\psi_ \varepsilon  )_*w,(\psi_ \varepsilon  )_*\kappa , \gamma \big)\\
&=\left. \frac{d}{d\varepsilon}\right|_{\varepsilon=0}\int_{ \psi  _ \varepsilon  ( \mathcal{N} )} (\psi _ \varepsilon  )_* \big[\ell\big( A_ \varepsilon ,   {\rm d} A_ \varepsilon , w, \kappa , (\psi _ \varepsilon  )^*\gamma \big)\big]\\
&=\left. \frac{d}{d\varepsilon}\right|_{\varepsilon=0}\int_{ \mathcal{N}} \ell\big( A_ \varepsilon ,   {\rm d} A_ \varepsilon , w, \kappa , (\psi _ \varepsilon  )^*\gamma \big),
\end{aligned}
\end{equation}
where \eqref{double_covariance} was used in the second equality. The last expression shows that besides $A$, only the non-dynamic tensor $ \gamma $ is varied, as $ \delta \gamma =\pounds _ \zeta \gamma $. A direct application of the principle as it arises in this last term of \eqref{strangeness}  gives \eqref{spacetime_EL_sc_gamma}.
\end{remark}

\paragraph{The case of a Lorentzian metric on spacetime.} Evidently, the most relevant case is when $\gamma $ is a Lorentzian metric $\mathsf{g}$, see \S\ref{EMFS}. In this case the word-tube satisfies
$\mathsf{g}( \partial _ \lambda \Phi , \partial _ \lambda \Phi )<0$ and $ \partial _{\rm cont} \mathcal{N} = \Phi ([a,b] \times \partial \mathcal{B} )$ is a timelike region.
We give below the reduced Euler-Lagrange equations on spacetime \eqref{spacetime_EL} and \eqref{spacetime_EL_sc_gamma} in this specific situation.

\begin{corollary}
Consider the special case when $\gamma$ is a Lorentzian metric $\mathsf{g}$. Then the reduced Euler-Lagrange equations on spacetime \eqref{spacetime_EL} become
\begin{equation}\label{spacetime_EL_g} 
\left\{
\begin{array}{l}
\displaystyle\vspace{0.2cm}\operatorname{div}^ \nabla \!\Big(  \ell \delta  + w \otimes \frac{\partial \ell}{\partial w} -  \textcolor{black}{  \frac{\partial \ell}{\partial \kappa }\!\therefore\!  \widehat{ \kappa  }} -  \frac{\partial \ell}{\partial A } \otimes  A  -  \frac{\partial \ell}{\partial F }\stackrel{\rm tr}{ \otimes }F\Big)  =  \frac{\partial ^\nabla\!\ell}{\partial x}\\
\displaystyle\vspace{0.2cm} \frac{\partialnew \ell}{\partialnew A} + {\rm d}  \frac{\partialnew \ell}{\partialnew F}=0, \qquad \pounds _w \kappa =0, \qquad i^*_{ \partial _{\rm cont}\mathcal{N} } \frac{\partialnew \ell}{\partialnew F}=0,\\
\displaystyle \Big(  \ell \delta  + w \otimes \frac{\partial \ell}{\partial w} -   \textcolor{black}{ \frac{\partial \ell}{\partial \kappa }\!\therefore\!  \widehat{ \kappa  } }-  \frac{\partial \ell}{\partial A }\otimes A - \frac{\partial \ell}{\partial F }\stackrel{\rm tr}{ \otimes }F\Big) (\cdot, n^\flat)=0\;\;\text{on}\;\;\partial_{\rm cont}\mathcal{N},
\end{array}
\right.
\end{equation} 
where $ \nabla $ is the Levi-Civita covariant derivative and $n$ is a unit normal vector field to $ \partial_{\rm cont} \mathcal{N} $, all associated to $\mathsf{g}$.

If, in addition, the material Lagrangian density is also spacetime covariant, then these equations are given by \eqref{spacetime_EL_sc_gamma} which, for $\gamma=\mathsf{g}$, become
\begin{equation}\label{spacetime_EL_sc_gamma_g} \color{black} 
\left\{
\begin{array}{l}
\displaystyle\vspace{0.2cm}\operatorname{div}^ \nabla \frac{\partial \ell}{\partial \mathsf{g} }=0,\qquad \frac{\partialnew \ell}{\partialnew A} + {\rm d}  \frac{\partialnew \ell}{\partialnew F}=0, \qquad \pounds _w \kappa =0\\
\displaystyle\vspace{0.2cm}i^*_{ \partial _{\rm cont}\mathcal{N} } \frac{\partialnew \ell}{\partialnew \mathsf{F}}=0, \qquad  \frac{\partial \ell}{\partial \mathsf{g}}( \cdot , n ^\flat )=0 \quad \text{on} \quad  \partial \mathcal{N}.
\end{array}
\right.
\end{equation}
\end{corollary}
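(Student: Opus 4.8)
The plan is to read off both displays from results already established, specializing $\gamma=\mathsf{g}$ and taking the torsion-free connection used to write \eqref{spacetime_EL} and \eqref{spacetime_EL_sc_gamma} to be the Levi-Civita connection $\nabla^{\mathsf{g}}$ of $\mathsf{g}$. For the first display I start from Theorem~\ref{spacetime_reduced_EL}(iv). Metric compatibility, $\nabla^{\mathsf{g}}\mathsf{g}=0$, makes the source term $\frac{\partial\ell}{\partial\gamma}:\nabla\gamma$ in the momentum equation of \eqref{spacetime_EL} vanish identically, leaving only $\frac{\partial^\nabla\!\ell}{\partial x}$ on its right-hand side; this is the first line of \eqref{spacetime_EL_g}. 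The equation $\frac{\partialnew\ell}{\partialnew A}+{\rm d}\frac{\partialnew\ell}{\partialnew F}=0$, the advection constraint $\pounds_w\kappa=0$, and the electromagnetic boundary condition $i^*_{\partial_{\rm cont}\mathcal{N}}\frac{\partialnew\ell}{\partialnew F}=0$ contain no reference to $\gamma$ and are transcribed unchanged.

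It then remains to rewrite the momentum boundary condition intrinsically. In the proof of Theorem~\ref{spacetime_reduced_EL} this term appears as $i^*_{\partial_{\rm cont}\mathcal{N}}\big(\operatorname{tr}(T)\cdot\zeta\big)=0$ for all $\zeta$, where $T$ is the $(1,1)$-tensor density $\ell\delta+w\otimes\frac{\partial\ell}{\partial w}-\frac{\partial\ell}{\partial\kappa}\therefore\widehat{\kappa}-\frac{\partial\ell}{\partial A}\otimes A-\frac{\partial\ell}{\partial F}\stackrel{\rm tr}{\otimes}F$ and, in coordinates, $\operatorname{tr}(T)\cdot\zeta=T^\mu_\nu\zeta^\nu\,{\rm d}^nx_\mu$ as an $n$-form. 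Since $\partial_{\rm cont}\mathcal{N}$ is a timelike hypersurface, its unit normal $n$ is spacelike, $\mathsf{g}(n,n)=1$, and the pullback of ${\rm d}^nx_\mu=\mathbf{i}_{\partial_\mu}{\rm d}^{n+1}x$ to $\partial_{\rm cont}\mathcal{N}$ is proportional to $n_\mu$ times the induced volume form. Hence the boundary statement is equivalent to the pointwise condition $T^\mu_\nu n_\mu=0$, written intrinsically as $T(\cdot,n^\flat)=0$, which is the last line of \eqref{spacetime_EL_g}; the coordinate form of this condition is in any case already recorded in Theorem~\ref{spacetime_reduced_EL}.

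For the doubly covariant case I start instead from Corollary~\ref{corol_spacetime}, where by Lemma~\ref{spacetime_mat_ell} the whole stress density has already been reduced to $\frac{\partial\ell}{\partial\gamma}\therefore\widehat{\gamma}$. When $\gamma=\mathsf{g}$, which is a purely covariant $(0,2)$ tensor, formula \eqref{hat_kappa} gives $\widehat{\mathsf{g}}_{\beta_1\beta_2}{}^\nu_\mu=\mathsf{g}_{\mu\beta_2}\delta^\nu_{\beta_1}+\mathsf{g}_{\beta_1\mu}\delta^\nu_{\beta_2}$, so contracting with the symmetric density $\frac{\partial\ell}{\partial\mathsf{g}}$ in the sense of \eqref{notation_therefore_AF} yields $\big(\frac{\partial\ell}{\partial\mathsf{g}}\therefore\widehat{\mathsf{g}}\big)^\mu_\nu=\big(\frac{\partial\ell}{\partial\mathsf{g}}\big)^{\mu\beta}\mathsf{g}_{\nu\beta}$; that is, $\frac{\partial\ell}{\partial\mathsf{g}}\therefore\widehat{\mathsf{g}}$ is just $\frac{\partial\ell}{\partial\mathsf{g}}$ with one index lowered by $\mathsf{g}$. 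Because $\nabla^{\mathsf{g}}\mathsf{g}=0$, the divergence commutes with this lowering and $\frac{\partial\ell}{\partial\gamma}:\nabla\gamma=0$, so the first equation of \eqref{spacetime_EL_sc_gamma} collapses to $\operatorname{div}^\nabla\frac{\partial\ell}{\partial\mathsf{g}}=0$, and its boundary condition becomes $\frac{\partial\ell}{\partial\mathsf{g}}(\cdot,n^\flat)=0$ by the same normal-covector argument; the electromagnetic equation, the constraint $\pounds_w\kappa=0$, and $i^*_{\partial_{\rm cont}\mathcal{N}}\frac{\partialnew\ell}{\partialnew F}=0$ are unchanged. This gives \eqref{spacetime_EL_sc_gamma_g}.

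There is no genuine obstacle, since the statement is a specialization of Theorem~\ref{spacetime_reduced_EL} and Corollary~\ref{corol_spacetime}; the only step requiring care is the passage from the distributional boundary condition $i^*_{\partial_{\rm cont}\mathcal{N}}\big(\operatorname{tr}(T)\cdot\zeta\big)=0$ for all $\zeta$ to the algebraic condition $T(\cdot,n^\flat)=0$, which rests on correctly identifying the pullback of ${\rm d}^nx_\mu$ with the normal covector $n_\mu$ (keeping track of the sign coming from $\mathsf{g}(n,n)=1$ and of which index is contracted), together with the bookkeeping identifying $\frac{\partial\ell}{\partial\mathsf{g}}\therefore\widehat{\mathsf{g}}$ with the metric-lowered $\frac{\partial\ell}{\partial\mathsf{g}}$.
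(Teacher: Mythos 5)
Your proof is correct and follows the same route the paper intends: the corollary is a direct specialization of Theorem~\ref{spacetime_reduced_EL} and Corollary~\ref{corol_spacetime} in which metric compatibility $\nabla\mathsf{g}=0$ kills the source term $\frac{\partial\ell}{\partial\gamma}:\nabla\gamma$, and the pullback of ${\rm d}^nx_\mu$ to the timelike boundary is identified with $n_\mu$ times the induced volume form. The only slip is a harmless normalization: with the paper's contraction convention one has $\frac{\partial\ell}{\partial\mathsf{g}}\therefore\widehat{\mathsf{g}}=2\,\mathsf{g}\cdot\frac{\partial\ell}{\partial\mathsf{g}}$ (as recorded in Theorem~\ref{main}), not $\mathsf{g}\cdot\frac{\partial\ell}{\partial\mathsf{g}}$, but since the relevant equations are set to zero and $\mathsf{g}$ is nondegenerate, the factor of $2$ does not affect \eqref{spacetime_EL_sc_gamma_g}.
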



\subsection{Reduced convective Euler-Lagrange formulation}\label{convective_reduction}

In parallel with the reduced Euler-Lagrange equation on spacetime, which are associated to material covariance, one can also consider the less common \textit{reduced convective Euler-Lagrange equations}. They are associated to the covariance assumption \eqref{material_cov} of the material Lagrangian density $\mathscr{L}$, which is an axiom of the continuum theory, while material covariance is related to specific properties of the continuum. Their potential use in relativistic continuum mechanics doesn't seem to be known. In Newtonian continuum mechanics, they form an indispensable tool in the study of rods and shells, for example, regarded as one- and two-dimensional continua respectively. The general reduced convected Euler-Lagrange equations for Newtonian continuum mechanics were derived in \cite{GBMaRa2012}.

\begin{theorem}[Covariant convective reduction for electromagnetic continua]\label{convective_reduced_EL} Let  $ \mathscr{L} : J^1( \mathcal{D}  \times \mathcal{M} ) \times \wedge ^1 T ^*\mathcal{D} \times \wedge ^2T^* \mathcal{D} \times T \mathcal{D}  \times T^p_q\mathcal{D}  \times T^r_s \mathcal{M}  \rightarrow \wedge ^{n+1} T^*\mathcal{D} $ be a Lagrangian density with the spacetime covariance property \eqref{spacetime_cov} and consider the associated reduced convective Lagrangian density $\mathcal{L} : \wedge ^1 T^* \mathcal{D} \times \wedge ^2 T^* \mathcal{D} \times  T \mathcal{D}  \times  T ^p _q  \mathcal{D}  \times  T^r _s  \mathcal{D} \rightarrow \wedge  ^{n+1}T^* \mathcal{D} $, defined in \eqref{def_calL}.
Fix
\begin{itemize}
\item[-] the reference tensor fields $W \in \mathfrak{X} ( \mathcal{D} )$, $K \in \mathcal{T} _q^p( \mathcal{D} )$;
\item[-]  the spacetime tensor field $ \gamma \in \mathcal{T} ^r_s( \mathcal{M} )$.
\end{itemize}
For each world-tube $ \Phi : \mathcal{D} \rightarrow \mathcal{M} $ define
\begin{itemize}
\item[-] $\Gamma = \Phi ^* \gamma$.
\end{itemize}
 Then, the following statements are equivalent:
\begin{itemize}
\item[\bf (i)] $\Phi: \mathcal{D} \rightarrow \mathcal{M} $ and $\mathcal{A}  \in \Omega ^1 ( \mathcal{D} )$ are critical points of the {\bfi Hamilton principle}
\[
\left. \frac{d}{d\varepsilon}\right|_{\varepsilon=0} \int_\mathcal{D}  \mathscr{L} ( j^1\Phi_ \varepsilon , \mathcal{A} _ \varepsilon , {\rm d} \mathcal{A} _ \varepsilon , W, K , \gamma \circ \Phi _ \varepsilon  )=0
\]
for arbitrary variations $\Phi _ \varepsilon  $ and $\mathcal{A} _ \varepsilon $ \textcolor{black}{with fixed endpoints at $ \lambda = \lambda _0, \lambda _1$}.

\item[\bf (ii)] $\Phi: \mathcal{D} \rightarrow \mathcal{M} $ and $\mathcal{A} \in \Omega ^1 ( \mathcal{D} )$ are solutions of the {\bfi Euler-Lagrange equations} \eqref{EL_material}. 
\item[\bf (iii)] $\mathcal{A}  \in \Omega ^1 ( \mathcal{D} )$ and $\Gamma\in \mathcal{T} ^r_s( \mathcal{D} )$ are critical points of the {\bfi convective variational principle}
\[
\left. \frac{d}{d\varepsilon}\right|_{\varepsilon=0}\int_{ \mathcal{D} } \mathcal{L}\big(\mathcal{A} , {\rm d} \mathcal{A} , W,K, \textcolor{black}{ \Gamma }  \big)=0 \quad \textcolor{black}{ \text{for variations}\quad  \delta \mathcal{A}  \quad\text{and}\quad \delta \Gamma = \pounds _Z\Gamma},
\]
where $ \delta \mathcal{A} $ and $Z$ are arbitrary one-forms and vector field on $ \mathcal{D} $  vanishing at $ \lambda =\lambda _0, \lambda _1$.
\item[\bf (iv)] $\Gamma\in \mathcal{T} ^r_s( \mathcal{D} )$ is a solution of the {\bfi reduced convective Euler-Lagrange equations}\color{black} 
\begin{equation}\label{convected_EL} 
\left\{
\begin{array}{l}
\displaystyle\vspace{0.2cm}\operatorname{div}^ \nabla \!\Big( \mathcal{L}  \delta  -   \frac{\partial   \mathcal{L} }{\partial \Gamma }\!\therefore\!  \widehat{ \Gamma  }\Big)  =  \frac{\partial ^ \nabla \! \mathcal{L} }{\partial X}+\frac{\partial   \mathcal{L} }{\partial \mathcal{A}  } \cdot \nabla  \mathcal{A}+\frac{\partial   \mathcal{L} }{\partial \mathcal{F}  } \cdot \nabla  \mathcal{F} +\frac{\partial   \mathcal{L} }{\partial W } \cdot \nabla  W+\frac{\partial   \mathcal{L} }{\partial K } \cdot \nabla  K\\
\displaystyle\vspace{0.2cm} \frac{\partialnew \mathcal{L} }{\partialnew \mathcal{A} } + {\rm d} \frac{\partialnew \mathcal{L} }{\partialnew \mathcal{F} } =0, \qquad i_{ \partial \mathcal{D} } ^* \frac{\partialnew \mathcal{L} }{\partialnew \mathcal{F} } =0 \quad \text{on} \quad  \textcolor{black}{[\lambda _0, \lambda _1] \times   \partial \mathcal{B}}\\
\displaystyle i_{ \partial \mathcal{D} } ^*\Big( \operatorname{tr}\Big(\frac{\partial   \mathcal{L} }{\partial \Gamma }\!\therefore\!  \widehat{ \Gamma  }\Big)\cdot Z\Big)=0,\;\; \forall Z \quad \text{on} \quad  \textcolor{black}{[\lambda _0, \lambda _1] \times   \partial \mathcal{B}},
\end{array}
\right.
\end{equation}
written with the help of a torsion free covariant derivative $ \nabla $ on $ \mathcal{D} $.  In local coordinates, denoting $ \mathcal{L} =\bar{ \mathcal{L} } d^{n+1}X$, the boundary conditions read
\[
\Big(\frac{\partial   \bar{\mathcal{L}} }{\partial \Gamma }\!\therefore\!  \widehat{ \Gamma  }\Big) ^a_b {\rm d} ^nX_a=0 \quad\text{and}\quad \frac{\partial  \bar{\mathcal{L}} }{\partial \mathcal{F} _{ab}}  {\rm d} ^nX_ a=0\quad \text{on} \quad  [\lambda _0, \lambda _1] \times \partial \mathcal{B}.
\]
\end{itemize}
\end{theorem}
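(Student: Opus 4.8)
The plan is to mirror the proof of Theorem \ref{spacetime_reduced_EL}, exploiting the fact that this theorem is the ``dual'' statement under the two covariance properties: here it is spacetime covariance \eqref{spacetime_cov} that is assumed, so the reduction happens on $\mathcal{D}$ rather than on $\mathcal{M}$, and the reference tensor fields $W$, $K$ now appear explicitly (with their $\nabla$-derivatives as source terms) since they are not transported off $\mathcal{D}$. The equivalence of {\bf (i)} and {\bf (ii)} is just the content of Hamilton's principle applied to $\mathscr{L}$, already recorded in \eqref{EL_material}, so the real work is the equivalence {\bf (i)}$\Leftrightarrow${\bf (iii)} and {\bf (iii)}$\Leftrightarrow${\bf (iv)}.

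For {\bf (i)}$\Leftrightarrow${\bf (iii)}: I would first observe that, by the defining relation \eqref{def_calL}, spacetime covariance lets us rewrite the action $\int_\mathcal{D}\mathscr{L}(j^1\Phi,\mathcal{A},{\rm d}\mathcal{A},W,K,\gamma\circ\Phi)$ as $\int_\mathcal{D}\mathcal{L}(\mathcal{A},{\rm d}\mathcal{A},W,K,\Gamma)$ with $\Gamma=\Phi^*\gamma$. The key point is then the variational dictionary: an arbitrary variation $\Phi_\varepsilon$ of the world-tube induces, via $\Gamma_\varepsilon=\Phi_\varepsilon^*\gamma$, a variation of the pulled-back tensor of the form $\delta\Gamma=\pounds_Z\Gamma$ where $Z=-\Phi^{-1}_*\delta\Phi$ is an arbitrary vector field on $\mathcal{D}$ vanishing at $\lambda=\lambda_0,\lambda_1$ (this is the material-side analogue of the Eulerian relation $\delta\kappa=-\pounds_\zeta\kappa$ in \eqref{Eulerian_VP}; note $W$ and $K$ are not varied because they are fixed reference fields, living on $\mathcal{D}$ independently of $\Phi$). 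Conversely every such $Z$ arises from a $\Phi$-variation. The variation $\delta\mathcal{A}$ of the electromagnetic potential is already a free one-form on $\mathcal{D}$ with the same endpoint conditions, so it passes through unchanged. This gives the claimed constrained variational principle {\bf (iii)}, and the equivalence with {\bf (i)} is then an identity between the two expressions for the action, valid because $\Phi\mapsto\Gamma$ and $\delta\Phi\mapsto Z$ are onto.

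For {\bf (iii)}$\Leftrightarrow${\bf (iv)}: this is the computational heart and follows the template of \eqref{big_computation}. Using the two incarnations of the partial derivatives with respect to $\mathcal{A},\mathcal{F}$ — equations \eqref{partial2}, \eqref{partial1} — and the Lie-derivative identity of Lemma \ref{technical_lemma} (formula \eqref{formula_Lemma_global}) applied to $\kappa\rightsquigarrow\Gamma$ with multiplier $\pi\rightsquigarrow\partial\mathcal{L}/\partial\Gamma$, I would expand $\left.\frac{d}{d\varepsilon}\right|_0\int_\mathcal{D}\mathcal{L}(\mathcal{A}+\varepsilon\delta\mathcal{A},{\rm d}\mathcal{A}+\varepsilon\,{\rm d}\delta\mathcal{A},W,K,\Gamma+\varepsilon\pounds_Z\Gamma)$. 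The $\delta\mathcal{A}$-terms, after an integration by parts ($\delta{\rm d}\mathcal{A}\wedge\frac{\partialnew\mathcal{L}}{\partialnew\mathcal{F}}={\rm d}(\delta\mathcal{A}\wedge\frac{\partialnew\mathcal{L}}{\partialnew\mathcal{F}})\pm\delta\mathcal{A}\wedge{\rm d}\frac{\partialnew\mathcal{L}}{\partialnew\mathcal{F}}$), yield the Maxwell-type equation $\frac{\partialnew\mathcal{L}}{\partialnew\mathcal{A}}+{\rm d}\frac{\partialnew\mathcal{L}}{\partialnew\mathcal{F}}=0$ together with the boundary term $i^*_{\partial\mathcal{D}}\frac{\partialnew\mathcal{L}}{\partialnew\mathcal{F}}=0$ on $[\lambda_0,\lambda_1]\times\partial\mathcal{B}$. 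The $Z$-terms, via \eqref{formula_Lemma_global}, produce $\nabla_Z\Gamma:\partial\mathcal{L}/\partial\Gamma$ in the bulk plus a divergence term; rewriting $\nabla_Z[\mathcal{L}(\cdot)]$ through the connection-based chain rule (the $\mathcal{D}$-analogue of \eqref{nabla_derivative_ell}, now with $\partial^\nabla\mathcal{L}/\partial X$, $\partial\mathcal{L}/\partial\mathcal{A}\cdot\nabla\mathcal{A}$, $\partial\mathcal{L}/\partial\mathcal{F}\cdot\nabla\mathcal{F}$, $\partial\mathcal{L}/\partial W\cdot\nabla W$, $\partial\mathcal{L}/\partial K\cdot\nabla K$ all present because $W,K$ are genuine variables of $\mathcal{L}$) and using $\mathbf{i}_Z\mathcal{L}=\operatorname{tr}(\mathcal{L}\delta)\cdot Z$ for the endpoint/boundary contribution, I collect everything into $\operatorname{div}^\nabla(\mathcal{L}\delta-\frac{\partial\mathcal{L}}{\partial\Gamma}\!\therefore\!\widehat{\Gamma})\cdot Z$ minus the source terms. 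Since $Z$ is arbitrary in the interior and arbitrary tangentially on the boundary, and $\delta\mathcal{A}$ is arbitrary, I read off the four equations of \eqref{convected_EL}, including the boundary condition $i^*_{\partial\mathcal{D}}(\operatorname{tr}(\frac{\partial\mathcal{L}}{\partial\Gamma}\!\therefore\!\widehat{\Gamma})\cdot Z)=0$.

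The main obstacle I anticipate is purely bookkeeping rather than conceptual: keeping straight the index gymnastics of $\widehat{\Gamma}$ and the contraction ``$\therefore$'' from \eqref{hat_kappa} and Lemma \ref{technical_lemma}, ensuring the signs in the integration by parts of the form-valued terms are consistent with the conventions in \eqref{partial1}, and verifying that the divergence ``splitting'' — bulk Euler-Lagrange operator versus boundary trace — is carried out with the correct torsion-free $\nabla$ on $\mathcal{D}$ (as opposed to the $\nabla$ on $\mathcal{M}$ used in Theorem \ref{spacetime_reduced_EL}). One should also double-check that $W$ being a vector field is correctly subsumed as the special case $\partial\mathcal{L}/\partial W\!\therefore\!\widehat{W}=-W\otimes\partial\mathcal{L}/\partial W$ exactly as in the proof of Theorem \ref{spacetime_reduced_EL}, but here it stays on the source side of the equation rather than being folded into the divergence, precisely because $W$ is a fixed reference field on $\mathcal{D}$; this asymmetry with the spacetime case is the one subtle point worth stating explicitly.
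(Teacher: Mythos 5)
Your proposal is correct and follows exactly the route the paper intends: the paper's own ``proof'' of Theorem \ref{convective_reduced_EL} is simply the remark that it is similar to that of Theorem \ref{spacetime_reduced_EL}, and your mirroring of \eqref{big_computation} -- fixed domain $\mathcal{D}$ (hence no moving-boundary term and no $\mathcal{L}\delta$ in the boundary condition), only $\Gamma$ Lie-transported so that $W,K$ survive as chain-rule source terms, and Lemma \ref{technical_lemma} applied to $\Gamma$ -- is precisely the intended argument. The only nit is the sign in $Z$: with the convention $\delta\Gamma=+\pounds_Z\Gamma$ of statement {\bf (iii)} one has $Z=+\Phi^{-1}_*\delta\Phi=\Phi^*\zeta$ (pull-back turns $\delta(\Phi^*\gamma)=\Phi^*\pounds_\zeta\gamma$ into $+\pounds_Z\Gamma$, in contrast with the push-forward case), though since $Z$ is arbitrary this does not affect the equivalence.
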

\begin{proof} The proof is left to the reader as it is similar to that of Theorem \ref{spacetime_reduced_EL}.
\end{proof}

\begin{remark}[Isotropy subgroup]\label{Diff_gamma}\rm
The result stated in the above theorem still holds when spacetime covariance is satisfied only with respect to the isotropy subgroup $ \operatorname{Diff}_\gamma ( \mathcal{M} )=\{\psi \in \operatorname{Diff} ( \mathcal{M} )\mid \psi^*\gamma=\gamma\}$ of the spacetime tensor field.
\end{remark}

\begin{remark}[Equivalent writings]\rm There are several equivalent way to express the first equation of \eqref{convected_EL}. For instance, noting that its right hand side can be written as $\nabla[\mathcal{L}(\mathcal{A},\mathcal{F},W, K, \Gamma)]-  \frac{\partial   \mathcal{L} }{\partial \Gamma }: \nabla \Gamma$, it can be simply expressed as
\begin{equation}\label{alternative1}
\operatorname{div}^ \nabla \!\Big(  \frac{\partial   \mathcal{L} }{\partial \Gamma }\!\therefore\!  \widehat{ \Gamma  }\Big)= \frac{\partial   \mathcal{L} }{\partial \Gamma }: \nabla \Gamma.
\end{equation}
Alternatively, using Lemma \ref{crucial_lemma} appearing later, it can be written as
\begin{equation}\label{alternative2}
\operatorname{div}^ \nabla \!\Big( \mathcal{L}  \delta  -   \frac{\partial   \mathcal{L} }{\partial \Gamma }\!\therefore\!  \widehat{ \Gamma  } - \frac{\partial \mathcal{L} }{\partial \mathcal{A} } \otimes \mathcal{A}  -  \frac{\partial \mathcal{L} }{\partial  \mathcal{F} } \stackrel{\rm tr}{\otimes} \mathcal{F} \Big)  =  \frac{\partial ^ \nabla \! \mathcal{L} }{\partial X} +\frac{\partial   \mathcal{L} }{\partial W } \cdot \nabla  W+\frac{\partial   \mathcal{L} }{\partial K } \cdot \nabla  K,
\end{equation}
which should be compared with its  \eqref{spacetime_EL}. The interpretation of these expressions needs further considerations of the stress-energy-momentum tensors, that will be explored in subsequent parts.
\end{remark}

\paragraph{Material covariance.} We now examine the case in which the material Lagrangian density $\mathscr{L}$ in \eqref{Lagrangian_general} is also \textcolor{black}{materially covariant} with respect to $ \operatorname{Diff}( \mathcal{D} )$, see \eqref{material_cov}, in addition to the spacetime covariance \eqref{spacetime_cov}. The proofs are left to the reader.

\begin{lemma} Assume that the material Lagrangian density $ \mathscr{L}$ in \eqref{material_Lagrangian} is spacetime covariant and consider the associated convective Lagrangian $ \mathcal{L}  $, see \eqref{def_calL}. Then if $ \mathscr{L} $ is also materially covariant, we have
\[
\varphi ^* \left[ \mathcal{L} (\mathcal{A} ,\mathcal{F} , W,K,\Gamma) \right] =  \mathcal{L} (\varphi ^* \mathcal{A} , \varphi ^* \mathcal{F} , \varphi ^* W,\varphi ^* K,\varphi ^* \Gamma), \quad \forall \varphi  \in \operatorname{Diff}(\mathcal{D} ).
\]
Therefore, \textcolor{black}{$\mathcal{L}$ satisfies the following equalities}:
\[
\frac{\partial  ^ \nabla \! \mathcal{L} }{\partial X}=0 \quad\text{and}\quad  \frac{\partial \mathcal{L} }{\partial \mathcal{A} } \otimes \mathcal{A}  +  \frac{\partial \mathcal{L} }{\partial \mathcal{F} } \stackrel{\rm tr}{\otimes}  \mathcal{F} - W \otimes \frac{\partial  \mathcal{L} }{\partial W} +   \textcolor{black}{ \frac{\partial  \mathcal{L} }{\partial K }\!\therefore\!   \widehat{ K  } +   \frac{\partial  \mathcal{L} }{\partial \Gamma  }\!\therefore\!   \widehat{ \Gamma   } =  \mathcal{L} \delta}.
\]
\end{lemma}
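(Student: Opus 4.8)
The plan is to transcribe, with the roles of $\mathcal{D}$ and $\mathcal{M}$ interchanged, the argument already carried out for Lemma~\ref{spacetime_mat_ell}: the material reference fields $(W,K)$ now play the part that $\gamma$ played there, while $\Gamma=\Phi^*\gamma$ plays the part of a transported spacetime field.

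First I would establish the equivariance identity for $\mathcal{L}$. Evaluate the defining relation \eqref{def_calL} at the reconfigured data $\big(\Phi\circ\varphi,\varphi^*\mathcal{A},{\rm d}(\varphi^*\mathcal{A}),\varphi^*W,\varphi^*K,\gamma\big)$, using $\varphi^*{\rm d}\mathcal{A}={\rm d}\varphi^*\mathcal{A}$ and $(\Phi\circ\varphi)^*\gamma=\varphi^*\Phi^*\gamma=\varphi^*\Gamma$; this gives
\[
\mathscr{L}\big(j^1(\Phi\circ\varphi),\varphi^*\mathcal{A},{\rm d}\varphi^*\mathcal{A},\varphi^*W,\varphi^*K,\gamma\circ\Phi\circ\varphi\big)=\mathcal{L}\big(\varphi^*\mathcal{A},{\rm d}\varphi^*\mathcal{A},\varphi^*W,\varphi^*K,\varphi^*\Gamma\big).
\]
By the material covariance hypothesis \eqref{material_cov} the left-hand side equals $\varphi^*\big[\mathscr{L}(j^1\Phi,\mathcal{A},{\rm d}\mathcal{A},W,K,\gamma\circ\Phi)\big]$, which by \eqref{def_calL} again equals $\varphi^*\big[\mathcal{L}(\mathcal{A},\mathcal{F},W,K,\Gamma)\big]$. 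Since $\mathcal{L}$ is a bundle map and the admissible configurations $(\mathcal{A},\mathcal{F},W,K,\Gamma)$ (with $\Gamma$ realized by a suitable choice of $\Phi$ and $\gamma$) fill out its fibers, this yields the claimed $\varphi^*[\mathcal{L}(\mathcal{A},\mathcal{F},W,K,\Gamma)]=\mathcal{L}(\varphi^*\mathcal{A},\varphi^*\mathcal{F},\varphi^*W,\varphi^*K,\varphi^*\Gamma)$ for all $\varphi\in\operatorname{Diff}(\mathcal{D})$.

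Next I would differentiate this equivariance along a one-parameter subgroup. Take $\varphi_\varepsilon\in\operatorname{Diff}(\mathcal{D})$ with $\varphi_0=\mathrm{id}$ and $Z=\left.\frac{d}{d\varepsilon}\right|_{\varepsilon=0}\varphi_\varepsilon$; differentiating $\mathcal{L}(\varphi_\varepsilon^*\mathcal{A},\varphi_\varepsilon^*\mathcal{F},\varphi_\varepsilon^*W,\varphi_\varepsilon^*K,\varphi_\varepsilon^*\Gamma)=\varphi_\varepsilon^*[\mathcal{L}(\mathcal{A},\mathcal{F},W,K,\Gamma)]$ at $\varepsilon=0$ gives, on the left,
\[
\frac{\partial\mathcal{L}}{\partial\mathcal{A}}\cdot\pounds_Z\mathcal{A}+\frac{\partial\mathcal{L}}{\partial\mathcal{F}}:\pounds_Z\mathcal{F}+\frac{\partial\mathcal{L}}{\partial W}\cdot\pounds_Z W+\frac{\partial\mathcal{L}}{\partial K}:\pounds_Z K+\frac{\partial\mathcal{L}}{\partial\Gamma}:\pounds_Z\Gamma,
\]
and, on the right, $\operatorname{div}(\mathcal{L}\,Z)$, the divergence of the vector-field density $\mathcal{L}\,Z$. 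Now fix a torsion-free covariant derivative $\nabla$ on $\mathcal{D}$: by \eqref{global_formula} one has $\pounds_Z(\cdot)=\nabla_Z(\cdot)+\widehat{(\cdot)}:\nabla Z$ for each tensor argument, hence $\frac{\partial\mathcal{L}}{\partial(\cdot)}:\pounds_Z(\cdot)=\frac{\partial\mathcal{L}}{\partial(\cdot)}:\nabla_Z(\cdot)+\big(\frac{\partial\mathcal{L}}{\partial(\cdot)}\!\therefore\!\widehat{(\cdot)}\big):\nabla Z$; and $\operatorname{div}(\mathcal{L}Z)=\nabla_Z[\mathcal{L}(\cdot)]+\mathcal{L}\,\delta:\nabla Z$, while the bundle-map chain rule (the $\mathcal{D}$-analogue of \eqref{nabla_derivative_ell}) expands $\nabla_Z[\mathcal{L}(\cdot)]$ as $\frac{\partial^\nabla\!\mathcal{L}}{\partial X}\cdot Z+\sum\frac{\partial\mathcal{L}}{\partial(\cdot)}:\nabla_Z(\cdot)$. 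The $\nabla_Z(\cdot)$ contributions cancel between the two sides, leaving
\[
\Big(\sum_{(\cdot)\in\{\mathcal{A},\mathcal{F},W,K,\Gamma\}}\frac{\partial\mathcal{L}}{\partial(\cdot)}\!\therefore\!\widehat{(\cdot)}-\mathcal{L}\,\delta\Big):\nabla Z=\frac{\partial^\nabla\!\mathcal{L}}{\partial X}\cdot Z,\qquad\forall Z.
\]
Choosing $Z$ to vanish at a point with arbitrarily prescribed first derivative there forces first $\sum\frac{\partial\mathcal{L}}{\partial(\cdot)}\!\therefore\!\widehat{(\cdot)}=\mathcal{L}\,\delta$ and then, from the residual equation, $\frac{\partial^\nabla\!\mathcal{L}}{\partial X}=0$. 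Rewriting $\frac{\partial\mathcal{L}}{\partial\mathcal{A}}\!\therefore\!\widehat{\mathcal{A}}=\frac{\partial\mathcal{L}}{\partial\mathcal{A}}\otimes\mathcal{A}$ and $\frac{\partial\mathcal{L}}{\partial\mathcal{F}}\!\therefore\!\widehat{\mathcal{F}}=\frac{\partial\mathcal{L}}{\partial\mathcal{F}}\stackrel{\rm tr}{\otimes}\mathcal{F}$ via \eqref{notation_therefore_AF}, together with $\frac{\partial\mathcal{L}}{\partial W}\!\therefore\!\widehat{W}=-W\otimes\frac{\partial\mathcal{L}}{\partial W}$, then converts the first identity into the second displayed equality of the statement.

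The step I expect to be the main obstacle is the density (top-form) bookkeeping in the differentiation: confirming that $\left.\frac{d}{d\varepsilon}\right|_{\varepsilon=0}\varphi_\varepsilon^*[\mathcal{L}(\cdot)]$ is literally $\operatorname{div}(\mathcal{L}\,Z)$, that $\operatorname{div}(\mathcal{L}Z)=\nabla_Z[\mathcal{L}(\cdot)]+\mathcal{L}\,\delta:\nabla Z$ for a torsion-free $\nabla$, and that the various $\widehat{(\cdot)}$-contractions carry exactly the traces and signs recorded in \eqref{notation_therefore_AF} and in $\frac{\partial\mathcal{L}}{\partial W}\!\therefore\!\widehat{W}=-W\otimes\frac{\partial\mathcal{L}}{\partial W}$. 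Since all of this is the verbatim analogue of the computation behind Lemma~\ref{spacetime_mat_ell}, once the dictionary $\mathcal{M}\leftrightarrow\mathcal{D}$, $\operatorname{Diff}(\mathcal{M})\leftrightarrow\operatorname{Diff}(\mathcal{D})$, $\gamma\leftrightarrow(W,K)$, $\gamma\circ\Phi\leftrightarrow\Gamma$ is in place, no genuinely new difficulty arises.
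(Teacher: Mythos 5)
Your proof is correct and follows exactly the route the paper intends: the paper leaves this proof to the reader as the mirror image of Lemma \ref{spacetime_mat_ell}, and your argument is a faithful transcription of that proof (equivariance from \eqref{def_calL} plus \eqref{material_cov}, then differentiation along a path of diffeomorphisms using \eqref{global_formula}, the chain rule, and the arbitrariness of $Z$ and $\nabla Z$).
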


\begin{corollary} Assume that the material Lagrangian density \eqref{material_Lagrangian}  is material and spacetime covariant. Then the reduced convective Euler-Lagrange equations \eqref{convected_EL} are equivalently written as
\begin{equation}\label{convected_EL_cs_mc} \color{black} 
\!\!\!\left\{
\begin{array}{l}
\displaystyle\vspace{0.2cm}\operatorname{div}^\nabla\Big( - W \otimes \frac{\partial  \mathcal{L} }{\partial W} +   \frac{\partial  \mathcal{L} }{\partial K }\!\therefore\!   \widehat{ K  } \Big) = \frac{\partial  \mathcal{L} }{\partial W } \cdot \nabla  W+\frac{\partial \mathcal{L} }{\partial K } \cdot \nabla  K \\
\displaystyle\vspace{0.2cm} \frac{\partialnew \mathcal{L} }{\partialnew \mathcal{A} } + {\rm d} \frac{\partialnew \mathcal{L} }{\partialnew \mathcal{F} } =0\\
\displaystyle\vspace{0.2cm}i_{ \partial \mathcal{D} } ^* \frac{\partialnew \mathcal{L} }{\partialnew \mathcal{F} } =0 \quad \text{on} \quad  \textcolor{black}{[ \lambda _0 , \lambda _1 ] \times   \partial \mathcal{B}}\\
\displaystyle\vspace{0.2cm} i_{ \partial \mathcal{D} } ^* \Big(\operatorname{tr}\Big(\mathcal{L}  \delta  - \frac{\partial \mathcal{L} }{\partial \mathcal{A} } \otimes  \mathcal{A} -  \frac{\partial \mathcal{L} }{\partial \mathcal{F} } \stackrel{\rm tr}{\otimes} \mathcal{F}   + W \otimes \frac{\partial  \mathcal{L} }{\partial W} -   \frac{\partial  \mathcal{L} }{\partial K }\!\therefore\!   \widehat{ K  }\Big) \cdot Z \Big)=0, \forall Z\\
\displaystyle\text{on} \quad     \textcolor{black}{[\lambda _0 , \lambda _1 ] \times   \partial \mathcal{B}}.
\end{array}
\right.
\end{equation}
The local form of the boundary conditions is
\[
\textcolor{black}{ \Big(\bar{\mathcal{L}}  \delta   - \frac{\partial \mathcal{L} }{\partial \mathcal{A} } \otimes  \mathcal{A} -  \frac{\partial \mathcal{L} }{\partial \mathcal{F} } \stackrel{\rm tr}{\otimes} \mathcal{F}+ W \otimes \frac{\partial \bar{\mathcal{L}}}{\partial W} -   \frac{\partial  \bar{\mathcal{L}}}{\partial K }\!\therefore\!  \widehat{ K  }\Big)^a_b d^nX_a=0.}
\]
\end{corollary}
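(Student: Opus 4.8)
The plan is to obtain \eqref{convected_EL_cs_mc} from the reduced convective Euler-Lagrange equations \eqref{convected_EL} by substituting the two consequences of material covariance recorded in the lemma just stated, mirroring the way Corollary \ref{corol_spacetime} is deduced from Theorem \ref{spacetime_reduced_EL} and Lemma \ref{spacetime_mat_ell} in the spacetime picture. That lemma supplies $\frac{\partial^\nabla\!\mathcal{L}}{\partial X}=0$ together with
\[
\frac{\partial \mathcal{L} }{\partial \mathcal{A} } \otimes \mathcal{A}  +  \frac{\partial \mathcal{L} }{\partial \mathcal{F} } \stackrel{\rm tr}{\otimes}  \mathcal{F} - W \otimes \frac{\partial  \mathcal{L} }{\partial W} +   \frac{\partial  \mathcal{L} }{\partial K }\!\therefore\!   \widehat{ K  } +   \frac{\partial  \mathcal{L} }{\partial \Gamma  }\!\therefore\!   \widehat{ \Gamma   } =  \mathcal{L} \delta ,
\]
which I would rewrite once and for all as the working identity
\[
\mathcal{L} \delta -\frac{\partial \mathcal{L} }{\partial \mathcal{A} } \otimes \mathcal{A}  -  \frac{\partial \mathcal{L} }{\partial \mathcal{F} } \stackrel{\rm tr}{\otimes}  \mathcal{F} -   \frac{\partial  \mathcal{L} }{\partial \Gamma  }\!\therefore\!   \widehat{ \Gamma   }= - W \otimes \frac{\partial  \mathcal{L} }{\partial W} +   \frac{\partial  \mathcal{L} }{\partial K }\!\therefore\!   \widehat{ K  }.
\]

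For the bulk equation I would not start from the first line of \eqref{convected_EL} but from its equivalent rewriting \eqref{alternative2}, because there the electromagnetic terms $\frac{\partial\mathcal{L}}{\partial\mathcal{A}}\otimes\mathcal{A}$ and $\frac{\partial\mathcal{L}}{\partial\mathcal{F}}\stackrel{\rm tr}{\otimes}\mathcal{F}$ already sit under $\operatorname{div}^\nabla$, so that the argument of the divergence is precisely the left-hand side of the working identity. Replacing it by the right-hand side, and using $\frac{\partial^\nabla\!\mathcal{L}}{\partial X}=0$ on the source side, turns \eqref{alternative2} into the first line of \eqref{convected_EL_cs_mc}. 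The electromagnetic Euler-Lagrange equation $\frac{\partialnew\mathcal{L}}{\partialnew\mathcal{A}}+{\rm d}\frac{\partialnew\mathcal{L}}{\partialnew\mathcal{F}}=0$ and the boundary condition $i^*_{\partial\mathcal{D}}\frac{\partialnew\mathcal{L}}{\partialnew\mathcal{F}}=0$ involve neither $\Gamma$, $W$ nor $K$ and are carried over verbatim.

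For the remaining boundary condition I would take the last line of \eqref{convected_EL}, namely $i^*_{\partial\mathcal{D}}\big(\operatorname{tr}(\frac{\partial\mathcal{L}}{\partial\Gamma}\!\therefore\!\widehat{\Gamma})\cdot Z\big)=0$, solve the working identity for $\frac{\partial\mathcal{L}}{\partial\Gamma}\!\therefore\!\widehat{\Gamma}$, and substitute; this yields exactly the fourth line of \eqref{convected_EL_cs_mc}. The displayed local form is then just the component expression of this boundary condition: expand $\frac{\partial\mathcal{L}}{\partial\Gamma}\!\therefore\!\widehat{\Gamma}$, $\frac{\partial\mathcal{L}}{\partial W}\!\therefore\!\widehat{W}$ and $\frac{\partial\mathcal{L}}{\partial K}\!\therefore\!\widehat{K}$ in coordinates using \eqref{hat_kappa} and \eqref{notation_therefore_AF}, write $\bar{\mathcal{L}}$ for the Lagrangian coefficient, use $d^nX_a=\mathbf{i}_{\partial_a}d^{n+1}X$, and strip the factor contracted against the arbitrary $Z$.

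The proof is thus a pure substitution once the lemma is granted, and the only place demanding care — the main obstacle, such as it is — is the index bookkeeping needed to check that the three operations $\!\therefore\!$, $\otimes$ and $\stackrel{\rm tr}{\otimes}$ in the lemma's identity match slot-for-slot those appearing in \eqref{alternative2} and in the boundary term of \eqref{convected_EL}; in particular one must use $\frac{\partial\mathcal{L}}{\partial W}\!\therefore\!\widehat{W}=-W\otimes\frac{\partial\mathcal{L}}{\partial W}$ (the convective counterpart of the identity used in the proof of Theorem \ref{spacetime_reduced_EL}) and keep the overall sign of the rearranged identity consistent across the bulk and boundary terms. As an alternative, one may bypass \eqref{alternative2} altogether and argue as in Remark \ref{strangeness} with the roles of $\mathcal{M}$ and $\mathcal{D}$ exchanged: writing $\Phi_\varepsilon=\Phi\circ\varphi_\varepsilon$ for a path $\varphi_\varepsilon\in\operatorname{Diff}(\mathcal{D})$ through the identity and invoking material covariance of $\mathcal{L}$ to convert the variations of $\mathcal{A}$ and $\Gamma$ into variations of the otherwise fixed $W$ and $K$, after which the convective variational principle of Theorem \ref{convective_reduced_EL}\,(iii) produces \eqref{convected_EL_cs_mc} directly.
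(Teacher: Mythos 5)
Your proposal is correct and follows essentially the route the paper indicates: the paper leaves the proof to the reader but states explicitly that "the first equation is obtained from \eqref{alternative2}," which is exactly your substitution of the material-covariance identity $\mathcal{L}\delta-\frac{\partial\mathcal{L}}{\partial\mathcal{A}}\otimes\mathcal{A}-\frac{\partial\mathcal{L}}{\partial\mathcal{F}}\stackrel{\rm tr}{\otimes}\mathcal{F}-\frac{\partial\mathcal{L}}{\partial\Gamma}\!\therefore\!\widehat{\Gamma}=-W\otimes\frac{\partial\mathcal{L}}{\partial W}+\frac{\partial\mathcal{L}}{\partial K}\!\therefore\!\widehat{K}$ into the bulk divergence and into the boundary trace, together with $\frac{\partial^\nabla\!\mathcal{L}}{\partial X}=0$. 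The remark-style alternative you sketch (converting variations of $\mathcal{A}$ and $\Gamma$ into variations of $W$ and $K$ via $\Phi_\varepsilon=\Phi\circ\varphi_\varepsilon$) is also consistent with the paper's parallel treatment of the spacetime case.
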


The first equation is obtained from \eqref{alternative2}. Note that, in a similar way with the case of Corollary \ref{corol_spacetime}, the partial derivatives with respect to the fields $W$ and $K$ appear, which are not dynamic variables in the convective description, but given fixed parameters, see also Remark \ref{strangeness}.

As earlier on the spacetime side, the results stated above still hold when the spacetime covariance is satisfied only with respect to the isotropy subgroup $ \operatorname{Diff}_{\gamma }( \mathcal{M} )$ of the spacetime tensor fields. Material covariance with respect to the whole group $ \operatorname{Diff}( \mathcal{D} )$ is however needed.

\subsection{Material versus spacetime electromagnetic potential}\label{material_spacetime_A}

It is possible to also regard the spacetime electromagnetic potential $A$, rather than $ \mathcal{A} $, as the primitive object in the Hamilton principle. In this case, instead of \eqref{material_Lagrangian}, we can start with the material Lagrangian density given as a bundle map
\begin{equation}\label{material_Lagrangian_2} 
\mathscr{L}' : J^1( \mathcal{D}  \times \mathcal{M} ) \times \wedge  ^1T^* \mathcal{M} \times \wedge ^2 T^* \mathcal{M}  \times T \mathcal{D} \times T^p_q\mathcal{D}  \times T^r_s \mathcal{M}  \rightarrow \wedge ^{n+1}T^* \mathcal{D},\end{equation} 
which is evaluated on the fields $ \Phi $, $A$, $ W$, $K$, $ \gamma $ as
\begin{equation}\label{Lagrangian_general_2} 
\mathscr{L}'(j^1\Phi, A \circ \Phi  , {\rm d}A \circ \Phi , W, K, \gamma \circ \Phi  ),
\end{equation}
to be compared with \eqref{Lagrangian_general}. The Hamilton principle reads
\[
\left. \frac{d}{d\varepsilon}\right|_{\varepsilon=0} \int_ \mathcal{D} \mathscr{L}'(j^1 \Phi _ \varepsilon ,  A_ \varepsilon\circ \Phi _ \varepsilon  , {\rm d} A _ \varepsilon \circ \Phi _ \varepsilon , W, K , \gamma  \circ \Phi _ \varepsilon )=0,
\]
for variations $ \delta \Phi $ and $ \delta A$ with $ \delta \Phi ( \lambda _0, \mathcal{B} )= \delta \Phi ( \lambda _1, \mathcal{B} )=0$ and $ \delta  A|_{ \Phi ( \lambda _0, \mathcal{B} )}= \delta  A |_{ \Phi ( \lambda _1, \mathcal{B} )}=0$.

Evidently, this Hamilton principle is equivalent to the previous one if the two Lagrangian densities $\mathscr{L}$ and $\mathscr{L}'$ become equal under the relation $A= \Phi_* \mathcal{A} $, i.e, if they are related as
\[
\bar{ \mathscr{L} }'\big(  X^a, x^\mu, v^\mu_a, A_ \mu , F _{ \mu \nu },  ... \big) {\rm d} ^{n+1} X =\bar{ \mathscr{L} }\big(  X^a, x^\mu, v^\mu_a, v^ \mu _aA_ \mu , v^ \mu _av^ \nu _bF_ {\mu \nu },  ... \big) {\rm d} ^{n+1} X,
\]
as bundle maps.
In terms of $ \mathscr{L}'$, the spacetime and material covariance properties \eqref{spacetime_cov} and  \eqref{material_cov} are given as
\begin{equation}\label{spacetime_cov_prime}
\begin{aligned} 
&\mathscr{L}'(j^1( \psi \circ  \Phi ), \psi _* A \circ \psi \circ \Phi , \psi _* {\rm d} A \circ \psi \circ \Phi,  W, K,\psi _* \gamma  \circ \psi \circ \Phi )\\
& \qquad    = \mathscr{L}'(j^1 \Phi ,A \circ \Phi  , {\rm d} A \circ \Phi  ,W, K, \gamma  \circ \Phi ), \quad \forall \psi \in \operatorname{Diff}( \mathcal{M} )
\end{aligned}
\end{equation}  
and
\begin{equation}\label{material_cov}
\begin{aligned} 
&\mathscr{L}'(j^1( \Phi   \circ  \varphi  ),A \circ \Phi \circ \varphi  , {\rm d} A \circ \Phi \circ \varphi  ,  \varphi ^*W, \varphi ^* K,\gamma   \circ \Phi \circ \varphi )\\
&\qquad   = \varphi ^* \big[\mathscr{L}'(j^1 \Phi , \mathcal{A} , {\rm d} \mathcal{A},W, K, \gamma  \circ \Phi )\big], \quad \forall \varphi  \in \operatorname{Diff}( \mathcal{D} )
\end{aligned}
\end{equation} 
respectively. These covariance properties are consistent with the fact that the electromagnetic potential is now primarily seen as a spatial tensor, like $ \gamma $, so that $ \operatorname{Diff}( \mathcal{M} )$ acts on it, and not $ \operatorname{Diff}( \mathcal{D} )$ as earlier. Under these hypotheses, the reduced Lagrangians $\ell$ and $ \mathcal{L} $ are the same as before, but the induced variations of the electromagnetic potential are different: the resulting spacetime variational formulation in \eqref{Eulerian_VP} differs in that the variations of the electromagnetic potential become now free instead of  $\delta A= - \pounds _ \zeta A +\deltabar A$. With this change, a similar computation shows that the first and last equations in \eqref{spacetime_EL} now read
\begin{equation}\label{option2_interior} 
\operatorname{div}^ \nabla \!\Big(  \ell \delta  + w \otimes \frac{\partial \ell}{\partial w} -  \textcolor{black}{  \frac{\partial \ell}{\partial \kappa }\!\therefore\!  \widehat{ \kappa  }} \Big)  =  \frac{\partial ^\nabla\!\ell}{\partial x}+\frac{\partial \ell}{\partial \gamma } : \nabla  \gamma +   \frac{\partial \ell}{\partial A } : \nabla   A  +  \frac{\partial \ell}{\partial F }: \nabla F
\end{equation} 
and
\begin{equation}\label{option2_boundary} 
i_{ \partial _{\rm cont} \mathcal{N} }^*\Big(\operatorname{tr}\Big(  \ell \delta  + w \otimes \frac{\partial \ell}{\partial w} -   \textcolor{black}{ \frac{\partial \ell}{\partial \kappa }\!\therefore\!  \widehat{ \kappa  } }\Big) \cdot \zeta\Big)=0,\;\;\forall \zeta,
\end{equation} 
while the others are kept unchanged. To prove that these equations are equivalent to the one previously derived in \eqref{spacetime_EL}, we need the following technical result.

\begin{lemma}\label{crucial_lemma} For an arbitrary Lagrangian density $\ell(A, {\rm d} A, w, \kappa , \gamma )$ and torsion free covariant derivative, we have the formula
\begin{equation}\label{crucial_formula}
\begin{aligned}
\operatorname{div}^ \nabla \left(\frac{\partial  \ell}{\partial A}\otimes\widehat{A} +\frac{\partial \ell}{\partial F}\stackrel{\rm tr}{ \otimes }\widehat{F}  \right) &= \frac{\partial \ell}{\partial A} \cdot \nabla A + \frac{\partial \ell}{\partial F} : \nabla F \\
& \quad -  \mathbf{i} _{\,\_\,} F \wedge \left( {\rm d} \frac{\partialnew  \ell}{\partialnew  F} + \frac{\partialnew  \ell}{\partialnew  A} \right) + \mathbf{i} _{\,\_\,}A \, {\rm d}  \frac{\partialnew  \ell}{\partialnew A}  
\end{aligned}
\end{equation} 
in $ \wedge ^1 T^* \mathcal{M} \otimes \wedge ^{n+1}T^* \mathcal{M} $. In particular, for a Lagrangian density of the form $\ell(A, {\rm d} A, \mathsf{g})$ and $ \nabla $ the Levi-Civita covariant derivative, \eqref{crucial_formula} reduces to
\begin{align*}
\operatorname{div}^ \nabla \left(\ell \delta - \frac{\partial  \ell}{\partial A}\otimes\widehat{A} -\frac{\partial \ell}{\partial F}\stackrel{\rm tr}{ \otimes }\widehat{F}  \right) &= \frac{\partial ^ \nabla \ell}{\partial x}+ \mathbf{i} _{\,\_\,} F \wedge \left( {\rm d} \frac{\partialnew  \ell}{\partialnew  F} + \frac{\partialnew  \ell}{\partialnew  A} \right) - \mathbf{i} _{\,\_\,}A \, {\rm d}  \frac{\partialnew  \ell}{\partialnew A}.
\end{align*} 
\end{lemma}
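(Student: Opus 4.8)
The plan is to establish \eqref{crucial_formula} by pairing both sides with an arbitrary vector field $\zeta$ on $\mathcal{M}$ and evaluating the resulting volume-form-valued scalar $\pounds_\zeta A\cdot\frac{\partial\ell}{\partial A}+\pounds_\zeta F:\frac{\partial\ell}{\partial F}$ in two different ways. First, I apply the technical Lemma~\ref{technical_lemma} (in its global form \eqref{formula_Lemma_global}) to the $(0,1)$-tensor $\kappa=A$ with $\pi=\frac{\partial\ell}{\partial A}$, and to the $(0,2)$-tensor $\kappa=F$ with $\pi=\frac{\partial\ell}{\partial F}$, add the two identities, and rewrite $\frac{\partial\ell}{\partial A}\therefore\widehat A=\frac{\partial\ell}{\partial A}\otimes A$ and $\frac{\partial\ell}{\partial F}\therefore\widehat F=\frac{\partial\ell}{\partial F}\stackrel{\rm tr}{\otimes}F$ via \eqref{notation_therefore_AF}. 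This yields
\begin{align*}
\pounds_\zeta A\cdot\tfrac{\partial\ell}{\partial A}+\pounds_\zeta F:\tfrac{\partial\ell}{\partial F}
&=\Bigl(\tfrac{\partial\ell}{\partial A}\cdot\nabla A+\tfrac{\partial\ell}{\partial F}:\nabla F-\operatorname{div}^\nabla\!\bigl(\tfrac{\partial\ell}{\partial A}\otimes A+\tfrac{\partial\ell}{\partial F}\stackrel{\rm tr}{\otimes}F\bigr)\Bigr)\cdot\zeta\\
&\quad+\operatorname{div}\!\bigl[\bigl(\tfrac{\partial\ell}{\partial A}\otimes A+\tfrac{\partial\ell}{\partial F}\stackrel{\rm tr}{\otimes}F\bigr)\cdot\zeta\bigr].
\end{align*}

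Second, by the very definitions \eqref{partial2} and \eqref{partial1} of the two incarnations of the partial derivatives one has $\delta A\cdot\frac{\partial\ell}{\partial A}=\delta A\wedge\frac{\partialnew\ell}{\partialnew A}$ and $\delta F:\frac{\partial\ell}{\partial F}=\delta F\wedge\frac{\partialnew\ell}{\partialnew F}$ for all one- and two-forms $\delta A$, $\delta F$. Taking $\delta A=\pounds_\zeta A$, $\delta F=\pounds_\zeta F$, using Cartan's magic formula together with $F={\rm d}A$ — so that $\pounds_\zeta A={\rm d}(\mathbf{i}_\zeta A)+\mathbf{i}_\zeta F$ and $\pounds_\zeta F={\rm d}(\mathbf{i}_\zeta F)$ since ${\rm d}F=0$ — and then integrating by parts (Leibniz rule for ${\rm d}$ on forms), one obtains
\begin{align*}
\pounds_\zeta A\cdot\tfrac{\partial\ell}{\partial A}+\pounds_\zeta F:\tfrac{\partial\ell}{\partial F}
&=\mathbf{i}_\zeta F\wedge\Bigl(\tfrac{\partialnew\ell}{\partialnew A}+{\rm d}\tfrac{\partialnew\ell}{\partialnew F}\Bigr)-\mathbf{i}_\zeta A\,{\rm d}\tfrac{\partialnew\ell}{\partialnew A}\\
&\quad+{\rm d}\Bigl(\mathbf{i}_\zeta A\,\tfrac{\partialnew\ell}{\partialnew A}+\mathbf{i}_\zeta F\wedge\tfrac{\partialnew\ell}{\partialnew F}\Bigr).
\end{align*}

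Equating the two right-hand sides, the remaining point is that the two manifestly ``total divergence'' terms coincide: one has $\bigl(\frac{\partial\ell}{\partial A}\otimes A\bigr)\cdot\zeta=(\mathbf{i}_\zeta A)\frac{\partial\ell}{\partial A}$ while $\bigl(\frac{\partial\ell}{\partial F}\stackrel{\rm tr}{\otimes}F\bigr)\cdot\zeta$ is the contraction of the two-multivector density $\frac{\partial\ell}{\partial F}$ with the one-form $\mathbf{i}_\zeta F$; under the canonical multivector-density/form isomorphism \eqref{isomorphism} these vector densities correspond to the $n$-forms $(\mathbf{i}_\zeta A)\frac{\partialnew\ell}{\partialnew A}$ and $\mathbf{i}_\zeta F\wedge\frac{\partialnew\ell}{\partialnew F}$, and \eqref{isomorphism} intertwines the canonical divergence of a vector density with the exterior derivative of an $n$-form. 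Hence the divergence terms drop out, leaving an identity of volume-form-valued one-forms evaluated on an arbitrary $\zeta$; since the pairing with $\zeta$ is tensorial, stripping $\zeta$ gives \eqref{crucial_formula}. For the Levi-Civita specialization with $\ell=\ell(A,{\rm d}A,\mathsf{g})$, note that $\operatorname{div}^\nabla(\ell\delta)$ is nothing but the covariant differential of the scalar density $\ell$, so \eqref{nabla_derivative_ell} together with $\nabla\mathsf{g}=0$ gives $\operatorname{div}^\nabla(\ell\delta)=\frac{\partial^\nabla\ell}{\partial x}+\frac{\partial\ell}{\partial A}\cdot\nabla A+\frac{\partial\ell}{\partial F}:\nabla F$; computing $\operatorname{div}^\nabla(\ell\delta)-\operatorname{div}^\nabla\!\bigl(\frac{\partial\ell}{\partial A}\otimes A+\frac{\partial\ell}{\partial F}\stackrel{\rm tr}{\otimes}F\bigr)$ from this and from \eqref{crucial_formula} then cancels the $\nabla A$ and $\nabla F$ terms and produces the stated formula.

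The step I expect to be the main obstacle is the sign bookkeeping in the cancellation of the two divergence terms: both the isomorphism \eqref{isomorphism} and the interior-product conventions carry $(-1)^{k\ell}$-type factors, and the two terms are expressed through different incarnations of the partial derivatives, so one must check that they genuinely agree rather than differing by a sign. A safe alternative that sidesteps this is to verify the whole identity directly in local coordinates, using the local expressions \eqref{local_partial1}, the Leibniz rule for $\operatorname{div}^\nabla$ acting on $(1,1)$-tensor densities, and $\mathcal{F}_{ab}=\partial_aA_b-\partial_bA_a$ (equivalently, the antisymmetry of both $F$ and $\frac{\partial\ell}{\partial F}$).
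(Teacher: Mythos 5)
Your proof is correct. The paper states Lemma~\ref{crucial_lemma} without proof, but your argument is exactly the mechanism the paper itself deploys in the chain of equalities \eqref{big_computation}: the quantity $\pounds_\zeta A\cdot\frac{\partial\ell}{\partial A}+\pounds_\zeta F:\frac{\partial\ell}{\partial F}$ is evaluated once through Lemma~\ref{technical_lemma} (giving the $\operatorname{div}^\nabla$ and $\nabla_\zeta$ terms) and once through the form incarnations \eqref{partial1} with Cartan's formula and ${\rm d}F=0$ (giving the Euler--Lagrange combinations), and the two exact/divergence remainders are identified. So this is the intended proof, just made explicit.

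On the one step you flagged as risky: the sign bookkeeping does go through. For the $F$-term, $\bigl(\frac{\partial\ell}{\partial F}\stackrel{\rm tr}{\otimes}F\bigr)\cdot\zeta$ has components $\frac{\partial\bar\ell}{\partial F_{\mu\gamma}}F_{\nu\gamma}\zeta^\nu=\frac{\partial\bar\ell}{\partial F_{\mu\gamma}}(\mathbf{i}_\zeta F)_\gamma$ (up to the $\tfrac12$ conventions of \eqref{partial2}), and using ${\rm d}X^c\wedge{\rm d}^{n-1}X_{ab}=\delta^c_b\,{\rm d}^nX_a-\delta^c_a\,{\rm d}^nX_b$ together with the antisymmetry of $\frac{\partial\ell}{\partial F}$ one finds
\begin{equation*}
\mathbf{i}_\zeta F\wedge\frac{\partialnew\ell}{\partialnew F}
=\frac{\partial\bar\ell}{\partial F_{\mu\gamma}}(\mathbf{i}_\zeta F)_\gamma\,{\rm d}^nX_\mu,
\end{equation*}
which is precisely the $n$-form image of that vector density under \eqref{isomorphism}; the $A$-term is immediate. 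Since \eqref{isomorphism} intertwines the canonical divergence of vector densities with ${\rm d}$ on $n$-forms, the two remainders cancel with no stray sign, and stripping the arbitrary $\zeta$ gives \eqref{crucial_formula}. Your derivation of the Levi--Civita specialization from $\operatorname{div}^\nabla(\ell\delta)=\nabla[\ell(\cdot)]$, \eqref{nabla_derivative_ell}, and $\nabla\mathsf{g}=0$ is also correct.
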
 

By using \eqref{crucial_formula} in \eqref{option2_interior} we observe that, when  the second equation in \eqref{spacetime_EL} holds, then the first equation in \eqref{spacetime_EL} is equivalent to \eqref{option2_interior}. It remains to analyse when the boundary condition in \eqref{option2_boundary} is equivalent to the last one in  \eqref{spacetime_EL}. Clearly, this holds if and only if
\[
\Big(\frac{\partial \bar\ell}{\partial A _ \mu }  A_ \nu   -  \frac{\partial \bar\ell}{\partial F _{ \mu \gamma }}F_{\nu\gamma}\Big) {\rm d} ^nx_\mu=0, \quad \text{on} \quad   \partial_{\rm cont} \mathcal{N}.
\]
From the boundary condition already imposed on $\frac{\partialnew \ell}{\partialnew F}$, the second term vanishes, so that these boundary conditions are equivalent if and only if
\begin{equation}\label{BC_on_A}
\frac{\partial \ell}{\partial A_ \mu }{\rm d} ^n x_ \mu =0  , \quad \text{on} \quad   \partial_{\rm cont} \mathcal{N}.
\end{equation} 
Remarkably, it turns out that this condition holds for the class of Lagrangian densities relevant for electromagnetic media, as we shall see in \S\ref{EMFS}. For this class of Lagrangian that are linear in $A$, the derivative $ \frac{\partial \ell}{\partial A} $ is proportional to the world-velocity $u$, whose boundary condition $\mathsf{g}(u, n)=0$ on $ \partial _{\rm cont} \mathcal{N} $ implies \eqref{BC_on_A}. We summarize these results as follows.

\begin{proposition} Given a Lagrangian density $\ell(A, {\rm d} A, w, \kappa , \gamma )$ the variational principle given in \eqref{Eulerian_VP} and the same variational principle but with the variations $ \delta A$ free (instead of $ \delta A = - \pounds _ \zeta A + 
\deltabar A$) give the same equations and boundary conditions, provided \eqref{BC_on_A} is satisfied. This is the case for the Lagrangian densities of electromagnetic media.
\end{proposition}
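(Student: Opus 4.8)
The plan is to verify the equivalence by a direct, termwise comparison of the two systems of critical-point conditions, isolating the pieces that differ and showing that each of them vanishes on the common solution set once the interior Maxwell-type equation and the shared boundary conditions are imposed. First I would record what is automatically common to both variational principles: in either formulation the variations of $w$ and $\kappa$ are the constrained ones $\delta w=-\pounds_\zeta w$, $\delta\kappa=-\pounds_\zeta\kappa$, and in either formulation the potential carries a genuinely free variation in the interior ($\deltabar A$ in \eqref{Eulerian_VP}, $\delta A$ itself in the alternative principle). Hence both principles produce the same interior equations $\frac{\partialnew\ell}{\partialnew A}+{\rm d}\frac{\partialnew\ell}{\partialnew F}=0$ and $\pounds_w\kappa=0$, and the same boundary condition $i^*_{\partial_{\rm cont}\mathcal{N}}\frac{\partialnew\ell}{\partialnew F}=0$. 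Consequently only the divergence (``energy--momentum'') equation on the first line of \eqref{spacetime_EL} and the associated boundary condition on its last line can differ from their counterparts \eqref{option2_interior} and \eqref{option2_boundary}, and it is these two that remain to be matched.

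For the interior equation, I would subtract \eqref{option2_interior} from the first line of \eqref{spacetime_EL}: the discrepancy is $\operatorname{div}^\nabla\!\big(\frac{\partial\ell}{\partial A}\otimes A+\frac{\partial\ell}{\partial F}\stackrel{\rm tr}{\otimes}F\big)$ on one side against $\frac{\partial\ell}{\partial A}:\nabla A+\frac{\partial\ell}{\partial F}:\nabla F$ on the other. By Lemma \ref{crucial_lemma}, formula \eqref{crucial_formula}, these two expressions differ exactly by $-\mathbf{i}_{\,\_\,}F\wedge\big({\rm d}\frac{\partialnew\ell}{\partialnew F}+\frac{\partialnew\ell}{\partialnew A}\big)+\mathbf{i}_{\,\_\,}A\,{\rm d}\frac{\partialnew\ell}{\partialnew A}$. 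On any configuration satisfying the interior Maxwell equation the first bracket vanishes identically, and since then $\frac{\partialnew\ell}{\partialnew A}=-{\rm d}\frac{\partialnew\ell}{\partialnew F}$ we also get ${\rm d}\frac{\partialnew\ell}{\partialnew A}=-{\rm d}\,{\rm d}\frac{\partialnew\ell}{\partialnew F}=0$, so the second term drops as well. Therefore, on the common set of fields satisfying the Maxwell equation, the first line of \eqref{spacetime_EL} is equivalent to \eqref{option2_interior}.

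For the boundary, the difference between the last line of \eqref{spacetime_EL} and \eqref{option2_boundary} is the requirement $i^*_{\partial_{\rm cont}\mathcal{N}}\big(\operatorname{tr}\big((\frac{\partial\ell}{\partial A}\otimes A+\frac{\partial\ell}{\partial F}\stackrel{\rm tr}{\otimes}F)\cdot\zeta\big)\big)=0$ for all $\zeta$, i.e. $\big(\frac{\partial\bar\ell}{\partial A_\mu}A_\nu-\frac{\partial\bar\ell}{\partial F_{\mu\gamma}}F_{\nu\gamma}\big){\rm d}^nx_\mu=0$ on $\partial_{\rm cont}\mathcal{N}$. The $F$-term is (the pullback to the boundary of a contraction of) $\frac{\partialnew\ell}{\partialnew F}$ and so drops out by virtue of the already-imposed condition $\frac{\partial\bar\ell}{\partial F_{\mu\nu}}{\rm d}^nx_\mu=0$; what survives is precisely \eqref{BC_on_A}. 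To close the argument in the relevant case I would invoke the explicit form of the electromagnetic-media Lagrangians of \S\ref{EMFS}: there $\ell$ is affine in $A$, the linear term being the coupling of $A$ to the proper charge current, which is proportional to the world-velocity $u$, so $\frac{\partial\ell}{\partial A}$ is a multiple of $u$; then $\frac{\partial\ell}{\partial A_\mu}{\rm d}^nx_\mu$ is a multiple of $\mathbf{i}_u\mu(\mathsf{g})$, whose pullback to $\partial_{\rm cont}\mathcal{N}$ vanishes exactly because $u$ is tangent to that hypersurface, $\mathsf{g}(u,n)=0$ (cf.\ \eqref{BC_EMF_2}). This gives \eqref{BC_on_A} and completes the proof. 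The only genuinely delicate point is the careful passage between the $k$-multivector-density incarnation of the partial derivatives $\frac{\partial\ell}{\partial A},\frac{\partial\ell}{\partial F}$, in which Lemma \ref{crucial_lemma} is phrased, and their $(n+1-k)$-form incarnations $\frac{\partialnew\ell}{\partialnew A},\frac{\partialnew\ell}{\partialnew F}$, in which the boundary conditions live; this is where the relations \eqref{notation_therefore_AF} and the local expressions must be tracked with care, but it involves no conceptual obstacle.
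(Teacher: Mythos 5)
Your proposal is correct and follows essentially the same route as the paper: reduce the comparison to the divergence equation and its boundary condition, apply Lemma \ref{crucial_lemma} to see that the discrepancy terms vanish once the interior Maxwell equation holds, and observe that the residual boundary mismatch is exactly \eqref{BC_on_A}, which for electromagnetic media follows from the affine dependence on $A$ and the tangency $\mathsf{g}(u,n)=0$. Your explicit remark that ${\rm d}\frac{\partialnew\ell}{\partialnew A}=-{\rm d}\,{\rm d}\frac{\partialnew\ell}{\partialnew F}=0$ on shell makes precise a step the paper leaves implicit, but it is the same argument.
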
 

\begin{remark}[Gauge-invariance using $A$]\label{remark_GI}\rm
Gauge invariance can also be formulated in this setting, using the Abelian group $C^\infty(\mathcal{M})$ acting on the bundle $ \wedge ^1T^* \mathcal{M} \rightarrow \mathcal{M} $ of spacetime electromagnetic potentials as $ A_x \mapsto A_x + {\rm d} k(x)$, where $ k \in C^\infty( \mathcal{M} )$. It is expressed analogously to \eqref{gauge_invariance}, now in terms of $A$:
\begin{equation}\label{GI_A}
\mathscr{S}'( \Phi ,A+ {\rm d} k)- \mathscr{S}'( \Phi ,A)=C',
\end{equation}
with
\[
\mathscr{S}'(\Phi,A)=\int_ \mathcal{D} \mathscr{L}'(j^1 \Phi   ,  A \circ \Phi    , {\rm d} A   \circ \Phi   , W, K , \gamma  \circ \Phi  ).
\]
When combined with the covariance properties under material and spacetime diffeomorphisms, the invariance group of the theory becomes $ \operatorname{Diff}( \mathcal{D} )\times \left(\operatorname{Diff}( \mathcal{M} ) \,\circledS\, C^\infty( \mathcal{M} ) \right)$, with the semidirect product acting on spacetime  (see  \S\ref{Geom_setting} for comparison).
\end{remark}


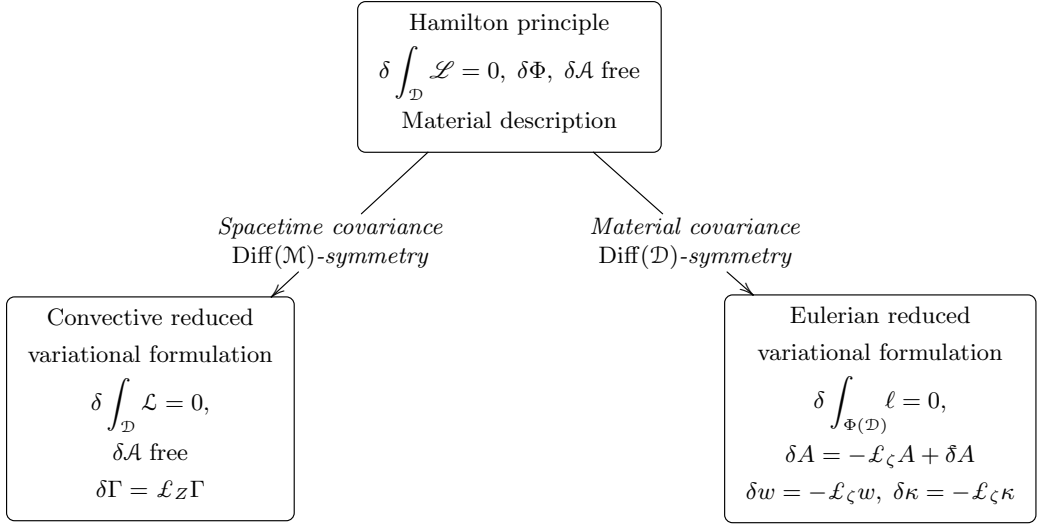
\begin{figure}[h!]
{\noindent
\footnotesize
\begin{center}
\hspace{.3cm}
\begin{xy}
\xymatrix{
& *+[F-:<3pt>]{
\begin{array}{c}
\vspace{0.1cm}\text{Hamilton principle}\\
\vspace{0.1cm}\displaystyle \delta \int_ \mathcal{D}  \mathscr{L}=0,\; \delta \Phi ,\;\delta\mathcal{A} \;\text{free}\\
\vspace{0.1cm}\text{Material description}
\end{array}
}
\ar[ddl]|{\begin{array}{c}\textit{Spacetime covariance} \\
\textit{$ \operatorname{Diff}( \mathcal{M}  )$-symmetry}\\
\end{array}}
\ar[ddr]|{\begin{array}{c}\textit{Material covariance} \\
\textit{$ \operatorname{Diff}( \mathcal{D} )$-symmetry}\\
\end{array}} & & \\
& & & \\
*+[F-:<3pt>]{
\begin{array}{c}
\vspace{0.1cm}\text{Convective reduced}\\
\vspace{0.1cm}\text{variational formulation}\\
\vspace{0.1cm}\displaystyle\delta \int_ \mathcal{D}  \mathcal{L}=0,\\
\vspace{0.1cm}\delta \mathcal{A} \;\text{free}\\
\vspace{0.1cm}\delta \Gamma = \pounds _ Z  \Gamma 
\\
\end{array}
}
& &  *+[F-:<3pt>]{
\begin{array}{c}
\vspace{0.1cm}\text{Eulerian reduced}\\
\vspace{0.1cm}\text{variational formulation}\\
\vspace{0.1cm}\displaystyle\delta \int_ { \Phi  (\mathcal{D})}\!\!\ell =0,\\
\vspace{0.1cm}\delta A = - \pounds _ \zeta A + \deltabar A\\
\vspace{0.1cm}\delta w= - \pounds _ \zeta w, \; \delta \kappa  = - \pounds _ \zeta   \kappa 
\\
\end{array}
}\\
}
\end{xy}
\end{center}
}
\caption{Illustration of the variational principles in the three representations of a relativistic electromagnetic continuum. The primary dynamic fields are the world-tube $\Phi$ and the electromagnetic potential $\mathcal{A}$ (or $A$ see \S\ref{material_spacetime_A}). Given the material and spacetime tensor fields $K$, $W$, and $ \gamma $, one defines the associated dynamic fields $\kappa= \Phi _*K$, $w= \Phi _*W$, and $ \Gamma = \Phi ^* \gamma $.}
\label{figure_2}
\end{figure}




\section{Coupling with the Einstein equations and junction conditions}\label{coupling}

In this section, we shall show how the variational formulation for electromagnetic continua developed in \S\ref{sec_2} can be coupled to the gravitation theory and with the Einstein-Maxwell equations outside the continuum. In particular, the variational formulation that we develop in this section is able to produce:
\begin{itemize}
\item[\bf (1)] the field equations for the gravitational and electromagnetic fields created by the relativistic continuum, both at the interior and outside the continuum;
\item[\bf (2)] the equations of motion of the continuum;
\item[\bf (3)] the junction conditions between the solution at the interior of the relativistic continuum and the solution describing the gravity and electromagnetic fields produced outside from it.
\end{itemize}

It was shown in \cite{GB2024} that the key step to naturally obtain the junction conditions directly from the variational principle is to augment the action functional with the Gibbons-Hawking-York (GHY) boundary terms, \cite{Yo1972}, \cite{GiHa1977}. We shall show here how this setting can be extended to yield the electromagnetic junction conditions.

\paragraph{Geometric setting and preliminary junction conditions.} We shall denote by $ \mathcal{N} ^-= \Phi ( \mathcal{D} )\subset \mathcal{M} $ the portion of spacetime occupied by the continuum, denoted $\mathcal{N} $ earlier, and we shall write $ \mathcal{N} ^+= \mathcal{M} - \operatorname{int}(\mathcal{N}^-)$. For simplicity we assume that the spacetime $ \mathcal{M} $ has no boundary and we have $ \partial \mathcal{N} ^+= \partial \mathcal{N} ^-$ assumed to be piecewise smooth. While $ \partial \mathcal{N}^+$ and $\partial \mathcal{N} ^-$ are equals as manifolds, they have opposite orientation as boundaries of $ \mathcal{N} ^+$ and $ \mathcal{N} ^-$, the latter having the orientation induced from that of $ \mathcal{M} $.

We denote by $\mathsf{g}^-$ and $\mathsf{g}^+$ the Lorentzian metrics on $ \mathcal{N}^-$ and $ \mathcal{N} ^+$, assumed to be smooth. It is assumed that $\mathsf{g}^+$ and $\mathsf{g}^-$ induce the same metric on the boundary, i.e.,
\begin{equation}\label{junction_first_g} 
i_{ \partial \mathcal{N}^-} ^* \mathsf{g}^-= i_{ \partial \mathcal{N} ^+} ^* \mathsf{g}^ +=:h
\end{equation} 
with $i_{ \partial \mathcal{N}^\pm}: \partial \mathcal{N}^\pm \rightarrow \mathcal{N}^\pm $ the inclusions.
We also assume that $h$ is nondegenerate, i.e., either Lorentzian or Riemannian, on each piece of the boundary. Condition \eqref{junction_first_g}, sometimes referred to as the preliminary junction condition, ensures that the boundary has a well-defined geometry.

We denote by $A^-$ and $A^+$ the electromagnetic potentials on $ \mathcal{N} ^-$ and $ \mathcal{N} ^+$, assumed to be smooth. In a similar way with the metric, we assume that $A^-$ and $A^+$ induce the same potential on the boundary, i.e.,
\begin{equation}\label{junction_first_A} 
i_{ \partial \mathcal{N}^-} ^* A^-= i_{ \partial \mathcal{N} ^+} ^* A^ +.
\end{equation} 
Note that this condition does not depend on the metrics. However, by choosing one of the metric, say $\mathsf{g}^+$, it can be written in local coordinates as
\[
[A _  \alpha  ] (\mathsf{g}^+)^{ \alpha \beta  } \mu ^+_{ \beta   \gamma  \nu _1... \nu _{n-1}} (n^+)^ \gamma  =0,
\]
with $[A_ \alpha ]= A_ \alpha ^+- A_ \alpha ^-$, similarly if $ \mathsf{g}^-$ is preferred.

\paragraph{Lagrangian densities and action functional.} In this section we choose to regard the spacetime electromagnetic potential $A$ as the primitive object, rather than $ \mathcal{A} $, see \S\ref{material_spacetime_A}. Hence we are working with $\mathscr{L}'$ instead of $\mathscr{L}$.
We specify the material Lagrangian to the case where the spacetime tensor $ \gamma $ is the Lorentzian metric $\mathsf{g}^-$, hence obtaining a bundle map of the form
\begin{equation}\label{material_Lagrangian_g} 
\mathscr{L} ': J^1( \mathcal{D}  \times \mathcal{M} ) \times \wedge  ^1T^* \mathcal{M} \times \wedge ^2 T^* \mathcal{M}  \times T \mathcal{D} \times T^p_q\mathcal{D}  \times S^2_LT^*\mathcal{M}  \rightarrow \wedge ^{n+1}T^* \mathcal{D},
\end{equation}
where  $ S^2_LT^*\mathcal{M} \rightarrow \mathcal{M} $ is the bundle of Lorentzian metrics. This Lagrangian density is written as $\mathscr{L}'(j^1\Phi, A^- \circ \Phi  , {\rm d}A^-\circ \Phi  , W, K, \mathsf{g}^-\circ \Phi  )$ when evaluated on fields. We assume that $\mathscr{L}'$ is materially covariant with respect to the isotropy subgroup of $(W,K)$ so that there is the associated spacetime Lagrangian density 
\begin{equation}\label{spacetime_Lagran_grav} 
\ell: \wedge  ^1T^* \mathcal{M} \times \wedge ^2 T^* \mathcal{M}  \times T \mathcal{M}  \times  T ^p _q  \mathcal{M}  \times  S^2_LT^*\mathcal{M} \rightarrow \wedge  ^{n+1} T^*\mathcal{M}
\end{equation} 
written as $\ell(A^-, {\rm d} A^-,w, \kappa , \mathsf{g}^-)$  when evaluated on fields.

The total action functional is constructed by adding:
\begin{itemize}
\item[\bf (1)] the action functional of the electromagnetic continuum with Lagrangian density $\ell$ on $ \mathcal{N}^- = \Phi ( \mathcal{D} )$;
\item[\bf (2)] the Einstein-Hilbert action functionals on $ \mathcal{N} ^-$ and $ \mathcal{N} ^ +$;
\item[\bf (3)] the action functional associated to the Maxwell Lagrangian $\ell_{\rm M}$ on $ \mathcal{N} ^+$;
\item[\bf (4)] the GHY boundary terms given by the integral of the trace of the extrinsic curvature (or second fundamental form) $k(\mathsf{g}^\pm)$ on the boundaries.
\end{itemize}
This yields
\begin{equation}\label{total_action}
\begin{aligned} 
&\int_{ \mathcal{N}^- }\! \ell( A^-, {\rm d} A^-,w , \kappa  , \mathsf{g}^- ) + \frac{1}{2 \chi }\int_{ \mathcal{N}^-}  \!R(\mathsf{g}^-) \mu (\mathsf{g}^-) & &\qquad \text{(interior terms)}\\
& +\int_{ \mathcal{N} ^+}\ell_{\rm M}(A^+, {\rm d} A^+, \mathsf{g}^+) + \frac{1}{2 \chi }\int_{\mathcal{N} ^+} \! R(\mathsf{g}^+) \mu (\mathsf{g}^+)& &\qquad \text{(exterior terms)}\\
& + \frac{1}{ \chi }\int_{ \partial \mathcal{N}^-} \!\epsilon\, k(\mathsf{g}^-)   \mu ^-(h)+ \frac{1}{ \chi }\int_{ \partial  \mathcal{N} ^+} \!\epsilon\, k(\mathsf{g}^+)   \mu ^+(h),& &\qquad \text{(boundary terms)}
\end{aligned}
\end{equation}
with the Maxwell Lagrangian on the outer spacetime given as
\begin{equation}\label{Maxwell_Lagrangian} 
\ell_{\rm M}(A^+, {\rm d} A^+, \mathsf{g}^+)= -\frac{1}{2} F^+ \wedge \star F^+.
\end{equation} 
Note that the Maxwell Lagrangian on $ \mathcal{N} ^-$ does not appear explicitly as it is included in the Lagrangian density $\ell$, see \S\ref{EMFS}.
We recall that $R(\mathsf{g})$ is the scalar curvature of $\mathsf{g}$, defined by $R=\mathsf{g}^{ \alpha \beta } Ric_{ \alpha \beta }= \mathsf{g}^{ \alpha \beta } R^ \lambda _{ \alpha \lambda \beta }$ with $R(u,v)= \nabla _u \nabla _v - \nabla _v \nabla _u - \nabla _{[u,v]}$ the Riemann curvature tensor of $\mathsf{g}$, and that $k(\mathsf{g}^\pm)$ is defined on $ \partial \mathcal{N}^\pm$ by $k= \operatorname{Tr}K= h^{ab}K_{ab}$, where $ h_{ab}dx^adx^b$ is the local expression of $h$ on $ \partial \mathcal{N}^\pm $ and $K(\mathsf{g}^\pm)$, $K(u,v)= \mathsf{g}( u, \nabla _v n^\pm)$ is the extrinsic curvature of $ \partial \mathcal{N} ^\pm$. In the last two terms of \eqref{total_action}, $ \epsilon $ is $1$, resp., $-1$, on the timelike, resp., spacelike piece of the boundary, and $ \mu ^\pm(h)$ is the volume form on $ \partial \mathcal{N} ^\pm$ associated to $h= i_{ \partial \mathcal{N} ^-}^*\mathsf{g}^-= i_{ \partial \mathcal{N} ^+}^*\mathsf{g}^+$ and to the boundary orientation of $ \partial \mathcal{N} ^\pm$.

\paragraph{Variational formulation for electromagnetic continuum coupled to gravitation.} We here state and prove our main result concerning the variational formulation for relativistic electromagnetic continuum coupled to gravitation, based on the action functional given in \eqref{total_action}.
As explained earlier, we assume that the Lorentizian metrics and electromagnetic potentials $ \mathsf{g}^+$, $\mathsf{g}^-$, $A^+$, and $A^-$ on which the action functional is evaluated satisfy the preliminary junction conditions \eqref{junction_first_g} and \eqref{junction_first_A} and we derive the gravitational (Lanczos-Israel) and electromagnetic junction conditions as critical point conditions. 
It is also possible to obtain \eqref{junction_first_g} and \eqref{junction_first_A} from the variational formulation without assuming it a priori, by including the corresponding Lagrange multiplier terms. We shall use the notations
\begin{equation}\label{jump_notation} 
[\mathsf{g}]=\mathsf{g}^+- \mathsf{g}^-, \; [K]= K(\mathsf{g}^+)-K(\mathsf{g}^-), \; [F]=F^+-F^-, \; \Big[\frac{\partialnew \ell}{\partialnew F}\Big]=\Big[ \frac{\partialnew \ell_{\rm M}}{\partialnew F^+}-  \frac{\partialnew \ell}{\partialnew F_+}\Big]
\end{equation} 
for the various jumps.

\begin{theorem}\label{main} Consider an electromagnetic continuum with material Lagrangian density given as in \eqref{material_Lagrangian_g} and assumed to be material and spacetime covariant. Let $\ell: \wedge ^1 T ^*\mathcal{M} \times \wedge ^2 T^* \mathcal{M} \times T \mathcal{M}  \times  T ^p _q  \mathcal{M}  \times  S^2_LT^* \mathcal{M} \rightarrow \wedge  ^{n+1} T^*\mathcal{M} $ be the associated spacetime Lagrangian density.

Fix the reference tensor fields $W \in \mathfrak{X} ( \mathcal{D} )$ and $K \in \mathcal{T} _q^p( \mathcal{D} )$.
For each world-tube $ \Phi : \mathcal{D} \rightarrow \mathcal{M} $, define $\mathcal{N}^-= \Phi ( \mathcal{D})$, $ \mathcal{N} ^+= \mathcal{M} - \operatorname{int}( \mathcal{N} ^-)$, $w= \Phi _* W$, and $ \kappa = \Phi _ * K$. Consider smooth Lorentzian metrics $ \mathsf{g}^\pm \in \mathcal{S} ^2_L( \mathcal{N} ^\pm)$ and electromagnetic potentials $A^\pm \in \Omega ^1( \mathcal{N} ^\pm)$ such that $i_{\partial  \mathcal{N} ^-}^* \mathsf{g}^-= i_{ \partial \mathcal{N} ^+}^* \mathsf{g}^+$ and $ i_{ \partial \mathcal{N} ^-}^*A^-=  i_{ \partial \mathcal{N} ^+}^*A^+$.

Then the following statements are equivalent:
\begin{itemize}
\item[\bf (i)] $A^\pm \in \Omega ^1( \mathcal{N} ^\pm)$, $w \in \mathfrak{X} (\mathcal{N}^- )$, $ \kappa \in \mathcal{T} ^p_q( \mathcal{N}^- )$, $\textcolor{black}{\mathsf{g}^\pm }\in \mathcal{S} ^2_L( \mathcal{N} ^\pm)$ are critical points of the {\bfi Eulerian variational principle}
\begin{equation}\label{total_action_variation}
\begin{aligned} 
&\hspace{-0.7cm}\left. \frac{d}{d\varepsilon}\right|_{\varepsilon=0} \left[\int_{ \mathcal{N}^- _\varepsilon} \ell(A_ \varepsilon ^-, {\rm d} A_ \varepsilon ^-, w_\varepsilon , \kappa _\varepsilon ,  \mathsf{g}^- _\varepsilon) + \frac{1}{2 \chi }\int_{ \mathcal{N}^-_\varepsilon}  R( \mathsf{g}^-_\varepsilon) \mu ( \mathsf{g}^-_\varepsilon) \right. \\
& \quad   \quad \left. +\int_{ \mathcal{N} ^+_ \varepsilon }\ell_{\rm M}(A_ \varepsilon ^+, {\rm d} A^+_ \varepsilon , \mathsf{g}^+_ \varepsilon )+ \frac{1}{2 \chi }\int_{\mathcal{N} ^+_\varepsilon}  R( \mathsf{g}^+_\varepsilon) \mu ( \mathsf{g}^+_\varepsilon)\right. \\
&\quad \quad  \qquad  \qquad  \left. + \frac{1}{ \chi }\int_{ \partial \mathcal{N}^-_\varepsilon} \!\epsilon \,k( \mathsf{g}^-_\varepsilon)   \mu ^-(h_\varepsilon)+ \frac{1}{ \chi }\int_{ \partial  \mathcal{N} ^+_\varepsilon} \!\epsilon\, k( \mathsf{g}^+_\varepsilon)   \mu ^+(h_\varepsilon) \right] =0,
\end{aligned}
\end{equation}
for variations
\begin{align*}
\delta \mathcal{N}^- &= \zeta |_{ \partial \mathcal{N} ^-} \big/T \partial \mathcal{N}^-,   & &\delta A^\pm, & &  \delta \mathsf{g}^\pm ,\\
\delta w &= -\pounds _ \zeta w,    & & \delta \kappa = - \pounds _ \zeta \kappa, & &
\end{align*}
where $ \zeta $ is an arbitrary vector field on $ \mathcal{N}^- $ such that $ \zeta |_{ \Phi (\lambda _0, \mathcal{B} )}= \zeta |_{ \Phi ( \lambda _1, \mathcal{B} )}=0$, and where we consider  variations $ \delta \mathsf{g}^\pm$ and $ \delta A^\pm$ connected to $ \zeta $ on $ \partial \mathcal{N} $ via the infinitesimal version of $i_{\partial  \mathcal{N} ^-_ \varepsilon } ^*\mathsf{g}^-_ \varepsilon = i_{ \partial \mathcal{N} ^+_ \varepsilon } ^*\mathsf{g}^+_ \varepsilon $ and $i_{\partial  \mathcal{N} ^-_ \varepsilon } ^*A^-_ \varepsilon = i_{ \partial \mathcal{N} ^+_ \varepsilon } ^*A^+_ \varepsilon$, with $ \delta  \mathsf{g}^\pm|_{ \Phi ( \lambda _0, \mathcal{B} )}=\delta  \mathsf{g}^\pm|_{ \Phi ( \lambda _1, \mathcal{B} )}=0$ and $ \delta A^\pm|_{ \Phi ( \lambda _0, \mathcal{B} )}=\delta  A^\pm|_{ \Phi ( \lambda _1, \mathcal{B} )}=0$.
\item[\bf (ii)] $A^\pm \in \Omega ^1( \mathcal{N} ^\pm)$, $w \in \mathfrak{X} (\mathcal{N}^- )$, $ \kappa \in \mathcal{T} ^p_q( \mathcal{N}^- )$, $\mathsf{g}^\pm \in \mathcal{S} ^2_L( \mathcal{N} ^\pm)$ satisfy the equations and junction conditions
\begin{equation}\label{full_system}
\hspace{-0.7cm}\left\{
\begin{array}{l}
\displaystyle\vspace{0.2cm}\operatorname{div}^ \nabla \! \Big( \frac{\partial \ell}{\partial \mathsf{g}^-} \Big) =0,\quad \pounds _w \kappa =0\quad  \text{on} \quad    \mathcal{N}^-\\
\displaystyle\vspace{0.2cm} Ein (\mathsf{g}^-) \mu (\mathsf{g}^-) = 2 \chi  \frac{\partial \ell}{\partial \mathsf{g}^-}\quad \text{on} \quad    \mathcal{N}^- \qquad \;  Ein(\mathsf{g}^+) =2\chi  \frac{\partial \ell_{\rm M}}{\partial \mathsf{g}^+} \quad   \text{on} \quad    \mathcal{N}^+\\
\displaystyle\vspace{0.4cm} \frac{\partialnew \ell\;\;}{\partialnew A^-} + {\rm d} \frac{\partialnew \ell\;\;}{\partialnew F^-}=0\quad   \text{on} \quad    \mathcal{N}^-\qquad \qquad  \quad  \frac{\partialnew \ell_{\rm M}}{\partialnew A^+} + {\rm d} \frac{\partialnew \ell_{\rm M}}{\partialnew F^+}=0 \quad   \text{on} \quad    \mathcal{N}^+\\
\displaystyle \vspace{0.3cm}i^*_{\partial_{\rm cont} \mathcal{N}} [\mathsf{g}]=0, \quad [K]=0 \quad   \text{on} \quad  \partial_{\rm cont} \mathcal{N}],\quad\text{\rm (Israel-Darmois junctions)}\\
\displaystyle  i^*_{ \partial_{\rm cont} \mathcal{N}}[F]=0,  \quad i^*_{\partial_{\rm cont} \mathcal{N}} \Big[  \frac{\partialnew\ell}{\partialnew F }  \Big] =0 \quad   \text{on} \quad  \partial_{\rm cont} \mathcal{N},\quad\text{\rm (EM junctions)}
\end{array}
\right.
\end{equation}
where $2 \mathsf{g}^- \!\cdot\! \frac{\partial \ell}{\partial \mathsf{g}^-}=  \ell \delta  + w \otimes \frac{\partial \ell}{\partial w} -  \textcolor{black}{  \frac{\partial \ell}{\partial \kappa }\!\therefore\!  \widehat{ \kappa  }} -  \frac{\partial \ell}{\partial A } \otimes  A  -  \frac{\partial \ell}{\partial F }\stackrel{\rm tr}{ \otimes }F$ and $ \nabla $ is the Levi-Civita covariant derivative associated to $ \mathsf{g}^-$.
\end{itemize}
\end{theorem}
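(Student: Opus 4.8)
The plan is to compute the first variation of the total action in \eqref{total_action_variation}, organize it as a sum of a bulk integral over $\mathcal{N}^-$, a bulk integral over $\mathcal{N}^+$, and a surface integral over $\partial_{\rm cont}\mathcal{N}$, and then read off \eqref{full_system} by letting the coefficients of the independent admissible variations vanish. Because we work with $\mathscr{L}'$, i.e. with $A$ as the primitive object (\S\ref{material_spacetime_A}), the metrics $\mathsf{g}^\pm$ and potentials $A^\pm$ are varied as independent fields, while $w=\Phi_*W$ and $\kappa=\Phi_*K$ are slaved to the world-tube, so the only constrained variations are $\delta w=-\pounds_\zeta w$, $\delta\kappa=-\pounds_\zeta\kappa$, together with the boundary-compatibility linking $\delta\mathsf{g}^\pm$, $\delta A^\pm$, and $\zeta$ along the moving hypersurface $\partial_{\rm cont}\mathcal{N}$. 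The constraint $\pounds_w\kappa=0$ holds by construction since $\pounds_WK=0$, and $\mathsf{g}(u,n)=0$ holds because $w=\Phi_*\partial_\lambda$ is tangent to $\partial_{\rm cont}\mathcal{N}$; the latter guarantees \eqref{BC_on_A}, hence the consistency of using $A$ rather than $\mathcal{A}$.

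\textbf{Bulk equations.} First I would take variations supported away from $\partial_{\rm cont}\mathcal{N}$. Varying $\mathsf{g}^-$ inside $\mathcal{N}^-$, only the interior matter and interior Einstein--Hilbert terms respond; using that the variation of the Einstein--Hilbert density is the Einstein tensor density up to a divergence gives $Ein(\mathsf{g}^-)\mu(\mathsf{g}^-)=2\chi\,\partial\ell/\partial\mathsf{g}^-$. Similarly, varying $\mathsf{g}^+$ inside $\mathcal{N}^+$ yields the exterior Einstein equation with the Maxwell stress-energy as source, varying $A^-$ yields $\partialnew\ell/\partialnew A^-+{\rm d}\,\partialnew\ell/\partialnew F^-=0$, and varying $A^+$ yields the vacuum Maxwell equation. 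For $\zeta$ supported in $\operatorname{int}\mathcal{N}^-$ the region is unchanged, so only $\delta w,\delta\kappa$ contribute; integrating by parts with Lemma \ref{technical_lemma}, invoking material and spacetime covariance of $\mathscr{L}'$ (whence $\partial^\nabla\ell/\partial x=0$ and the identity of Lemma \ref{spacetime_mat_ell}), and using Lemma \ref{crucial_lemma} together with the interior Maxwell equation, the momentum balance collapses to $\operatorname{div}^\nabla(\partial\ell/\partial\mathsf{g}^-)=0$; this is also forced a posteriori by the interior Einstein equation and the contracted Bianchi identity. The identification $2\mathsf{g}^-\!\cdot\partial\ell/\partial\mathsf{g}^- = \ell\delta + w\otimes\partial\ell/\partial w - \partial\ell/\partial\kappa\!\therefore\!\widehat\kappa - \partial\ell/\partial A\otimes A - \partial\ell/\partial F\stackrel{\rm tr}{\otimes}F$ in the statement is exactly the covariance identity of Lemma \ref{spacetime_mat_ell}, once one observes that $\partial\ell/\partial\mathsf{g}^-\!\therefore\!\widehat{\mathsf{g}^-}=2\,\partial\ell/\partial\mathsf{g}^-\!\cdot\mathsf{g}^-$.

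\textbf{Junction conditions.} Next I would collect all surface contributions at $\partial_{\rm cont}\mathcal{N}$: the Brown--York-type terms from $\delta\mathsf{g}^\pm$ in the two Einstein--Hilbert actions, the moving-boundary terms from all four bulk integrals, the variations of the two GHY integrals, and the electromagnetic surface terms $\pm\int\delta A^\pm\wedge i^*(\partialnew\ell/\partialnew F^\pm)$. The GHY terms are designed so that the normal-derivative-of-$\delta\mathsf{g}^\pm$ contributions cancel; since $\partial\mathcal{N}^+$ and $\partial\mathcal{N}^-$ carry opposite boundary orientations and, by the compatibility hypothesis, induce the same $\delta h$, what remains is a single integral proportional to $\int_{\partial_{\rm cont}\mathcal{N}}[\,k^{ab}-k\,h^{ab}\,]\,\delta h_{ab}$, whose vanishing for arbitrary $\delta h$ gives $[K]=0$; with the assumed $i^*[\mathsf{g}]=0$ this is the Israel--Darmois junction. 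The electromagnetic surface terms, combined via $i^*_{\partial\mathcal{N}^-}\delta A^-=i^*_{\partial\mathcal{N}^+}\delta A^+$ (up to a common term absorbed into the effective boundary variation of the potential) and opposite orientations, give $\int_{\partial_{\rm cont}\mathcal{N}}\delta a\wedge i^*[\,\partialnew\ell/\partialnew F\,]=0$, hence $i^*_{\partial_{\rm cont}\mathcal{N}}[\,\partialnew\ell/\partialnew F\,]=0$; the remaining EM junction $i^*[F]=0$ is immediate by applying ${\rm d}$ to $i^*[A]=0$. Finally, the terms proportional to the normal part of $\zeta$ assemble into the matter stress-energy-momentum density plus the Maxwell stress density plus the gravitational (Brown--York) boundary contributions; using the bulk Einstein equations on both sides together with $[h]=0$, $[K]=0$ and the Gauss--Codazzi relations, this combination vanishes identically, so no independent condition arises --- this is why the stress-balance $[\mathfrak{t}](\cdot,n^\flat)+[p]n^\flat=0$ in \eqref{BC_EMF_2} is a derived consequence rather than a separate equation here.

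\textbf{Main obstacle.} The only genuinely delicate step is the surface bookkeeping: tracking the orientation signs on $\partial\mathcal{N}^\pm$, splitting $\zeta$ into a normal part that moves the boundary and a tangential part that merely reparametrizes it, and checking that the moving-boundary contributions combine with the GHY variations exactly as in the non-electromagnetic analysis of \cite{GB2024}, so that the only new surface content is the electromagnetic pair $i^*[F]=0$, $i^*[\partialnew\ell/\partialnew F]=0$. I would quote the moving-boundary GHY computation from \cite{GB2024} rather than reproduce it, and use there the relation $\mathsf{g}(u,n)=0$ noted above; the remaining steps are the routine integrations by parts already carried out in Theorem \ref{spacetime_reduced_EL} and Lemmas \ref{technical_lemma}, \ref{crucial_lemma}.
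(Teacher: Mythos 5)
Your overall strategy coincides with the paper's: sequential variations (metrics, then potentials, then world-tube), bulk field equations from interior-supported variations, $[K]=0$ from the GHY bookkeeping, the EM junctions from the potential surface terms with $i^*[F]=0$ obtained by applying ${\rm d}$ to $i^*[A]=0$, and a final residual boundary term that reduces to the O'Brien--Synge condition and is discharged by Gauss--Codazzi. The observation that $\operatorname{div}^\nabla(\partial\ell/\partial\mathsf{g}^-)=0$ is also forced by the contracted Bianchi identity is correct and consistent with the paper.

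There is, however, one concrete gap in your final step. You assert that the $\zeta$-proportional surface terms ``combine with the GHY variations exactly as in the non-electromagnetic analysis of \cite{GB2024}'' and then vanish by the bulk Einstein equations, $[h]=0$, $[K]=0$ and Gauss--Codazzi. They do not combine exactly as in the non-electromagnetic case. In the $\mathscr{L}'$ picture the world-tube variation of the matter action produces the boundary density $\operatorname{tr}\big(\ell\delta+w\otimes\frac{\partial\ell}{\partial w}-\frac{\partial\ell}{\partial\kappa}\!\therefore\!\widehat\kappa\big)\cdot\zeta$ (cf.\ \eqref{option2_boundary}) and the exterior Maxwell action produces $\mathbf{i}_\zeta\ell_{\rm M}=\operatorname{tr}(\ell_{\rm M}\delta)\cdot\zeta$; neither is the contraction of the full stress-energy-momentum tensor $2\mathsf{g}\cdot\frac{\partial\ell}{\partial\mathsf{g}}$ with $(\zeta^\flat,n)$, because the terms $-\frac{\partial\ell}{\partial A}\otimes A-\frac{\partial\ell}{\partial F}\stackrel{\rm tr}{\otimes}F$ are absent. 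Before the bulk Einstein equations can be invoked you must prove that the jump of these missing electromagnetic contributions vanishes on $\partial_{\rm cont}\mathcal{N}$ --- this is identity \eqref{important} in the paper. It is not automatic: the $A$-terms die because $\frac{\partial\ell}{\partial A}\propto u$ and $\mathsf{g}(u,n)=0$ (your \eqref{BC_on_A} remark covers this), but the $F$-terms require combining the two already-derived EM junction conditions \eqref{Junction_EM} and \eqref{Junction_F} after rewriting the trace $\frac{\partial\ell}{\partial F^-_{\mu\gamma}}F^-_{\nu\gamma}n^-_\mu-\frac{\partial\ell_{\rm M}}{\partial F^+_{\mu\gamma}}F^+_{\nu\gamma}n^+_\mu$ as a sum over an orthonormal frame $(n^-,\eta_a)$ adapted to the boundary, since \eqref{Junction_F} only controls $F$ on pairs of tangent vectors and \eqref{Junction_EM} only controls the derivative contracted with the normal. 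This cancellation is the genuinely new surface content of the electromagnetic case beyond the pair $i^*[F]=0$, $i^*[\partialnew\ell/\partialnew F]=0$, and your sketch neither states nor supplies it; without it the reduction of \eqref{intermediate_step1} to the O'Brien--Synge condition does not go through.
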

\noindent\textbf{Proof.}  We consider arbitrary variations $\Phi _ \epsilon $, $A^\pm_ \varepsilon$, and  $ \mathsf{g}_ \varepsilon ^\pm$ with the fixed endpoint conditions and which satisfy $i_{\partial  \mathcal{N} ^-_ \varepsilon } ^*\mathsf{g}^-_ \varepsilon = i_{ \partial \mathcal{N} ^+_ \varepsilon } ^*\mathsf{g}^+_ \varepsilon $ and $i_{\partial  \mathcal{N} ^-_ \varepsilon } ^*A^-_ \varepsilon = i_{ \partial \mathcal{N} ^+_ \varepsilon } ^*A^+_ \varepsilon $, where $ \mathcal{N} _ \varepsilon ^-= \Phi _ \varepsilon  ( \mathcal{D} )$. As earlier, the constrained variations in {\bf (i)} are induced by the variations $\Phi _ \varepsilon  $ of the world-tube by using the relations $ \mathcal{N} _ \varepsilon = \Phi _ \varepsilon  ( \mathcal{D} )$, $w_ \varepsilon = (\Phi_ \varepsilon  )_*W$, $ \kappa _ \varepsilon = (\Phi_ \varepsilon  )_*K$, and defining $ \zeta= \delta \Phi \circ \Phi ^{-1}$.

\noindent(1) \textit{Metric variation}: We first fix the world-tube and take the variation with respect to the Lorentzian metrics $ \mathsf{g}^\pm$. This has been done in \cite{GB2024} and yields the Einstein equations in $ \mathcal{N} ^+$ and $ \mathcal{N} ^-$ as well as the junction condition on $ \partial _{\rm cont} \mathcal{N} $:
\begin{align} 
\frac{\partial \ell}{\partial  \mathsf{g}^-}&= \frac{1}{2\chi}  \textcolor{black}{Ein} ( \mathsf{g}^-) \mu ( \mathsf{g}^-) \quad \text{on $ \mathcal{N} ^-$}\label{cond_1}\\
\frac{\partial \ell_{\rm M}}{\partial  \mathsf{g}^+}&= \frac{1}{2\chi}  \textcolor{black}{Ein} ( \mathsf{g}^+) \mu ( \mathsf{g}^+) \quad \text{on $ \mathcal{N} ^+$}\label{cond_2}\\
0&= [K]  \quad \text{on $\textcolor{black}{\partial _{\rm cont}\mathcal{N}}$}\label{cond_3}. \textcolor{white}{\frac{1}{2} }
\end{align}

\noindent(2) \textit{Electromagnetic potential variation}: Taking the variations with respect to $ \delta A^\pm$ gives
\begin{align*} 
&\int_ {\mathcal{N}^-} \delta A^- \wedge \Big(\frac{\partialnew \ell}{\partialnew A^-} + {\rm d} \frac{\partialnew \ell}{\partialnew F^-}\Big) + \int_{ \partial \mathcal{N} ^-} \delta A^- \wedge \frac{\partialnew \ell}{\partialnew F^-} \\
& \qquad + \int_ {\mathcal{N}^+} \delta A^+ \wedge \Big(\frac{\partialnew \ell_{\rm M}}{\partialnew A^+} + {\rm d} \frac{\partialnew \ell_{\rm M}}{\partialnew F^+}\Big) + \int_{ \partial \mathcal{N} ^+} \delta A^+ \wedge \frac{\partialnew \ell}{\partialnew F^+}=0.
\end{align*} 
From the relation $i^*_{ \partial \mathcal{N} ^-} \delta A^-= i^*_{ \partial \mathcal{N} ^+} \delta A^+$ and from the arbitrariness of the variations away from the boundary except vanishing at $ \Phi ( \lambda _0, \mathcal{B} )$ and $ \Phi ( \lambda _1, \mathcal{B} )$, we get the Euler-Lagrange equations on $ \mathcal{N} ^+$ and $ \mathcal{N}  ^-$ as well as the electromagnetic junction condition on $ \partial _{\rm cont} \mathcal{N} $:
\begin{align} 
&\frac{\partialnew \ell\;\;}{\partialnew A^-} + {\rm d} \frac{\partialnew \ell\;\;}{\partialnew F^-}=0,  \quad \text{on $ \mathcal{N} ^-$}\label{cond_4}\\
&\frac{\partialnew \ell_{\rm em}}{\partialnew A^+} + {\rm d} \frac{\partialnew \ell_{\rm em}}{\partialnew F^+}=0 \quad \text{on $ \mathcal{N} ^-$}\label{cond_5}\\
&i^*_{ \partial \mathcal{N} _{\rm cont}} \Big[  \frac{\partialnew\ell}{\partialnew F }  \Big] =0 \quad \text{on $\textcolor{black}{\partial _{\rm cont}\mathcal{N}}$}\label{cond_6},
\end{align} 
see \eqref{jump_notation}.
Recalling the local expression $\frac{\partialnew \ell}{\partialnew F} = \frac{1}{2} \frac{\partial \ell}{\partial F_{ \mu \nu }}d^{n-1}x_{\mu \nu }$, the last junction condition can be equivalently written as
\begin{equation}\label{Junction_EM} 
\Big[\frac{\partial \ell}{\partial F_{ \mu \nu }}\Big] n_ \mu^+ =0 \quad\text{on}\quad \partial \mathcal{N} _{\rm cont}.
\end{equation} 
Also, from the preliminary junction condition $i^*_{ \partial \mathcal{N} _{\rm cont}} [A]=0$ on the electromagnetic potential, we get $i^*_{ \partial \mathcal{N} _{\rm cont}} [F]=0$, which is locally written as
\begin{equation}\label{Junction_F} 
[F_{ \mu \nu }] \zeta ^ \mu \eta ^\nu =0
\end{equation} 
for all vector fields $ \zeta $ and $ \eta $ parallel to $\partial \mathcal{N} _{\rm cont}$, or, equivalently
\[
[F _ { \mu \nu }  ] (\mathsf{g}^+)^{ \mu  \gamma   } (\mathsf{g}^+)^{ \nu  \delta  }\mu ^+_{ \gamma    \delta \sigma   \nu _1... \nu _{n-2}} (n^+)^ \sigma   =0.
\]

\noindent(3) \textit{World-tube variation}: Because of the conditions $i_{ \partial \mathcal{N} ^+}^* \mathsf{g}^+= i_{ \partial \mathcal{N} ^-}^* \mathsf{g}^-$ and $i_{ \partial \mathcal{N} ^+}^* A^+= i_{ \partial \mathcal{N} ^-}^*A^-$, one cannot a priori vary $ \Phi $ without varying $\mathsf{g}^\pm$ and $A^\pm$. Nevertheless, it was shown in \cite{GB2024} that, at the critical point already derived, it is possible to vary $ \Phi $ independently of $ \mathsf{g}^\pm$. However this is not the case for $ A^\pm$ which we have to include again in our variations. Using the stationary conditions already derived, it turns out that its corresponding contributions vanish.

The variation of the first term in \eqref{total_action_variation} with respect to $ \delta \Phi $ and $ \delta A^-$ gives
\begin{equation}\label{intermediate} 
\begin{aligned} &\int_ {\mathcal{N}^-}   \Big[-  \operatorname{div}^ \nabla \!\Big( \ell \delta   + w \otimes \frac{\partial \ell}{\partial w}-   \frac{\partial \ell}{\partial \kappa }\!\therefore\!  \widehat{ \kappa  }\Big) +   \frac{\partial \ell}{\partial A } : \nabla   A  +  \frac{\partial \ell}{\partial F }: \nabla F \Big] \cdot \zeta \\
& \qquad \qquad  +  \int_{\mathcal{N} ^-} \deltabar A^- \wedge \left(  \frac{  \partialnew\ell}{\partialnew A^-} + {\rm d}  \frac{  \partialnew\ell}{\partialnew F^-}\right) \\
& \qquad \qquad +\int_{ \partial \mathcal{N} ^-}\operatorname{tr}  \Big( \ell \delta + w \otimes \frac{\partial \ell}{\partial w}-  \frac{\partial \ell}{\partial \kappa }\!\therefore\! \widehat{ \kappa  } \Big) \cdot \zeta  +\int_{ \partial \mathcal{N}^- }  \deltabar A^- \wedge  \frac{  \partialnew\ell}{\partialnew A^-}.
\end{aligned}
\end{equation}

The variation of the third term in \eqref{total_action_variation} with respect to $ \delta \Phi $ and $ \delta A^+$ gives
\begin{equation}\label{termB} 
\int_{ \partial \mathcal{N} ^+} \mathbf{i} _ \zeta \ell_{\rm M} + \int_ {\mathcal{N}^+} \delta A^+ \wedge \Big(\frac{\partialnew \ell_{\rm M}}{\partialnew A^+} + {\rm d} \frac{\partialnew \ell_{\rm M}}{\partialnew F^+}\Big) + \int_{ \partial \mathcal{N} ^+} \delta A^+ \wedge \frac{\partialnew \ell_{\rm M}}{\partialnew F^+}.
\end{equation} 
From the stationarity conditions already derived, all the contributions of $  \delta A^\pm$ vanish. Collecting the terms proportional to $ \zeta $ at the interior and using Lemma \ref{crucial_formula}, we get
\begin{equation}\label{fluid_equa} 
\operatorname{div}^ \nabla \!\Big(  \ell \delta  + w \otimes \frac{\partial \ell}{\partial w} -  \textcolor{black}{  \frac{\partial \ell}{\partial \kappa }\!\therefore\!  \widehat{ \kappa  }} -  \frac{\partial \ell}{\partial A } \otimes  A  -  \frac{\partial \ell}{\partial F }\stackrel{\rm tr}{ \otimes }F\Big)  =0.
\end{equation} 

The variation of the second and of the fourth to sixth terms in \eqref{total_action_variation} with respect to the world-tube have been computed in \cite[Theorem 5.6]{GB2024}. They yield boundary terms only. Collecting these boundary terms together with the ones in \eqref{intermediate} and \eqref{termB}, and using the boundary condition $[K]=0$ already obtained, we get
\begin{equation}\label{intermediate_step1}
\begin{aligned}
&\epsilon \,\mathbf{i} _{n^-} \Big(\ell \delta  + w \otimes \frac{\partial \ell}{\partial w} -  \frac{\partial \ell}{\partial \kappa }\!\therefore\!  \widehat{ \kappa }\Big)( \zeta ^\flat , n^-) +\epsilon \,\mathbf{i} _{n^+} \big(\ell _{\rm M} \delta\big)( \zeta ^\flat , n^+)\\
&   + \frac{1}{ \chi } \left(\textcolor{black}{Ein} ( \mathsf{g}^+)(n^+,n^+) -  Ein ( \mathsf{g}^-)(n^-,n^-) \right) \epsilon \,\zeta _-^\perp \mu  ^-(h)=0 ,\;\forall\; \zeta
\end{aligned}
\end{equation}
on $\partial_{\rm cont}\mathcal{N}$. In order to proceed further with \eqref{intermediate_step1}, let us now show that from the boundary conditions already derived via the variations $ \delta A^\pm$, we have the equality
\begin{equation}\label{important}
\begin{aligned} 
&\mathbf{i} _{n^-} \Big(-  \frac{\partial \ell}{\partial A^- } \otimes  A ^- -  \frac{\partial \ell}{\partial F ^-}\stackrel{\rm tr}{ \otimes }F^-\Big)( \zeta , (n^-)^\flat ) \\
& \qquad \qquad + \mathbf{i} _{n^+} \Big(-  \frac{\partial \ell_{\rm M}}{\partial A^+ } \otimes  A^+  -  \frac{\partial \ell_{\rm M}}{\partial F ^+}\stackrel{\rm tr}{ \otimes }F^+\Big)( \zeta  , (n^+)^\flat)=0.
\end{aligned}
\end{equation} 
First we observe that the terms involving $A^-$ and $A^+$ vanish since $ \frac{\partial \ell_{\rm M}}{\partial A^+} =0$ and since in local coordinates, we have
\[
\Big(\frac{\partial \ell}{\partial A^- } \otimes  A ^-\Big)( \zeta , (n^-)^\flat )= \frac{\partial \ell}{\partial A^- _ \mu } A ^-_ \nu n^-_ \mu \zeta ^ \nu =0
\]
when the boundary condition $ \frac{\partial \ell}{\partial A^- _ \mu } n^-_ \mu=0$ holds, which is the case for the class of Lagrangian densities relevant for electromagnetic continua, see \S\ref{EMFS}. To show \eqref{important} it remains to show the equality
\[
\frac{\partial \ell}{\partial F^-_{ \mu \gamma }} F^-_{ \nu \gamma } n^-_ \mu \zeta ^ \nu   + \frac{\partial \ell_{\rm M}}{\partial F^+_{ \mu \gamma }} F^+_{ \nu \gamma } n^+_ \mu \zeta ^ \nu =0.
\]
This follows in a rather subtle way as
\begin{align*}
&\left( \frac{\partial \ell}{\partial F^-_{ \mu \gamma }} F^-_{ \nu \gamma } -  \frac{\partial \ell_{\rm M}}{\partial F^+_{ \mu \gamma }} F^+_{ \nu \gamma }\right)  n^-_ \mu \zeta ^ \gamma\\
&=  \frac{\partial \ell}{\partial F^-}(( n^-) ^\flat , dx^ \gamma  ) F^-( \zeta , \partial _ \gamma )- \frac{\partial \ell_{\rm M} }{\partial F^+}(( n^-) ^\flat , dx^ \gamma  ) F^+( \zeta , \partial _ \gamma )  \\
&=\sum_{a=1}^n \frac{\partial \ell}{\partial F^-}(( n^-) ^\flat ,  \eta _ a^ \flat   ) F^-( \zeta , \eta _a )+ \frac{\partial \ell}{\partial F^-}(( n^-) ^\flat ,  (n^-)^ \flat   ) F^-( \zeta ,n_- )\\
& \qquad -\sum_{a=1}^n \frac{\partial \ell_{\rm M} }{\partial F^+}(( n^-) ^\flat , \eta _ a^ \flat  ) F^+( \zeta ,\eta _a )- \frac{\partial \ell_{\rm M} }{\partial F^+}(( n^-) ^\flat , ( n^-) ^\flat ) F^+( \zeta ,n_-)\\
&=\sum_{a=1}^n \frac{\partial \ell}{\partial F^-}(( n^-) ^\flat ,  \eta _ a^ \flat   ) F^-( \zeta , \eta _a )-\sum_{a=1}^n \frac{\partial \ell_{\rm M} }{\partial F^+}(( n^-) ^\flat , \eta _ a^ \flat  ) F^+( \zeta ,\eta _a )\\
&=\sum_{a=1}^n \frac{\partial \ell}{\partial F^-}(( n^-) ^\flat ,  \eta _ a^ \flat   ) \left( F^-( \zeta , \eta _a ) -  F^+( \zeta ,\eta _a ) \right) =0,
\end{align*}
where we selected an orthonormal basis $(n^-, \eta _a)$ with $ \eta _a$ parallel to $ \partial \mathcal{N} _{\rm cont}$, $a=1,...,n$.
In the fourth equality, we used \eqref{Junction_EM} and in the last equality we used \eqref{Junction_F}, which requires both $ \zeta $ and $ \eta _a$ to be parallel to $ \partial \mathcal{N} $, which is the case. This is the reason why we had to reformulate the trace operation, originally computed by summing along $dx^ \gamma $ and $ \partial _ \gamma $, via a summation along an orthonormal basis, so that \eqref{Junction_EM} can be exploited.

Now, thanks to \eqref{important}, equation \eqref{intermediate_step1} can be equivalently written as
\begin{equation}\label{intermediate_step1_bis}
\begin{aligned}
&\epsilon \,\mathbf{i} _{n^-} \Big(\ell \delta  + w \otimes \frac{\partial \ell}{\partial w} -  \frac{\partial \ell}{\partial \kappa }\!\therefore\!  \widehat{ \kappa } -  \frac{\partial \ell}{\partial A^- } \otimes  A ^- -  \frac{\partial \ell}{\partial F ^-}\stackrel{\rm tr}{ \otimes }F^-\Big)( \zeta ^\flat , n^-) \\
&+\epsilon \,\mathbf{i} _{n^+} \Big(\ell _{\rm M} \delta -  \frac{\partial \ell_{\rm M}}{\partial A^+ } \otimes  A^+  -  \frac{\partial \ell_{\rm M}}{\partial F ^+}\stackrel{\rm tr}{ \otimes }F^+ \Big)( \zeta ^\flat , n^+)\\
&   + \frac{1}{ \chi } \left(\textcolor{black}{Ein} ( \mathsf{g}^+)(n^+,n^+) -  Ein ( \mathsf{g}^-)(n^-,n^-) \right) \epsilon \,\zeta _-^\perp \mu  ^-(h)=0 ,\;\forall\; \zeta.
\end{aligned}
\end{equation}
From the spacetime covariance of $\ell$ and $\ell_{\rm em}$ this is just
\begin{equation}\label{intermediate_step1_bis}
\begin{aligned}
&\epsilon \,\mathbf{i} _{n^-} 2  \mathsf{g}^- \!\cdot\! \frac{\partial \ell}{\partial  \mathsf{g}^-}( \zeta ^\flat , n^-)+\epsilon \,\mathbf{i} _{n^+} 2  \mathsf{g}^+ \!\cdot\! \frac{\partial \ell_{\rm M}}{\partial  \mathsf{g}^+}( \zeta ^\flat , n^+)\\
&   + \frac{1}{ \chi } \left(\textcolor{black}{Ein} ( \mathsf{g}^+)(n^+,n^+) -  Ein ( \mathsf{g}^-)(n^-,n^-) \right) \epsilon \,\zeta _-^\perp \mu  ^-(h)=0 ,\;\forall\; \zeta.
\end{aligned}
\end{equation}
Hence, using the Einstein equations \eqref{cond_1} and \eqref{cond_2} already derived, this can be written as
\begin{align*} 
&Ein ( \mathsf{g}^-)( \zeta  , n^-)  \mu ^-(h) -  Ein ( \mathsf{g}^-)(n^-,n^-) \zeta _-^\perp \mu  ^-(h)\\
&=Ein ( \mathsf{g}^+)( \zeta  , n^+)  \mu ^+(h) -  Ein ( \mathsf{g}^+)(n^+,n^+) \zeta _+^\perp \mu  ^+(h) ,\;\forall\; \zeta \;\; \text{on}\;\; \textcolor{black}{\partial _{\rm cont}\mathcal{N}}.
\end{align*} 
This is equivalent to the O’Brien-Synge condition
\[
i_{ \partial \mathcal{N} } ^* \left( \mathbf{i} _{n^-}[\textcolor{black}{Ein} ] \right) =0.
\]
It turns out that this condition is a consequence of the junction condition already derived, as it is easily seen by using the Gauss-Codazzi equation.

Collecting the stationarity conditions obtained in \eqref{cond_1}--\eqref{cond_6} and \eqref{fluid_equa} as well as the preliminary junction conditions, we get  \eqref{full_system}. $\qquad\blacksquare$

\section{Electromagnetic fluids and solids}\label{EMFS}

In this section, we apply the variational framework developed above to electromagnetic fluids and solids. A key insight of our approach is the \textit{systematic and unambiguous derivation of the stress-energy-momentum tensor} for these electromagnetic systems. 
Furthermore, we derive the boundary and junction conditions for these systems in a straightforward manner by directly applying the general results established earlier. Consistent with most of the literature on this subject (e.g., \cite{ErMa1990}), we adopt the Heaviside-Lorentz units in our analysis.

\subsection{Electromagnetic fields}\label{subsec_EMfields}

\paragraph{Magnetic and electric components.} We recall that the decomposition of the Faraday $2$-form $F \in \Omega ^2 ( \mathcal{M} )$ with respect to an observer with world-velocity $u$, $\mathsf{g}(u,u)=- c ^2 $, into its electric and magnetic components is given by:
\begin{equation}\label{decomposition_F} 
F= \frac{1}{c}    u ^\flat \wedge E - \frac{1}{c} \star ( u ^\flat \wedge B)  , 
\end{equation} 
where the electric and magnetic fields are defined as:
\begin{equation}\label{E_B}
E= - \frac{1}{c} \mathbf{i} _u F \in \Omega ^1 ( \mathcal{M} ), \qquad  B= -\frac{1}{c}  \mathbf{i} _ u ( \star F) \in \Omega ^{n-2}( \mathcal{M} ).
\end{equation}
In coordinates, equations \eqref{decomposition_F} and \eqref{E_B} take the form:
\[
F_{ \alpha \beta }=  \frac{1}{c}( u_ \alpha E_ \beta - u_ \beta E_ \alpha)  -   \frac{1}{c}\frac{1}{(n-2)!} u ^ \nu  B^{ \nu _1... \nu _{n-2}} \mu(\mathsf{g})_{ \nu \nu _1... \nu _{n-2} \alpha \beta}
\]
\[
E_ \nu =  \frac{1}{c}F_{ \nu \mu } u^ \mu , \qquad B_ { \nu _1... \nu _{n-2}}=  -\frac{1}{c} \frac{1}{2} F ^ {\nu \mu } u^ { \alpha  }\mu ( \mathsf{g}) _{ \nu\mu  \alpha    \nu _1...  \nu _{n-2}} \in \Omega ^{n-2}( \mathcal{M} ).
\]

Similarly, we consider the decomposition of the partial derivative $ \frac{\partialnew \ell}{\partialnew F} \in \Omega ^{n-1}( \mathcal{M} )$ given by
\begin{equation}\label{dellldF1} 
\frac{\partialnew \ell}{\partialnew F} = - \frac{1}{c}\star ( u ^\flat \wedge D) - \frac{1}{c}u ^\flat \wedge H,
\end{equation} 
where the electric displacement field $D$ and magnetic intensity field $H$ are defined as:
\begin{equation}\label{D_H} 
D:=-\frac{1}{c}\mathbf{i} _u   \Big(\!\!\star  \frac{\partialnew \ell}{\partialnew F} \Big) \in \Omega ^1( \mathcal{M} )\quad\text{and}\quad H:= \frac{1}{c}\mathbf{i} _u  \frac{\partialnew \ell}{\partialnew F} \in \Omega ^{n-2} ( \mathcal{M} ).
\end{equation}
In coordinates, \eqref{dellldF1} and \eqref{D_H} read
\[
\Big( \frac{\partialnew \ell}{\partialnew F} \Big)  _{ \alpha _1... \alpha _{n-1}}  =  - \frac{1}{c}u^ \alpha D^ \beta \mu (\mathsf{g})_{ \alpha \beta \alpha _1... \alpha _{n-1}} - \frac{1}{c}\sum_{k=1}^{n-1}(-1)^{k-1} u_{ \alpha _k} H_{ \alpha _1 ...\check{ \alpha _k} ... \alpha _{n-1}}
\]
\[
D_ \alpha = -\frac{1}{c}\frac{1}{(n-1)!} u^ \nu   \Big( \frac{\partialnew \ell}{\partialnew F} \Big) ^{ \gamma_1... \gamma _{n-1}} \hspace{-0.2cm}\mu (\mathsf{g})_{ \gamma _1... \gamma _{n-1} \nu \alpha }, \qquad H_{ \alpha _1... \alpha _{n-2}}= \frac{1}{c}u^ \gamma \Big( \frac{\partialnew \ell}{\partialnew F} \Big) _{ \gamma  \alpha _1... \alpha _{n-2}} .
\]
It will also be useful to consider the decomposition of the other incarnation of the partial derivative of the Lagrangian density with respect to $F$, namely, the one given by the 2-multivector field density $ \frac{\partial \ell}{\partial F}  \in  \mathfrak{X} ^2_d( \mathcal{M} )$, see \eqref{first_incarnation} versus \eqref{secondincarnation}. In this case, we find
\begin{equation}\label{dellldF2} 
\frac{\partial \ell}{\partial F} = - \frac{1}{c} (u \wedge D^\sharp) \mu (\mathsf{g}) + \frac{1}{c}\star ( u \wedge H ^\sharp  ) \mu (\mathsf{g}),
\end{equation}
where the Hodge star operator is naturally induced on $k$-vector fields.

Some identities that are useful to establish these formulas are:
\[
\mathbf{i} _u(\star  \omega ) =(-1)^k \star ( u ^\flat \wedge \omega ), \quad 
\star \mathbf{i} _u \omega = (-1)^{k-1} u ^\flat \wedge \star \omega  , \quad \star\star \omega = - (-1)^{k(n+1-k)} \omega ,
\]
for $ \omega \in \Omega ^k ( \mathcal{M} )$.


\paragraph{Maxwell Lagrangian.} To illustrate these decompositions and our conventions in the simplest case, we consider the Maxwell Lagrangian density:
\begin{equation}\label{derivative_Maxwell} 
\ell_{\rm M}( A, F, \mathsf{g})= - \frac{1}{2} F \wedge \star F=\frac{1}{2}  E \wedge \star E -\frac{1}{2} B \wedge \star B,
\end{equation} 
where the second equality follows from \eqref{decomposition_F}.
The partial derivative with respect to $F$ is found as:
\begin{equation}\label{partial_Maxwell} 
\begin{aligned} 
\frac{\partialnew \ell_{\rm M}}{\partialnew F} &= - \star F = -\frac{1}{ c } \star(  u ^\flat \wedge E)   -  \frac{1}{ c }  u ^\flat \wedge B\\
\frac{\partial \ell_{\rm M}}{\partial F} &= - F^\sharp \mu (\mathsf{g})= - \frac{1}{c} (u \wedge E^\sharp) \mu (\mathsf{g}) + \frac{1}{c}\star ( u \wedge B ^\sharp  ) \mu (\mathsf{g}),
\end{aligned} 
\end{equation} 
so that \eqref{dellldF1} recovers the usual vacuum relations $D=E$, $H=B$. In terms of the Lagrangian, following our earlier derived formulas, the stress-energy-momentum tensor takes the general form:
\begin{equation}\label{general_Maxwell_SEM} 
\mathfrak{T}_{\rm M} = \ell_{\rm M}\delta -\frac{\partial \ell_{\rm M}}{\partial F} \stackrel{\rm tr}{\otimes} F.
\end{equation} 
It gives the well-known expression of the Maxwell  stress-energy-momentum tensor given by $\mathfrak{T}_{\rm M}= \mathsf{t}_{\rm M} \mu (\mathsf{g})$ with
\[
\mathsf{t}_{\rm M} = - \frac{1}{2} |F| ^2  \delta  + F ^\sharp  \stackrel{\rm tr}{\otimes} F , \qquad  \mathsf{t}_{\rm M}{}^ \mu _ \nu = - \frac{1}{4} F_{ \alpha \beta }F^{ \alpha \beta } \delta ^ \mu _ \nu + F^{ \mu \alpha  } F_{ \nu \alpha  },
\]
where $|F|^2= \mathsf{g}(F,F)$.
Using equations \eqref{decomposition_F} and \eqref{partial_Maxwell}, the tensor can also be expressed in terms of $E$ and $B$ as:
\begin{equation}\label{Maxwell_T}
\begin{aligned} 
\mathsf{t}_{\rm M}&=   \frac{1}{2} \left(  |E| ^2 +  |B| ^2 \right) \frac{1}{c^2}u \otimes u ^\flat +\frac{1}{c} u \otimes S_{\rm M}+ \frac{1}{c}  S_{\rm M}^\sharp  \otimes  u ^\flat \\
& \qquad -  E ^\sharp  \otimes E- B ^\sharp  \stackrel{\rm tr}{\otimes} B +  \frac{1}{2} \left(|E| ^2 +   |B| ^2 \right) \mathsf{P} ,
\end{aligned}
\end{equation} 
where we define the Maxwell Poynting one-form and the projection tensor:
\begin{equation}\label{S_P} 
S_{\rm M}= (-1)^n\frac{1}{c}  \mathbf{i} _{E^\sharp} \mathbf{i} _u (\star B) \in \Omega ^1 ( \mathcal{M} ), \qquad \mathsf{P}= \delta + \frac{1}{c^2} u \otimes u ^\flat \in \mathcal{T} ^1_1( \mathcal{M} ).
\end{equation} 


\subsection{General relativistic electromagnetic fluids}\label{GR_em_fluid}

In this section, we develop the variational framework from \S\ref{sec_2} and \S\ref{coupling} for relativistic electromagnetic fluids. As outlined in \cite{GB2024}, see also \S\ref{Geom_setting}, the material tensor fields describing a relativistic fluid include the vector field $\partial _ \lambda \in \mathfrak{X} ( \mathcal{D} )$ and the volume forms $R, S \in \Omega ^{n+1}( \mathcal{D} )$, which satisfy the condition \eqref{condition_R_S}, as expressed by \eqref{RR_0SS_0}. The corresponding spacetime Eulerian quantities are the generalized velocity $w$, generalized mass density $ \varrho $, and generalized entropy density $ \varsigma $, defined via the pull-back:
\begin{equation}\label{relation_fluid_0} 
w= \Phi _* W, \qquad \varrho = \Phi _* R, \qquad \varsigma= \Phi_*S.
\end{equation}
The only spacetime tensor involved is the Lorentzian metric $\mathsf{g}$.
From these, the world-velocity, the proper mass and entropy densities, and the proper specific entropy are defined as
\begin{equation}\label{relation_fluid_1}
u = \frac{cw}{\sqrt{-
\mathsf{g}(w,w)}} , \quad \rho  = \frac{\sqrt{-\mathsf{g}(w,w)}\;\varrho}{c \;\mu (\mathsf{g})} , \quad s  = \frac{\sqrt{-\mathsf{g}(w,w)}\; \varsigma}{c \;\mu (\mathsf{g})} , \quad \eta = \frac{s}{\rho    }=  \frac{\varsigma }{\varrho }.
\end{equation}
Besides these reference tensor fields, the theory also involves the electromagnetic potential $ \mathcal{A} \in \Omega ^1 ( \mathcal{D} )$ in the material frame.

\subsubsection{Lagrangian variational formulation}\label{Lagrangian_setps}

\paragraph{Lagrangian densities for electromagnetic fluids.} In the setting described above, the Lagrangian density in the material description depends on the following variables: 
\[
\mathscr{L}( j^1 \Phi , \mathcal{A} , {\rm d} \mathcal{A} ,\partial _ \lambda , R,S,\mathsf{g} \circ \Phi ).
\]
Its concrete expression for electromagnetic fluids is most easily obtained by following these steps:
\begin{itemize}
\item[\bf(1)] Writing the continuum analog to the Lagrangian density
\[
\mathscr{L}_{\rm part}(x, \dot  x)=  - mc \sqrt{- \mathsf{g}( \dot  x, \dot  x)} \,{\rm d}  \lambda - e A (x) \cdot  \dot  x \,{\rm d}  \lambda 
\]
for a relativistic point particle of mass $m$ and charge $e$;
\item[\bf(2)] Including the internal ``energy" contribution $\mathscr{E}$ of the matter;
\item[\bf(3)] Adding the Maxwell Lagrangian density \eqref{derivative_Maxwell}.
\end{itemize}
Recalling that the word-tube describes a continuum collection of world-lines and using the mass form $R_0$ on $ \mathcal{B} $, one obtains from steps {\bf(1)}-{\bf(2)}-{\bf(3)} the following expression for the Lagrangian density $\mathscr{L}$:
\begin{equation}\label{general_expression_L} 
\left[  - \frac{1}{c}\sqrt{ -\mathsf{g}  \big(\dot \Phi , \dot \Phi  \big) }\; \big(c^2 + \mathscr{E}\big)- q A( \Phi ) \cdot \dot  \Phi  \right] \; {\rm d}  \lambda \wedge R_0 - \frac{1}{2  } \Phi ^* (  F \wedge \star F),
\end{equation}
with $q$ the specific charge.
Then, the explicit formula for $\mathscr{L}( j^1 \Phi , \mathcal{A} , {\rm d} \mathcal{A} ,\partial _ \lambda , R,S,\mathsf{g} \circ \Phi )$ is obtained from the choice of an ``energy" function $\mathscr{E}$\footnote{As we will see, in the presence of electromagnetic effects this is not to be interpreted as the internal energy, but as a Legendre transformed version of it with respect to the electric field, see Remark \ref{energy_interpretation}.} which must take into account of the presence of the matter, here the compressible fluid, as well as of the interactions between this fluid and the electromagnetic fields, from which follows the notion of polarization and magnetization.

Recall that for ordinary non-charged fluids, this function $\mathscr{E}$ is constructed from the state equation $e=e( \rho  , \eta )$ of the fluid in which the mass density $ \rho  $ and specific entropy $ \eta $ are expressed in terms of the material variables $R$, $S$, $ \Phi $ by using the relations \eqref{relation_fluid_0} and \eqref{relation_fluid_1}. This gives the following fluid Lagrangian density
\begin{equation}\label{L_fluid}
\begin{aligned} 
&\mathscr{L}_{\rm fluid}( j^1 \Phi ,  \partial _ \lambda , R,S,\mathsf{g} \circ \Phi )\\
&= - \frac{1}{c}\sqrt{ -\mathsf{g}  \big(\dot \Phi , \dot \Phi  \big) }\; \Big(c^2 +e \Big( \frac{1}{c} \sqrt{-  \mathsf{g}( \dot \Phi   ,  \dot \Phi  )}\frac{R }{\Phi ^* [\mu (\mathsf{g})]} ,\frac{ S }{R} \Big) \Big){\rm d}  \lambda \wedge R_0
\end{aligned}
\end{equation}
derived in \cite{GB2024}.

For electromagnetic fluids, this function $\mathscr{E}$ has to be extended to include dependence on the electric and magnetic fields. As we shall see, the construction of such a function naturally also involves the relativistic right Cauchy-Green tensor $C$ defined by
\begin{equation}\label{def_F_C} 
C_{AB}= \mathsf{g}_{ \mu  \nu  } F^ \mu  _A F^ \nu  _B, \qquad F^ \mu  _A =\mathsf{P}^ \mu _ \lambda {\Phi _{,A} }^ \lambda,
\end{equation} 
see \cite{GrEr1966a,ErMa1990}. Its intrinsic geometric definition is
\begin{equation}\label{intrinsic_C} 
C= \Phi ^* \mathsf{p}, \qquad \mathsf{p}:=\textcolor{black}{ \mathsf{g}} + \frac{1}{c ^2 } u ^\flat \otimes u ^\flat ,
\end{equation} 
see \cite{GB2024}, where $\mathsf{p}$ is referred to as the projection tensor or Landau-Lifshitz radar metric, related to $\mathsf{P}$ in \eqref{S_P} by lowering the upper index. In the most general case, the internal energy function for electromagnetic fluids thus takes the form $\mathscr{E}( \rho  , \eta ,  \mathcal{E} ,  \mathcal{B} , C )$, where
\begin{equation}\label{E_B_material} 
\mathcal{E} :=\Phi ^*E= -\frac{c}{\sqrt{-\mathsf{g}( \dot \Phi , \dot  \Phi )}} \left( \mathbf{i} _{ \partial _ \lambda } \mathcal{F} \right) , \quad \mathcal{B} :=\Phi ^*B= -\frac{c}{\sqrt{-\mathsf{g}( \dot \Phi , \dot  \Phi )}} \left( \mathbf{i} _{ \partial _ \lambda } \star\mathcal{F} \right),
\end{equation} 
are the electric and magnetic parts of the Faraday $2$-form $ \mathcal{F} = \Phi ^* F$ in the material frame. The Hodge star operator is associated with $\Phi^*\mathsf{g}$. In this case, the material Lagrangian density \eqref{general_expression_L} takes the explicit form
\begin{equation}\label{L_EMF} 
\begin{aligned} 
&\mathscr{L}( j^1 \Phi , \mathcal{A} , {\rm d} \mathcal{A} ,\partial _ \lambda , R,S,\mathsf{g} \circ \Phi )\\
&=\Big[  - \frac{1}{c}\sqrt{ -\mathsf{g}  \big(\dot \Phi , \dot \Phi  \big) }\; \big(c^2 +\mathscr{E} \big( \cdot , \cdot  , \mathcal{E} ,  \mathcal{B} , C\big) \big)- q  \mathcal{A} \cdot \partial_ \lambda  \Big] {\rm d}  \lambda \wedge R_0- \frac{1}{2} \mathcal{F}  \wedge \star \mathcal{F} ,
\end{aligned}
\end{equation} 
where the first two slots of $\mathscr{E}$ are the same as those of $e$ in \eqref{L_fluid}. 
One easily checks that, as it should, this Lagrangian density is spacetime covariant. The expression of the associated convective Lagrangian density $ \mathcal{L} (  \mathcal{A} , {\rm d} \mathcal{A}, \partial _ \lambda, R,S,\Gamma )$ is easily derived. Furthermore, \eqref{L_EMF} is materially covariant with respect to $ \operatorname{Diff}_{\partial _ \lambda }( \mathcal{D} )$ if and only if the function $ \mathscr{E} $ satisfies the condition
\begin{equation}\label{isotropy_E} 
\mathscr{E} ( \rho  \circ \psi  , \eta \circ\psi , \psi ^* \mathcal{E} , \psi ^* \mathcal{B} , \psi ^* C)= \mathscr{E} ( \rho  \circ \psi  , \eta , \mathcal{E} , \mathcal{B} , C) \circ \psi , \quad  \text{for all $ \psi \in \operatorname{Diff}( \mathcal{B} )$}\footnote{No confusion can arise while we use the same letter $\mathcal{B}$ for the magnetic field $\mathcal{B}=\Phi^*B$ in the material frame and the reference spatial domain $\mathcal{B}$ in $\mathcal{D}=[\lambda_0,\lambda_1]\times\mathcal{B}$},
\end{equation} 
which is an isotropy condition on the electromagnetic dependence which is usually assumed for electromagnetic fluids, see \S\ref{anisotropic} for anisotropic electromagnetic continua. The fact that it is enough to verify \eqref{isotropy_E} for $\operatorname{Diff}( \mathcal{B} )$ instead of the larger group $\operatorname{Diff}_{\partial_\lambda}( \mathcal{B} )$ follows from the special form of $\mathcal{E}$, $\mathcal{B}$, and $C$, namely, they vanish along $\partial_\lambda$. Recall from \S\ref{spacetime_reduction} that material covariance with respect to $ \operatorname{Diff}_{\partial _ \lambda }( \mathcal{D} )$ is enough to define the corresponding spacetime Lagrangian density $\ell$, see Remark \ref{Diff_KW}. It is obtained via the general formula \eqref{def_ell}, given here by 
\begin{equation}\label{fluid_L_ell}
\mathscr{L}(j^1 \Phi , \mathcal{A} , {\rm d} \mathcal{A} , \partial _ \lambda  , R, S, \mathsf{g} \circ \Phi )= \Phi ^* [\ell(A, {\rm d} A, w, \varrho , \varsigma , \mathsf{g})].
\end{equation}
We hence get
\begin{equation}\label{ell_electromagnetic_fluid} 
\ell(A, F, w, \varrho , \varsigma , \mathsf{g})=- \rho  \left(  c^2 + e( \rho , \eta , E,B, \mathsf{p} )  + q A \!\cdot\! u \right)  \mu (\mathsf{g})  -\frac{1}{2} F \wedge \star F,
\end{equation} 
where the spacetime ``energy" function $e$ is
\begin{equation}\label{def_e} 
e (\rho , \eta , E,B, \mathsf{p})= \mathscr{E}( \rho \circ \Phi  , \eta \circ \Phi ,  \Phi ^*E, \Phi ^*B, \Phi ^*  \mathsf{p} ) \circ \Phi ^{-1} 
\end{equation}
for any world-tube $ \Phi : \mathcal{D} \rightarrow \mathcal{N} $ with $ \Phi _* \partial _ \lambda = w$. From the spacetime covariance of the material Lagrangian \eqref{L_EMF}, the spacetime Lagrangian density satisfies \eqref{double_covariance} and, in particular, the function $e$ has the property
\begin{equation}\label{cov_e} 
e (\rho \circ \psi , \eta \circ \psi, \psi ^*E,\psi ^*B, \psi ^*\mathsf{p})= e( \rho  , \eta , E, B, \mathsf{p}) \circ \psi , \quad \forall\psi \in \operatorname{Diff}(\mathcal{M} ). 
\end{equation}
As we shall see later (Remark \ref{no_p}), the dependence of $e$ on $\mathsf{p}$ can be replaced by a dependence on $\mathsf{g}$ due to the properties $ \mathbf{i} _uE=0$, $ \mathbf{i} _uB=0$.
A similar conclusion holds for solids when the Cauchy deformation tensor $\mathsf{c}$ is included. Hence, from now on we express the energy in terms of $\mathsf{g}$.

\paragraph{Formulation in terms of densities.} The choice of describing the Lagrangian density in terms of a function $e$ with the dependences $e=e( \rho  , \eta, E, B, \mathsf{p})$ is made in accordance with most of the literature\footnote{Especially concerning the use of $ \rho  $, $ \eta $, $E$, and $B$, while the dependence on $\mathsf{p}$ is usually not explicitly mentioned.}. Also, with this choice, the passing from the material variables $ \mathcal{E} , \mathcal{B} , C$ to the spacetime variables $E, B, \mathsf{p}$ is quite transparent, see \eqref{def_e}. However, in our derivation of the stress-energy-momentum tensor from the variational approach, it will be more convenient to rewrite \eqref{ell_electromagnetic_fluid} in the equivalent form
\begin{equation}\label{Lagrangian_epsilon} 
\ell(A, {\rm d} A,w, \varrho , \varsigma , \mathsf{g} )= - \epsilon ( \rho  , s, E, B, \mathsf{g}) \mu(\mathsf{g})- q \varrho A\! \cdot \! w,
\end{equation} 
where we defined the density $\epsilon $ as
\begin{equation}\label{decomposition_e} 
\epsilon ( \rho  , s, E, B, \mathsf{g})= \epsilon _{\rm m} ( \rho  , s, E, B, \mathsf{g})+\epsilon _{\rm M}( E,B, \mathsf{g} ),
\end{equation} 
with $ \epsilon _{\rm m}$ and $ \epsilon _{\rm M}$ referring to the matter and Maxwell parts of $ \epsilon $:
\begin{equation}\label{e_m_e_M} 
\begin{aligned}
\epsilon _{\rm m}( \rho  , s, E, B, \mathsf{g})\mu ( \mathsf{g} )&= \rho  ( c ^2 + e( \rho  , \eta , E, B, \mathsf{g} )) \mu ( \mathsf{g} )\\
\epsilon _{\rm M}( E,B, \mathsf{g} )\mu ( \mathsf{g} )&= - \frac{1}{2}  E \wedge \star E +\frac{1}{2} B \wedge \star B.
\end{aligned} 
\end{equation} 
Note that we use the entropy density $s$, rather than the specific entropy $ \eta $, and that the density $ \epsilon _{\rm m}$ is related to our previous function $e$ via the first relation in \eqref{e_m_e_M}.


At some occasion, we shall also consider simpler, but often less practical, expression of $ \epsilon $ given directly in terms of the Faraday 2-form $F$ rather than its electric and magnetic component with respect to the fluid velocity $u$. Such an expression is written as
\begin{equation}\label{simpler_F} 
\mathfrak{e}(\rho  , \eta , u, F, \mathsf{g}),
\end{equation} 
where we note that it is now needed to include the world-velocity $u$ as one of the variables, as well as the Lorentzian metric $ \mathsf{g} $.

\paragraph{Spacetime reduced Euler-Lagrange equations.} The results stated in Theorem \ref{spacetime_reduced_EL} and \ref{convective_reduced_EL} are directly applicable to relativistic electromagnetic fluids. In particular, we get the following Eulerian variational principle and spacetime reduced Euler-Lagrange equations by direct application of \eqref{Eulerian_VP}. 

\begin{proposition}[Variational formulation for electromagnetic fluids]\label{prop_Eulerian_VP_fluid} The Eulerian variational formulation for the relativistic electromagnetic fluid takes the form\color{black} 
\begin{equation}\label{Eulerian_VP_fluid}
\begin{aligned} 
&\!\!\!\!\!\!\left. \frac{d}{d\varepsilon}\right|_{\varepsilon=0}\int_{ \mathcal{N}_ \varepsilon } \ell\big( A_ \varepsilon , {\rm d} A_ \varepsilon , w_ \varepsilon , \varrho  _ \varepsilon , \varsigma _ \varepsilon , \mathsf{g} \big)=0 \quad \text{for variations}\\
&\!\!\!\! \delta \mathcal{N} = \zeta |_{ \partial \mathcal{N} } \big/T \partial \mathcal{N} , \;\; \delta A = - \pounds _ \zeta A + 
\deltabar A,\, \delta w = - \pounds _ \zeta w, \,   \delta \varrho = - \pounds _ \zeta \varrho, \,   \delta \varsigma = - \pounds _ \zeta \varsigma,\phantom{\int_A^B}\hspace{-0.7cm}
\end{aligned} 
\end{equation}\color{black} 
where $ \zeta$ is an arbitrary vector field on $ \mathcal{N} $ such that $ \zeta |_{ \Phi ( \lambda _0, \mathcal{B} )}= \zeta |_{ \Phi ( \lambda _1, \mathcal{B} )}=0$ and $\deltabar A$ is an arbitrary one-form on $ \mathcal{N} $ such that $\deltabar A|_{ \Phi ( \lambda _0, \mathcal{B} )}= \deltabar A |_{ \Phi ( \lambda _1, \mathcal{B} )}=0$.

The critical conditions associated to \eqref{Eulerian_VP_fluid} are
\begin{equation}\label{spacetime_EL_fluid} 
\!\!\!\!\!\!\!\left\{
\begin{array}{l}
\displaystyle\vspace{0.2cm}\!\!\operatorname{div}^ \nabla \!\Big( \Big(  \ell - \varrho \frac{\partial \ell}{\partial \varrho }- \varsigma  \frac{\partial \ell}{\partial \varsigma  } \Big)  \delta  + w \otimes \frac{\partial \ell}{\partial w}-  \frac{\partial \ell}{\partial A } \otimes  A  -  \frac{\partial \ell}{\partial F }\stackrel{\rm tr}{ \otimes }F\Big)=0\\
\displaystyle\vspace{0.2cm}\!\! \frac{\partialnew \ell}{\partialnew A} + {\rm d}  \frac{\partialnew \ell}{\partialnew F}=0, \qquad  \qquad i^*_{ \partial _{\rm cont}\mathcal{N} } \frac{\partialnew \ell}{\partialnew F}=0\\
\displaystyle \!\!\Big( \!\Big(  \ell - \varrho \frac{\partial \ell}{\partial \varrho }- \varsigma  \frac{\partial \ell}{\partial \varsigma  } \Big)  \delta  + w \otimes \frac{\partial \ell}{\partial w}-  \frac{\partial \ell}{\partial A } \otimes  A  -  \frac{\partial \ell}{\partial F }\stackrel{\rm tr}{ \otimes }F\Big) ( \cdot ,n ^\flat )=0\;\;\text{on $\textcolor{black}{\partial_{\rm cont} \mathcal{N}}$}\hspace{-0.5cm}
\end{array}
\right.
\end{equation}
and the variables $w$, $ \varrho $, $ \varsigma $ satisfy
\begin{equation}\label{advections_fluid} 
\pounds _w \varrho =0 \quad\text{and}\quad \pounds _w \varsigma  =0.
\end{equation} 
\end{proposition}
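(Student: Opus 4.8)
The plan is to obtain Proposition~\ref{prop_Eulerian_VP_fluid} as a direct specialization of Theorem~\ref{spacetime_reduced_EL} (in the form \eqref{spacetime_EL_g} valid when $\gamma=\mathsf{g}$). First I would identify the data entering the general theorem: the reference vector field is $W=\partial_\lambda$, the reference material tensor is the pair of closed volume forms $K=(R,S)$ obeying \eqref{condition_R_S}, and the spacetime tensor is $\gamma=\mathsf{g}$; the induced Eulerian fields are $w=\Phi_*\partial_\lambda$ and $\kappa=(\varrho,\varsigma)=(\Phi_*R,\Phi_*S)$, and the associated reduced spacetime Lagrangian density is \eqref{ell_electromagnetic_fluid}, equivalently \eqref{Lagrangian_epsilon}. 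I would then record that the material Lagrangian density \eqref{L_EMF} is spacetime covariant by construction, and, under the isotropy hypothesis \eqref{isotropy_E}, materially covariant with respect to $\operatorname{Diff}_{\partial_\lambda}(\mathcal{D})$; by Remark~\ref{Diff_KW} this is enough to apply Theorem~\ref{spacetime_reduced_EL}.

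With this in place, Theorem~\ref{spacetime_reduced_EL} yields at once the equivalence of Hamilton's principle, the material Euler-Lagrange equations \eqref{EL_material}, the Eulerian variational principle \eqref{Eulerian_VP}, and the reduced equations \eqref{spacetime_EL}. Substituting $\kappa=(\varrho,\varsigma)$ in \eqref{Eulerian_VP} turns the constraint $\delta\kappa=-\pounds_\zeta\kappa$ into $\delta\varrho=-\pounds_\zeta\varrho$ and $\delta\varsigma=-\pounds_\zeta\varsigma$, which is exactly \eqref{Eulerian_VP_fluid}. For the equations, since $\gamma=\mathsf{g}$ I pass to \eqref{spacetime_EL_g}; spacetime covariance together with Lemma~\ref{spacetime_mat_ell} gives $\partial^\nabla\ell/\partial x=0$, so the right-hand side of the momentum equation is zero, as stated in \eqref{spacetime_EL_fluid}. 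The only computation that genuinely has to be carried out is the evaluation of $\frac{\partial\ell}{\partial\kappa}\therefore\widehat{\kappa}$ when the material tensor is a top-degree form: using \eqref{hat_kappa} with $p=0$, $q=n+1$ and the definition of the contraction $\therefore$, one checks that for an $(n+1)$-form $\mu$ one has $\frac{\partial\ell}{\partial\mu}\therefore\widehat{\mu}=\mu\,\frac{\partial\ell}{\partial\mu}\,\delta$, where $\frac{\partial\ell}{\partial\mu}$ is the scalar determined by $\delta\ell=\frac{\partial\ell}{\partial\mu}\,\delta\mu$. Applying this to $\mu=\varrho$ and $\mu=\varsigma$ replaces $-\frac{\partial\ell}{\partial\kappa}\therefore\widehat{\kappa}$ by $-\bigl(\varrho\,\tfrac{\partial\ell}{\partial\varrho}+\varsigma\,\tfrac{\partial\ell}{\partial\varsigma}\bigr)\delta$, the combination appearing both in the divergence and in the boundary term of \eqref{spacetime_EL_fluid}; this is precisely the computation already done for the neutral fluid in \cite{GB2024}, unchanged here because $R$ and $S$ enter $\ell$ only through $\varrho$ and $\varsigma$. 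The electromagnetic terms $-\frac{\partial\ell}{\partial A}\otimes A-\frac{\partial\ell}{\partial F}\stackrel{\rm tr}{\otimes}F$, together with the equation $\frac{\partialnew\ell}{\partialnew A}+{\rm d}\frac{\partialnew\ell}{\partialnew F}=0$ and its boundary condition, are copied directly from \eqref{spacetime_EL_g}. Finally, \eqref{advections_fluid} is the specialization of the relation $\pounds_w\kappa=0$ in \eqref{spacetime_EL}, which holds because $\pounds_{\partial_\lambda}R=\pounds_{\partial_\lambda}S=0$ and the Lie bracket is natural under the push-forward by $\Phi$.

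The main (and essentially only) difficulty is the $\therefore\widehat{\kappa}$ identity for volume forms and the accompanying bookkeeping of the Kronecker-$\delta$ factors: one has to be sure that the contraction $\therefore$ leaves precisely the free $(1,1)$-slot, so that the result is a multiple of $\delta$ rather than generating spurious $w\otimes w^\flat$-type contributions. A convenient way to check this is to work in a local frame adapted to $\partial_\lambda$, in which $R$ and $S$, hence $\varrho$ and $\varsigma$, have a single nonzero component, and compute \eqref{hat_kappa} directly. Once this identity is in hand, the proposition is a transcription of Theorem~\ref{spacetime_reduced_EL} to the fluid data.
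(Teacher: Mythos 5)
Your proposal is correct and follows exactly the route the paper takes: the paper itself presents this proposition as a direct specialization of Theorem~\ref{spacetime_reduced_EL} (with $W=\partial_\lambda$, $K=(R,S)$, $\gamma=\mathsf{g}$, and the Lagrangian density \eqref{Lagrangian_epsilon}), and you correctly supply the one computation it leaves implicit, namely that $\frac{\partial\ell}{\partial\kappa}\therefore\widehat{\kappa}=\bigl(\varrho\,\tfrac{\partial\ell}{\partial\varrho}+\varsigma\,\tfrac{\partial\ell}{\partial\varsigma}\bigr)\delta$ for top-degree forms, together with the vanishing of $\partial^\nabla\ell/\partial x$ by spacetime covariance and the naturality argument giving $\pounds_w\varrho=\pounds_w\varsigma=0$.
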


\medskip 

We note that equations \eqref{advections_fluid} are equivalently written in terms of the world-velocity, the proper mass and entropy density as
\[
\pounds _u (\rho  \mu(g))  =0 \quad\text{and}\quad \pounds _u (s  \mu(g))   =0.
\]

As usual, the variational formulation only provides half of Maxwell's equations, specifically $\frac{\partialnew \ell}{\partialnew A} + {\rm d}  \frac{\partialnew \ell}{\partialnew F}=0$, corresponding to the relativistic Gauss and Amp\`ere laws. The other half, ${\rm d}F=0$, corresponding to the relativistic magnetic Gauss law and Faraday's law, is automatically satisfied by the definition $F= {\rm d}A$. Besides this, \eqref{spacetime_EL_fluid} also gives the energy and momentum equations for the continuum, the boundary conditions for the continuum, and the boundary condition for $\frac{\partial\ell}{\partial F}$, which determines the boundary conditions for the fields $D$ and $H$. Boundary conditions for $F$, i.e. for $E$ and $B$, will be addressed when considering the physically relevant case where the fluid is coupled to gravitational theory and the external electromagnetic field via junction conditions.

In the next three paragraphs, we study in details the form of the stress-energy-momentum tensor appearing in \eqref{spacetime_EL_fluid}. This is useful both to connect with the previous literature and to derive a concrete formulation for the system \eqref{spacetime_EL_fluid}.

\subsubsection{Stress-energy-momentum tensor for electromagnetic fluids}\label{SEM_fluid}

To facilitate comparison with the literature, we present three forms of the stress-energy-momentum tensor, depending on whether we express it in terms of the original ($\Phi$-transported) variables $(A, F, w, \varrho, \varsigma)$ appearing in the Lagrangian,  in terms of the proper variables explicitly involving the electric and magnetic fields $(\rho,s,E,B)$, or in terms of the Faraday 2-form $(\rho, s, u,F)$.

\paragraph{The stress-energy-momentum tensor I.} The first step is to compute the partial derivatives of the Lagrangian density obtained in \eqref{Lagrangian_epsilon}. We obtain
%
%
%
\begin{equation}\label{computation_derivatives}
\begin{aligned}
\frac{\partial \ell}{\partial w }&=\frac{1}{c} u ^\flat  \frac{1}{\sqrt{-\mathsf{g}(w,w)}} \Big(  \frac{\partial \epsilon }{\partial \rho  } \rho+   \frac{\partial \epsilon }{\partial s  } s-  \frac{\partial \epsilon }{\partial E  } \cdot E- \frac{\partial \epsilon }{\partial B } :B \Big)\mu ( \mathsf{g} )\\
& \qquad \qquad \qquad + \Big(\frac{\partial \epsilon}{\partial E} \cdot \mathbf{i} _{\_\,} F+ \frac{\partial \epsilon}{\partial B} \cdot \mathbf{i} _{\_\,} (\star F)  \Big)  \mu ( \mathsf{g} )- q A \varrho  \\
\frac{\partial \ell}{\partial \varrho }&=  - \frac{1}{c} \sqrt{-\mathsf{g}(w,w)} \frac{\partial \epsilon }{\partial \rho  }  - q A \cdot w, \qquad \frac{\partial \ell}{\partial \varsigma  }= - \frac{1}{c}\sqrt{-\mathsf{g}(w,w)}  \frac{\partial \epsilon }{\partial s} \\
\frac{\partial \ell}{\partial A  }&=- q \varrho w, \qquad \qquad  \frac{\partialnew \ell}{\partialnew A  }=- q  \mathbf{i} _w \varrho \\
\frac{\partial \ell}{\partial F  }& =  \frac{1}{c}  u \wedge  \frac{\partial \epsilon }{\partial E } \mu ( \mathsf{g} )+\frac{1}{c}\star\Big( u \wedge\frac{\partial \epsilon }{\partial B }\Big) \mu ( \mathsf{g} )\\
\frac{\partialnew \ell}{\partialnew F  }&= \frac{1}{c} \star \Big( u ^\flat \wedge   \Big(\frac{\partial \epsilon }{\partial E } \Big)^\flat \Big) \mu ( \mathsf{g} )- \frac{1}{c}  u ^\flat \wedge   \Big(\frac{\partial \epsilon }{\partial B } \Big)^\flat\mu ( \mathsf{g} ).
\end{aligned}
\end{equation} 
Note that from the expression $\frac{\partialnew \ell}{\partialnew F  }$ and the definitions in \eqref{dellldF1} and \eqref{D_H}, the electric displacement field $D$ and magnetic intensity field $H$ are given by
\begin{equation}\label{D_H_epsilon} 
\frac{\partial \epsilon }{\partial E} = - D ^\sharp , \qquad 
\frac{\partial \epsilon }{\partial B} = H ^\sharp  .
\end{equation} 
From \eqref{decomposition_e}, we have the usual relations $D=P+E$ and $H=-M+B$ with the polarization $P \in \Omega ^1 ( \mathcal{M} )$, the magnetization $M \in \Omega ^{n-2}( \mathcal{M} )$ and 
\begin{equation}\label{P_M_epsilon}
\frac{\partial\epsilon_{\rm m} }{\partial E} = - P ^\sharp ,\quad  \frac{\partial\epsilon_{\rm m} }{\partial B} = - M ^\sharp  , \quad \frac{\partial\epsilon_{\rm M} }{\partial E} = - E ^\sharp   ,\quad  \frac{\partial\epsilon_{\rm M} }{\partial B} = B^\sharp  .
\end{equation}
Some useful formulas to proceed further are:
\begin{align*} 
\frac{\partial \epsilon }{\partial E}  \cdot \mathbf{i} _{\_\,}F &= \frac{1}{c} \Big( \frac{\partial \epsilon }{\partial E} \cdot E \Big) u ^\flat+ \frac{1}{c} (-1)^n\mathbf{i} _{ \frac{\partial \epsilon }{\partial E}  } \mathbf{i} _u (\star B) \\
\frac{\partial \epsilon }{\partial B}    : \mathbf{i} _{\_\,}(\star F) &= \frac{1}{c} \Big( \frac{\partial \epsilon }{\partial E} : B \Big) u ^\flat  + \frac{1}{c} (-1)^n \mathbf{i} _{E^\sharp  } \mathbf{i} _u\Big(\!\star \frac{\partial \epsilon }{\partial B} \Big) 
\end{align*} 
as well as
\begin{align*} 
w \otimes \frac{\partial \ell}{\partial w} &=   \Big[ \Big(\frac{\partial \epsilon }{\partial \rho  } \rho  + \frac{\partial \epsilon }{\partial s}s \Big)  \frac{1}{c^2} u \otimes u ^\flat\\
& \qquad  +(-1)^n \frac{1}{c^2} u \otimes \Big( \mathbf{i} _{ \frac{\partial \epsilon }{\partial E}} \mathbf{i} _u (\star B) +\mathbf{i} _{E^\sharp} \mathbf{i} _u \Big(\star \frac{\partial \epsilon }{\partial B} \Big)^\flat \Big) \\
& \qquad  - \rho  u \otimes  qA\Big] \mu (\mathsf{g})\\
\frac{\partial \ell}{\partial F} \stackrel{\rm tr}{ \otimes }F&= \Big[ \frac{\partial \epsilon }{\partial E} \cdot E \frac{1}{c^2}  u \otimes u ^\flat  - \frac{\partial \epsilon }{\partial E}  \otimes E + B \stackrel{\rm tr}{ \otimes } \frac{\partial \epsilon }{\partial B}   -\Big(\frac{\partial \epsilon }{\partial B}:  B \Big)\mathsf{P}\\
& \qquad  +(-1)^n \frac{1}{c^2} u \otimes \mathbf{i} _{ \frac{\partial \epsilon  }{\partial E}  } \mathbf{i} _u (\star B) -(-1)^n \frac{1}{c^2}\mathbf{i} _{E   } \mathbf{i} _{u^\flat}\Big(\!\star \frac{\partial \epsilon }{\partial B} \Big) \otimes u ^\flat\Big]\mu ( \mathsf{g} ),
\end{align*}
with $\mathsf{P}$ the projection tensor defined earlier.

From these expressions, the concrete form of the stress energy-momentum tensor is obtained as
\begin{equation}\label{T_EMF}
\begin{aligned} 
\mathfrak{T}&=  \Big(  \ell - \varrho \frac{\partial \ell}{\partial \varrho }- \varsigma  \frac{\partial \ell}{\partial \varsigma  } \Big)  \delta  + w \otimes \frac{\partial \ell}{\partial w}-  \frac{\partial \ell}{\partial A } \otimes  A  -  \frac{\partial \ell}{\partial F }\stackrel{\rm tr}{ \otimes }F\\
&=\Big[\Big(\epsilon   - \frac{\partial \epsilon   }{\partial E} \cdot E \Big) \frac{1}{c^2} u \otimes u ^\flat  + \frac{1}{c} \Big( u \otimes S_ \epsilon   + S^\sharp_ \epsilon    \otimes u ^\flat \Big)\\
& \qquad  \qquad +  \frac{\partial \epsilon    }{\partial E  } \otimes E - B ^\sharp  \stackrel{\rm tr}{ \otimes }  \frac{\partial \epsilon    }{\partial B  }^\flat    + \Big(  \frac{\partial \epsilon   }{\partial \rho  } \rho  + \frac{\partial \epsilon   }{\partial s} s + \frac{\partial \epsilon    }{\partial B  } : B - \epsilon\Big) \mathsf{P}\Big]\mu ( \mathsf{g} ),
\end{aligned} 
\end{equation}  
with $S_ \epsilon  \in \Omega ^1 ( \mathcal{M} )$ the Poynting one-form given by
\[
S_ \epsilon = (-1)^n \frac{1}{c} \mathbf{i} _{E ^\sharp  } \mathbf{i} _u \star \Big(\frac{\partial \epsilon }{\partial B}\Big) ^\flat\in \Omega ^1 ( \mathcal{M} ).
\]

\begin{remark}[On the symmetry of $ \mathfrak{T}$]\rm
We know from \eqref{spacetime_mat_ell} with $\gamma=\mathsf{g}$ that the tensor \eqref{T_EMF}, with indices both up or both down, must be symmetric when spacetime covariance holds. This can be concretely seen as follows. First we note that it is enough to check that $D \otimes E + B \stackrel{\rm tr}{\otimes} H$ or, equivalently $P \otimes E - B \stackrel{\rm tr}{\otimes} M$, is symmetric, as the other terms of $ \mathfrak{T} $ clearly are. Here we are using the notations \eqref{D_H_epsilon} and \eqref{P_M_epsilon}. Using $e$, instead of $ \epsilon $, and taking the derivative of the covariance property \eqref{cov_e} with respect to the diffeomorphism $ \psi $ at the identity yields
\begin{equation}\label{true} 
\frac{\partial e}{\partial E_ \mu }  E_ \nu  + \frac{1}{(n-3)!} \frac{\partial e}{\partial B_{\mu \alpha  _1... \alpha _{n-3}}} B_{ \nu  \alpha_1... \alpha_{n-3}} + 2\frac{\partial e}{\partial \mathsf{p}_{ \alpha \mu }} \mathsf{p}_{ \alpha \nu }=0.
\end{equation}
Multiplying by $ \mathsf{p}_{ \gamma \mu }$ and using $ \mathsf{p}_{ \gamma \mu }\frac{\partial e}{\partial E_ \mu }=\frac{\partial e}{\partial E_ \gamma  }$ and $ \mathsf{p}_{ \gamma \mu }\frac{\partial e}{\partial B_{\mu \alpha  _1... \alpha _{n-3}}}=\frac{\partial e}{\partial B_{\gamma  \alpha  _1... \alpha _{n-3}}}$, which follow from \eqref{D_H}, \eqref{D_H_epsilon} and \eqref{P_M_epsilon}, yields  
\[
(P \otimes E + M \stackrel{\rm tr}{\otimes} B)_{ \mu \nu }= \rho  2 \mathsf{p}_{ \gamma \mu }\frac{\partial e}{\partial \mathsf{p}_{ \alpha \mu }} \mathsf{p}_{ \alpha \nu }
\]
so that $P \otimes E + M \stackrel{\rm tr}{\otimes} B$ is symmetric. This proves that $P \otimes E - B \stackrel{\rm tr}{\otimes} M$ is, up to a symmetric tensor, equal to $-M \stackrel{\rm tr}{\otimes} B - B \stackrel{\rm tr}{\otimes} M$, which is symmetric. In conclusion, $\mathfrak{T}$ in \eqref{T_EMF} is symmetric, as it should.
\end{remark}

\begin{remark}[On the dependence of $e$ on $\mathsf{p}$ and $ \mathsf{g} $]\label{no_p}\rm If the dependence on $\mathsf{p}$, instead of $\mathsf{g}$, would be considered, then in the expression of the partial derivative $ \frac{\partial \ell}{\partial w}$ in \eqref{computation_derivatives} there should be in principle a contribution of $ \frac{\partial e}{\partial \mathsf{p}}$ since $\mathsf{p}$ depends on the variable $w$. This contribution however vanishes. The derivative with respect to $w$ yields the expression
\begin{align*} 
- \rho  \frac{\partial e}{\partial \mathsf{p} _{ \mu \nu } }\frac{\partial \mathsf{p} _{ \mu \nu }}{\partial w^ \lambda } &=  - \rho  \frac{\partial e}{\partial \mathsf{p} _{ \mu \nu } }\Big(\frac{2w_ \mu w_ \nu }{\textcolor{black}{ \mathsf{g}}(w,w) ^2 }   \textcolor{black}{ \mathsf{g}}_{ \alpha \lambda  }   w ^ \alpha - \frac{1}{\textcolor{black}{ \mathsf{g}}_{ \alpha \beta } w ^ \alpha w^ \beta } (\textcolor{black}{ \mathsf{g}}_{ \mu \lambda } w_ \nu   +  \textcolor{black}{ \mathsf{g}}_{ \nu \lambda }w_ \mu  )\Big)\\
&=\frac{2}{\textcolor{black}{ \mathsf{g}}(w,w)} \rho  \frac{\partial e}{\partial \mathsf{p} _{ \mu \nu }} \mathsf{p} _{ \mu \lambda } w_ \nu.
\end{align*} 
Then, using \eqref{true} again,  this expression is zero, from $\frac{\partial e}{\partial E_ \mu }  E_ \nu w_ \mu =0$ and $\frac{\partial e}{\partial B_{\mu \alpha  _1... \alpha _{n-3}}} w_ \mu =0$.
This is consistent with the fact that we can use $ \mathsf{g} $ instead of $\mathsf{p}$ in the function $e$. We will see concretely an instance of this fact in one example later.
\end{remark}

\paragraph{The stress-energy-momentum tensor II.} We now give the expression of $ \mathfrak{T} $ in terms of the energy function expressed with the Faraday 2-form, as in \eqref{simpler_F}. We thus rewrite the Lagrangian density \eqref{Lagrangian_epsilon} in the following equivalent way
\begin{equation}\label{Lagrangian_e_F} 
\ell(A, {\rm d} A,w, \varrho , \varsigma , \mathsf{g} )= - \mathfrak{e} ( \rho  , s, u, F, \mathsf{g}) \mu ( \mathsf{g} )- q \varrho A\! \cdot \! w,
\end{equation} 
with $ \mathfrak{e}$ subsequently split as
\begin{equation}\label{decomposition_frake} 
\mathfrak{e}( \rho  , s, u, F, \mathsf{g} )= \mathfrak{e}_{\rm m}( \rho  , s, u, F, \mathsf{g} )+ \mathfrak{e}_{\rm M}( F, \mathsf{g} )
\end{equation}
as in \eqref{decomposition_e}. Note the dependence of $ \mathfrak{e}$ on the world-velocity $u$, which was not present in  $ \epsilon $. It is needed to reproduce the expression $ \epsilon ( \rho  , s, E, B, \mathsf{g} )$ with general, although still covariant, dependence on $E$ and $B$.
The expressions of the partial derivatives are found as
\begin{equation}\label{computation_derivatives_e}
\begin{aligned}
\frac{\partial \ell}{\partial \varrho }&=- \frac{\partial \mathfrak{e}  }{\partial \rho  } \frac{\sqrt{- \mathsf{g} (w,w)}}{c} - qA \cdot w, \qquad \frac{\partial \ell}{\partial \varsigma  }=- \frac{\partial \mathfrak{e}  }{\partial s  } \frac{\sqrt{- \mathsf{g} (w,w)}}{c } \\
\frac{\partial \ell}{\partial w} &= \Big(\rho  \frac{\partial \mathfrak{e}  }{\partial \rho  } +  s  \frac{\partial \mathfrak{e}  }{\partial s  }\Big) \frac{ u^\flat }{c} \frac{1}{\sqrt{- \mathsf{g} (w,w)}} \mu ( \mathsf{g} )- \frac{\partial \mathfrak{e}  }{\partial u}\frac{c}{\sqrt{- \mathsf{g} (w,w)}} \mathsf{P}\mu ( \mathsf{g} ) \\ 
\frac{\partial \ell}{\partial A }&=- \rho  q u \otimes \mu (\mathsf{g} ), \qquad \frac{\partialnew \ell}{\partialnew A }=- \rho  q \mathbf{i} _u  \mu (\mathsf{g} )\\
\frac{\partial \ell}{\partial F }&= - \frac{\partial \mathfrak{e}  }{\partial F} \mu ( \mathsf{g} ), \qquad \frac{\partialnew \ell}{\partial F }= - \star\left(\frac{\partial \mathfrak{e}  }{\partial F} \right)^\flat.
\end{aligned}
\end{equation} 
From these expressions, the stress energy-momentum tensor is obtained as
\begin{equation}\label{T_EMF_F}
\begin{aligned} 
\!\!\!\mathfrak{T}&=\Big(  \ell - \varrho \frac{\partial \ell}{\partial \varrho }- \varsigma  \frac{\partial \ell}{\partial \varsigma  } \Big)  \delta  + w \otimes \frac{\partial \ell}{\partial w}-  \frac{\partial \ell}{\partial A } \otimes  A  -  \frac{\partial \ell}{\partial F }\stackrel{\rm tr}{ \otimes }F\\
&=\Big[\Big(\mathfrak{e}   -  \frac{\partial \mathfrak{e}  }{\partial u}  \cdot u \Big)\frac{1}{c^2}u \otimes u ^\flat  - u \otimes \frac{\partial \mathfrak{e}  }{\partial u} + \frac{\partial \mathfrak{e}  }{\partial F}   \stackrel{\rm tr}{\otimes} F +   \Big( \rho  \frac{\partial \mathfrak{e}  }{\partial \rho  }+ s \frac{\partial \mathfrak{e}  }{\partial s} - \mathfrak{e}    \Big) \mathsf{P}\Big]\mu ( \mathsf{g} ).
\end{aligned}
\end{equation}  
While this expression is much simpler than the one obtained earlier, it is of less practical use since $- u \otimes \frac{\partial \mathfrak{e}  }{\partial u} + \frac{\partial \mathfrak{e}  }{\partial F}   \stackrel{\rm tr}{\otimes} F$ is not decomposed into its projections along and orthogonal to the fluid velocity. Implementing these projections would basically consist in rederiving \eqref{T_EMF}. Related to this, it is important to observe that the  factors in front of $\frac{1}{c^2}u \otimes u ^\flat $ and $\mathsf{P}$ in \eqref{T_EMF_F} do not correspond to those factors in \eqref{T_EMF}. The alternative formulation in \eqref{T_EMF_F} is useful to get insight on the final expression of the equations, as we will see in \S\ref{subsubsec_equ}.

The results obtained in the three preceding paragraphs are summarized in the following proposition.

\begin{proposition}[Stress-energy-momentum tensor for electromagnetic fluids]\label{end_of_controversy} We have the following three equivalent formulations of the stress-energy-momentum tensor of electromagnetic fluids.
\begin{itemize}
\item[\bf (i)] Formulation in terms of the $ \Phi $-transported variables. For a Lagrangian density $\ell(A, {\rm d} A, w, \varrho , \varsigma , \mathsf{g} )$, the stress-energy-momentum tensor takes the form
\begin{equation}\label{T_1} 
\mathfrak{T} = \Big(  \ell - \varrho \frac{\partial \ell}{\partial \varrho }- \varsigma  \frac{\partial \ell}{\partial \varsigma  } \Big)  \delta  + w \otimes \frac{\partial \ell}{\partial w}-  \frac{\partial \ell}{\partial A } \otimes  A  -  \frac{\partial \ell}{\partial F }\stackrel{\rm tr}{ \otimes }F.
\end{equation} 
\item[\bf (ii)] Formulation in terms of the electric and magnetic components. If the Lagrangian density is expressed as
\[
\ell(A, {\rm d} A,w, \varrho , \varsigma , \mathsf{g} )= - \epsilon ( \rho  , s, E, B, \mathsf{g}) \mu ( \mathsf{g} ) - q \rho   A\! \cdot \!u\, \mu ( \mathsf{g} ),
\]
the stress-energy-momentum tensor takes the form
\begin{equation}\label{T_2}
\begin{aligned} 
\mathfrak{T}&= \Big[\Big(\epsilon   - \frac{\partial \epsilon   }{\partial E} \cdot E \Big) \frac{1}{c^2} u \otimes u ^\flat  + \frac{1}{c} \Big( u \otimes S_ \epsilon   + S^\sharp_ \epsilon    \otimes u ^\flat \Big)\\
& \qquad   +  \frac{\partial \epsilon    }{\partial E  } \otimes E - B ^\sharp  \stackrel{\rm tr}{ \otimes }  \frac{\partial \epsilon    }{\partial B  }^\flat    + \Big(   \frac{\partial \epsilon   }{\partial \rho  } \rho  + \frac{\partial \epsilon   }{\partial s} s + \frac{\partial \epsilon    }{\partial B  } : B- \epsilon \Big) \mathsf{P}\Big]\mu ( \mathsf{g} ),
\end{aligned} 
\end{equation}
with $S_ \epsilon $ the Poynting one-form
\[
S_ \epsilon = (-1)^n \frac{1}{c}\mathbf{i} _{E^\sharp} \mathbf{i} _u \star \Big(\frac{\partial \epsilon }{\partial B} \Big)^\flat .
\]
\item[\bf (iii)] Formulation in terms of the Faraday 2-form. If the Lagrangian density is expressed as
\[
\ell(A, {\rm d} A,w, \varrho , \varsigma , \mathsf{g} )= - \mathfrak{e} ( \rho  , s, u,F, \mathsf{g}) \mu ( \mathsf{g} )- q \rho   A\! \cdot \!u \,\mu ( \mathsf{g} ) 
\]
the stress-energy-momentum tensor takes the form
\begin{equation}\label{T_3}
\mathfrak{T}=\Big[\Big(\mathfrak{e}   -  \frac{\partial \mathfrak{e}  }{\partial u}  \cdot u \Big)\frac{1}{c^2}u \otimes u ^\flat  - u \otimes \frac{\partial \mathfrak{e}  }{\partial u} + \frac{\partial \mathfrak{e}  }{\partial F}   \stackrel{\rm tr}{\otimes} F +   \Big( \rho  \frac{\partial \mathfrak{e}  }{\partial \rho  }+ s \frac{\partial \mathfrak{e}  }{\partial s} - \mathfrak{e}    \Big) \mathsf{P}\Big]\mu ( \mathsf{g} ).
\end{equation}
\end{itemize}
\end{proposition}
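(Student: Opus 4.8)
The plan is to handle the three formulations in sequence, the first being a direct specialization of an earlier theorem and the other two being computations of partial derivatives followed by substitution into the first.

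For part \textbf{(i)}, I would invoke Proposition \ref{prop_Eulerian_VP_fluid}, itself the specialization of Theorem \ref{spacetime_reduced_EL} to the fluid case in which the generic material tensor $K$ is the pair of volume forms $R,S$. The reduced spacetime Euler--Lagrange equations \eqref{spacetime_EL_fluid} already display the divergence of the tensor written in \eqref{T_1}, so the only point to record is that for a top-degree form $\varrho$ the contraction $\frac{\partial\ell}{\partial\kappa}\!\therefore\!\widehat{\kappa}$ reduces to $\varrho\frac{\partial\ell}{\partial\varrho}\,\delta$ (and similarly for $\varsigma$): this follows from \eqref{hat_kappa}, since for a volume form the hatted tensor is a scalar multiple of the identity, namely $\widehat{\varrho}=\varrho\,\delta$ up to the density factor. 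Hence \eqref{T_1} is, by definition, the tensor $\mathfrak T$ appearing in \eqref{spacetime_EL_fluid}.

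For part \textbf{(ii)} I would start from the Lagrangian density in the form \eqref{Lagrangian_epsilon}, $\ell=-\epsilon(\rho,s,E,B,\mathsf{g})\mu(\mathsf{g})-q\varrho A\!\cdot\!w$, and compute the partial derivatives collected in \eqref{computation_derivatives}. The three inputs are: the chain rule through the definitions \eqref{relation_fluid_1} of $\rho,s,u$ in terms of $w,\varrho,\varsigma,\mathsf{g}$; the definitions \eqref{E_B} of $E,B$ as contractions of $F$ and $\star F$ with $u$, which supply $\partial E/\partial F$ and $\partial B/\partial F$; and the Hodge-star identities listed after \eqref{dellldF2}. Substituting into \eqref{T_1}, the terms $w\otimes\frac{\partial\ell}{\partial w}$ and $\frac{\partial\ell}{\partial F}\stackrel{\rm tr}{\otimes}F$ are expanded using the two auxiliary identities displayed just before \eqref{T_EMF}; the $-qA$ contributions arising from $w\otimes\frac{\partial\ell}{\partial w}$ and from $-\frac{\partial\ell}{\partial A}\otimes A=q\varrho\,w\otimes A$ cancel, and grouping the remaining terms by type --- those proportional to $u\otimes u^\flat$, to $u\otimes S_\epsilon+S_\epsilon^\sharp\otimes u^\flat$, to $\frac{\partial\epsilon}{\partial E}\otimes E$ and $B^\sharp\stackrel{\rm tr}{\otimes}(\partial\epsilon/\partial B)^\flat$, and to the projector $\mathsf{P}$ of \eqref{S_P} --- yields \eqref{T_2}, with the Poynting one-form emerging exactly as the cross term $(-1)^n\frac1c\mathbf{i}_{E^\sharp}\mathbf{i}_u\star(\partial\epsilon/\partial B)^\flat$. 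Part \textbf{(iii)} is structurally identical but starts from \eqref{Lagrangian_e_F} with the derivatives \eqref{computation_derivatives_e}; the one new feature is the explicit dependence of $\mathfrak e$ on $u$, so that differentiating in $w$ produces, besides the $\rho\,\partial\mathfrak e/\partial\rho+s\,\partial\mathfrak e/\partial s$ piece along $u^\flat$, the tangential term $-\frac{\partial\mathfrak e}{\partial u}\frac{c}{\sqrt{-\mathsf g(w,w)}}\mathsf P\,\mu(\mathsf g)$ coming from $\partial u/\partial w$. Feeding this into \eqref{T_1} turns the $u\otimes u^\flat$ coefficient into $\mathfrak e-(\partial\mathfrak e/\partial u)\cdot u$, leaves an extra $-u\otimes\partial\mathfrak e/\partial u$, and makes the $\mathsf P$ coefficient $\rho\,\partial\mathfrak e/\partial\rho+s\,\partial\mathfrak e/\partial s-\mathfrak e$, which is \eqref{T_3}. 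I would close with the consistency remark that each of \eqref{T_2}, \eqref{T_3} is symmetric in its two (both raised or both lowered) indices, by the covariance identity \eqref{double_covariance}--\eqref{true}, as already discussed in the Remark on the symmetry of $\mathfrak T$.

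The main obstacle I anticipate is bookkeeping in the Hodge-star manipulations of part (ii): correctly tracking the $(-1)^n$ and $(-1)^{k(n+1-k)}$ signs in $\star\star$, $\mathbf{i}_u\star$, and $\star\mathbf{i}_u$ so that the various cross terms recombine into the single symmetric block $u\otimes S_\epsilon+S_\epsilon^\sharp\otimes u^\flat$ rather than into two unrelated one-forms. A secondary subtlety, which the text already flags, is that the scalar multiplying $\mathsf P$ in \eqref{T_2} is genuinely not the same as the one in \eqref{T_3}; the two expressions agree only as complete tensors, through the Legendre-type relation between $e$ (hence $\epsilon$) and $\mathfrak e$, so one must resist matching them term by term and instead verify that the full sums coincide.
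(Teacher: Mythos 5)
Your proposal is correct and follows essentially the same route as the paper: part (i) is the specialization of Theorem \ref{spacetime_reduced_EL} to the volume-form material tensors $R,S$ (with $\frac{\partial\ell}{\partial\varrho}\!\therefore\!\widehat{\varrho}=\varrho\frac{\partial\ell}{\partial\varrho}\,\delta$), and parts (ii)--(iii) are obtained by substituting the partial derivatives \eqref{computation_derivatives} and \eqref{computation_derivatives_e} into \eqref{T_1}, exactly as in the two paragraphs preceding the proposition. The only small point worth adding is that the $A$-dependent terms cancel not only between $w\otimes\frac{\partial\ell}{\partial w}$ and $-\frac{\partial\ell}{\partial A}\otimes A$ but also between $\ell$ and $-\varrho\frac{\partial\ell}{\partial\varrho}$ (the $-qA\cdot w$ contribution to $\frac{\partial\ell}{\partial\varrho}$), so that $\mathfrak{T}$ is indeed independent of the current term.
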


\begin{remark}[Role of current term]\rm
We observe that the current term in \eqref{Lagrangian_epsilon} does not contribute to the expression of the stress-energy-momentum tensor $ \mathfrak{T} $. However, it determines the source term of the Maxwell equations. As we shall see in the examples later, when the latter equations are used when writing the energy and momentum fluid equations $\operatorname{div}\mathfrak{T}=0$, this source term generates the Lorentz force, despite not appearing explicitly in the expression of $\mathfrak{T}$.
\end{remark}



\begin{remark}[Matter+Maxwell decomposition]\rm The writing \eqref{T_2} allows a very clear separation into the charged matter and pure electromagnetic contributions to the stress-energy-momentum tensor. If $ \epsilon  $ decomposes as in \eqref{decomposition_e}, we can write $ \mathfrak{T} = \mathfrak{T} _{\rm m}+ \mathfrak{T} _{\rm M}$, with both terms symmetric and with $ \mathfrak{T} _{\rm M}$ the Maxwell stress-energy-momentum tensor, see \eqref{Maxwell_T}. For further use and discussion later, we write them explicitly as follows
\begin{equation}\label{our_decomposition}
\begin{aligned} 
\mathfrak{T}_{\rm m}&= \Big[\Big(\epsilon _{\rm m}  - \frac{\partial \epsilon  _{\rm m} }{\partial E} \cdot E \Big) \frac{1}{c^2} u \otimes u ^\flat  + \frac{1}{c} \Big( u \otimes S_ {\epsilon _{\rm m}}  + S^\sharp_ {\epsilon _{\rm m}}  \otimes u ^\flat \Big)\\
& \quad   +  \frac{\partial \epsilon_{\rm m}    }{\partial E  } \otimes E - B ^\sharp  \stackrel{\rm tr}{ \otimes }  \frac{\partial \epsilon _{\rm m}   }{\partial B  }  ^\flat   + \Big(  \frac{\partial \epsilon   _{\rm m}  }{\partial \rho  } \rho  + \frac{\partial \epsilon  _{\rm m}   }{\partial s} s + \frac{\partial \epsilon   _{\rm m}   }{\partial B  } : B -\epsilon  _{\rm m} \Big) \mathsf{P}\Big]\mu ( \mathsf{g} )\\
\mathfrak{T} _{\rm M}&=\Big[\Big( \epsilon _{\rm M}  - \frac{\partial \epsilon  _{\rm M} }{\partial E} \cdot E  \Big) \frac{1}{c^2} u \otimes u ^\flat  + \frac{1}{c} \Big( u \otimes S_ {\epsilon _{\rm M}}  + S^\sharp_{\epsilon _{\rm M}}  \otimes u ^\flat \Big)\\
& \quad    +  \frac{\partial \epsilon  _{\rm M}  }{\partial E  } \otimes E - B ^\sharp  \stackrel{\rm tr}{ \otimes }  \frac{\partial \epsilon_{\rm M}    }{\partial B  }  ^\flat  +\Big( \frac{\partial \epsilon  _{\rm M} }{\partial B} \cdot B  - \epsilon _{\rm M} \Big) \mathsf{P}\Big]\mu ( \mathsf{g} ).
\end{aligned} 
\end{equation}
Here one can use \eqref{P_M_epsilon} to replace the partial derivatives by their corresponding expression. Due to the specific form of $ \epsilon _{\rm M}$, in the expression of $\mathfrak{T} _{\rm M}$ we have
\[
\epsilon _{\rm M}  - \frac{\partial \epsilon  _{\rm M} }{\partial E} \cdot E= - \epsilon _{\rm M}  + \frac{\partial \epsilon  _{\rm M} }{\partial B} \cdot B=\frac{1}{2} |E| ^2 + \frac{1}{2} |B| ^2.
\]

\end{remark}

\begin{remark}[Comparison with  \cite{ErMa1990}]\label{ErMa}\rm 
To ease the comparison with this earlier major work, we first note that the two free energy functions that they consider, denoted $ \psi $ and $\widetilde \psi $  (see \S15 in \cite{ErMa1990}), are related to our functions $ \epsilon _{\rm m}$ and $e$ via the relations
\begin{align*}
\epsilon _{\rm m} &=\rho ( c ^2 + e) \mu ( \mathsf{g} )= \rho \big(c^2+  \widetilde\psi + \eta T\big) \\
\epsilon _{\rm m}- \frac{\partial \epsilon _{\rm m}}{\partial E} \cdot E &= \rho  \big(c^2 + e - \frac{\partial e}{\partial E} \cdot E\big)\mu ( \mathsf{g} )= \rho( c^2 +  \psi + \eta T) .
\end{align*}
Also, the decomposition of the stress-energy-momentum tensor they consider, denoted $ \mathfrak{T} ={}_{\rm m} {\rm T} +{}_{\rm M}{\rm T}$, is not the same as our $ \mathfrak{T} = \mathfrak{T} _{\rm m}+ \mathfrak{T} _{\rm M}$ which corresponds to the decomposition $ \epsilon = \epsilon _{\rm m}+ \epsilon _{\rm M}$. In our notations, their splitting reads
\begin{equation}\label{T_M}
\begin{aligned} 
{}_{\rm m}{\rm T}&= \Big[\Big(\epsilon _{\rm m}  - \frac{\partial \epsilon  _{\rm m} }{\partial E} \cdot E \Big) \frac{1}{c^2} u \otimes u ^\flat  + \Big(  \frac{\partial \epsilon   _{\rm m} }{\partial  \rho    } \rho  + \frac{\partial \epsilon   _{\rm m} }{\partial  s    } s - \epsilon  _{\rm m}  \Big) \mathsf{P}\Big]\mu ( \mathsf{g} )\\
{}_{\rm M}{\rm T}&= \Big[\Big(\epsilon _{\rm M}  - \frac{\partial \epsilon  _{\rm M} }{\partial E} \cdot E \Big) \frac{1}{c^2} u \otimes u ^\flat  + \frac{1}{c} \Big( u \otimes S_ \epsilon   + S^\sharp_ \epsilon    \otimes u ^\flat \Big)\\
& \qquad  \qquad +  \frac{\partial \epsilon    }{\partial E  } \otimes E - B ^\sharp  \stackrel{\rm tr}{ \otimes }  \frac{\partial \epsilon    }{\partial B  }  ^\flat  + \Big(  \frac{\partial \epsilon   _{\rm M} }{\partial B  } : B +   \frac{\partial \epsilon   _{\rm m} }{\partial B  } : B - \epsilon  _{\rm M}  \Big) \mathsf{P}\Big]\mu ( \mathsf{g} ),
\end{aligned} 
\end{equation}
to be compared with \eqref{our_decomposition}. Note in particular the occurrence of the term $-   \frac{\partial \epsilon   _{\rm m} }{\partial B  } : B$ in ${}_{\rm M}{\rm T}$, which appears in $\mathfrak{T}_{\rm m}$ in our case. This form of the splitting originates from the way ${}_{\rm M}{\rm T}$ was  derived, namely from a microscopic model of interactions, see \cite{dGSu1972}. It is quite remarkable that although the two decompositions of the stress-energy-momentum tensors are different and originate from totally different approaches, the final expressions coincide with the one obtained from our variational approach.
\end{remark}

\begin{remark}[Interpretation of $\epsilon$]\label{energy_interpretation}\rm The first term in the expression \eqref{T_2} clarifies the physical meaning of the ``energy" density $\epsilon$ appearing in the Lagrangian density  $\ell$, see \eqref{Lagrangian_epsilon}, namely, $\epsilon$ is such that the Legendre transform with respect to $E$, i.e., $\epsilon_{\rm tot}=\epsilon   - \frac{\partial \epsilon   }{\partial E} \cdot E$, represents the total energy density of the continuum in the rest frame. This clarifies also the physical meaning of $\epsilon_m$, $\epsilon_M$, $e$, and $\mathscr{E}$, via the relations
\eqref{decomposition_e}, \eqref{e_m_e_M}, and \eqref{def_e}, see also \eqref{our_decomposition}.
\end{remark}

We now give the expression of the stress-energy-momentum tensor in some important examples.

\paragraph{Example 1: Relativistic Euler-Maxwell.} This is the special case in which we have simply
\begin{equation}\label{Euler_Maxwell} 
\epsilon _{\rm m}( \rho  , s, E,B, \mathsf{g} )= \mathfrak{e}_{\rm m}( \rho , \eta , u, F, \mathsf{g} )= \epsilon _0( \rho  , s),
\end{equation} 
with $\epsilon _0( \rho  , s)= \rho  (c^2+  e( \rho  , \eta ))$ given by a fluid state equation. The associated Lagrangian density is
\[
\ell( A, {\rm d} A, w, \varrho , \varsigma , \mathsf{g} ) = - \epsilon _0( \rho  , s)\mu ( \mathsf{g} )- q \rho  A\! \cdot  \! u \mu ( \mathsf{g} ) - \mathfrak{e}_{\rm M}(F, \mathsf{g} )\mu ( \mathsf{g} ),
\]
see \eqref{decomposition_frake}. In this case, the expression of the stress-energy-momentum tensor in \eqref{T_3} reduces to the sum of that of an ordinary fluid and that of Maxwell theory. Indeed, from \eqref{T_3} and \eqref{general_Maxwell_SEM} one gets
\begin{align*} 
\mathfrak{T} &=\big[ (\epsilon _0+ \mathfrak{e}_{\rm M}) \frac{1}{c^2}u \otimes u ^\flat  + \frac{\partial \mathfrak{e}_{\rm M}}{\partial F} \stackrel{\rm tr}{ \otimes }F +\Big(\rho  \frac{\partial \epsilon _0}{\partial \rho  }+s  \frac{\partial \epsilon _0}{\partial s  } - \epsilon _0 - \mathfrak{e}_{\rm M}  \Big)\mathsf{P}\Big]\mu ( \mathsf{g} )\\
&= \Big[\epsilon _0 \frac{1}{c^2}u \otimes u ^\flat + p_0\mathsf{P} + \Big( \frac{\partial \mathfrak{e}_{\rm M}}{\partial F} \stackrel{\rm tr}{ \otimes }F - \mathfrak{e}_{\rm M}  \delta\Big)  \Big]\mu ( \mathsf{g} ) = \mathfrak{T} _{\rm fluid} + \mathfrak{T} _{\rm M}.
\end{align*} 
In this very special case $ \mathfrak{T} _m=\mathfrak{T} _{\rm fluid}$ thus recovers the standard expression of the stress-energy-momentum tensor of a fluid, which is not the case even in simple situations like the following example. Also, in this case the decompositions $ \mathfrak{T} = \mathfrak{T} _{\rm m}+ \mathfrak{T} _{\rm M}$ and $ \mathfrak{T} ={}_{\rm m} {\rm T} +{}_{\rm M}{\rm T}$ (see Remark \ref{ErMa}) coincide.

\paragraph{Example 2: Linear electromagnetic constitutive equations.} Consider the special case of $ \epsilon _{\rm m}$ given by
\begin{equation}\label{linear_epsilon} 
\epsilon _{\rm m} ( \rho  , s , E, B, \mathsf{g} )= \epsilon _0( \rho  , s) - \frac{1}{2} \chi ^E|E| ^2-  \frac{1}{2} \chi ^B|B| ^2,
\end{equation} 
for which \eqref{D_H_epsilon} and \eqref{P_M_epsilon} yield the relations $P= \chi ^EE$, $M= \chi ^BB$, $D= \varepsilon _0 E$, and $H= \mu_0 ^{-1} B$, with $ \varepsilon  _0 = 1+ \chi^ E$, $ \mu _0 ^{-1} = 1- \chi ^B$. Spacetime covariance clearly holds.

Assuming $ \chi ^E$ and $ \chi ^B$ constant, the first expression in \eqref{our_decomposition} becomes 
\begin{equation}\label{T_cont_1}
\begin{aligned} 
\mathfrak{T}_{\rm m}&= \Big[\Big(\epsilon _0  + \frac{1}{2} \chi ^E|E| ^2  -  \frac{1}{2} \chi ^B|B| ^2  \Big) \frac{1}{c^2} u \otimes u ^\flat  + \frac{1}{c} \Big( u \otimes S_ {\epsilon _{\rm m}}  + S^\sharp_ {\epsilon _{\rm m}}  \otimes u ^\flat \Big)\\
& \qquad- \chi ^EE\otimes E + \chi ^B B\stackrel{\rm tr}{ \otimes } B  + \Big( p_0 + \frac{1}{2} \chi ^E|E| ^2 - \frac{1}{2} \chi ^B|B| ^2  \Big) \mathsf{P}\Big] \mu ( \mathsf{g} ),
\end{aligned} 
\end{equation}
while the second one, $ \mathfrak{T} _{\rm M}$, is given in \eqref{Maxwell_T}. In \eqref{T_cont_1}, we have 
\[
S_{ \epsilon _{\rm m}}= -(-1)^n \frac{1}{c}\chi ^B \mathbf{i} _{E ^\sharp  } \mathbf{i} _ u (\star B)\quad\text{and}\quad p_0 = \frac{\partial \epsilon _0}{\partial \rho  } \rho  + \frac{\partial \epsilon  _0  }{\partial s} s  - \epsilon  _0,
\]
with $p_0$ the fluid pressure. The total density $ \epsilon $ in \eqref{decomposition_e} becomes
\[
\epsilon  ( \rho  , s , E, B, \mathsf{g} )= \epsilon _0( \rho  ,s) - \frac{1}{2} \varepsilon _0|E| ^2 +  \frac{1}{2} \mu_0 ^{-1} |B| ^2,
\]
so that the total stress-energy-momentum tensor for this model is found from \eqref{T_2} as 
\begin{equation}\label{T_example2}
\begin{aligned} 
\mathfrak{T}&=  \Big[\Big(\epsilon _0  + \frac{1}{2} \varepsilon _0 |E| ^2  +  \frac{1}{2} \mu _0 ^{-1} |B| ^2 \Big) \frac{1}{c^2} u \otimes u ^\flat  + \frac{1}{c} \Big( u \otimes S_ { \epsilon }  + S^\sharp_{ \epsilon }  \otimes u ^\flat \Big)\\
& \qquad \quad - \varepsilon _0 E\otimes E - \mu _0 ^{-1} B\stackrel{\rm tr}{ \otimes } B  +\Big( p_0 + \frac{1}{2} \varepsilon _0|E| ^2 + \frac{1}{2} \mu_0 ^{-1} |B| ^2\Big) \mathsf{P}\Big] \mu ( \mathsf{g} )
\end{aligned} 
\end{equation}
with Poynting one-form
\[
S_{ \epsilon}= (-1)^n \mu _0 ^{-1} c ^{-1}  \mathbf{i} _{E ^\sharp  } \mathbf{i} _ u (\star B).
\]

In the case $ \chi ^E$ and $ \chi ^B$ are functions of $ \rho  $ and $s$, only the last term in \eqref{T_cont_1} is changed by replacing $ \chi ^E  \rightarrow \tilde \chi ^E$ and  $ \chi ^B  \rightarrow \tilde \chi ^B$ with 
\[
\tilde \chi ^E= \chi ^E - \rho  \partial _ \rho  \chi ^E- s  \partial _ s  \chi ^E \quad\text{and}\quad  \tilde \chi ^B= \chi ^B + \rho  \partial _ \rho  \chi ^E+ s  \partial _ s  \chi ^E,
\]
with a similar change $\varepsilon _0  \rightarrow \tilde \varepsilon $ and  $\mu ^{-1}   \rightarrow \tilde \mu ^{-1} $ in the last term of \eqref{T_example2}.

Regarding our earlier comment after \eqref{cov_e} that one can use either $\mathsf{g} $ or $ \mathsf{p}$ in $e$ to construct invariants from $E$ and $B$ (see also Remark \ref{no_p}), we indeed concretely see that in \eqref{linear_epsilon}, one can indifferently use $ \mathsf{g} $ or $ \mathsf{p}$, as we can write
\begin{equation}\label{from_g_to_p} 
|E| ^2 =  \mathsf{g} (E,E)=\mathsf{g} ^{ \alpha \beta }E_{ \alpha } E_{ \beta }= ({} ^{-1} \mathsf{p})^{ \alpha \beta }E_ \alpha E_ \beta,
\end{equation} 
similarly for $B$.
This is possible because $\mathbf{i} _uE=0$, i.e. $E \in (\operatorname{span}u)^\circ$ and $\mathsf{p}: (\operatorname{span}u)^\perp \times (\operatorname{span}u)^\perp \rightarrow \mathbb{R} $ is nondegenerate, so that we have the isomorphism $ \mathsf{p}_{ \alpha \beta }: (\operatorname{span}u)^\perp \rightarrow (\operatorname{span}u)^\circ$, to which we can associate its inverse, denoted ${}^{-1}\mathsf{p}: (\operatorname{span}u)^\circ \rightarrow (\operatorname{span}u)^\perp $ to be used in \eqref{from_g_to_p}.

\paragraph{Example 3: Nonlinear electromagnetic constitutive equations.} Consider the case where the energy density $ \epsilon _{\rm m}$ is given by
\begin{equation}\label{nonlinear_epsilon} 
\epsilon _{\rm m} ( \rho  , s , E, B, \mathsf{g} )= f( \rho  , s , I_1,I_2, I_3),
\end{equation} 
with the invariants $I_1=|E|^2= \mathsf{g} (E,E)$, $I_2= |B|^2=\mathsf{g} (B, B)$, and $I_3= \mathsf{g} (E,B)$. This expression extends equation \eqref{linear_epsilon} to the nonlinear case. For the general isotropic form of  $\epsilon _{\rm m} $ in terms of invariants, we refer to \cite{GrEr1966b}, and also to \S\ref{sec_GREE} for further discussion in the context of electromagnetic continua. As in the previous cases, spacetime covariance is satisfied. The stress-energy-momentum tensor in this formulation can be determined using the following relations:
\[
P= - 2\frac{\partial f}{\partial I_1} E - \frac{\partial f}{\partial I_3}B \quad\text{and}\quad M= - 2\frac{\partial f}{\partial I_2} B - \frac{\partial f}{\partial I_3}E,
\]
along with $D=P+E$, and $H=-M +B$ to substitute into equation \eqref{T_2}. The details are left to the reader.

\subsubsection{Relativistic electromagnetic fluid equations}\label{subsubsec_equ}

We present two equivalent forms of the relativistic electromagnetic fluid equations, each offering valuable insights into the distinctions from the non-charged fluid case.

\paragraph{Relativistic electromagnetic fluid equations I.} We deduce the system of equations and boundary conditions \eqref{spacetime_EL_fluid} for the Lagrangian densities \eqref{Lagrangian_epsilon} of electromagnetic fluids. 
For the first equation in \eqref{spacetime_EL_fluid}, the standard approach to derive the associated momentum (Euler-type) and energy conservation equations involves projecting the equation 
$ \operatorname{div} \mathfrak{T} =0$, where 
$ \mathfrak{T} $ is given in \eqref{T_2}, along and orthogonal to the fluid velocity.
Since it is already expressed in the usual decomposed form, the following expressions are straightforward to obtain:
\begin{equation}\label{equations_EMF} 
\!\!\!\begin{aligned} 
& \operatorname{div}( \epsilon_{\rm tot} u+ S) + \frac{1}{c^2} g( S, \nabla _uu) = \mathfrak{t}_{\rm em}: \nabla u -p \operatorname{div}u \\
&\frac{1}{c^2} (\epsilon_{\rm tot}+ p) \nabla _u u + \frac{1}{c^2} \mathsf{P} \operatorname{div}( S \otimes u + u \otimes S)+\frac{1}{c^2} u\mathfrak{t}_{\rm em}: \nabla u = \operatorname{div}\mathfrak{t}_{\rm em}  - \mathsf{P} \nabla p,
\end{aligned} 
\end{equation} 
with the total energy density $\epsilon_{\rm tot}$, the electromagnetic stress tensor $\mathfrak{t}_{\rm em}$ and a pressure function $p$ defined as:
\begin{align*}
\epsilon_{\rm tot}= \epsilon   - \frac{\partial \epsilon   }{\partial E} \cdot E, \quad \mathfrak{t}_{\rm em}&= -\frac{\partial \epsilon    }{\partial E  } \otimes E + B ^\sharp  \stackrel{\rm tr}{ \otimes }  \frac{\partial \epsilon    }{\partial B  }^\flat ,\quad  p = \frac{\partial \epsilon    }{\partial \rho  } \rho  + \frac{\partial \epsilon  }{\partial s} s + \frac{\partial \epsilon  }{\partial B  } : B -  \epsilon,\\
&=D ^\sharp  \otimes E + B ^\sharp  \otimes H.
\end{align*}
These equations are complemented by the continuity equations for mass and entropy:
\begin{equation}\label{cont_eq_EMF} 
\pounds _u (\rho  \mu(\mathsf{g}))  =0   \quad\text{and}\quad \pounds _u(s\mu(\mathsf{g}))=0.
\end{equation} 
The second equation in \eqref{spacetime_EL_fluid}, representing Maxwell's equations in matter, yields:
\begin{equation}\label{Maxwell_E_B} 
- \delta  \Big(\frac{\partial \mathfrak{e}  }{\partial F}\Big)^\flat =  \rho  q u  ^\flat
\end{equation} 
by \eqref{computation_derivatives_e}, which can subsequently be decomposed using the formulas in \S\ref{subsec_EMfields}. Here $\delta = (-1)^{(n+1)(k-1)}\star{\rm d}\star $ is the codifferential acting on $k$-forms, $k=2$.
Finally, it remains to derive the various boundary conditions by using the last two conditions in \eqref{spacetime_EL_fluid}. This is done in the next proposition.

\begin{proposition}\label{Prop_EMF} The system of equations for relativistic electromagnetic fluids obtained in \eqref{spacetime_EL_fluid} by the variational principle \eqref{Eulerian_VP_fluid} yield the equations \eqref{equations_EMF}, \eqref{cont_eq_EMF} and \eqref{Maxwell_E_B}, together with the boundary conditions
\begin{equation}\label{BC_EMF_1} 
\mathsf{g} (u , n) =0, \quad \mathfrak{t}_{\rm em}( \cdot , n ^\flat )= pn ^\flat, \quad \mathbf{i} _n D=0, \quad \mathbf{i} _n \mathbf{i} _u (\star H)=0 \quad\text{on}\quad \partial _{\rm cont} \mathcal{N}.
\end{equation} 
\end{proposition}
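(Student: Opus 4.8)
The plan is to specialize the general reduced Euler-Lagrange system on spacetime, equation \eqref{spacetime_EL_fluid}, to the concrete Lagrangian density \eqref{Lagrangian_epsilon}, using the partial derivatives already computed in \eqref{computation_derivatives} and the decomposition formulas for $\partialnew\ell/\partialnew F$ in terms of $D$ and $H$ from \S\ref{subsec_EMfields}. The four items of \eqref{spacetime_EL_fluid} split naturally: the first gives the bulk energy-momentum balance $\operatorname{div}\mathfrak{T}=0$; the second gives Maxwell's equations in matter and the boundary condition on $\partialnew\ell/\partialnew F$; the advection identities \eqref{advections_fluid} give \eqref{cont_eq_EMF}; and the third (the trace boundary term) gives the mechanical boundary conditions. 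So I would organize the proof as four steps, one per type of output.

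First I would treat the interior equations. For $\operatorname{div}\mathfrak{T}=0$ with $\mathfrak{T}$ in the decomposed form \eqref{T_2}, I project along $u$ and orthogonally to $u$ using the projector $\mathsf{P}$; since \eqref{T_2} is already written as $\epsilon_{\rm tot}\,c^{-2}u\otimes u^\flat + c^{-1}(u\otimes S_\epsilon + S_\epsilon^\sharp\otimes u^\flat) - \mathfrak{t}_{\rm em} + p\,\mathsf{P}$ (matching the abbreviations $\epsilon_{\rm tot}$, $\mathfrak{t}_{\rm em}$, $p$ introduced just before the proposition, with $\mathfrak{t}_{\rm em}=-\partial\epsilon/\partial E\otimes E + B^\sharp\stackrel{\rm tr}{\otimes}(\partial\epsilon/\partial B)^\flat = D^\sharp\otimes E + B^\sharp\otimes H$ by \eqref{D_H_epsilon}), the two projections of $\operatorname{div}\mathfrak{T}=0$ are a direct computation using $\mathsf{g}(u,u)=-c^2$, $\mathbf{i}_uE=\mathbf{i}_uB=0$, and $\operatorname{div}(\rho\mu(\mathsf{g}))u$-type identities; this yields exactly \eqref{equations_EMF}. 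The Maxwell half follows from the second equation of \eqref{spacetime_EL_fluid}, $\partialnew\ell/\partialnew A + {\rm d}(\partialnew\ell/\partialnew F)=0$: substituting $\partialnew\ell/\partialnew A = -q\,\mathbf{i}_w\varrho$ from \eqref{computation_derivatives} (equivalently $-\rho q\,\mathbf{i}_u\mu(\mathsf{g})$) and recognizing ${\rm d}\star(\partial\mathfrak{e}/\partial F)^\flat$ as a codifferential gives \eqref{Maxwell_E_B}. The advection equations \eqref{cont_eq_EMF} are immediate from \eqref{advections_fluid} and the remark following Proposition \ref{prop_Eulerian_VP_fluid}.

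Second, I would extract the boundary conditions from the last two lines of \eqref{spacetime_EL_fluid}. The condition $\mathsf{g}(u,n)=0$ is the statement, already noted in \S\ref{spacetime_reduction}, that $W=\partial_\lambda$ forces $w$ (hence $u$) parallel to $\partial_{\rm cont}\mathcal{N}$. The condition $i^*_{\partial_{\rm cont}\mathcal{N}}\,\partialnew\ell/\partialnew F=0$, combined with the decomposition \eqref{dellldF1}, $\partialnew\ell/\partialnew F = -c^{-1}\star(u^\flat\wedge D) - c^{-1}u^\flat\wedge H$, gives the two electromagnetic conditions $\mathbf{i}_n D = 0$ and $\mathbf{i}_n\mathbf{i}_u(\star H)=0$: pulling back the first term to $\partial_{\rm cont}\mathcal{N}$ and using $\mathsf{g}(u,n)=0$ isolates the normal component of $D$, and pulling back the second term isolates $\mathbf{i}_u(\star H)$ evaluated on the normal. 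Finally the mechanical condition comes from the trace boundary term: using \eqref{T_2} again and $\mathsf{g}(u,n)=0$ to kill the $u\otimes u^\flat$, $u\otimes S_\epsilon$, and off-boundary Poynting pieces, the surviving part of $\mathfrak{T}(\cdot,n^\flat)=0$ on $\partial_{\rm cont}\mathcal{N}$ is $-\mathfrak{t}_{\rm em}(\cdot,n^\flat)+p\,n^\flat=0$, i.e. $\mathfrak{t}_{\rm em}(\cdot,n^\flat)=p\,n^\flat$; one also checks that $S_\epsilon^\sharp\otimes u^\flat$ contracted against $n^\flat$ gives a term proportional to $\mathsf{g}(u,n)=0$.

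The main obstacle I anticipate is purely bookkeeping rather than conceptual: carefully carrying the $(-1)^n$ signs, the Hodge-star identities collected at the end of \S\ref{subsec_EMfields}, and the $u/n$ contractions through the boundary terms, so that the Poynting-flux contributions genuinely drop out on $\partial_{\rm cont}\mathcal{N}$ and the mechanical condition reduces cleanly to $\mathfrak{t}_{\rm em}(\cdot,n^\flat)=p\,n^\flat$ with no leftover electromagnetic surface flux. The conceptual content — that the variational boundary term $i^*_{\partial_{\rm cont}\mathcal{N}}(\operatorname{tr}(\mathfrak{T})\cdot\zeta)=0$ for all $\zeta$ is equivalent to $\mathfrak{T}(\cdot,n^\flat)=0$ on the boundary — is already established in the general Theorem \ref{spacetime_reduced_EL} and its Lorentzian-metric corollary \eqref{spacetime_EL_g}, so the proof is really a matter of substituting the explicit tensors and invoking those results.
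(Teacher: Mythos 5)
Your overall route is the paper's route: the interior equations are the two projections of $\operatorname{div}\mathfrak{T}=0$ along and orthogonal to $u$ (plus \eqref{Maxwell_E_B} and the advection laws), and the boundary conditions come from $\mathsf{g}(u,n)=0$, from $i^*_{\partial_{\rm cont}\mathcal{N}}\frac{\partialnew\ell}{\partialnew F}=0$ decomposed via $D$ and $H$, and from the trace boundary condition $\mathfrak{T}(\cdot,n^\flat)=0$. (For the electromagnetic pair, the paper works with the multivector incarnation \eqref{dellldF2} and the single relation $u^\flat\,\mathbf{i}_nD+(-1)^n\mathbf{i}_n\mathbf{i}_u(\star H)=0$, then contracts with $u$ to separate the two conditions; your term-by-term pullback of \eqref{dellldF1} is equivalent once you make that contraction explicit, so this is not an issue.)

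There is, however, one genuine gap in your handling of the mechanical boundary condition. The two Poynting terms in \eqref{T_2} do \emph{not} both die by $\mathsf{g}(u,n)=0$. With the contraction convention fixed by the local form $(\cdots)^\mu_\nu\,{\rm d}^nx_\mu=0$ (i.e.\ the contravariant slot is paired with $n^\flat$), the term $u\otimes S_\epsilon$ contributes $\mathsf{g}(u,n)\,S_\epsilon=0$, but the term $S_\epsilon^\sharp\otimes u^\flat$ contributes $(\mathbf{i}_nS_\epsilon)\,u^\flat$, which is \emph{not} proportional to $\mathsf{g}(u,n)$ — contrary to your last sentence. Its vanishing is a nontrivial step: one writes $\mathbf{i}_nS_\epsilon=(-1)^n c^{-1}\,\mathbf{i}_n\mathbf{i}_{E^\sharp}\mathbf{i}_u\star(\partial\epsilon/\partial B)^\flat=\mp c^{-1}\,\mathbf{i}_{E^\sharp}\,\mathbf{i}_n\mathbf{i}_u(\star H)$, and this is zero only because the electromagnetic boundary condition $\mathbf{i}_n\mathbf{i}_u(\star H)=0$ has already been established. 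So the order of the argument matters: you must derive $\mathbf{i}_nD=0$ and $\mathbf{i}_n\mathbf{i}_u(\star H)=0$ first and then invoke the latter to remove the residual normal Poynting flux $(\mathbf{i}_nS_\epsilon)\,u^\flat$ from $\mathfrak{T}(\cdot,n^\flat)=0$, leaving $-\mathfrak{t}_{\rm em}(\cdot,n^\flat)+p\,n^\flat=0$. You flagged the Poynting cancellation as a bookkeeping risk, but the mechanism you propose for it ($\mathsf{g}(u,n)=0$ alone) is wrong for this term; the rest of the proposal is sound.
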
  
\noindent\textbf{Proof.} First, we note that $ \mathsf{g}  (u,n)=0$ follows from the definition of the word-velocity in terms of the word-tube. Then, we note that $ i^*_{ \partial _{\rm cont}\mathcal{N} } \frac{\partialnew \ell}{\partialnew F}=0$ is equivalent to $ \mathbf{i} _{n^\flat } \frac{\partial \ell}{\partial F} =0$ which, by \eqref{dellldF2}, gives $ u ^\flat \mathbf{i} _nD+(-1)^n \mathbf{i} _n \mathbf{i} _u(\star H)=0$. Applying $ \mathbf{i} _u$, we get $\mathbf{i} _nD=0$, from which $\mathbf{i} _n \mathbf{i} _u(\star H)=0$ follows. Finally, using \eqref{T_2}, the condition $ \mathfrak{T} ( \cdot , n ^\flat )=0$ gives $ \frac{1}{c^2} \mathbf{i} _n S u ^\flat - \mathfrak{t}_{\rm em}( \cdot , n ^\flat ) + p n ^\flat =0$. We note however that $\mathbf{i} _n S= (-1)^n\mathbf{i} _n \mathbf{i} _{ E ^\sharp  } \mathbf{i} _u (\star H)$ is zero by the condition $\mathbf{i} _n \mathbf{i} _u(\star H)=0$ already derived, hence we have obtained all the boundary conditions in \eqref{BC_EMF_1}. $\qquad\blacksquare$

\medskip

It should be observed that the variational principle does not provide any boundary conditions for $F$, i.e., for $E$ and $B$. This deficiency is automatically removed when considering the physically relevant case in which the fluid is coupled to the gravitation theory and to the outer electromagnetic field via the junction conditions as will be considered in \S\ref{subsub_coupling_fluid} below.

\paragraph{Relativistic electromagnetic fluid equations II.} 
We now use the expression of $\mathfrak{T} $ given in \eqref{T_3}, rather than \eqref{T_2}, to rewrite the energy and momentum conservation equations \eqref{equations_EMF} in a way that makes the ponderomotive four-force explicitly appear. We decompose the stress-energy-momentum tensor given in \eqref{T_3} as $ \mathfrak{T} = \mathfrak{T} _{\rm f}+ \mathfrak{T} _{\rm F}$ with
\begin{align*}
\mathfrak{T} _{\rm f}&=\Big[\mathfrak{e} _{\rm m} \frac{1}{c^2}u \otimes u ^\flat+   \Big(   \frac{\partial \mathfrak{e} _{\rm m} }{\partial \rho  } \rho+  \frac{\partial \mathfrak{e} _{\rm m} }{\partial s} s- \mathfrak{e}   _{\rm m} \Big) \mathsf{P}\Big]\mu(\mathsf{g})\\
\mathfrak{T} _{\rm F}&=  \Big[ - u \otimes \frac{\partial \mathfrak{e} _{\rm m} }{\partial u} \cdot \mathsf{P} + \frac{\partial \mathfrak{e}  }{\partial F}   \stackrel{\rm tr}{\otimes} F -  \mathfrak{e}_{\rm M}  \delta\Big]\mu(\mathsf{g}).
\end{align*}
There are two key advantages to this formulation. First, $ \mathfrak{T} _{\rm f}$ is formally identical to the stress-energy momentum tensor of a fluid. Second, it facilitates the use of the Maxwell equations in matter when computing the divergence of $ \mathfrak{T} _{\rm F}$, which would be more challenging with the form of $ \mathfrak{T} $ given in \eqref{T_2}. While $ \mathfrak{T} _{\rm F}$ is symmetric, its first two terms may not be symmetric individually. Additionally, note that $ \mathfrak{T} _{\rm f}$ only slightly differs from ${}_{\rm m}{\rm T}$ in \eqref{T_M} by the presence of the additional term $ \frac{\partial \epsilon _{\rm m}}{\partial E} \cdot E$.

We shall thus consider the equations $ \operatorname{div} \mathfrak{T} =0$ written as $\operatorname{div} ^\nabla\mathfrak{T} _{\rm f}= \mathfrak{f}$, with ponderomotive force $ \mathfrak{f}= -\operatorname{div} ^\nabla\mathfrak{T} _{\rm F}$. For this computation, it is useful to write the Maxwell equation in matter, i.e., the second equation in \eqref{spacetime_EL_fluid}, in terms of $\bar{\mathfrak{e}}=\mathfrak{e}\mu(\mathsf{g})$ as:
\begin{equation}\label{Maxwell_m_e} 
{\rm d} \frac{\partialnew \bar{\mathfrak{e}}}{\partialnew F} + q \rho  \mathbf{i} _u \mu ( \mathsf{g} )=0.
\end{equation} 
Using \eqref{crucial_formula} in Lemma \ref{crucial_lemma} as well as \eqref{Maxwell_m_e} we can compute the divergence of $ \mathfrak{T} _{\rm F}$ by noting the equalities
\begin{align*} 
\operatorname{div} ^\nabla\Big(\frac{\partial \bar{\mathfrak{e}} }{\partial F}   \stackrel{\rm tr}{\otimes} F - \bar{\mathfrak{e}}_{\rm M}  \delta\Big) &=  \frac{\partial \bar{\mathfrak{e}}}{\partial F} : \nabla F - \mathbf{i} _{\_\,} F \wedge {\rm d} \frac{\partialnew \bar{\mathfrak{e}}}{\partialnew F} - {\rm d} \bar{\mathfrak{e}}_{\rm M}\\
&=\frac{\partial \bar{\mathfrak{e}}}{\partial F} : \nabla F + \mathbf{i} _{\_\,} F \wedge q \rho  \mathbf{i} _u \mu ( \mathsf{g} ) - \frac{\partial \bar{\mathfrak{e}}_{\rm M}}{\partial F} : \nabla F\\
&= \frac{\partial \bar{\mathfrak{e}}_{\rm m}}{\partial F} : \nabla F - q \rho  \mathbf{i} _u F \mu( \mathsf{g} ).
\end{align*} 
Therefore, one gets the equations in the form
\begin{equation}\label{div_f_form} 
\operatorname{div} \mathfrak{T} _{\rm f}= \mathfrak{f}, \qquad \mathfrak{f}=  \Big[\operatorname{div}( u \otimes \omega _{\rm m}) -\frac{\partial \mathfrak{e}_{\rm m}}{\partial F} : \nabla F + q \rho  \mathbf{i} _u F \Big]\mu ( \mathsf{g} )
\end{equation} 
with $ \omega _{\rm m}$ the one-form defined by
\[
\omega_{\rm m} = \frac{\partial \mathfrak{e} _{\rm m} }{\partial u} \cdot \mathsf{P} = - (-1)^n\frac{1}{c^2}  \Big( \mathbf{i} _{ \frac{\partial \epsilon _{\rm m}}{\partial E}} \mathbf{i} _u (\star B)  + \mathbf{i} _{ E ^\sharp  } \mathbf{i} _u \star \Big( \frac{\partial \epsilon _{\rm m}}{\partial B}\Big) ^\flat \Big).
\]
The second equality follows from writing $ \mathfrak{e}_{\rm m}( \rho  , s, u, F, \mathsf{g} )= \epsilon _{\rm m}(\rho  , s, E, B, \mathsf{g} )$ and computing $\frac{\partial \mathfrak{e} _{\rm m} }{\partial u}$. The writing \eqref{div_f_form} has the advantage to make clearly appear the  the relativistic Lorentz force term $\rho  q\, \mathbf{i} _u F  \mu( \mathsf{g} )$ as well as the force terms due the dependence of the matter energy on the electromagnetic field, all summing up to the ponderomotive force. Finally \eqref{div_f_form} yields the energy and momentum conservation equations \eqref{equations_EMF} in the alternative form
\begin{equation}\label{equations_EMF_2}
\begin{aligned}  
&- \operatorname{div}( \mathfrak{e}_{\rm m} u) - p_{\rm m} \operatorname{div}u  = \mathfrak{f} \cdot u = - \omega ( \nabla _u u) - \frac{\partial \mathfrak{e}_{\rm m}}{\partial F} : \nabla _uF\\
&\frac{1}{c^2} ( \mathfrak{e}_{\rm m} + p_{\rm m} )\nabla _uu + \mathsf{P} \nabla p_{\rm m} = \mathsf{P} \mathfrak{f} =\mathsf{P}\big( \rho  \nabla _u ( \omega / \rho  ) -\frac{\partial \mathfrak{e}_{\rm m}}{\partial F} : \nabla F\big)+ q \rho  \mathbf{i} _u F,
\end{aligned} 
\end{equation}
with $p_{\rm m}=  \frac{\partial \mathfrak{e} _{\rm m} }{\partial \rho  } \rho+  \frac{\partial \mathfrak{e} _{\rm m} }{\partial s} s- \mathfrak{e}   _{\rm m}$. When comparing with \eqref{equations_EMF} one should keep in mind the relations
\begin{align*}
 \epsilon _{\rm tot}&= \mathfrak{e}_{\rm m}+ \mathfrak{e}_{\rm M} + D ^\sharp  \cdot E= \mathfrak{e}_{\rm m}+P ^\sharp \cdot E + \frac{1}{2}|E|^2+\frac{1}{2}|B|^2\\
 p&= p_{\rm m}- \mathfrak{e}_{\rm M} + H^\sharp  :B=p_m - M^\sharp : B+ \frac{1}{2}|E|^2+\frac{1}{2}|B|^2.
\end{align*}

In the Euler-Maxwell case, using \eqref{Euler_Maxwell}, the system \eqref{equations_EMF_2} reduces to
\[
\operatorname{div}(\epsilon _0 u)+ p_0 \operatorname{div}u=0 \quad\text{and}\quad \frac{1}{c^2} (  \epsilon _0 + p_0 )\nabla _uu + \mathsf{P} \nabla p_0 =q \rho  \mathbf{i} _u F,
\]
with $ \epsilon _0= \rho(c^2+ e(\rho, \eta))$ and $p_0= \rho^2\frac{\partial e}{\partial \rho}$ the total energy density and pressure of the fluid part of the system.

\subsubsection{Coupling with the Einstein equations and junction conditions}\label{subsub_coupling_fluid}

This coupling is obtained by applying Theorem \ref{main} with $\ell$ the Lagrangian density of the electromagnetic fluid given in \eqref{Lagrangian_epsilon} (or its equivalent form \eqref{Lagrangian_e_F}) and $\ell_{\rm M}$ the Maxwell Lagrangian. As shown in \eqref{full_system} this principles gives the Einstein and Maxwell equations in $ \mathcal{N} ^-$ and $ \mathcal{N} ^+$, i.e.,
\begin{equation}\label{full_system_fluid}
\left\{
\begin{array}{l}
\displaystyle\vspace{0.2cm} Ein (\mathsf{g}^-) \mu (\mathsf{g}^-) =  \chi \mathfrak{T}^- \quad \text{on} \quad    \mathcal{N}^- \qquad \quad \;\;   Ein(\mathsf{g}^+) = \chi \mathfrak{T} _{\rm M}^+ \quad   \text{on} \quad    \mathcal{N}^+\\
\displaystyle \delta \Big(\frac{\partial \epsilon \;\;}{\partial F^-}\Big)^\flat  +  \rho q  u^\flat=0\quad   \text{on} \quad    \mathcal{N}^-\qquad \quad \;\;\;    \delta\frac{\partial \epsilon_{\rm_M}}{\partial F^+}=0 \quad   \text{on} \quad    \mathcal{N}^+,
\end{array}
\right.
\end{equation}
together with their junction conditions at the boundary of the fluid. Here $ \mathfrak{T} ^-$ is the stress-energy-momentum tensor found in \eqref{T_EMF} and $ \mathfrak{T}_{\rm M}^+$ is given in \eqref{general_Maxwell_SEM}-\eqref{Maxwell_T}   Using the Gauss-Codazzi
equations on $\partial_{\rm cont} \mathcal{N}$ the Israel-Darmois junction conditions in \eqref{full_system} yield the O'Brien-Synge conditions $[ Ein( \cdot , n)]=0$. From the Einstein equations on both sides of $ \partial _{\rm cont} \mathcal{N} $ in \eqref{full_system_fluid} and from the expression of $ \mathfrak{T} ^-$ and $ \mathfrak{T} ^+_{\rm M}$, we get
\begin{equation}\label{interm_BC} 
\frac{1}{c^2} \mathbf{i} _n[S] u^\flat - [ \mathfrak{t}] ( \cdot , n ^\flat )+ [p] n ^\flat =0,
\end{equation}
where $[ \cdot ]= (\cdot) ^+ - (\cdot )^-$, with $S^+= (-1)^n \mathbf{i} _{E^+} \mathbf{i} _u (\star B^+)$, $S^-=(-1)^n \mathbf{i} _{E^-} \mathbf{i} _u (\star H ^-)$, $ \mathfrak{t}^+= E ^+  \otimes E^+ + B ^+   \stackrel{\rm tr}\otimes B^+$, $ \mathfrak{t}^- = \mathsf{t}_{\rm em} = D^-   \otimes E^- + B ^-  \stackrel{\rm tr}{\otimes} H^-$, $p^+= \frac{1}{2} (|E^+| ^2 + |B^+| ^2 )$, $p^-= p$. Applying $ \mathbf{i} _u$ on \eqref{interm_BC} yields $ \mathbf{i} _n[S]=0$, so that we are left with 
\begin{equation}\label{BC_EM_0}
[ \mathfrak{t}] ( \cdot , n ^\flat )+ [p] n ^\flat =0.
\end{equation} 
Condition $ i_{ \partial _{\rm cont} \mathcal{N}} ^* [F]=0$ is readily seen to be equivalent to $ \mathbf{i} _n (\star [F])=0$, thereby yielding
\begin{equation}\label{BC_EM_1} 
\mathbf{i} _n \mathbf{i} _u (\star [E])=0 \quad\text{and}\quad \mathbf{i} _n[B]=0,
\end{equation} 
while $  i_{ \partial _{\rm cont} \mathcal{N}} ^*  [ \frac{\partialnew \ell}{\partialnew F} ]=0$ is equivalent to $ \mathbf{i} _{ n ^\flat } [\frac{\partial \ell}{\partial F}]=0$, yielding 
\begin{equation}\label{BC_EM_2} 
\mathbf{i} _n \mathbf{i} _u (\star [H])=0 \quad\text{and}\quad \mathbf{i} _n[D]=0.
\end{equation} 
From the first condition in \eqref{BC_EM_2}, we have $ \mathbf{i} _n [S]= -(-1)^n \mathbf{i} _{[E]} \mathbf{i} _n \mathbf{i} _u(\star H^-)$. This expression vanishes if the first condition in \eqref{BC_EM_1} holds. Therefore, the boundary condition $ \mathbf{i} _n[S]=0$ obtained earlier doesn't need to be imposed since it follows from \eqref{BC_EM_1} and \eqref{BC_EM_2}.
Also, using \eqref{BC_EM_1} and \eqref{BC_EM_2}, we can further write $[ \mathfrak{t}]( \cdot, n ^\flat )$ in \eqref{BC_EM_0} as $\mathbf{i} _n D ^- [E] + \mathbf{i} _n B^- \!\!:\![\mathbf{i} _{\_\,}H]$.
In conclusion, we get the following result regarding the variational derivation of the equations and junction conditions for electromagnetic fluids coupled to gravity.

\begin{proposition}\label{GREMF} The system of equations for relativistic electromagnetic fluids obtained in \eqref{full_system} by the variational principle \eqref{total_action_variation} yield the equations \eqref{equations_EMF} (equivalently written as \eqref{equations_EMF_2}) and  \eqref{cont_eq_EMF} on $ \mathcal{N} ^-$, as well as \eqref{full_system_fluid}, together with the boundary and junction conditions
\begin{equation}\label{BC_EMF_2} 
\begin{aligned} 
&\mathsf{g} (u , n) =0, \qquad & &  [ \mathfrak{t}] ( \cdot , n ^\flat )+ [p] n ^\flat =0& & \\
&\mathbf{i} _n \mathbf{i} _u (\star [E])=0, \qquad & & \mathbf{i} _n [B]=0 \qquad& & \qquad \text{on}\qquad \partial _{\rm cont} \mathcal{N}.\\
&\mathbf{i} _n [D]=0, \qquad & & \mathbf{i} _n \mathbf{i} _u (\star [H])=0& & 
\end{aligned}
\end{equation} 

\end{proposition}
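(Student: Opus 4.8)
The plan is to derive the proposition as a specialization of Theorem \ref{main}. First I would check that the electromagnetic fluid fits its hypotheses: the material Lagrangian density \eqref{L_EMF}, taken in the form $\mathscr{L}'$ built on the spacetime potential $A$ (see \S\ref{material_spacetime_A}), is spacetime covariant, and under the isotropy assumption \eqref{isotropy_E} it is materially covariant with respect to $\operatorname{Diff}_{\partial_\lambda}(\mathcal{D})$; by Remark \ref{Diff_KW} this restricted covariance suffices to apply Theorem \ref{main} with reference fields $W=\partial_\lambda$, $K=(R,S)$ and spacetime tensor $\gamma=\mathsf{g}^-$. The associated spacetime Lagrangian is \eqref{Lagrangian_epsilon}, equivalently \eqref{Lagrangian_e_F}, and $\ell_{\rm M}$ is the Maxwell Lagrangian \eqref{Maxwell_Lagrangian}. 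Inserting these into the system \eqref{full_system}, and noting that here $\kappa=(\varrho,\varsigma)$, one reads off directly: the interior and exterior Einstein equations, which become \eqref{full_system_fluid} once one recognizes that $2\mathsf{g}^-\!\cdot\!\frac{\partial\ell}{\partial\mathsf{g}^-}$ is the stress-energy-momentum tensor $\mathfrak{T}^-$ of \eqref{T_EMF}/\eqref{T_2} and $2\mathsf{g}^+\!\cdot\!\frac{\partial\ell_{\rm M}}{\partial\mathsf{g}^+}=\mathfrak{T}^+_{\rm M}$ is \eqref{general_Maxwell_SEM}--\eqref{Maxwell_T}; the interior and exterior Maxwell equations (the second line of \eqref{full_system_fluid}), obtained from $\frac{\partialnew\ell}{\partialnew A^-}+{\rm d}\frac{\partialnew\ell}{\partialnew F^-}=0$ via the partial derivatives \eqref{computation_derivatives_e}; and the advection equations $\pounds_w\varrho=0$, $\pounds_w\varsigma=0$, which are \eqref{cont_eq_EMF} after rewriting through \eqref{relation_fluid_1}.

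Second, I would turn the interior divergence equation $\operatorname{div}\mathfrak{T}^-=0$ into the energy and momentum equations \eqref{equations_EMF} by projecting along and orthogonal to $u$ exactly as in \S\ref{subsubsec_equ}; the equivalence with the alternative form \eqref{equations_EMF_2} is the splitting $\mathfrak{T}=\mathfrak{T}_{\rm f}+\mathfrak{T}_{\rm F}$ of \eqref{T_3}, combined with Lemma \ref{crucial_lemma} and the Maxwell equation \eqref{Maxwell_m_e} used inside $\operatorname{div}\mathfrak{T}_{\rm F}$ to produce the ponderomotive force. This is algebra already carried out, so no new computation is needed here.

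Third, the substantive part is the junction and boundary conditions. From the conclusions of Theorem \ref{main} I have the preliminary conditions $i^*_{\partial_{\rm cont}\mathcal{N}}[\mathsf{g}]=0$ and $i^*_{\partial_{\rm cont}\mathcal{N}}[A]=0$, together with $[K]=0$, $i^*_{\partial_{\rm cont}\mathcal{N}}[F]=0$, and $i^*_{\partial_{\rm cont}\mathcal{N}}\big[\frac{\partialnew\ell}{\partialnew F}\big]=0$, plus the kinematic identity $\mathsf{g}(u,n)=0$ from the definition of the world-velocity. The Israel-Darmois condition $[K]=0$ together with the Gauss-Codazzi equations on $\partial_{\rm cont}\mathcal{N}$ gives the O'Brien-Synge condition $[Ein(\cdot,n)]=0$; substituting the Einstein equations of \eqref{full_system_fluid} and the explicit $\mathfrak{T}^-$, $\mathfrak{T}^+_{\rm M}$ turns this into \eqref{interm_BC}, with $S^\pm$, $\mathfrak{t}^\pm$, $p^\pm$ identified as in the surrounding text. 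Contracting \eqref{interm_BC} with $u$, and using $\mathbf{i}_u n^\flat=\mathsf{g}(u,n)=0$ together with the fact that each $\mathfrak{t}^\pm$ is spatial (annihilated by $\mathbf{i}_u$), yields $\mathbf{i}_n[S]=0$ and leaves \eqref{BC_EM_0}. Next, $i^*_{\partial_{\rm cont}\mathcal{N}}[F]=0$ is equivalent to $\mathbf{i}_n(\star[F])=0$, which by the decomposition \eqref{decomposition_F} splits into \eqref{BC_EM_1}; likewise $i^*_{\partial_{\rm cont}\mathcal{N}}\big[\frac{\partialnew\ell}{\partialnew F}\big]=0$ is equivalent to $\mathbf{i}_{n^\flat}\big[\frac{\partial\ell}{\partial F}\big]=0$, which by \eqref{dellldF1}--\eqref{dellldF2} splits into \eqref{BC_EM_2}. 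A final consistency check shows that $\mathbf{i}_n[S]$, re-expressed through the relations among $S$, $E$, and $H$, vanishes automatically once \eqref{BC_EM_1} and \eqref{BC_EM_2} hold, so the earlier $\mathbf{i}_n[S]=0$ is redundant; collecting \eqref{BC_EM_0}, \eqref{BC_EM_1}, \eqref{BC_EM_2}, and $\mathsf{g}(u,n)=0$ then gives exactly \eqref{BC_EMF_2}.

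The main obstacle I anticipate is the bookkeeping in this last block: keeping the Hodge-star and index conventions consistent between the interior side (where in general $D\ne E$ and $H\ne B$) and the vacuum exterior side; confirming that the electromagnetic contributions in the world-tube variation cancel as required by equation \eqref{important} in the proof of Theorem \ref{main}, which itself relies on the boundary condition $\frac{\partial\ell}{\partial A^-_\mu}\,n^-_\mu=0$, valid here because $\frac{\partial\ell}{\partial A}=-q\varrho w$ is proportional to $u$ and $\mathsf{g}(u,n)=0$; and tracking the several cross-cancellations, most notably the redundancy of $\mathbf{i}_n[S]=0$, so that the final list \eqref{BC_EMF_2} contains exactly the stated conditions and no spurious ones.
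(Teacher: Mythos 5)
Your proposal is correct and follows essentially the same route as the paper's own derivation in \S\ref{subsub_coupling_fluid}: specialization of Theorem \ref{main} to the fluid Lagrangian \eqref{Lagrangian_epsilon}, Gauss--Codazzi to convert the Israel--Darmois condition into the O'Brien--Synge condition and hence \eqref{interm_BC}, contraction with $u$ to isolate $\mathbf{i}_n[S]=0$ and \eqref{BC_EM_0}, the splitting of the two electromagnetic jump conditions into \eqref{BC_EM_1}--\eqref{BC_EM_2}, and the observation that $\mathbf{i}_n[S]=0$ is then redundant. The points you flag as potential obstacles (the boundary condition $\frac{\partial\ell}{\partial A^-_\mu}n^-_\mu=0$ underlying \eqref{important}, and the cross-cancellations) are exactly the ones the paper handles, so no gap remains.
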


\subsection{General relativistic electromagnetic elasticity}\label{sec_GREE}

The treatment of relativistic electromagnetic elasticity follows in almost a straightforward way by combining the treatment of electromagnetic fluids made above together with the treatment of relativistic elasticity in \cite{GB2024}. Moreover, we can directly consider a general setting of relativistic continuum mechanics that encompass both fluid and elasticity, see \cite[\S6.3]{GB2024}. We shall therefore only mention the results while leaving the details to the interested reader.
The description of relativistic elasticity requires the introduction of an additional material tensor, a Riemannian metric $G_0$ on $\mathcal{B}$. From it, the $2$-covariant symmetric positive tensor $G=\pi_\mathcal{B}^*G_0$ is defined on $\mathcal{D}= [\lambda_0,\lambda_1]\times\mathcal{B}$. This tensor $G$ allows the definition of the relativistic Cauchy deformation tensor $\mathsf{c}$ defined by
\[
\mathsf{c}=\Phi_*G,
\] 
see \cite{GB2024} for the link with the original definitions in \cite{GrEr1966a,ErMa1990,Ma1978a}.

\subsubsection{Lagrangian variational formulation}

The Lagrangian to be used in Hamilton's principle originates from the same systematic construction as the one for electromagnetic fluids detailed in \S\ref{Lagrangian_setps}. One gets a Lagrangian density with the additional reference variable $G$, i.e., \begin{equation}\label{L_EM_elasticity}
\mathscr{L}( j^1 \Phi , \mathcal{A} , {\rm d} \mathcal{A} ,\partial _ \lambda , R,S,G,\mathsf{g} \circ \Phi ),
\end{equation}
still given by \eqref{L_EMF} in which the ``energy" expression $\mathscr{E} \big( \cdot , \cdot  , \mathcal{E} ,  \mathcal{B} , C\big)$ is replaced by $\mathcal{W} \big( \cdot , \cdot  , \mathcal{E} ,  \mathcal{B} , G_0,C\big)$. This Lagrangian density is spacetime covariant, with associated convected Lagrangian given as $ \mathcal{L} (  \mathcal{A} , {\rm d} \mathcal{A}, \partial _ \lambda, R,S,G,\Gamma )$ whose expression can be easily found. Furthermore, \eqref{L_EM_elasticity} is materially covariant with respect to $ \operatorname{Diff}_{\partial _ \lambda }( \mathcal{D} )$ if and only if the function $ \mathcal{W} $ satisfies the isotropy condition
\begin{equation}\label{isotropy_E} 
\mathcal{W}( \rho  \circ \psi  , \eta \circ\psi , \psi ^* \mathcal{E} , \psi ^* \mathcal{B} , \psi^*G_0,\psi ^* C)= \mathcal{W}( \rho  \circ \psi  , \eta , \mathcal{E} , \mathcal{B} , G_0,C) \circ \psi ,
\end{equation}
for all $ \psi \in \operatorname{Diff}( \mathcal{B} )$. In this case, one gets the spacetime Lagrangian density
\begin{equation}\label{ell_continua}
\begin{aligned}
\!\!\ell(  A, {\rm d}A, w, \varrho , \varsigma , \mathsf{c},\mathsf{g})&= - \rho  \left(  c^2 + \varpi( \rho , \eta  , E,B, \mathsf{c}, \mathsf{p})  + q A \cdot u \right)  \mu (\mathsf{g})- \frac{1}{2} F\wedge\star F\\
&=-\epsilon( \rho , \eta  , E,B, \mathsf{c}, \mathsf{g}) \mu (\mathsf{g}) - q\varrho A\cdot w\\
&=-\mathfrak{e}( \rho , s,u  ,F, \mathsf{c}, \mathsf{g}) \mu (\mathsf{g}) - q\varrho A\cdot w,
\end{aligned}
\end{equation}
where the spacetime ``energy" function $\varpi$ is
\begin{equation}\label{def_varpi} 
\varpi (\rho , \eta , E,B, \mathsf{c}, \mathsf{p})= \mathcal{W}( \rho \circ \Phi  , \eta \circ \Phi ,  \Phi ^*E, \Phi ^*B, \Phi ^*  \mathsf{c} , \Phi ^*  \mathsf{p} ) \circ \Phi ^{-1},
\end{equation}
and we introduced the densities $\varepsilon$ and $\mathfrak{e}$ as earlier for fluids in \eqref{Lagrangian_epsilon} and \eqref{simpler_F}, now with dependence on $\mathsf{c}$.
The dependence of $\varpi$ on $\mathsf{p}$ can be replaced by a dependence on $\mathsf{g}$, because of the properties $ \mathbf{i} _uE=0$, $ \mathbf{i} _uB=0$, and $\mathbf{i}_u\mathsf{c}=0$, see also Remark \ref{GB2024_VS_2025}.

The general results stated in Theorem \ref{spacetime_reduced_EL} and \ref{convective_reduced_EL} are directly applicable to this Lagrangian. In particular, we get the following Eulerian variational principle and spacetime reduced Euler-Lagrange equations by direct application of \eqref{Eulerian_VP}. The result extends Theorem \ref{prop_Eulerian_VP_fluid} to continua as follows:

\begin{proposition}[Variational formulation for electromagnetic continua]\label{prop_Eulerian_VP_continua} The Eulerian variational formulation for the relativistic electromagnetic continua takes the form\color{black} 
\begin{equation}\label{Eulerian_VP_continua}
\begin{aligned} 
&\!\!\left. \frac{d}{d\varepsilon}\right|_{\varepsilon=0}\int_{ \mathcal{N}_ \varepsilon } \ell\big( A_ \varepsilon , {\rm d} A_ \varepsilon , w_ \varepsilon , \varrho  _ \varepsilon , \varsigma _ \varepsilon , \mathsf{c}_\varepsilon,\mathsf{g} \big)=0 \quad \text{for variations}\\
& \delta \mathcal{N} = \zeta |_{ \partial \mathcal{N} } \big/T \partial \mathcal{N} , \;\; \delta A = - \pounds _ \zeta A + 
\deltabar A,\;\; \delta w = - \pounds _ \zeta w\phantom{\int_A}\\
& \delta \varrho = - \pounds _ \zeta \varrho, \;\;   \delta \varsigma = - \pounds _ \zeta \varsigma, \;\;   \delta \mathsf{c} = - \pounds _ \zeta \mathsf{c},
\end{aligned} 
\end{equation}\color{black} 
where $ \zeta$ is an arbitrary vector field on $ \mathcal{N} $ such that $ \zeta |_{ \Phi ( \lambda _0, \mathcal{B} )}= \zeta |_{ \Phi ( \lambda _1, \mathcal{B} )}=0$ and $\deltabar A$ is an arbitrary one-form on $ \mathcal{N} $ such that $\deltabar A|_{ \Phi ( \lambda _0, \mathcal{B} )}= \deltabar A |_{ \Phi ( \lambda _1, \mathcal{B} )}=0$.

The critical conditions associated to \eqref{Eulerian_VP_fluid} are
\begin{equation}\label{spacetime_EL_continua} 
\!\!\!\!\!\!\!\left\{
\begin{array}{l}
\displaystyle\vspace{0.2cm}\!\!\operatorname{div}^ \nabla \!\Big( \Big(  \ell - \varrho \frac{\partial \ell}{\partial \varrho }- \varsigma  \frac{\partial \ell}{\partial \varsigma  } \Big)  \delta  + w \otimes \frac{\partial \ell}{\partial w} - 2  \frac{\partial \ell}{\partial \mathsf{c}}\cdot \mathsf{c}-  \frac{\partial \ell}{\partial A } \otimes  A  -  \frac{\partial \ell}{\partial F }\stackrel{\rm tr}{ \otimes }F\Big)=0\\
\displaystyle\vspace{0.2cm}\!\! \frac{\partialnew \ell}{\partialnew A} + {\rm d}  \frac{\partialnew \ell}{\partialnew F}=0, \qquad  \qquad i^*_{ \partial _{\rm cont}\mathcal{N} } \frac{\partialnew \ell}{\partialnew F}=0\\
\displaystyle\vspace{0.2cm} \!\!\Big( \!\Big(  \ell - \varrho \frac{\partial \ell}{\partial \varrho }- \varsigma  \frac{\partial \ell}{\partial \varsigma  } \Big)  \delta  + w \otimes \frac{\partial \ell}{\partial w} - 2  \frac{\partial \ell}{\partial \mathsf{c}}\cdot \mathsf{c}-  \frac{\partial \ell}{\partial A } \otimes  A  -  \frac{\partial \ell}{\partial F }\stackrel{\rm tr}{ \otimes }F\Big) ( \cdot ,n ^\flat )=0\\
\!\!\quad\text{on $\textcolor{black}{\partial_{\rm cont} \mathcal{N}}$}\hspace{-0.5cm}
\end{array}
\right.
\end{equation}
and the variables $w$, $ \varrho $, $ \varsigma $ satisfy
\begin{equation}\label{advections_cont} 
\pounds _w \varrho =0,\quad  \pounds _w \varsigma  =0 ,\quad\text{and}\quad \pounds _w \mathsf{c}  =0.
\end{equation} 
\end{proposition}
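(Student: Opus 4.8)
The plan is to obtain Proposition~\ref{prop_Eulerian_VP_continua} as a direct specialization of Theorem~\ref{spacetime_reduced_EL} (in its Lorentzian-metric form). First I would fix the dictionary: take the spacetime tensor $\gamma$ to be the Lorentzian metric $\mathsf{g}$ and collect the reference material tensors as $W=\partial_\lambda$ together with $K=(R,S,G)$, whose push-forwards along the world-tube are $w=\Phi_*\partial_\lambda$ and $\kappa=(\varrho,\varsigma,\mathsf{c})=(\Phi_*R,\Phi_*S,\Phi_*G)$. With this dictionary the material Lagrangian density \eqref{L_EM_elasticity} is exactly of the type treated in Theorem~\ref{spacetime_reduced_EL}, it is spacetime covariant (manifest from \eqref{ell_continua}), and it is materially covariant with respect to the isotropy subgroup $\operatorname{Diff}_{\partial_\lambda}(\mathcal{D})$ precisely under the isotropy hypothesis \eqref{isotropy_E}; by Remark~\ref{Diff_KW} this suffices to define the reduced spacetime Lagrangian $\ell$, so all four equivalences of the theorem apply and one reads off Proposition~\ref{prop_Eulerian_VP_continua} by transcription.

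Next I would translate the constrained variations and the reduced equations. The constraint $\delta\kappa=-\pounds_\zeta\kappa$ in \eqref{Eulerian_VP} splits componentwise into $\delta\varrho=-\pounds_\zeta\varrho$, $\delta\varsigma=-\pounds_\zeta\varsigma$, $\delta\mathsf{c}=-\pounds_\zeta\mathsf{c}$, giving \eqref{Eulerian_VP_continua}; these are the Eulerian variations induced by the free world-tube variation through $\zeta=\delta\Phi\circ\Phi^{-1}$. For the equations I would invoke the Lorentzian-metric corollary \eqref{spacetime_EL_g}, observing that spacetime covariance forces $\partial^\nabla\ell/\partial x=0$ (Lemma~\ref{spacetime_mat_ell}) and that, with $\nabla$ the Levi-Civita connection of $\mathsf{g}$, one has $\nabla\mathsf{g}=0$, so the right-hand side of the divergence equation vanishes. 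The only genuine computation is to evaluate $\frac{\partial\ell}{\partial\kappa}\!\therefore\!\widehat{\kappa}$ on the heterogeneous list $\kappa=(\varrho,\varsigma,\mathsf{c})$ using \eqref{hat_kappa}: for the two top-degree forms $\varrho,\varsigma$ one gets multiples of the Kronecker delta, $\frac{\partial\ell}{\partial\varrho}\!\therefore\!\widehat{\varrho}=\varrho\,\frac{\partial\ell}{\partial\varrho}\,\delta$ and similarly for $\varsigma$ (the top-form identity already used for fluids in \cite{GB2024}), while for the symmetric $(0,2)$ tensor $\mathsf{c}$ both terms of $\widehat{\mathsf{c}}$ survive and one obtains $\frac{\partial\ell}{\partial\mathsf{c}}\!\therefore\!\widehat{\mathsf{c}}=2\,\frac{\partial\ell}{\partial\mathsf{c}}\cdot\mathsf{c}$. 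Substituting these into \eqref{spacetime_EL_g} produces the momentum/energy equation and the boundary condition displayed in \eqref{spacetime_EL_continua}, while the Euler-Lagrange equation and boundary condition for $A$ carry over verbatim.

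Finally I would dispose of the advection and tangency statements. The identities \eqref{advections_cont} are exactly $\pounds_w\kappa=0$ from \eqref{spacetime_EL}, which holds because the reference fields satisfy $\pounds_{\partial_\lambda}R=\pounds_{\partial_\lambda}S=0$ (condition \eqref{condition_R_S}) and $\pounds_{\partial_\lambda}G=0$ (since $G=\pi_\mathcal{B}^*G_0$), together with the commutation $\pounds_w(\Phi_*K)=\Phi_*(\pounds_{\partial_\lambda}K)$; the tangency of $w$, hence of $u$, to $\partial_{\rm cont}\mathcal{N}$ follows from $\partial_\lambda$ being tangent to $[\lambda_0,\lambda_1]\times\partial\mathcal{B}$. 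I expect the only delicate point to be the bookkeeping of the $\widehat{(\cdot)}$ contractions for the mixed collection $(\varrho,\varsigma,\mathsf{c})$ — in particular getting the factor $2$ on the $\mathsf{c}$-term right (it stems from the absence of the $1/2$ symmetrization factor in the derivative convention for symmetric $2$-tensors, in contrast to $2$-forms) and confirming that the density-type terms collapse to multiples of $\delta$; everything else is a routine reading of Theorem~\ref{spacetime_reduced_EL} through the above dictionary, exactly parallel to the fluid case of Proposition~\ref{prop_Eulerian_VP_fluid}.
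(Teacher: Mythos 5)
Your proposal is correct and follows exactly the route the paper takes: Proposition \ref{prop_Eulerian_VP_continua} is obtained as a direct specialization of Theorem \ref{spacetime_reduced_EL} (in its Lorentzian-metric form) with $W=\partial_\lambda$, $K=(R,S,G)$, $\gamma=\mathsf{g}$, the only computation being the evaluation of $\frac{\partial\ell}{\partial\kappa}\!\therefore\!\widehat{\kappa}$ on the components, which you carry out correctly (the density terms collapsing to multiples of $\delta$ and the factor $2$ on the $\mathsf{c}$-term coming from the two surviving, equal terms of $\widehat{\mathsf{c}}$). Your treatment of the advection equations via $\pounds_{\partial_\lambda}K=0$ and push-forward is likewise the intended argument.
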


\medskip 

We note that equations \eqref{advections_fluid} are equivalently written in terms of the world-velocity, the proper mass and entropy density as
\begin{equation}\label{cont_eq_EMC}
\pounds _u (\rho  \mu(g))  =0,\quad \pounds _u (s  \mu(g))   =0 ,\quad\text{and}\quad \pounds _u \mathsf{c}  =0.
\end{equation}

\subsubsection{Stress-energy-momentum tensor for electromagnetic continua}\label{SEM_el}

The expression of the stress-energy-momentum tensor is obtained as in Proposition \ref{end_of_controversy}, now taking account of the dependence on the relativistic Cauchy deformation tensor $\mathsf{c}$. We consider the various expressions of the Lagrangian density as given in \eqref{ell_continua}.

\begin{proposition}[Stress-energy-momentum tensor for electromagnetic continua]\label{end_of_controversy_continua} Three equivalent formulations of the stress-energy-momentum tensor for electromagnetic continua are obtained by adding the term $-\mathfrak{t}_{\rm el}\,\mu(\mathsf{g})$, where
\[
\mathfrak{t}_{\rm el}\mu(\mathsf{g})= 2  \frac{\partial \ell}{\partial \mathsf{c}}\!\cdot\!\mathsf{c}, \qquad 
\mathfrak{t}_{\rm el}=-2 \frac{\partial\epsilon}{\partial \mathsf{c}}\!\cdot \! \mathsf{c} ,\qquad\text{and} \qquad \mathfrak{t}_{\rm el}=-2 \frac{\partial\mathfrak{e}}{\partial \mathsf{c}}\!\cdot  \!\mathsf{c} ,
\]
to the expressions given in \eqref{T_1}, \eqref{T_2}, and \eqref{T_3}, respectively.
\end{proposition}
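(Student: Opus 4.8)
The plan is to obtain Proposition~\ref{end_of_controversy_continua} as a direct consequence of the general reduced Euler--Lagrange equations of Theorem~\ref{spacetime_reduced_EL}, specialized to the Lagrangian densities \eqref{ell_continua} for electromagnetic continua. Recall that in the general setting the material tensor field $K$ collects the reference data $(\partial_\lambda, R, S, G)$, whose push-forwards are $(w,\varrho,\varsigma,\mathsf{c})$, and the stress-energy-momentum tensor appearing in \eqref{spacetime_EL} is
\[
\mathfrak{T}= \ell\delta + w\otimes\frac{\partial\ell}{\partial w} - \frac{\partial\ell}{\partial\kappa}\!\therefore\!\widehat{\kappa} - \frac{\partial\ell}{\partial A}\otimes A - \frac{\partial\ell}{\partial F}\stackrel{\rm tr}{\otimes}F .
\]
The only new ingredient relative to the fluid case in Proposition~\ref{end_of_controversy} is the $\mathsf{c}$-contribution inside the term $-\frac{\partial\ell}{\partial\kappa}\!\therefore\!\widehat{\kappa}$. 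So the first step is to isolate this contribution: writing $\kappa=(w,\varrho,\varsigma,\mathsf{c})$ and using additivity of the operation $\pi\therefore\widehat{\kappa}$ over the components of $\kappa$, the piece coming from the $2$-covariant symmetric tensor $\mathsf{c}$ is $-\frac{\partial\ell}{\partial\mathsf{c}}\!\therefore\!\widehat{\mathsf{c}}$.

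The second step is to evaluate $\frac{\partial\ell}{\partial\mathsf{c}}\!\therefore\!\widehat{\mathsf{c}}$ explicitly. Since $\mathsf{c}$ is a $(0,2)$-tensor, formula \eqref{hat_kappa} gives $\widehat{\mathsf{c}}_{\,\beta_1\beta_2}{}^\nu_\mu = \mathsf{c}_{\mu\beta_2}\delta^\nu_{\beta_1}+\mathsf{c}_{\beta_1\mu}\delta^\nu_{\beta_2}$, and $\frac{\partial\ell}{\partial\mathsf{c}}$ is a $(2,0)$-tensor density; contracting all indices except the last covariant/contravariant pair of $\widehat{\mathsf{c}}$, and using the symmetry of both $\mathsf{c}$ and $\frac{\partial\ell}{\partial\mathsf{c}}$, the two terms coincide and one gets $\big(\frac{\partial\ell}{\partial\mathsf{c}}\!\therefore\!\widehat{\mathsf{c}}\big)^\nu_\mu = 2\,\frac{\partial\ell}{\partial\mathsf{c}}{}^{\nu\alpha}\mathsf{c}_{\alpha\mu}$, i.e. $\frac{\partial\ell}{\partial\mathsf{c}}\!\therefore\!\widehat{\mathsf{c}} = 2\,\frac{\partial\ell}{\partial\mathsf{c}}\!\cdot\!\mathsf{c}$, which is precisely the extra term appearing in \eqref{spacetime_EL_continua}. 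This is the main point to get right, but it is essentially the same computation already carried out in \cite{GB2024} for purely elastic relativistic continua, so the verification of signs and index placement (including the factor $2$) is the only delicate bookkeeping.

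The third step is to translate this extra term into the other two parametrizations. Using $\ell = -\epsilon(\rho,s,E,B,\mathsf{c},\mathsf{g})\mu(\mathsf{g}) - q\varrho A\cdot w$ from \eqref{ell_continua}, and observing that the current term $q\varrho A\cdot w$ does not depend on $\mathsf{c}$ while $\mu(\mathsf{g})$ does not depend on $\mathsf{c}$ either, the chain rule gives $\frac{\partial\ell}{\partial\mathsf{c}} = -\frac{\partial\epsilon}{\partial\mathsf{c}}\,\mu(\mathsf{g})$, hence $2\,\frac{\partial\ell}{\partial\mathsf{c}}\!\cdot\!\mathsf{c} = -2\,\frac{\partial\epsilon}{\partial\mathsf{c}}\!\cdot\!\mathsf{c}\,\mu(\mathsf{g}) =: -\mathfrak{t}_{\rm el}\,\mu(\mathsf{g})$, and identically with $\mathfrak{e}$ in place of $\epsilon$ using the third line of \eqref{ell_continua}. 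Since every other term in $\mathfrak{T}$ is unaffected by the presence of $\mathsf{c}$ --- the derivatives $\frac{\partial\ell}{\partial w}$, $\frac{\partial\ell}{\partial\varrho}$, $\frac{\partial\ell}{\partial\varsigma}$, $\frac{\partial\ell}{\partial A}$, $\frac{\partial\ell}{\partial F}$ keep the same form as for fluids, merely with $e$ replaced by $\varpi$ (equivalently $\epsilon$ by the $\mathsf{c}$-dependent $\epsilon$, etc.) --- the three formulas \eqref{T_1}, \eqref{T_2}, \eqref{T_3} are modified only by the single additive term $-\mathfrak{t}_{\rm el}\,\mu(\mathsf{g})$, which is the assertion of the proposition. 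The proof therefore reduces to citing Proposition~\ref{end_of_controversy} for the fluid part and appending the elastic correction; the details, being a routine repetition of the fluid computation with an extra isotropic elastic variable, are left to the reader as stated in the text.
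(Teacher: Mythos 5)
Your proposal is correct and follows essentially the route the paper intends: specialize the general term $-\frac{\partial \ell}{\partial \kappa}\!\therefore\!\widehat{\kappa}$ of Theorem \ref{spacetime_reduced_EL} to the component $\kappa=\mathsf{c}$, compute $\frac{\partial \ell}{\partial \mathsf{c}}\!\therefore\!\widehat{\mathsf{c}}=2\,\frac{\partial \ell}{\partial \mathsf{c}}\!\cdot\!\mathsf{c}$ from \eqref{hat_kappa} and the symmetry of $\mathsf{c}$ and $\frac{\partial \ell}{\partial \mathsf{c}}$, and pass to the $\epsilon$- and $\mathfrak{e}$-parametrizations by the chain rule, which is exactly the argument the paper compresses into the reference to Proposition \ref{end_of_controversy}. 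The only blemish is a sign slip in your displayed identification: with the paper's conventions one has $2\,\frac{\partial \ell}{\partial \mathsf{c}}\!\cdot\!\mathsf{c}=\mathfrak{t}_{\rm el}\,\mu(\mathsf{g})$ (not $-\mathfrak{t}_{\rm el}\,\mu(\mathsf{g})$), so that the term actually \emph{added} to $\mathfrak{T}$, namely $-2\,\frac{\partial \ell}{\partial \mathsf{c}}\!\cdot\!\mathsf{c}=2\,\frac{\partial \epsilon}{\partial \mathsf{c}}\!\cdot\!\mathsf{c}\,\mu(\mathsf{g})$, equals $-\mathfrak{t}_{\rm el}\,\mu(\mathsf{g})$; your final conclusion is nevertheless the correct one.
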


\begin{remark}[]\label{GB2024_VS_2025}\rm In terms of the function $\varpi (\rho , \eta , E,B, \mathsf{c}, \mathsf{g})$, we can write
\[
\mathfrak{t}_{\rm el}=-2\rho \frac{\partial\varpi}{\partial \mathsf{c}}\cdot  \mathsf{c} ,\qquad (\mathfrak{t}_{\rm el})^\mu_\nu = - 2\rho \frac{\partial\varpi}{\partial \mathsf{c}_{\lambda\mu}}\mathsf{c}_{\lambda\nu},
\]
where we recall that either $\mathsf{g}$ or $\mathsf{p}$ can be used in $\varpi $. Using $\mathsf{p}$ introduces an additional dependence on $w$ (since $\mathsf{p}$ depends on both $\mathsf{g}$ and $w$) when deriving the expression of $\mathfrak{T}$ via its general expression in terms of $\ell$, see \eqref{spacetime_EL_continua}. This approach leads to the seemingly different expression
\[
\mathfrak{t}_{\rm el}=-2\rho \mathsf{P}\cdot \frac{\partial\varpi}{\partial \mathsf{c}}\cdot  \mathsf{c} \mu(\mathsf{g}),\qquad (\mathfrak{t}_{\rm el})^\mu_\nu = -2\rho \mathsf{P}^\mu_\alpha\frac{\partial\varpi}{\partial \mathsf{c}_{\lambda\alpha}}\mathsf{c}_{\lambda\nu} \mu(\mathsf{g}),
\]
as obtained in \cite[\S6.5]{GB2024}. However, as expected, these expressions are equivalent, because $\mathsf{P}^\mu_\alpha\frac{\partial\varpi}{\partial \mathsf{c}_{\lambda\alpha}}=\frac{\partial\varpi}{\partial \mathsf{c}_{\lambda\mu}}$. This identity follows directly from the chain rule applied to $\varpi$ under the identity $\mathsf{P}^\alpha_\lambda\mathsf{c}_{\alpha\beta}=\mathsf{c}_{\lambda\beta}$.
\end{remark}

The examples discussed in \S\ref{SEM_fluid} for fluids have direct analogues in the broader framework considered here. The simplest such example is the extension of the Euler-Maxwell system to general continua, described by the Lagrangian density:
\begin{equation}\label{Cauchy_Maxwell_ell}
\ell( A, {\rm d} A, w, \varrho , \varsigma , \mathsf{c},\mathsf{g} ) = - \epsilon _0( \rho  , s,\mathsf{c},\mathsf{g})- q \rho  A\! \cdot  \! u \mu ( \mathsf{g} ) - \mathfrak{e}_{\rm M}(F, \mathsf{g} ).
\end{equation}
Here, $\epsilon _0( \rho  , s,\mathsf{c},\mathsf{g})=\rho(c^2+\varpi(\rho  , \eta,\mathsf{c},\mathsf{g}))$ is the energy density of the matter, which is assumed not to induce polarization or magnetization effects. The associated stress-energy-momentum tensor is:
\[
\mathfrak{T} = \Big[\epsilon _0 \frac{1}{c^2}u \otimes u ^\flat + p_0\mathsf{P} + 2\frac{\partial\epsilon_0}{\partial\mathsf{c}}\cdot\mathsf{c} + \Big( \frac{\partial \mathfrak{e}_{\rm M}}{\partial F} \stackrel{\rm tr}{ \otimes }F - \mathfrak{e}_{\rm M}  \delta\Big) \mathsf{P} \Big]\mu(\mathsf{g}),
\]
which can be decomposed as $ \mathfrak{T}=\mathfrak{T} _{\rm cont} + \mathfrak{T} _{\rm M}$, where $\mathfrak{T} _{\rm cont}$ represents the contribution from the continuum mechanics (\cite[p.52]{GB2024}) with $p_0= \rho^2\frac{\partial\varpi}{\partial\rho}$, and $ \mathfrak{T} _{\rm M}$ is the Maxwell stress-energy-momentum tensor.

\subsubsection{Equations for electromagnetic continua}\label{equ_el}

Using Proposition \ref{end_of_controversy_continua}, we can directly extend the equations \eqref{equations_EMF} and \eqref{equations_EMF_2}, along with Proposition \ref{Prop_EMF}, to encompass electromagnetic continua.

\paragraph{Relativistic electromagnetic continua equations I.} We obtain the energy and momentum equations:
\begin{equation}\label{equations_EMC} 
\!\!\!\begin{aligned} 
& \operatorname{div}( \epsilon_{\rm tot} u+ S) + \frac{1}{c^2} g( S, \nabla _uu) = \mathfrak{t}_{\rm ec}: \nabla u -p \operatorname{div}u \\
&\frac{1}{c^2} (\epsilon_{\rm tot}+ p) \nabla _u u + \frac{1}{c^2} \mathsf{P} \operatorname{div}( S \otimes u + u \otimes S)+\frac{1}{c^2} u\mathfrak{t}_{\rm ec}: \nabla u = \operatorname{div}\mathfrak{t}_{\rm ec}  - \mathsf{P} \nabla p,
\end{aligned} 
\end{equation} 
where 
$\epsilon_{\rm tot}$ and $p$ are as defined earlier. The electromagnetic continuum (``ec\,") stress tensor, $\mathfrak{t}_{\rm ec}$, is composed of its electromagnetic and elastic contributions:
\begin{align*}
\mathfrak{t}_{\rm ec}&=  -\frac{\partial \epsilon    }{\partial E  } \otimes E + B ^\sharp  \stackrel{\rm tr}{ \otimes }  \frac{\partial \epsilon    }{\partial B  }^\flat -2 \frac{\partial\epsilon}{\partial \mathsf{c}}\!\cdot  \!\mathsf{c} \\
&=D ^\sharp  \otimes E + B ^\sharp  \otimes H- 2 \rho \frac{\partial \varpi}{\partial \mathsf{c}} \!\cdot\! \mathsf{c}=\mathfrak{t}_{\rm em}+\mathfrak{t}_{\rm el}.
\end{align*}

\begin{proposition}\label{Prop_EMC} The system of equations for relativistic electromagnetic continua obtained in \eqref{spacetime_EL_continua} by the variational principle \eqref{Eulerian_VP_continua} yield the equations \eqref{equations_EMC}, \eqref{cont_eq_EMC} and \eqref{Maxwell_E_B}, together with the boundary conditions
\begin{equation}\label{BC_EMF_1} 
\mathsf{g} (u , n) =0, \quad \mathfrak{t}_{\rm ec}( \cdot , n ^\flat )= p n ^\flat, \quad \mathbf{i} _n D=0, \quad \mathbf{i} _n \mathbf{i} _u (\star H)=0 \quad\text{on}\quad \partial _{\rm cont} \mathcal{N}.
\end{equation} 
\end{proposition}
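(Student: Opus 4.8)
The plan is to transcribe the proof of Proposition \ref{Prop_EMF}, inserting the elastic stress contribution wherever the electromagnetic stress tensor $\mathfrak{t}_{\rm em}$ appeared, so that it is replaced throughout by $\mathfrak{t}_{\rm ec}=\mathfrak{t}_{\rm em}+\mathfrak{t}_{\rm el}$. By Proposition \ref{prop_Eulerian_VP_continua}, the variational principle \eqref{Eulerian_VP_continua} is equivalent to the system \eqref{spacetime_EL_continua} together with the advection laws \eqref{advections_cont}; and by Proposition \ref{end_of_controversy_continua}, the stress-energy-momentum tensor entering the first equation of \eqref{spacetime_EL_continua} is the fluid tensor \eqref{T_2} augmented by the term $-\mathfrak{t}_{\rm el}\mu(\mathsf{g})$, with $\mathfrak{t}_{\rm el}=-2\frac{\partial\epsilon}{\partial\mathsf{c}}\cdot\mathsf{c}$. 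Hence the first step is simply to record that, for the Lagrangian density \eqref{ell_continua}, the reduced Euler-Lagrange equations read $\operatorname{div}^\nabla\mathfrak{T}=0$ with $\mathfrak{T}$ the augmented tensor, $\frac{\partialnew\ell}{\partialnew A}+{\rm d}\frac{\partialnew\ell}{\partialnew F}=0$, the boundary conditions $i^*_{\partial_{\rm cont}\mathcal{N}}\frac{\partialnew\ell}{\partialnew F}=0$ and $\mathfrak{T}(\cdot,n^\flat)=0$ on $\partial_{\rm cont}\mathcal{N}$, and the transport equations $\pounds_w\varrho=\pounds_w\varsigma=\pounds_w\mathsf{c}=0$.

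The key structural observation, which makes every subsequent step identical to the fluid case, is that $\mathsf{c}=\Phi_*G$ with $G=\pi_\mathcal{B}^*G_0$ annihilated by $\partial_\lambda$, hence $\mathbf{i}_u\mathsf{c}=0$ and $\mathfrak{t}_{\rm el}$ is purely spatial: it contracts trivially with $u$ in both slots. Granting this, projecting $\operatorname{div}\mathfrak{T}=0$ along $u$ and along $\mathsf{P}$ exactly as in the passage leading to \eqref{equations_EMF} produces \eqref{equations_EMC}; the only new contribution is $\operatorname{div}(-\mathfrak{t}_{\rm el}\mu(\mathsf{g}))$, which merges with $\operatorname{div}(-\mathfrak{t}_{\rm em}\mu(\mathsf{g}))$ into $\operatorname{div}\mathfrak{t}_{\rm ec}$ in the momentum equation and adds $\mathfrak{t}_{\rm el}:\nabla u$ to the energy equation. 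The transport laws \eqref{cont_eq_EMC} follow from \eqref{advections_cont} since $u$ is a positive scalar multiple $u=fw$ of $w$: the relations $\pounds_w\varrho=0$, $\pounds_w\varsigma=0$ become $\pounds_u(\rho\mu(\mathsf{g}))=0$, $\pounds_u(s\mu(\mathsf{g}))=0$ as for fluids, while $\pounds_{fw}\mathsf{c}=f\pounds_w\mathsf{c}+{\rm d}f\otimes\mathbf{i}_w\mathsf{c}+(\mathbf{i}_w\mathsf{c})\otimes{\rm d}f=0$ using $\mathbf{i}_w\mathsf{c}=0$. The Maxwell equation \eqref{Maxwell_E_B} is unchanged: substituting $\frac{\partialnew\ell}{\partialnew A}=-q\rho\,\mathbf{i}_u\mu(\mathsf{g})$ and $\frac{\partialnew\ell}{\partialnew F}=-\star(\frac{\partial\mathfrak{e}}{\partial F})^\flat$ from \eqref{computation_derivatives_e} into $\frac{\partialnew\ell}{\partialnew A}+{\rm d}\frac{\partialnew\ell}{\partialnew F}=0$ and using the codifferential $\delta=(-1)^{(n+1)(k-1)}\star{\rm d}\star$ gives $-\delta(\frac{\partial\mathfrak{e}}{\partial F})^\flat=\rho q u^\flat$.

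It remains to obtain the boundary conditions. The identity $\mathsf{g}(u,n)=0$ is immediate from $w=\Phi_*\partial_\lambda$ being tangent to $\partial_{\rm cont}\mathcal{N}$. The condition $i^*_{\partial_{\rm cont}\mathcal{N}}\frac{\partialnew\ell}{\partialnew F}=0$ is equivalent to $\mathbf{i}_{n^\flat}\frac{\partial\ell}{\partial F}=0$; inserting \eqref{dellldF2} (with $\epsilon$ in place of $\ell_{\rm M}$) yields $u^\flat\,\mathbf{i}_nD+(-1)^n\mathbf{i}_n\mathbf{i}_u(\star H)=0$, and applying $\mathbf{i}_u$ forces $\mathbf{i}_nD=0$, whence $\mathbf{i}_n\mathbf{i}_u(\star H)=0$. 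For the last condition, $\mathfrak{T}(\cdot,n^\flat)=0$ in the decomposed form \eqref{T_2} augmented by $-\mathfrak{t}_{\rm el}\mu(\mathsf{g})$ collapses — after the $u\otimes u^\flat$ and $S_\epsilon^\sharp\otimes u^\flat$ terms are killed by $\mathsf{g}(u,n)=0$, $\mathsf{P}(\cdot,n^\flat)=n^\flat$, and $\mathfrak{t}_{\rm el}$ enters only through $-\mathfrak{t}_{\rm el}(\cdot,n^\flat)$ with no $u^\flat$ part — to $\frac{1}{c^2}\mathbf{i}_nS\,u^\flat-\mathfrak{t}_{\rm ec}(\cdot,n^\flat)+p\,n^\flat=0$; since $\mathbf{i}_nS=(-1)^n\mathbf{i}_n\mathbf{i}_{E^\sharp}\mathbf{i}_u(\star H)=0$ by what was just shown, this reduces to $\mathfrak{t}_{\rm ec}(\cdot,n^\flat)=p\,n^\flat$, which gives the stated boundary conditions. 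Because the whole argument is a line-by-line transcription of the fluid proof with $\mathfrak{t}_{\rm em}$ replaced by $\mathfrak{t}_{\rm ec}$, I anticipate no genuine obstacle; the one point deserving explicit verification — and the part I would write out carefully — is the spatiality of $\mathfrak{t}_{\rm el}$ coming from $\mathbf{i}_u\mathsf{c}=0$, which is precisely what prevents the elastic term from polluting either the $u$-projection of $\operatorname{div}\mathfrak{T}=0$ or the $n^\flat$-contraction of $\mathfrak{T}$.
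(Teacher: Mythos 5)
Your proposal is correct and follows exactly the route the paper takes, which is to treat Proposition \ref{Prop_EMC} as a direct transcription of the proof of Proposition \ref{Prop_EMF} with $\mathfrak{t}_{\rm em}$ replaced by $\mathfrak{t}_{\rm ec}=\mathfrak{t}_{\rm em}+\mathfrak{t}_{\rm el}$, using Proposition \ref{end_of_controversy_continua} for the augmented stress-energy-momentum tensor. The one point you rightly single out for explicit verification --- the spatiality of $\mathfrak{t}_{\rm el}$ in both slots, which follows from $\mathbf{i}_u\mathsf{c}=0$ together with the identity $\mathsf{P}^\mu_\alpha\frac{\partial\varpi}{\partial \mathsf{c}_{\lambda\alpha}}=\frac{\partial\varpi}{\partial \mathsf{c}_{\lambda\mu}}$ of Remark \ref{GB2024_VS_2025} --- is precisely what the paper leaves implicit when it asserts the extension is direct, so your write-up is if anything slightly more complete.
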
  

\paragraph{Relativistic electromagnetic fluid equations II.} To highlight the ponderomotive four-force, we express $\mathfrak{T}$ in terms of $\mathfrak{e}$ and decompose it as $\mathfrak{T}= \mathfrak{T} _{\rm c}+ \mathfrak{T} _{\rm F}$, with
\begin{align*}
\mathfrak{T} _{\rm c}&=\Big[\mathfrak{e} _{\rm m} \frac{1}{c^2}u \otimes u ^\flat+   \Big(   \frac{\partial \mathfrak{e} _{\rm m} }{\partial \rho  } \rho+  \frac{\partial \mathfrak{e} _{\rm m} }{\partial s} s- \mathfrak{e}   _{\rm m} \Big) \mathsf{P}+ 2  \frac{\partial \mathfrak{e} _{\rm m} }{\partial \mathsf{c}} \cdot \mathsf{c}\Big]\mu(\mathsf{g})\\
\mathfrak{T} _{\rm F}&=  \Big[ - u \otimes \frac{\partial \mathfrak{e} _{\rm m} }{\partial u} \cdot \mathsf{P} + \frac{\partial \mathfrak{e}  }{\partial F}   \stackrel{\rm tr}{\otimes} F -  \mathfrak{e}_{\rm M}  \delta\Big]\mu(\mathsf{g}).
\end{align*}
Following similar steps as before, the equation of motion is given by:
\begin{equation}\label{div_f_form_cont} 
\operatorname{div} \mathfrak{T} _{\rm c}= \mathfrak{f}, \qquad \mathfrak{f}=  \operatorname{div}( u \otimes \omega _{\rm m}) -\frac{\partial \mathfrak{e}_{\rm m}}{\partial F} : \nabla F + q \rho  \mathbf{i} _u F \mu ( \mathsf{g} ).
\end{equation} 
From this alternative form, the energy and momentum conservation equations \eqref{equations_EMC} can be written as: 
\begin{equation}\label{equations_EMC_2}
\begin{aligned}  
&- \operatorname{div}( \mathfrak{e}_{\rm m} u) - p_{\rm m} \operatorname{div}u  +\mathfrak{t}_{\rm el}:\nabla u\\
&\qquad\qquad= \mathfrak{f} \cdot u = - \omega ( \nabla _u u) - \frac{\partial \mathfrak{e}_{\rm m}}{\partial F} : \nabla _uF\\
&\frac{1}{c^2} (( \mathfrak{e}_{\rm m} + p_{\rm m} )\nabla _uu + u\mathfrak{t}_{\rm el}:\nabla u) + \mathsf{P} \nabla p_{\rm m}  -\operatorname{div}\mathfrak{t}_{\rm el}\\
&\qquad\qquad= \mathsf{P} \mathfrak{f} =\mathsf{P}\Big( \rho  \nabla _u ( \omega / \rho  ) -\frac{\partial \mathfrak{e}_{\rm m}}{\partial F} : \nabla F\Big)+ q \rho  \mathbf{i} _u F,
\end{aligned} 
\end{equation}
Equations \eqref{div_f_form_cont} and \eqref{equations_EMC_2} should be compared with \eqref{div_f_form} and \eqref{equations_EMF_2}, as they illustrate the role of the elastic stress $\mathfrak{t}_{\rm el}$, which arises from the dependence on the relativistic Cauchy deformation tensor $\mathsf{c}$.

In the absence of magnetization or polarization (analogous to Euler-Maxwell), and using equation \eqref{Cauchy_Maxwell_ell}, the system \eqref{equations_EMC_2} simplifies to:
\begin{align*}
&\operatorname{div}(\epsilon _0 u)+ p_0 \operatorname{div}u -\mathfrak{t}_{\rm el}:\nabla u=0\\
&\frac{1}{c^2}( ( \epsilon _0 + p_0 )\nabla _uu + u\mathfrak{t}_{\rm el}:\nabla u) + \mathsf{P} \nabla p_0 -\operatorname{div}\mathfrak{t}_{\rm el} =q \rho  \mathbf{i} _u F,
\end{align*}
with $\epsilon _0= \rho(c^2+ \varpi(\rho, \eta,\mathsf{c},\mathsf{g}))$.

\subsubsection{Coupling with the Einstein equations and junction conditions}

Building on the previous developments, Proposition  \ref{GREMF} can now be directly formulated for continua. The only modification required is to replace the stress tensor
$\mathfrak{t}^-=\mathfrak{t}_{\rm em}$
  with 
$\mathfrak{t}^-=\mathfrak{t}_{\rm em}+\mathfrak{t}_{\rm el}$. This modification affects both the interior and boundary conditions, where 
$[\mathfrak{t}]= \mathfrak{t}^+-\mathfrak{t}^-$.

\begin{proposition}\label{GREMC} The system of equations for relativistic electromagnetic continua obtained in \eqref{full_system} by the variational principle \eqref{total_action_variation} yield the equations \eqref{equations_EMC} (equivalently written as \eqref{equations_EMC_2}) and  \eqref{cont_eq_EMC} on $ \mathcal{N} ^-$, as well as \eqref{full_system_fluid}, together with the boundary and junction conditions
\begin{equation}\label{BC_EMF_2} 
\begin{aligned} 
&\mathsf{g} (u , n) =0, \qquad & &  [ \mathfrak{t}] ( \cdot , n ^\flat )+ [p] n ^\flat =0& & \\
&\mathbf{i} _n \mathbf{i} _u (\star [E])=0, \qquad & & \mathbf{i} _n [B]=0 \qquad& & \qquad \text{on}\qquad \partial _{\rm cont} \mathcal{N}.\\
&\mathbf{i} _n [D]=0, \qquad & & \mathbf{i} _n \mathbf{i} _u (\star [H])=0& & 
\end{aligned}
\end{equation} 
\end{proposition}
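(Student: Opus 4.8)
The plan is to apply Theorem \ref{main} verbatim, taking for the interior Lagrangian density the electromagnetic-continuum Lagrangian $\ell(A,{\rm d}A,w,\varrho,\varsigma,\mathsf{c},\mathsf{g})$ of \eqref{ell_continua} and keeping $\ell_{\rm M}$ as the exterior Maxwell Lagrangian, exactly as was done for fluids in \S\ref{subsub_coupling_fluid}. Since \eqref{ell_continua} is both material and spacetime covariant (material covariance being equivalent to the isotropy condition \eqref{isotropy_E} on $\mathcal{W}$), the hypotheses of Theorem \ref{main} are met once the relativistic Cauchy deformation tensor $\mathsf{c}=\Phi_*G$ is treated as one more $\Phi$-pushed reference field, on the same footing as $\varrho$ and $\varsigma$; its evolution equation $\pounds_w\mathsf{c}=0$, equivalently $\pounds_u\mathsf{c}=0$, is then part of the conclusion \eqref{full_system}, and together with the mass and entropy advections gives \eqref{cont_eq_EMC}. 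The stress-energy-momentum tensor appearing in \eqref{full_system} is read off from Proposition \ref{end_of_controversy_continua}: it is the fluid tensor \eqref{T_2} (equivalently \eqref{T_EMF}) augmented by the purely spatial elastic term $\mathfrak{t}_{\rm el}\mu(\mathsf{g})=-2\frac{\partial\ell}{\partial\mathsf{c}}\cdot\mathsf{c}$. This produces the interior field equations \eqref{full_system_fluid} with $\mathfrak{T}^-$ now the continuum tensor, and after projecting $\operatorname{div}\mathfrak{T}^-=0$ along and orthogonal to $u$ — using the Maxwell equation in matter \eqref{Maxwell_m_e} in the $\mathfrak{T}_{\rm F}$ part via Lemma \ref{crucial_lemma}, exactly as in the derivation of \eqref{equations_EMF_2} — one obtains \eqref{equations_EMC}, equivalently \eqref{equations_EMC_2}.

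For the junction conditions, the key observation is that every step in the proofs of Theorem \ref{main} and of Proposition \ref{GREMF} involving the electromagnetic potential and the Faraday form is untouched by the elastic dependence. In particular the delicate cancellation \eqref{important} of the boundary contributions of $\frac{\partial\ell}{\partial F^-}\stackrel{\rm tr}{\otimes}F^-$ against those of $\frac{\partial\ell_{\rm M}}{\partial F^+}\stackrel{\rm tr}{\otimes}F^+$, which rests only on the preliminary condition $i^*_{\partial_{\rm cont}\mathcal{N}}[A]=0$ and on the boundary condition for $\frac{\partialnew\ell}{\partialnew F}$, goes through verbatim; likewise the O'Brien--Synge condition $i^*_{\partial\mathcal{N}}(\mathbf{i}_{n^-}[Ein])=0$ follows from $[K]=0$ by Gauss--Codazzi, independently of the matter model. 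Hence the only change relative to \S\ref{subsub_coupling_fluid} is that in the boundary relation \eqref{interm_BC} the interior stress tensor is now $\mathfrak{t}^-=\mathfrak{t}_{\rm em}+\mathfrak{t}_{\rm el}$ rather than $\mathfrak{t}_{\rm em}$ alone. Because $\mathbf{i}_u\mathsf{c}=0$ (since $\mathsf{c}=\Phi_*G$ with $G$ vanishing along $\partial_\lambda$), the elastic term carries no $u\otimes(\cdot)$ or $(\cdot)\otimes u^\flat$ piece, so the Poynting one-form $S$ is unchanged and contracting \eqref{interm_BC} with $\mathbf{i}_u$ still yields $\mathbf{i}_n[S]=0$; one is left with $[\mathfrak{t}](\cdot,n^\flat)+[p]n^\flat=0$ with the enlarged jump $[\mathfrak{t}]=\mathfrak{t}^+-(\mathfrak{t}_{\rm em}^-+\mathfrak{t}_{\rm el}^-)$. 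The electromagnetic junctions $\mathbf{i}_n\mathbf{i}_u(\star[E])=0$, $\mathbf{i}_n[B]=0$, $\mathbf{i}_n[D]=0$, $\mathbf{i}_n\mathbf{i}_u(\star[H])=0$ are then obtained exactly as \eqref{BC_EM_1}--\eqref{BC_EM_2}, and $\mathbf{i}_n[S]=0$ is again a consequence of them, so it need not be imposed separately.

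I expect the only real obstacle to be bookkeeping: verifying that the elastic term $2\frac{\partial\ell}{\partial\mathsf{c}}\cdot\mathsf{c}$ enters the interior divergence identity and the boundary condition of Theorem \ref{main} in precisely the combination dictated by $2\mathsf{g}^-\!\cdot\!\frac{\partial\ell}{\partial\mathsf{g}^-}$, i.e.\ that the extra $\mathsf{c}$-dependence is fully absorbed into the metric derivative through spacetime covariance (Lemma \ref{spacetime_mat_ell} with $\gamma=\mathsf{g}$), so that no independent boundary term in $\frac{\partial\ell}{\partial\mathsf{c}}$ survives. Once this identification is in hand, collecting \eqref{equations_EMC}, \eqref{cont_eq_EMC}, \eqref{full_system_fluid}, and the boundary and junction conditions above yields the stated result.
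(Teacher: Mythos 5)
Your proposal is correct and follows essentially the paper's own route: the paper justifies Proposition \ref{GREMC} precisely by observing that Theorem \ref{main} applies verbatim with the continuum Lagrangian \eqref{ell_continua} (with $\mathsf{c}=\Phi_*G$ as one more advected reference field giving $\pounds_u\mathsf{c}=0$), that the electromagnetic cancellation \eqref{important} and the Gauss--Codazzi/O'Brien--Synge argument are untouched by the $\mathsf{c}$-dependence, and that the only change relative to Proposition \ref{GREMF} is the replacement $\mathfrak{t}^-=\mathfrak{t}_{\rm em}\mapsto\mathfrak{t}_{\rm em}+\mathfrak{t}_{\rm el}$ in the stress jump, the purely spatial character of $\mathfrak{t}_{\rm el}$ (from $\mathbf{i}_u\mathsf{c}=0$) preserving the $\mathbf{i}_n[S]=0$ step exactly as you say. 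The one blemish is a sign slip: you write $\mathfrak{t}_{\rm el}\mu(\mathsf{g})=-2\frac{\partial\ell}{\partial\mathsf{c}}\cdot\mathsf{c}$, whereas the paper's convention (Proposition \ref{end_of_controversy_continua}) is $\mathfrak{t}_{\rm el}\mu(\mathsf{g})=2\frac{\partial\ell}{\partial\mathsf{c}}\cdot\mathsf{c}=-2\frac{\partial\epsilon}{\partial\mathsf{c}}\cdot\mathsf{c}\,\mu(\mathsf{g})$, so with your definition the jump relation would have to read $\mathfrak{t}^-=\mathfrak{t}_{\rm em}-\mathfrak{t}_{\rm el}$; the tensor you actually add to $\mathfrak{T}$ is the correct one, so this is only a labelling inconsistency, not a gap.
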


\subsubsection{Extensions}\label{anisotropic}

\paragraph{Extension to anisotropic continua.} Appropriate choices of the ``energy" function $\mathcal{W} \big( \rho , \eta  , \mathcal{E} ,  \mathcal{B} , G_0,C\big)$ appearing in the Lagrangian $\mathscr{L}$ of electromagnetic continua allow for the treatment of both isotropic and anisotropic media. As discussed earlier, the \textit{spacetime covariance} of $\mathscr{L}$ is guaranteed for any such expression. However, it is only in the isotropic case, i.e., when \textit{material covariance}, as expressed in \eqref{isotropy_E}, holds, that $\mathcal{W}$ can be described by a corresponding spacetime ``energy" function $\varpi (\rho , \eta , E,B, \mathsf{c}, \mathsf{p})$; see \eqref{def_varpi}. In this scenario, the equations admit a spacetime description and a variational derivation in the form given in \S\ref{sec_GREE}, with the spacetime Lagrangian density $\ell(A, {\rm d}A,\varrho, \varsigma,\mathsf{c},\mathsf{g})$.

As noted in \cite[Remark 6.5]{GB2024}, \textit{this spacetime variational formulation is also applicable to anisotropic continua, provided additional material tensor fields are included}. Specifically, consider an anisotropic electromagnetic material where $\mathcal{W} \big( \rho , \eta  , \mathcal{E} ,  \mathcal{B} , G_0,C\big)$ has a symmetry group that is a proper subgroup of the $G_0(\mathsf{X})$-orthogonal transformations, defined as the isotropy group of the so called structural (or anisotropic) tensors $\{\alpha^K\}_{K=1}^n$ on $\mathcal{B}$. For such materials, a generalized notion of material covariance can be achieved by writing the energy as function $\mathcal{W} \big( \rho , \eta  , \mathcal{E} ,  \mathcal{B} , G_0,\{\alpha^K\},C\big)$, while allowing the diffeomorphism group $\operatorname{Diff}(\mathcal{B})$ to act by pull-back on the structural tensors when verifying \eqref{isotropy_E}. This approach enables the definition of a corresponding spacetime ``energy" function $\varpi$, which incorporates the spacetime counterparts of these tensors. With this formulation, the general variational framework developed in \S\ref{sec_2} directly applies to the anisotropic case by including $\{\alpha^K\}$ among the material tensor fields of the theory, alongside $\partial_\lambda$, $R$, $S$, and $G$ (which are instances of $K$ of the abstract theory in \S\ref{sec_2}).

For more information and applications of structural tensors and anisotropy in relation to material covariance in the non-relativistic case, we refer to \cite{Li1982,Bo1987,SiMaKr1988,ZhSp1993,LuPa2000,MaRa2006,SoYa2020}, and references therein. Earlier treatments of isotropic and anisotropic forms of energy functions in electromagnetic media using invariants can be found in \cite{GrEr1966b,ErMa1990}.

\paragraph{Extension to nonlinear theories: Born-Infeld and Euler-Heisenberg Lagrangians.} The extension of the present variational approach to nonlinear electrodynamic theories is straightforward. One replace the Maxwell Lagrangian density $\ell_{\rm M}$, by more general expressions of the form
\[
\ell_{\rm nl}(A, F, \mathsf{g} )= - \epsilon _{\rm nl} \big( \frac{1}{2} F \wedge \star F, \frac{1}{2} F \wedge F, \mathsf{g} \big),
\]
where here $n=3$. Appropriate choice of the nonlinear density $ \epsilon _{\rm nl} $ in terms of the invariants $F \wedge \star F$ and $ F \wedge F$ yields the Born-Infeld or Euler-Heisenberg electrodynamic theories, \cite{BoIn1934}, \cite{HeEu1936}. We refer to \cite{BaLeSoTo2020,So2022} and references therein for recent development in nonlinear electrodynamics. We leave to the reader the generalization of Propositions \ref{end_of_controversy}, \ref{Prop_EMF},  \ref{GREMF}, \ref{Prop_EMC}, and \ref{GREMC} to this case, which are obtained by replacing everywhere $ \epsilon _{\rm M}$ by $ \epsilon _{\rm nl}$, and $ \mathfrak{T} _{\rm M}$ by $ \mathfrak{T} _{\rm nl}$ found from \eqref{general_Maxwell_SEM} as being given by
\[
\mathfrak{T} _{\rm nl}= \ell_{\rm nl} \delta  - \frac{\partial \ell_{\rm nl}}{\partial F} \stackrel{\rm tr} \otimes F = \frac{\partialnew  \epsilon _{\rm nl} }{\partialnew  \alpha } \mathfrak{T} _{\rm M} + \left(  \frac{\partialnew  \epsilon _{\rm nl} }{\partialnew \alpha }  \alpha    + \frac{\partialnew \epsilon _{\rm nl}}{\partialnew \beta } \beta  - \epsilon _{\rm nl} \right)  \delta
\]
with $ \alpha = \frac{1}{2}  F \wedge \star F$ and $\beta = \frac{1}{2} F \wedge  F$.

\subsection{Gauge invariance of electromagnetic continua}\label{gauge_continua}

To verify the gauge invariance \eqref{gauge_invariance} for electromagnetic continua, we consider the Lagrangian density $\mathscr{L}$ in \eqref{L_EM_elasticity} given by the expression  \eqref{L_EMF}. In this expression, the ``energy" term $\mathscr{E} \big( \cdot , \cdot  , \mathcal{E} ,  \mathcal{B} , C\big)$ for fluids is replaced by $\mathcal{W} \big( \cdot , \cdot  , \mathcal{E} ,  \mathcal{B} , G_0,C\big)$ for general continua. As it is readily verified, the corresponding action functional satisfies:
\[
\mathscr{S}( \Phi ,\mathcal{A}+ {\rm d} f)- \mathscr{S}( \Phi ,\mathcal{A}) = -q\int_\mathcal{B} \big(f(\lambda_1,\mathsf{X}) - f(\lambda_0,\mathsf{X}) \big)R_0,
\]
for all $f\in C^\infty(\mathcal{D})$. Gauge invariance holds since $C$ in \eqref{gauge_invariance} does not depend on quantities that are varied in Hamilton's principle. This result remains valid when $A$ is treated as the primary field (see \S\ref{material_spacetime_A}). In this case, we have
\[
\mathscr{S}'( \Phi ,A+ {\rm d} k)- \mathscr{S}'( \Phi ,A) = -q\int_\mathcal{B} \big(k(\Phi(\lambda_1,\mathsf{X})) - k(\Phi(\lambda_0,\mathsf{X})) \big)R_0,
\]
for all $k\in C^\infty(\mathcal{M})$, demonstrating gauge invariance again, as $\Phi(\lambda_0,\mathsf{X})$ and $\Phi(\lambda_1,\mathsf{X})$ are held fixed in Hamilton's principle (see \eqref{GI_A}).
Gauge invariance in the convective and spacetime descriptions can can also be verified straightforwardly.

Thus, as previously noted, the invariance of the theory of electromagnetic continua is characterized by either of the groups $\left( \operatorname{Diff}( \mathcal{D} ) \,\circledS\, C^\infty( \mathcal{D} ) \right)  \times \operatorname{Diff}( \mathcal{M} )$ or $ \operatorname{Diff}( \mathcal{D} )\times \left(\operatorname{Diff}( \mathcal{M} ) \,\circledS\, C^\infty( \mathcal{M} ) \right)$. This invariance has significant implications, which will be analyzed in subsequent parts of this paper.

\section{Conclusion}\label{conclusion}

In this work, we have developed a unified variational framework for relativistic electromagnetic continua that simultaneously incorporates material, spacetime, and convective descriptions. Building directly on Hamilton’s principle and the continuum analogue of the action of a charged particle, the formulation provides a transparent and systematic approach to deriving the field equations, stress-energy-momentum tensors, and junction conditions relevant for relativistic fluids and solids interacting with electromagnetic and gravitational fields. By avoiding auxiliary variables or Lagrange multipliers and relying exclusively on the freely varied primary fields - the spacetime configuration of the continuum and the electromagnetic potential - the theory maintains both conceptual clarity and strong physical grounding.

A central outcome of our analysis is the general expression for the stress-energy-momentum tensor arising from arbitrary covariant couplings between elastic deformation and electromagnetic fields. This tensor is obtained without resorting to ad-hoc decompositions or constitutive assumptions beyond covariance and the specification of an energy density. In doing so, the framework places various classical forms of the electromagnetic stress-energy-momentum tensor within a single geometrically consistent picture and clarifies the origin of different splittings that have appeared in the literature. Likewise, the resulting covariant Euler-type balance equations equations admit several equivalent formulations, which we systematically compare and relate to traditional presentations in the theory of relativistic electromagnetic media.
The inclusion of Gibbons-Hawking-York terms in the action leads naturally to the Israel-Darmois and electromagnetic junction conditions governing the matching between interior solutions for the continuum and exterior Einstein-Maxwell fields. This provides a coherent variational route to interface conditions that are usually introduced through geometric or distributional arguments. 
Beyond its conceptual appeal, the framework is directly relevant to a number of astrophysical scenarios in which strong electromagnetic fields interact with relativistic matter, including neutron-star crusts, magnetar flares, magnetized accretion flows, and the formation of relativistic jets. Because the formalism accommodates general constitutive laws - including anisotropic and magneto-elastic responses - it provides a foundation for future modeling efforts in these extreme environments.







{\footnotesize

\bibliographystyle{new}

\begin{thebibliography}{300}



\bibitem[Abramowicz and Fragile(2013)]{AbFr2013}
Abramowicz, M. A., and P. C. Fragile [2013], Foundations of Black Hole Accretion Disk Theory, \textit{Living Rev. Relativ.} \textbf{16}(1).

\bibitem[Andersson and Comer(2021)]{AdCo2021}
Andersson, N. and G.~L. Comer [2021], Relativistic fluid dynamics: physics for many different scales, \textit{Living Reviews in Relativity} \textbf{24}(3).

\bibitem[Andersson(2021)]{Ad2021}
Andersson, N. [2021], A multifluid perspective on multimessenger modeling, \textit{Front. Astron. Space Sci.} \textbf{8}, 659476.

\bibitem[Bandos, Lechner, Sorokin, and Townsend(2020)]{BaLeSoTo2020}
Bandos, I., K. Lechner, D. Sorokin, and P. K. Townsend [2020], Nonlinear duality-invariant conformal extension of Maxwell's equations, \textit{Physical Review D}, \textbf{102}, 121703(R).

\bibitem[Bau and Wasserman(2020)]{BaWa2020}
Bau, P.~B. and I. Wasserman [2020], Relativistic, finite temperature multifluid hydrodynamics in a neutron star from a variational principle, \textit{Phys. Rev. D} \textbf{102}, 063011.


\bibitem[Blandford and Znajek(1977)]{BlandfordZnajek1977}  
Blandford, R. D. and R. L Znajek [1977], Electromagnetic extraction of energy from Kerr black holes, \textit{Monthly Notices of the Royal Astronomical Society}, \textbf{179}, 433--456 (1977).

\bibitem[Boehler(1987)]{Bo1987}
Boehler, J.~P. (ed.) [1987], \textit{Applications of Tensor Functions in Solid Mechanics}, International Center for Mechanical Sciences, vol. 292, New York: Springer.

\bibitem[Born and Infeld(1934)]{BoIn1934}
Born, M. and L. Infeld [1934], Foundations of the new field theory, \textit{Proc. R. Soc. London}, \textbf{A144}, 425.




\bibitem[Carter(1973)]{Ca1973}
Carter, B. [1973], Elastic perturbation theory in general relativity and a variation principle for a rotating solid star, \textit{Comm. Math. Phys.}, \textbf{30}, 261--286. 

\bibitem[Carter and Khalatnikov(1992)]{CaKh1992}
Carter, B. and I.~M. Khalatnikov [1992], Equivalence of convective and potential variational derivations of covariant superfluid dynamics, \textit{Phys. Rev. D}. \textbf{45}(12),  4536--4544.

\bibitem[Carter and Langlois(1995)]{CaLa1995}
Carter, B. and D. Langlois [1995], Kalb-Ramond coupled vortex fibration model for relativistic superfluid dynamics, \textit{Nucl. Phys. B}, \textbf{454}, 402--424. 

\bibitem[Carter and Langlois(1998)]{CaLa1998}
Carter, B. and D. Langlois [1998], Relativistic models for superconducting-superfluid mixtures, \textit{Nucl. Phys. B}, \textbf{531}, 478--504. 




\bibitem[Carter(1980)]{Ca1980}
Carter, B. [1980], Rheometric structure theory, convective differentiation and
continuum electrodynamics, \textit{Proceedings of the Royal Society A:
Mathematical, Physical and Engineering Sciences} \textbf{372}, 169--200.


\bibitem[Carter and Samuelsson(2006)]{CaSa2006}
Carter, B. and L. Samuelsson [2006], Relativistic mechanics of neutron superfluid in
(magneto)elastic star crust, \textit{Class. Quantum Grav.} \textbf{23} 5367.

\bibitem[Carter, Chachoua, and Chamel(2006)]{CaChCh2006}
Carter, B., E. Chachoua, and N. Chamel [2006], Covariant Newtonian and relativistic dynamics of (magneto)-elastic solid model for neutron star crust, \textit{Gen. Relativ. Gravit.}, \textbf{38}, 83--119.

\bibitem[Chamel and Haensel(2008)]{ChHa2008}
Chamel, N. and P. Haensel [2008], Physics of Neutron Star Crusts, Living Reviews
in Relativity 11.


\bibitem[Comer, Andersson, Celora, and Hawke(2023)]{CoAnCeHa2023}
Comer, G. L., N. Andersson, T. Celora, and I. Hawke [2023], A variational approach to resistive general relativistic two-temperature plasmas, \textit{Universe}, \textbf{9} (6).


\bibitem[Crenshaw(2022)]{Cr2022}
Crenshaw, M. E. [2022], Conservation of energy and momentum for an electromagnetic field propagating into a linear medium from the vacuum, 10.48550/arXiv.2211.09871.


\bibitem[Eringen and Maugin(1990)]{ErMa1990}
Eringen, A.~C. and G.~A. Maugin [1990], \textit{Electrodynamics of Continua II - Fluids and Complex Media}, Springer-Verlag New York Inc.


\bibitem[Feng and Carloni(2020)]{FeCa2020}
Feng, J.~C. and S. Carloni [2020], New class of generalized coupling theories, \textit{Phys. Rev. D} \textbf{101}, 064002.

\bibitem[Frank, King, and Raine(2002)]{FrKiRa2002}
Frank, J., A. King, and D. Raine [2002], \textit{Accretion Power in Astrophysics}. Cambridge University Press.

\bibitem[de Groot and Suttorp(1972)]{dGSu1972}
de Groot, S. R. and L. G. Suttorp [1972], \textit{Foundations of Electrodynamics}, NorthHolland, Amsterdam.

\bibitem[Gabler, Cerd\`an, Stergioulas, Font, and M\"uller(2012)]{Gabler2012}  
Gabler, M., P. Cerd\'a‑Dur\'an, N. Stergioulas, J. A. Font, and E. M\"uler [2012], Magnetoelastic oscillations of neutron stars with dipolar magnetic fields, \textit{Monthly Notices of the Royal Astronomical Society}, \textbf{421}, 2054--2078.

\bibitem[Gavassino, Antonelli, and Haskell(2020)]{GaAnHa2020}
Gavassino, L., M. Antonelli, and B. Haskell [2020], Multifluid modelling of relativistic radiation hydrodynamics, \textit{Symmetry}, \textbf{12}(9), 1543.

\bibitem[Gay-Balmaz(2024)]{GB2024}
Gay-Balmaz, F. [2024], General relativistic Lagrangian continuum theories. Part I: reduced variational principles and junction conditions for hydrodynamics and elasticity, \textit{Journal of Nonlinear Science} \textbf{34}, 46.




\bibitem[Gay-Balmaz, Marsden, and Ratiu(2012)]{GBMaRa2012}
Gay-Balmaz, F.,  J. E. Marsden, and T. S. Ratiu [2012],
Reduced variational formulations in free boundary continuum mechanics, \textit{Journal of nonlinear science} \textbf{22}(4), 463--497.

\bibitem[Rodriguez Abella, Gay-Balmaz, and Yoshimura(2024)]{AbGBYo2024}
Gay-Balmaz, F., A. Rodriguez Abella, and H. Yoshimura [2024], Lagrange--Dirac systems with boundary energy flow on infinite-dimensional vector spaces.

\bibitem[Gibbons and Hawking(1977)]{GiHa1977}
Gibbons, G.~W.  and S.~W. Hawking [1977], Action Integrals And Partition Functions In Quantum Gravity, \textit{Phys. Rev. D} \textbf{15}, 2752.

\bibitem[Grot(1971)]{Gr1971}
Grot, R. [1971], Relativistic continuum theory for the interaction of electromagnetic fields with deformable bodies, \textit{J. Math. Phys.} \textbf{11}, 109--113.

\bibitem[Grot and Eringen(1966a)]{GrEr1966a}
Grot, R. and A.~C. Eringen [1966a], Relativistic continuum mechanics. Part I - Mechanics and thermodynamics, \textit{Int. J. Engng. Sci.}, \textbf{4}, 611--638.

\bibitem[Grot and Eringen(1966b)]{GrEr1966b}
Grot, R. and A.~C. Eringen [1966b], Relativistic continuum mechanics. Part II - Electromagnetic interactions with matter, \textit{Int. J. Engng. Sci.}, \textbf{4}, 639--670.

\bibitem[Haensel, Potekhin, and Yakovlev(2007)]{HaPoYa2007}
Haensel, P., A. Y. Potekhin, and D. G. Yakovlev, eds., Neutron Stars 1 :
Equation of State and Structure, vol. 326 of Astrophysics and Space Science
Library, 2007.

\bibitem[Heisenberg and Euler(1936)]{HeEu1936}
Heisenberg, W. and H. Euler [1936], Folgerungen aus der Diracschen Theorie des Positrons, \textit{Zeitschrift f\"ur Physik} (in German) \textbf{98} (11-12), 714--732.




\bibitem[Komissarov(2004)]{Komissarov2004}
Komissarov, S. S. [2004], General relativistic magnetohydrodynamic simulations of monopole magnetospheres of black holes, \textit{Monthly Notices of the Royal Astronomical Society}, \textbf{350}(4), 1431--1436.

\bibitem[Levin and Lyutikov(2012)]{LevinLyutikov2012} 
Levin Y. and  M. Lyutikov [2012], On the dynamics of mechanical failures in magnetized neutron star crusts, \textit{Monthly Notices of the Royal Astronomical Society}, \textbf{427}, 1574--1579.

\bibitem[Levinson and Van Putten(2012)]{LeVP2012}
Levinson, A. and M. H. P. M. Van Putten [2012], \textit{Relativistic Astrophysics of the Transient Universe: Gravitation, Hydrodynamics and Radiation}. Cambridge University Press.

\bibitem[Liu(1982)]{Li1982}
Liu, I. [1982], On representations of anisotropic invariants, \textit{Int. J. Engng. Sci.}, \textbf{20}(10), 1099--1109.

\bibitem[Lu and Papadopoulos(2000)]{LuPa2000}
Lu, J. and P. Papadopoulos [2000], A covariant constitutive description of anisotropic non-linear elasticity, \textit{Zeitschrift f\"ur angewandte Mathematik und Physik ZAMP}, \textbf{51}, 204--217.

\bibitem[Maugin(1971)]{Ma1971}
Maugin, G.~A. [1971], Magnetized deformable media in general relativity, \textit{Annales de l'I.H.P., section A} \textbf{15}(4), 275--302.

\bibitem[Maugin(1972a)]{Ma1972a}
Maugin, G.~A. [1972a], \textit{Micromagnetism}, in \textit{Continuum Physics}, vol 3, A.~C. Eringen (Ed.) New York, Academic Press.

\bibitem[Maugin(1978a)]{Ma1978a}
Maugin, G.~A. [1978a], On the covariant equations of the relativistic electrodynamics of continua. I. General equations, \textit{Journal of Mathematical Physics} \textbf{19}, 1198.

\bibitem[Maugin(1978b)]{Ma1978b}
Maugin, G.~A. [197b], On the covariant equations of the relativistic electrodynamics of continua. II. Fluids, \textit{Journal of Mathematical Physics} \textbf{19}, 1206.

\bibitem[Maugin(1978c)]{Ma1978c}
Maugin, G.~A. [1978c], On the covariant equations of the relativistic electrodynamics of continua. III. Elastic solids, \textit{Journal of Mathematical Physics} \textbf{19}, 1212.

\bibitem[Maugin(1978d)]{Ma1978d}
Maugin, G.~A. [1978d], On the covariant equations of the relativistic electrodynamics of continua. IV. Media with Spin, \textit{Journal of Mathematical Physics} \textbf{19}, 1220.

\bibitem[Maugin and Eringen(1972a)]{MaEr1972a}
Maugin, G.~A. and A.~C. Eringen [1972a], Polarized elastic materials with electronic spin -- a relativistic approach, \textit{J. Math. Phys.} \textbf{13}, 1777--1788.

\bibitem[Maugin and Eringen(1972b)]{MaEr1972b}
Maugin, G.~A. and A.~C. Eringen [1972b], Relativistic continua with directors, \textit{J. Math. Phys.} \textbf{13}, 1788--1797.

\bibitem[Mazzucato and Rachele(2006)]{MaRa2006}
Mazzucato, A.~L. and L.~V. Rachele [2006], Partial uniqueness and obstruction to uniqueness in inverse problems for anisotropic elastic media, \textit{Journal of Elasticity}, \textbf{83}(3), 205--245.

\bibitem[McKinney(2006)]{McKinney2006}
McKinney, J. C. [2006], General relativistic magnetohydrodynamic simulations of the jet formation and large-scale propagation from black hole accretion systems, \textit{Monthly Notices of the Royal Astronomical Society}, \textbf{368}(4), 1561--1582.

\bibitem[Minkowski(1908)]{Mi1908}
Minkowski, H. [1908], Die Grundgleichungen f\"ur die elektromagnetischen Vorg\"ange in bewegten K\"orpern, \textit{Nachr. Ges. Wiss. G\"ottingen} (1908) 53--111. (Reprinted: 1910 Math. Ann. \textbf{68}, 472--525).



\bibitem[Perna and Pons(2011)]{PernaPons2011}  
Perna, R. and  J. A. Pons [2011], A unified model of the magnetar and radio pulsar bursting phenomenology, \textit{Astrophysical Journal Letters}, \textbf{727} L51.



\bibitem[Schaffner-Bielich(2020)]{SB2020}
Schaffner-Bielich J. [2020], \textit{Compact Star Physics}. Cambridge University Press.

\bibitem[Shapiro and Teukolsky(2008)]{ShTe2008}
Shapiro, S. L. and S. A. Teukolsky [2008], \textit{Black Holes, White Dwarfs and Neutron Stars: The Physics of Compact Objects}. John Wiley \& Sons.

\bibitem[Shukurov and Subramanian(2022)]{ShSu2022}
Shukurov, A. and K. Subramanian [2022], \textit{Astrophysical Magnetic Fields From Galaxies to the Early Universe}. Cambridge University Press

\bibitem[Simo, Marsden, and Krishnaprasad(1988)]{SiMaKr1988}
Simo, J.~C., J.~E. Marsden, and P.~S. Krishnaprasad [1988], The Hamiltonian structure of nonlinear elasticity: the material, spatial and convective representations of solids, rods and plates, \textit{Arch. Ration. Mech. Anal.} \textbf{104}, 125--183.

\bibitem[Sorokin(2022)]{So2022}
Sorokin, D. P. [2022], Introductory Notes on Non-linear Electrodynamics and its Applications. \textit{Fortschr. Phys.}, 70, 2200092.

\bibitem[Sozio and Yavari(2020)]{SoYa2020}
Sozio, F. and A. Yavari [2020], Riemannian and Euclidean material structures in anelasticity, \textit{Mathematics and Mechanics of Solids}, \textbf{25}(6), 1267--1293.

\bibitem[Taub(1954)]{Ta1954}
Taub, A.~H. [1954], General relativistic variational principle for perfect fluids, \textit{Physical Review}, \textbf{94}(6), 1468--1470.

\bibitem[Tchekhovskoy, Narayan, and McKinney(2011)]{Tchekhovskoy_etal_2011}
Tchekhovskoy,  A., R. Narayan, and J. C. McKinney [2011], Efficient generation of jets from magnetically arrested accretion on a rapidly spinning black hole, \textit{Monthly Notices of the Royal Astronomical Society: Letters}, \textbf{418}(1), L79--L83.


\bibitem[York(1972)]{Yo1972}
York, J.~W. [1972], Role of conformal three geometry in the dynamics of gravitation, \textit{Phys. Rev. Lett.} \textbf{28}, 1082.

\bibitem[Zheng and Spencer(1993)]{ZhSp1993}
Zheng, Q.~S. and A.~J.~M. Spencer [1993], Tensors which characterize anisotropies, \textit{Int. J. Engng. Sci.}, \textbf{31}(5), 679--693.

\end{thebibliography}
\addcontentsline{toc}{section}{References}

}

\end{document}